\pdfoutput=1

\documentclass[a4paper]{article}
\usepackage{xcolor}
\usepackage[final]{microtype}
\usepackage{mathtools}
\usepackage{amsmath}
\usepackage{amssymb}
\usepackage{amsthm}
\usepackage{natbib}
\usepackage{local-tikz}

\allowdisplaybreaks
\setcounter{secnumdepth}{5}

\usepackage{hyperref}
\hypersetup{hidelinks}
\usepackage{stmaryrd}
\usepackage[capitalise]{cleveref}
\usepackage{multicol}

\usepackage{jms-sections}
\usepackage{local-abbrv}
\usepackage{local-jmsdelim}
\usepackage{mathpartir}
\usepackage{jms-operators}
\usepackage{newtxtext,newtxmath}
\usepackage{inconsolata}
\usepackage{iblock}

\theoremstyle{plain}
\newtheorem{therm}{Theorem}

\newtheorem{lemma}[therm]{Lemma}
\newtheorem{observation}[therm]{Observation}
\newtheorem{corollary}[therm]{Corollary}

\newtheorem{proposition}[therm]{Proposition}

\newtheorem{principle}{Principle}

\theoremstyle{definition}

\newtheorem{example}[therm]{Example}
\newtheorem{definition}[therm]{Definition}

\newtheorem{remark}[therm]{Remark}

\newtheorem{construction}[therm]{Construction}
\newtheorem{computation}[therm]{Computation}
\newtheorem{notation}[therm]{Notation}

\Crefname{paragraph}{Section}{Sections}
\Crefname{observation}{Observation}{Observations}
\Crefname{construction}{Construction}{Constructions}

\author{Jonathan Sterling}
\title{Reflexive graph lenses in univalent foundations}

\begin{document}

\maketitle
\begin{abstract}
Martin-L\"of's identity types provide a generic (albeit opaque) notion of identification or ``equality'' between any two elements of the same type, embodied in a canonical reflexive graph structure $(=_A, \mathbf{refl})$ on any type $A$. The miracle of Voevodsky's \emph{univalence principle} is that it ensures, for essentially any naturally occurring structure in mathematics, that this the resultant notion of identification is equivalent to the type of \emph{isomorphisms} in the category of such structures. Characterisations of this kind are not automatic and must be established one-by-one; to this end, several authors have employed \emph{reflexive graphs} and \emph{displayed reflexive graphs} to organise the characterisation of identity types.
We contribute \textbf{reflexive graph lenses}, a new family of intermediate abstractions lying between families of reflexive graphs and displayed reflexive graphs that simplifies the characterisation of identity types for complex structures. Every reflexive graph lens gives rise to a (more complicated) displayed reflexive graph, and our experience suggests that many naturally occurring displayed reflexive graphs arise in this way. Evidence for the utility of reflexive graph lenses is given by means of several case studies, including the theory of reflexive graphs itself as well as that of polynomial type operators. Finally, we exhibit an equivalence between the type of reflexive graph fibrations and the type of \emph{univalent} reflexive graph lenses.
 \end{abstract}

\tableofcontents

\ExplSyntaxOn

\def\covsym{+}
\def\ctrvsym{-}
\def\bivsym{\pm}

\def\lint{\sum^{\covsym}}
\def\rint{\sum^{\ctrvsym}}
\def\uint{\sum^{\bivsym}}

\def\lprod{\prod^{\covsym}}
\def\rprod{\prod^{\ctrvsym}}

\NewDocumentCommand\PolySymbol{}{\jms_scale_bigop:n {\operatorname{P}}}

\NewDocumentCommand\AugSpx{}{\Delta\Sub{a}}
\NewDocumentCommand\Poly{ommm}{
  \mathbf{P}\IfValueT{#1}{\Sup{#1}}\Sub{#2}\prn{#4,#3}
}
\NewDocumentCommand\CovPoly{mmm}{
  \mathcal{P}^{+}\Sub{#1}\prn{#3,#2}
}
\NewDocumentCommand\CtrvPoly{mmm}{
  \mathcal{P}^{-}\Sub{#1}\prn{#3,#2}
}

\jms_define_big_quantifier:NN \LSum \lint
\jms_define_big_quantifier:NN \RSum \rint
\jms_define_big_quantifier:NN \USum \uint

\NewDocumentCommand\LProd{m}{
  \DelimMin{1}
  \jms_bigop:N {\lprod} {#1\bullet}
}

\NewDocumentCommand\RProd{m}{
  \DelimMin{1}
  \jms_bigop:N {\rprod} {#1\bullet}
}

\ExplSyntaxOff

\NewDocumentCommand\Inv{}{\Sup{-1}}
\NewDocumentCommand\Bad{}{\Sup{\color{red}\dagger}}

\NewDocumentCommand\DiscPO{}{{\triangle}}
\NewDocumentCommand\CodiscPO{}{{\triangledown}}

\NewDocumentCommand\IdCon{m}{=_{#1}}
\NewDocumentCommand\Id{mmm}{#2 \IdCon{#1} #3}
\NewDocumentCommand\TYPE{}{\mathbf{type}}
\NewDocumentCommand\Refl{}{\mathbf{refl}}
\NewDocumentCommand\Bind{m}{#1\mathpunct{.}}
\NewDocumentCommand\J{mmmmmm}{\mathbf{J}_{#1}^{#2}\prn{#3;#4,#5,#6}}
\NewDocumentCommand\DefEmph{m}{\textit{\textbf{#1}}}
\NewDocumentCommand\Lam{m}{\lambda{#1}\mathpunct{.}}
\NewDocumentCommand\Singleton{mm}{\brc{#2}_{#1}}
\NewDocumentCommand\Nat{}{\mathbb{N}}
\NewDocumentCommand\RxCon{}{\mathsf{rx}}

\NewDocumentCommand\Rx{m}{\RxCon\Sub{#1}}
\NewDocumentCommand\DRx{mm}{\Rx{#1}\Sup{#2}}

\NewDocumentCommand\Act{m}{\mathsf{ap}_{#1}}
\NewDocumentCommand\ActIdn{}{\mathsf{actIdn}}

\NewDocumentCommand{\ct}{o}{%
  \mathbin{%
    \mathchoice%
    {\raisebox{0.5ex}{$\displaystyle\centerdot$}}%
    {\raisebox{0.5ex}{$\centerdot$}}%
    {\raisebox{0.25ex}{$\scriptstyle\,\centerdot\,$}}%
    {\raisebox{0.1ex}{$\scriptscriptstyle\,\centerdot\,$}}%
  }%
}

\NewDocumentCommand\ctl{}{\ct_{\mathsf{l}}}
\NewDocumentCommand\ctr{}{\ct_{\mathsf{r}}}

\NewDocumentCommand\FmtGph{m}{\mathcal{#1}}
\NewDocumentCommand\gA{}{\FmtGph{A}}
\NewDocumentCommand\gB{}{\FmtGph{B}}
\NewDocumentCommand\gU{}{\FmtGph{U}}
\NewDocumentCommand\gG{}{\FmtGph{G}}
\NewDocumentCommand\gC{}{\FmtGph{C}}
\NewDocumentCommand\gE{}{\FmtGph{E}}
\NewDocumentCommand\gF{}{\FmtGph{F}}
\NewDocumentCommand\gO{}{\FmtGph{O}}
\NewDocumentCommand\Edge{mo}{\mathrel{\approx\Sub{#1}\IfValueT{#2}{\Sup{#2}}}}
\NewDocumentCommand\LeftDisplay{}{\mathsf{d}_{\covsym}}
\NewDocumentCommand\RightDisplay{}{\mathsf{d}_{\ctrvsym}}
\NewDocumentCommand\LooseDisplay{}{\mathsf{d}_{\bivsym}}
\NewDocumentCommand\Op{}{\Sup{\mathsf{op}}}
\NewDocumentCommand\TotOp{}{\Sup{\widetilde{\mathsf{op}}}}

\NewDocumentCommand\GphOn{g}{\mathsf{GphOn}_U\IfValueT{#1}{\prn{#1}}}
\NewDocumentCommand\Gph{}{\mathsf{Gph}_U}
\NewDocumentCommand\RxOn{g}{\mathsf{RxOn}_U\IfValueT{#1}{\prn{#1}}}
\NewDocumentCommand\RxGph{}{\mathsf{RxGph}_U}

\NewDocumentCommand\RxGphOn{}{\mathsf{RxGphOn}_U}
\NewDocumentCommand\RoundTrip{}{\mathsf{RoundTrip}}
\NewDocumentCommand\RoundTripOn{}{\mathsf{RoundTripOn}}
\NewDocumentCommand\IsSRPair{}{\mathsf{isSecRetPair}}
\NewDocumentCommand\SRPair{}{\mathsf{SecRetPair}}

\NewDocumentCommand\EdgeSpace{m}{\widetilde{#1}}
\NewDocumentCommand\LeftLiftRefl{mm}{\mathscr{l}_{#1}^{#2}}
\NewDocumentCommand\RightLiftRefl{mm}{\mathscr{r}_{#1}^{#2}}
\NewDocumentCommand\DepLeftLiftRefl{mm}{\underline{\mathscr{l}}_{#1}^{#2}}
\NewDocumentCommand\DepRightLiftRefl{mm}{\underline{\mathscr{r}}_{#1}^{#2}}
\NewDocumentCommand\DepBiLiftRefl{mm}{\underline{\mathscr{u}}_{#1}^{#2}}

\NewDocumentCommand\Diag{m}{#1\Sub{\mathsf{rx}}}
\NewDocumentCommand\IdToEdge{mom}{\floors{#3}\IfValueT{#2}{\Sup{#2}}\Sub{#1}}
\NewDocumentCommand\EdgeToId{mom}{\bbrk{#3}\IfValueT{#2}{\Sup{#2}}\Sub{#1}}

\NewDocumentCommand\EdgeToIdGen{momm}{\bbrk{#3; #4}\IfValueT{#2}{\Sup{#2}}\Sub{#1}}

\NewDocumentCommand\DGphOn{mg}{\mathsf{DGphOn}_U\Sup{#1}\IfValueT{#2}{\prn{#2}}}
\NewDocumentCommand\DGph{m}{\mathsf{DGph}_U\Sup{#1}}
\NewDocumentCommand\DRxOn{mg}{\mathsf{DRxOn}_U\Sup{#1}\IfValueT{#2}{\prn{#2}}}
\NewDocumentCommand\DRxGphOver{g}{\mathsf{DRxGphOver}_U\IfValueT{#1}{\prn{#1}}}
\NewDocumentCommand\DRxGph{}{\mathsf{DRxGph}_U}

\NewDocumentCommand\CovFlatten{mm}{#1\Sub{{\downarrow}#2}\Sup{+}}
\NewDocumentCommand\CtrvFlatten{mm}{#1\Sub{{\downarrow}#2}\Sup{-}}

\NewDocumentCommand\Fan{mm}{\brc{#2}\Sub{#1}\Sup{+}}
\NewDocumentCommand\CoFan{mm}{\brc{#2}\Sub{#1}\Sup{-}}
\NewDocumentCommand\CovDisp{m}{\mathsf{disp}^+\Sub{#1}}
\NewDocumentCommand\CtrvDisp{m}{\mathsf{disp}^-\Sub{#1}}
\NewDocumentCommand\UnbDisp{m}{\mathsf{disp}^\pm\Sub{#1}}

\NewDocumentCommand\Push{mm}{\mathsf{push}\Sub{#1}\Sup{#2}}
\NewDocumentCommand\Pull{mm}{\mathsf{pull}\Sub{#1}\Sup{#2}}
\NewDocumentCommand\PushRx{mo}{\mathsf{pushRx}\Sub{#1}\IfValueT{#2}{\Sup{#2}}}
\NewDocumentCommand\PullRx{mo}{\mathsf{pullRx}\Sub{#1}\IfValueT{#2}{\Sup{#2}}}
\NewDocumentCommand\LJ{mm}{\mathsf{lext}\Sub{#1}\Sup{#2}}
\NewDocumentCommand\RJ{mm}{\mathsf{rext}\Sub{#1}\Sup{#2}}
\NewDocumentCommand\MidJRx{mo}{\mathsf{extRx}\Sub{#1}\IfValueT{#2}{\Sup{#2}}}
\NewDocumentCommand\RJRx{mo}{\mathsf{rextRx}\Sub{#1}\IfValueT{#2}{\Sup{#2}}}
\NewDocumentCommand\CovUpgrade{m}{\brk{#1}_+^\pm}
\NewDocumentCommand\CtrvUpgrade{m}{\brk{#1}_-^\pm}

\NewDocumentCommand\CovRepl{mm}{\mathsf{drep}^+\Sub{#1}#2}
\NewDocumentCommand\CtrvRepl{mm}{\mathsf{drep}^-\Sub{#1}#2}
\NewDocumentCommand\Vtx{}{\mathsf{Vtx}_U}

\NewDocumentCommand\CovLensStr{mg}{\mathsf{LensStr}^+\Sub{#1}\IfValueT{#2}{\prn{#2}}}
\NewDocumentCommand\CovLensOver{m}{\mathsf{Lens}^+\Sub{#1}}

\paragraph*{Acknowledgments}

I thank Rafa\"el Bocquet for suggesting the terminology of \emph{path objects}. I am also grateful to Fredrik Bakke, Marcelo Fiore, Daniel Gratzer, Tom de Jong, Egbert Rijke and David W\"arn for helpful conversations, and to Ulrik Buchholtz and Johannes Schipp von Branitz for many helpful suggestions and clarifications. I owe a great debt to Ian Ray for finding many mistakes in an earlier draft of this paper. This work was funded by the United States Air Force Office of Scientific Research under grant FA9550-23-1-0728 (\href{http://www.jonmsterling.com/jms-008K.xml}{\emph{New Spaces for Denotational Semantics}}; Dr.\ Tristan Nguyen, Program Manager). Views and opinions expressed are however those of the author only and do not necessarily reflect those of AFOSR.

\paragraph*{Conventions}
We work in Martin-L\"of's intensional dependent type theory with products $\Prod{x:A}{B\prn{x}}$, sums $\Sum{x:A}B\prn{x}$, identity types $\Id{A}{x}{y}$, a unit type $\mathbf{1}$, and type $\mathbb{N}$ of natural numbers; products, sums, and the unit type are assumed to have definitional $\eta$-laws. We do not globally assume any universes, nor function extensionality, nor univalence; these constructs are assumed locally where needed. We use the symbol $\equiv$ to denote definitional equality, and $=_A$ to denote identity types.
We follow the vernacular of univalent foundations, in which \emph{propositions} and \emph{sets} are defined extrinsically in terms of identity types. When we refer to existence, we mean it in the usual propositionally truncated sense of standard mathematical vernacular rather than in the sense of the HoTT Book~\citep{hottbook}.

\begin{xsect}{Introduction}

  \begin{xsect}{Martin-L\"of's identity types}
    In the setting of Martin-L\"of's intensional type theory~\citep{nordstrom-peterson-smith:1990}, every type $A$ comes equipped with a type constructor $x:A,y:A\vdash \Id{A}{x}{y}:\TYPE$ that classifies \emph{identifications} between elements of $A$, \ie witnesses that two given elements are ``equal''. Rather than being defined separately for each $A$, the identity type constructor $\IdCon{A}$ is specified generically for all types by means of formation, introduction, elimination, and computation rules. The formation and introduction rule for the identity type
    \begin{mathpar}
      \inferrule{
        \Gamma\vdash A:\TYPE\\
        \Gamma\vdash u,v:A
      }{
        \Gamma\vdash\Id{A}{u}{v}:\TYPE
      }
      \and
      \inferrule{
        \Gamma\vdash A:\TYPE\\
        \Gamma\vdash u:A
      }{
        \Gamma\vdash \Refl : \Id{A}{u}{u}
      }
    \end{mathpar}
    can be thought of as equipping $A$ with the structure of a \emph{reflexive graph}\footnote{In the context of category theory, one uses the term ``graph'' to refer to what combinatorists would normally refer to as \emph{multigraphs} or \emph{pseudo-graphs}. A graph in the traditional sense would be a graph (in the present generalised sense) in which the type of vertices is a set and the type of edges between two fixed vertices is a proposition; we might refer to such objects as ``locally propositional set-graphs''.} (see \cref{def:reflexive-graph}). The elimination and computation rules
    \begin{mathpar}
      \inferrule{
        \Gamma\vdash A:\TYPE\\
        \Gamma,x:A,y:A,p:\Id{A}{x}{y}\vdash C\prn{x,y,p}:\TYPE\\\\
        \Gamma,x:A\vdash c\prn{x} : C\prn{x,x,\Refl}\\
        \Gamma\vdash u,v:A\\
        \Gamma\vdash w:\Id{A}{u}{v}
      }{
        \Gamma\vdash \J{A}{C}{c}{u}{v}{w} : C\prn{u,v,w}
      }
      \and
      \inferrule{
        \Gamma\vdash A:\TYPE\\
        \Gamma,x:A,y:A,p:\Id{A}{x}{y}\vdash C\prn{x,y,p}:\TYPE\\
        \Gamma,x:A\vdash c\prn{x} : C\prn{x,x,\Refl}\\
        \Gamma\vdash u:A
      }{
        \Gamma\vdash \J{A}{C}{c}{u}{u}{\Refl} \equiv c\prn{u} : C\prn{u,u,\Refl}
      }
    \end{mathpar}
    then ensure that this reflexive graph structure is the smallest possible one that can be formed with vertices in $A$. This universal property is most commonly employed in terms of the principle of \emph{identification induction} below.

    \begin{principle}[{Identification induction}]\label[principle]{principle:identification-induction}
      Given a family of types $x,y:A;p:\Id{A}{x}{y}\vdash C\prn{x,y,p}: \TYPE$, to define a dependent function $x,y:A;p:\Id{A}{x}{y}\vdash f\prn{x,y,p}:C\prn{x,y,p}$ it suffices to specify $x:A\vdash f\prn{x,x,\Refl}:C\prn{x,x,\Refl}$.
    \end{principle}

    There is an alternative, equivalent, formulation of identification induction due to \citet{paulin-mohring:1993}, that fixes an element $u:A$ and allows induction on data of the form $y:A, p:u=y$. This is called ``based identification induction'':

    \begin{principle}[{Based identification induction}]\label[principle]{principle:based-identification-induction}
      Given a fixed element $u:A$ and a family of types $y:A,p:\Id{A}{u}{y}\vdash B\prn{y,p}:\TYPE$, to define a dependent function $y:A,p:\Id{A}{u}{y}\vdash f\prn{y,p}:B\prn{y,p}$ it suffices to specify $f\prn{u,\Refl} : B\prn{u,\Refl}$.
    \end{principle}

    Independently, \citet{van-den-berg-garner:2011} and \citet{lumsdaine:2010} have shown that identification induction exhibits a weak globular $\infty$-groupoid structure on each type $A$. In the lowest dimension, this corresponds to the construction of compositors for identifications and associators for the resulting compositions, \etc.
  \end{xsect}

  \begin{xsect}{Characterising identity types using (displayed) reflexive graphs}

    The role of the identity type is similar to that of equality in set theory: it expresses a single \emph{global} notion of identification whose rules apply to all types simultaneously.\footnote{Naturally, intensional type theory can speak about more than sets --- but the spirit of having a single global notion of identity is the same in both type theory and set theory.} On the other hand, because the rules of identity types have to apply uniformly to all types, these rules do not reflect any of the specific characteristics of individual types that would, if taken seriously, make it easier to both exhibit and use identifications.

    The goal of reflexive graphs is to provide a methodology \emph{within} Martin-L\"of's type theory by which the special properties of identifications in different types can be made explicit. An alternative design would be to use a type theory in which the identity types are not defined uniformly in each type, and instead each type comes with its own identity type, as in \emph{Observational Type Theory}~\citep{altenkirch-mcbride:2006,altenkirch-mcbride-swierstra:2007} and \emph{Higher Observational Type Theory}~\citep{altenkirch-chamoun-kaposi-shulman:2024}. Each design has trade-offs, but in this paper we are restricting attention to what can be done without changing the type theory.

    \begin{remark}
      Everything in this section is derived from prior works, such as the HoTT Book~\citep{hottbook}, Rijke's textbook on homotopy type theory~\citep{rijke:2025}, and the work of \citet{schipp-von-branitz-buchholtz:2021}, although we do impose our own terminological conventions.
    \end{remark}

    \begin{xsect}[sec:shallow]{Shallow characterisation of identity types}
      For example, it is very natural to build up an identification of type $\Id{A\times B}{\prn{x,u}}{\prn{y,v}}$ from a pair of identifications $\Id{A}{x}{y}$ and $\Id{B}{u}{v}$. This principle can be formalised by \emph{characterising} the identity type of $A\times B$ in terms of the identity types of $A$ and $B$.

      \begin{proposition}[Shallow characterisation of the binary product]\label[proposition]{ex:shallow-bin-prod}
        Each of the following functions is an equivalence:
        \begin{align*}
          x,y:A; u,v:B & \vdash \mathsf{splitId}_{A,B} \colon : \Id{A\times B}{\prn{x,u}}{\prn{y,v}} \to \prn{\Id{A}{x}{y}}\times \prn{\Id{B}{u}{v}}
          \\
          x:A, u:B     & \vdash \mathsf{splitId}_{A,B} ~\Refl :\equiv \prn{\Refl, \Refl}
        \end{align*}
      \end{proposition}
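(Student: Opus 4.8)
The plan is to exhibit a quasi-inverse to $\mathsf{splitId}_{A,B}$ and invoke the standard fact that any function admitting a quasi-inverse is an equivalence. First I would construct the candidate inverse $x,y:A;u,v:B \vdash \mathsf{pairId}_{A,B} : \prn{\Id{A}{x}{y}}\times\prn{\Id{B}{u}{v}} \to \Id{A\times B}{\prn{x,u}}{\prn{y,v}}$ by a double application of \cref{principle:identification-induction}: using the definitional $\eta$-law for the product to present the argument as a pair, I induct first on the $A$-identification and then on the $B$-identification, reducing to the reflexive case and setting $\mathsf{pairId}_{A,B}\prn{\Refl,\Refl} :\equiv \Refl$. (The map $\mathsf{splitId}_{A,B}$ itself may be defined either by identification induction with $\mathsf{splitId}_{A,B}\,\Refl :\equiv \prn{\Refl,\Refl}$, or explicitly as $p\mapsto\prn{\Act{\pi_1}\,p,\Act{\pi_2}\,p}$ via the action of the projections on identifications; the two agree definitionally at $\Refl$.)

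Next I would verify the two round-trip homotopies. To show $\mathsf{pairId}_{A,B}\circ\mathsf{splitId}_{A,B}\sim\mathrm{id}$, I apply \cref{principle:identification-induction} to an arbitrary $r:\Id{A\times B}{\prn{x,u}}{\prn{y,v}}$, reducing to $r\equiv\Refl$; then the computation rules give $\mathsf{pairId}_{A,B}\prn{\mathsf{splitId}_{A,B}\,\Refl}\equiv\mathsf{pairId}_{A,B}\prn{\Refl,\Refl}\equiv\Refl$, so the homotopy is witnessed by $\Refl$. To show $\mathsf{splitId}_{A,B}\circ\mathsf{pairId}_{A,B}\sim\mathrm{id}$, I instead induct on both components of a pair $\prn{p,q}$, reducing to $\prn{\Refl,\Refl}$, after which $\mathsf{splitId}_{A,B}\prn{\mathsf{pairId}_{A,B}\prn{\Refl,\Refl}}\equiv\mathsf{splitId}_{A,B}\,\Refl\equiv\prn{\Refl,\Refl}$ holds definitionally.

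I do not anticipate a real obstacle: the whole argument is driven by identification induction, and the computation rules make each round-trip hold definitionally once the relevant identifications are contracted to $\Refl$. The only point demanding attention is the bookkeeping of inductions---the inverse and one homotopy each require inducting on two identifications in succession, whereas the other homotopy inducts on a single identification in the product type. As a more conceptual alternative I would remark that $\mathsf{splitId}_{A,B}$ is a fibrewise map over $\prn{y,v}:A\times B$ whose total space $\Sum{\prn{y,v}:A\times B}\prn{\Id{A}{x}{y}}\times\prn{\Id{B}{u}{v}}$ reorganises, using the $\eta$-laws, into the product $\prn{\Sum{y:A}\Id{A}{x}{y}}\times\prn{\Sum{v:B}\Id{B}{u}{v}}$ of based path spaces; as each factor is contractible the total space is too, and the fundamental theorem of identity types delivers the conclusion directly.
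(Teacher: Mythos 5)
Your proposal is correct and follows essentially the same route as the paper: construct the inverse (the paper's $\mathsf{unsplitId}_{A,B}$, your $\mathsf{pairId}_{A,B}$) by identification induction sending $\prn{\Refl,\Refl}$ to $\Refl$, and establish both round-trip coherences by further identification inductions, with everything computing definitionally at $\Refl$. The paper leaves the coherences as a one-line remark, so your more detailed bookkeeping (and the alternative via contractibility of based path spaces) is a faithful elaboration rather than a different argument.
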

      \begin{proof}
        We can explicitly construct an inverse to $\mathsf{splitId}_{A,B}$ as follows:
        \begin{align*}
          x,y:A; u,v:B
           & \vdash
          \mathsf{unsplitId}_{A,B} \colon : \prn{\Id{A}{x}{y}}\times \prn{\Id{B}{u}{v}} \to \Id{A\times B}{\prn{x,u}}{\prn{y,v}}
          \\
          x:A, u:V
           & \vdash
          \mathsf{unsplitId}_{A,B}\,\prn{\Refl,\Refl} :\equiv \Refl
        \end{align*}

        The coherences of the equivalence are given by identification induction.
      \end{proof}

      Of course, although the characterisation of identifications in $A\times B$ in \cref{ex:shallow-bin-prod} is useful, it does not help us directly to characterise identifications in more complex types. For example, we may wish to characterise the identity type of $\prn{A\times B}\times C$ by asserting that the canonical map $\Id{\prn{A\times B}\times C}{\prn{\prn{x,u},m}}{\prn{\prn{y,v},n}}\to \prn{\prn{\Id{A}{x}{y}}\times\prn{\Id{B}{u}{v}}}\times\prn{\Id{C}{m}{n}}$ is an equivalence. This can be established directly in the same way as \cref{ex:shallow-bin-prod} or by making use of the lemma, but the latter saves very little time.  We would prefer to have a toolkit for building up ``deep'' characterisations of identity types piece-by-piece.
    \end{xsect}

    \begin{xsect}{Deep characterisation using reflexive graphs}
      A ``deep'' characterisation of the identity type for binary products would take the following form:
      \emph{If we have characterised the identity type of $A$ and the identity type of $B$, then we may characterise the identity type of $A\times B$.}

      \cref{ex:shallow-bin-prod} can be ``deepened'' along these lines by replacing the types $A$ and $B$ with \emph{reflexive graphs} $\gA$ and $\gB$ so that we have the following data:

      \begin{multicols}{2}
        \iblock{
          \mrow{\vrt{\gA} : \TYPE}
          \mrow{{\Edge{\gA}} : \vrt{\gA}\to \vrt{\gA}\to \TYPE}
          \mrow{\Rx{\gA} : \Prod{x:\vrt{\gA}} x\Edge{\gA}x}
          \columnbreak
          \mrow{\vrt{\gB} : \TYPE}
          \mrow{{\Edge{\gB}} : \vrt{\gB}\to \vrt{\gB}\to \TYPE}
          \mrow{\Rx{\gB} : \Prod{x:\vrt{\gB}} x\Edge{\gB}x}
        }
      \end{multicols}

      These reflexive graph structures follow the pattern of instances of the formation and introduction rules for the identity type.  Tentatively, we shall say that a given reflexive graph $\gG$ is \DefEmph{univalent} when the following function is an equivalence:\footnote{See \cref{def:univalent-reflexive-graph}; in fact, our provisional definition of the univalence condition for a reflexive graph is not the most practical one for everyday use, but it is one of several equivalent conditions that we shall expose later.}
      \begin{align*}
        x,y:\vrt{\gG} & \vdash \mathsf{idToEdge}_\gG^{x,y} : \Id{\vrt{\gG}}{x}{y} \to x\Edge{\gG}y
        \\
        x:\vrt{\gG}   & \vdash \mathsf{idToEdge}_\gG^{x,x} \Refl :\equiv \Rx{\gG}{x}
      \end{align*}

      We shall often refer to a univalent reflexive graph as a \DefEmph{path object} for short. With these constructions on reflexive graphs in hand, a \emph{deep} characterisation of the identity type for binary products can be formulated.

      First we define a new reflexive graph $\gA\times \gB$ as follows:
      \begin{align*}
        \vrt{\gA\times \gB}
         & :\equiv
        \vrt{\gA}\times \vrt{\gB}
        \\
        \prn{x,u} \Edge{\gA\times \gB} \prn{y,v}
         & :\equiv
        \prn{x\Edge{\gA}y}
        \times
        \prn{u\Edge{\gB}v}
        \\
        \Rx{\gA\times\gB}\prn{x,u}
         & :\equiv
        \prn{\Rx{\gA}{x},\Rx{\gB}{u}}
      \end{align*}

      Then the deep characterisation of binary products amounts to saying that path objects (\ie univalent reflexive graphs) are closed under binary products.

      \begin{proposition}[Deep characterisation of the binary product]\label[proposition]{prop:deep-bin-prod}
        If $\gA$ and $\gB$ are two path objects, then $\gA\times \gB$ is a path object.
      \end{proposition}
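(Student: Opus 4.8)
The plan is to unfold the provisional definition of univalence and show directly that the comparison map $\mathsf{idToEdge}_{\gA\times\gB}^{\prn{x,u},\prn{y,v}}$ is an equivalence for every pair of vertices $\prn{x,u},\prn{y,v}:\vrt{\gA\times\gB}$. The decisive observation is that, by the very definition of the edge relation on $\gA\times\gB$, the codomain of this map is the product type $\prn{x\Edge{\gA}y}\times\prn{u\Edge{\gB}v}$, whose equivalence with $\Id{\vrt{\gA}\times\vrt{\gB}}{\prn{x,u}}{\prn{y,v}}$ we can already establish by independent means: the shallow characterisation of \cref{ex:shallow-bin-prod} splits the domain into a product of identity types, and the univalence hypotheses on $\gA$ and $\gB$ convert those identity types into the two edge relations.

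Concretely, I would first assemble the composite
\begin{align*}
  \Id{\vrt{\gA}\times\vrt{\gB}}{\prn{x,u}}{\prn{y,v}}
   & \xrightarrow{\ \mathsf{splitId}_{\vrt{\gA},\vrt{\gB}}\ }
  \prn{\Id{\vrt{\gA}}{x}{y}}\times\prn{\Id{\vrt{\gB}}{u}{v}}
  \\
   & \xrightarrow{\ \mathsf{idToEdge}_{\gA}\times\mathsf{idToEdge}_{\gB}\ }
  \prn{x\Edge{\gA}y}\times\prn{u\Edge{\gB}v}.
\end{align*}
The first arrow is an equivalence by \cref{ex:shallow-bin-prod}, and the second is the product of $\mathsf{idToEdge}_{\gA}$ and $\mathsf{idToEdge}_{\gB}$, each of which is an equivalence because $\gA$ and $\gB$ are assumed to be path objects; as a product of equivalences is an equivalence and equivalences compose, the whole composite is an equivalence.

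It then remains to identify this composite with $\mathsf{idToEdge}_{\gA\times\gB}$ itself. I would do this by identification induction, which reduces the claim to the reflexivity case at a single vertex $\prn{x,u}$; there both maps compute definitionally to $\prn{\Rx{\gA}{x},\Rx{\gB}{u}}\equiv\Rx{\gA\times\gB}\prn{x,u}$ --- the composite because $\mathsf{splitId}$ sends $\Refl$ to $\prn{\Refl,\Refl}$ and each $\mathsf{idToEdge}$ sends $\Refl$ to the corresponding reflexivity edge, and $\mathsf{idToEdge}_{\gA\times\gB}$ by its own defining computation rule. This yields a pointwise identification between the two maps, and since being an equivalence is invariant under such pointwise identification, $\mathsf{idToEdge}_{\gA\times\gB}$ is an equivalence, which is exactly the assertion that $\gA\times\gB$ is a path object.

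I expect no serious obstacle here: the content is simply recognising the correct factorisation of $\mathsf{idToEdge}_{\gA\times\gB}$ through $\mathsf{splitId}$. The only step demanding any care is the last one --- confirming that $\mathsf{idToEdge}_{\gA\times\gB}$ really does coincide with the composite, rather than merely sharing its domain and codomain --- but, as indicated, identification induction collapses this to a definitional computation at $\Refl$, so it amounts to bookkeeping. Notably, this argument needs neither function extensionality nor univalence of the ambient universe, only the closure of equivalences under products, composition, and pointwise identification.
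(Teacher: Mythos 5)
Your proof is correct, but it takes a genuinely different route from the paper's. The paper proves this statement (as \cref{lem:bin-prod-po}) by observing that $\gA\times\gB$ is definitionally the total reflexive graph $\gA.\prn{\gA^*\gB}$ of the constant displayed reflexive graph (\cref{obs:rx-gph-bin-prod-vs-sum}), that $\gA^*\gB$ is a displayed path object whenever $\gB$ is a path object (\cref{lem:const-disp-po}), and then invoking the total path object lemma (\cref{lem:total-po}), whose proof works entirely with the contractibility of fans $\Fan{\gA.\gB}{u}$ rather than with the comparison map. Your argument instead verifies the univalence condition in its $\mathsf{idToEdge}$ form, by factoring $\mathsf{idToEdge}_{\gA\times\gB}$ through $\mathsf{splitId}$ and the product of the two component comparison maps, and then using identification induction to show that this composite is pointwise identified with the canonical map. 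Both are sound. What the paper's route buys is compositionality: the total path object lemma is needed anyway (it is the deep characterisation of the indexed sum), and once one has it the binary product is a two-line corollary with no path algebra. What your route buys is self-containedness --- it needs only the shallow characterisation, closure of equivalences under products, composition and homotopy, and no displayed machinery. The one step your approach requires that the fan-based approach is specifically designed to avoid is the final "identify the composite with the canonical map" argument; you handle it correctly (both maps compute definitionally to $\Rx{\gA\times\gB}\prn{x,u}$ at $\Refl$, and being an equivalence transfers along a homotopy without function extensionality), but it is exactly the kind of bookkeeping that motivates the paper's preference for working with fans throughout.
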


      The compositional character of \cref{prop:deep-bin-prod} allows us to deduce other useful corollaries immediately by iteration. For example, if $\gA,\gB,\gC$ are all univalent reflexive graphs, then $\prn{\gA\times\gB}\times\gC$ is a univalent reflexive graph, \etc. On the other hand, the shallow characterisation (\cref{ex:shallow-bin-prod}) is immediately recoverable via the following definition of the \DefEmph{discrete} reflexive graph on a type $A$, which is univalent by definition:
      \begin{align*}
        \vrt{\DiscPO{A}}
         & :\equiv
        A
        \\
        x\Edge{\DiscPO{A}} y
         & :\equiv
        \Id{A}{x}{y}
        \\
        \Rx{\DiscPO{A}}{x}
         & :\equiv
        \Refl
      \end{align*}

      Then \cref{ex:shallow-bin-prod} is precisely the instantiation of \cref{prop:deep-bin-prod} with $\gA :\equiv \DiscPO{A}$ and $\gB :\equiv \DiscPO{B}$.
      We can also generalise the binary product of reflexive graphs to arbitrary arity: if $A$ is a type and $\gB\prn{x}$ is a reflexive graph for each $x:A$, define $\Prod{x:A}\gB\prn{x}$ to be the following reflexive graph:
      \begin{align*}
        \vrt{\Prod{x:A}\gB\prn{x}}
         & :\equiv
        \Prod{x:A}\vrt{\gB\prn{x}}
        \\
        f \Edge{\Prod{x:A}\gB\prn{x}} g
         & :\equiv
        \Prod{x:A}
        fx\Edge{\gB\prn{x}} gx
        \\
        \Rx{\Prod{x:A}\gB\prn{x}}{f}
         & :\equiv
        \Lam{x}\Rx{\gB\prn{x}}\prn{fx}
      \end{align*}

      \begin{proposition}[Deep characterisation of product]
        Given $\gB\prn{x}$ a path object for each $x:A$, the product $\Prod{x:A}\gB\prn{x}$ is univalent assuming dependent function extensionality holds.
      \end{proposition}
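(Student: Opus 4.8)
The plan is to exhibit $\mathsf{idToEdge}_{\gG}^{f,g}$ (writing $\gG :\equiv \Prod{x:A}\gB\prn{x}$ for the product reflexive graph) as homotopic to a composite of two equivalences, and then transfer the equivalence property back along that homotopy. Concretely, I want to show that for all $f,g:\vrt{\gG}$ the map $\mathsf{idToEdge}_{\gG}^{f,g}:\Id{\vrt{\gG}}{f}{g}\to\Prod{x:A}fx\Edge{\gB\prn{x}}gx$ is an equivalence, noting that its codomain is exactly $f\Edge{\gG}g$ by the definition of $\Edge{\gG}$.

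First I would build the composite $\theta^{f,g}$. Dependent function extensionality supplies, for each $f,g$, an equivalence $\mathsf{happly}:\Id{\vrt{\gG}}{f}{g}\simeq\Prod{x:A}\Id{\vrt{\gB\prn{x}}}{fx}{gx}$ sending $p$ to $\Lam{x}\Act{\Lam{h}hx}\prn{p}$; note that $\mathsf{happly}\,\Refl\equiv\Lam{x}\Refl$ definitionally. Second, since each $\gB\prn{x}$ is a path object, the map $\mathsf{idToEdge}_{\gB\prn{x}}^{fx,gx}:\Id{\vrt{\gB\prn{x}}}{fx}{gx}\to fx\Edge{\gB\prn{x}}gx$ is an equivalence for every $x:A$. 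A fibrewise equivalence induces an equivalence of dependent products — again a consequence of function extensionality — so postcomposition with this family yields an equivalence $\Prod{x:A}\Id{\vrt{\gB\prn{x}}}{fx}{gx}\simeq\Prod{x:A}fx\Edge{\gB\prn{x}}gx$. Composing the two gives an equivalence $\theta^{f,g}:\Id{\vrt{\gG}}{f}{g}\simeq f\Edge{\gG}g$.

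It then remains to check that $\theta^{f,g}$ is homotopic to $\mathsf{idToEdge}_{\gG}^{f,g}$. Fixing $f$ and applying based identification induction to the family over $g$ and $p:\Id{\vrt{\gG}}{f}{g}$, it suffices to compare the two maps at $p:\equiv\Refl$. On one hand $\mathsf{idToEdge}_{\gG}^{f,f}\,\Refl\equiv\Rx{\gG}f\equiv\Lam{x}\Rx{\gB\prn{x}}\prn{fx}$ by the definition of the reflexivity structure on $\gG$. On the other hand $\theta^{f,f}\,\Refl$ first reduces to $\Lam{x}\Refl$ under $\mathsf{happly}$ and then to $\Lam{x}\mathsf{idToEdge}_{\gB\prn{x}}^{fx,fx}\Refl\equiv\Lam{x}\Rx{\gB\prn{x}}\prn{fx}$ under the fibrewise maps. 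The two values are definitionally equal, so $\Refl$ witnesses the required identification and the induction closes. Since being an equivalence is preserved under homotopy, $\mathsf{idToEdge}_{\gG}^{f,g}$ is an equivalence and $\gG$ is univalent.

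The main obstacle, such as it is, lies in the two appeals to function extensionality — that the identity type of the product computes to a product of identity types, and that a fibrewise equivalence lifts to the dependent product — and these are precisely where the hypothesis of dependent function extensionality is consumed. The coherence computation is otherwise routine, since both sides collapse on $\Refl$: $\mathsf{happly}$ produces the constant family $\Lam{x}\Refl$, and each $\mathsf{idToEdge}_{\gB\prn{x}}$ sends $\Refl$ to the reflexive edge $\Rx{\gB\prn{x}}\prn{fx}$.
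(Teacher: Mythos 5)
Your proof is correct, but it takes a genuinely different route from the paper, which in fact offers no proof of this proposition at all: later on (in the definition labelled ``Function extensionality'') the paper \emph{defines} dependent function extensionality to be precisely the statement that $\Prod{x:A}\gB\prn{x}$ is a path object whenever each $\gB\prn{x}$ is, implicitly relying on the well-known interderivability of the various formulations of function extensionality. You instead start from the $\mathsf{happly}$ formulation and give an honest derivation: you factor the canonical map $\Id{\vrt{\gG}}{f}{g}\to f\Edge{\gG}g$ through $\mathsf{happly}$ followed by the fibrewise equivalences $\mathsf{idToEdge}_{\gB\prn{x}}$, check that the composite agrees with $\mathsf{idToEdge}_{\gG}$ on $\Refl$ (where both reduce definitionally to $\Lam{x}\Rx{\gB\prn{x}}\prn{fx}$, using the $\eta$-law for products that the paper assumes), and conclude by identification induction and closure of equivalences under homotopy. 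This is sound, and both appeals to function extensionality are correctly identified. The one stylistic mismatch is that the paper's other univalence lemmas (for total objects, coproducts, comprehensions) all proceed by contracting fans rather than by exhibiting $\mathsf{idToEdge}$ as an equivalence; a proof in that house style would show $\Fan{\gG}{f}\equiv\Sum{g}\Prod{x:A}fx\Edge{\gB\prn{x}}gx$ is contractible using distributivity of $\Pi$ over $\Sigma$ together with weak function extensionality (products of contractible types are contractible). Your approach buys an explicit description of the inverse map; the paper's definitional move buys brevity at the cost of deferring the equivalence of formulations of funext to the literature.
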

    \end{xsect}

    \begin{xsect}{Dependent types and displayed reflexive graphs}

      So far we have described how to give shallow and deep characterisations of \emph{non-dependent} types. Of course, most mathematical structures of any interest (\eg monoids, groups, rings, \etc, or even reflexive graphs themselves!) are described by dependent types, as the types of the operations depend on the carrier types.

      Given a type $A:\TYPE$ and a family of types $x:A\vdash B\prn{x} : \TYPE$, how can we characterise the identity type of the sum $\Sum{x:A}B\prn{x}$? A shallow characterisation in the style of \cref{sec:shallow} is at least not out of reach.

      \begin{proposition}[Shallow characterisation of the indexed sum]\label[proposition]{prop:shallow-dependent}
        Each of the following functions is an equivalence:

        \iblock{
          \mhang{
            x,y:A; u:B\prn{x}, v:B\prn{y} \vdash
          }{
            \mrow{
              \mathsf{dsplitId}_{A,B} \colon \Id{\Sum{x:A}{B\prn{x}}}{\prn{x,u}}{\prn{y,v}}
              \to
              \Sum{p:\Id{A}{x}{y}}
              \Id{B\prn{y}}{p_*^B u}{v}
            }
            \mrow{
              \mathsf{dsplitId}_{A,B}\,\Refl :\equiv \prn{\Refl,\Refl}
            }
          }
        }
      \end{proposition}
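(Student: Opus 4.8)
The plan is to mirror the strategy used for the non-dependent case (\cref{ex:shallow-bin-prod}): the forward map $\mathsf{dsplitId}_{A,B}$ is already specified by its action on $\Refl$ (and typechecks because $\Refl_*^B u \equiv u$), so it remains to exhibit an explicit inverse and verify the two round-trip coherences by identification induction. The only genuinely new ingredient compared with the non-dependent case is the transport $p_*^B u$ appearing in the fibre component, which forces a \emph{nested} induction in place of a single one.

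First I would define the candidate inverse
\[
  x,y:A;\, u:B\prn{x}, v:B\prn{y} \vdash \mathsf{dunsplitId}_{A,B} : \left(\Sum{p:\Id{A}{x}{y}} \Id{B\prn{y}}{p_*^B u}{v}\right) \to \Id{\Sum{x:A}{B\prn{x}}}{\prn{x,u}}{\prn{y,v}}
\]
by first performing based identification induction on the component $p:\Id{A}{x}{y}$ (fixing $x$, varying $y$). In the resulting case $y\equiv x$ and $p\equiv\Refl$, the transport computes definitionally, $\Refl_*^B u \equiv u$, so the second component reduces to an identification $q:\Id{B\prn{x}}{u}{v}$; a further identification induction on $q$ leaves the goal $\Id{\Sum{x:A}B\prn{x}}{\prn{x,u}}{\prn{x,u}}$, which I discharge with $\Refl$. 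In summary, $\mathsf{dunsplitId}_{A,B}\,\prn{\Refl,\Refl} :\equiv \Refl$.

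It then remains to show that $\mathsf{dsplitId}_{A,B}$ and $\mathsf{dunsplitId}_{A,B}$ are mutually inverse. For the composite $\mathsf{dsplitId}_{A,B}\circ\mathsf{dunsplitId}_{A,B}$ I would apply the same two nested inductions (on $p$, then on $q$); in the base case both functions compute on $\Refl$, so the round trip is definitionally the identity on $\prn{\Refl,\Refl}$ and the coherence is witnessed by $\Refl$. For the composite $\mathsf{dunsplitId}_{A,B}\circ\mathsf{dsplitId}_{A,B}$, a single identification induction on the input $w:\Id{\Sum{x:A}B\prn{x}}{\prn{x,u}}{\prn{y,v}}$ suffices: when $w\equiv\Refl$ the endpoints coincide and the round trip reduces to $\mathsf{dunsplitId}_{A,B}\prn{\Refl,\Refl}\equiv\Refl$, again giving $\Refl$.

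The step I expect to require the most care is the \emph{ordering and well-formedness of the motives} in the inverse construction and in the first coherence: the based induction on $p$ must come first so that $p_*^B u$ reduces \emph{before} one attempts to induct on the fibre identification $q$, since prior to that reduction $q$ has type $\Id{B\prn{y}}{p_*^B u}{v}$ with $y$ and $p$ still free and is not yet in a form amenable to induction. Accordingly, the motive for the outer induction must quantify over both $v:B\prn{y}$ and $q$. Once the inductions are sequenced correctly, the remainder is the routine bookkeeping already familiar from \cref{ex:shallow-bin-prod}. (Alternatively, one could avoid building the inverse by hand and instead invoke the fundamental theorem of identity types, showing that the total space $\Sum{y:A}\Sum{v:B\prn{y}}\Sum{p:\Id{A}{x}{y}}\Id{B\prn{y}}{p_*^B u}{v}$ is contractible; but the explicit approach is more in keeping with the shallow characterisations of this section.)
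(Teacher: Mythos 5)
Your proposal is correct and follows essentially the approach the paper intends: the paper states \cref{prop:shallow-dependent} without proof, relying on the same pattern as its proof of \cref{ex:shallow-bin-prod} (explicit inverse defined by identification induction, coherences by identification induction), which is exactly what you carry out. Your observation that the induction on $p$ must precede the induction on the fibre identification $q$ (so that $p_*^B u$ reduces to $u$ first, with $v$ and $q$ generalised in the outer motive) is the one genuinely new point relative to the non-dependent case, and you have it right.
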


      The shallow characterisation in \cref{prop:shallow-dependent} is more complex than that of \cref{ex:shallow-bin-prod} because the second component of the the sum $\Sum{p:\Id{A}{x}{y}}
        \Id{B\prn{y}}{p_*^B u}{v}$ involves a \emph{transport}. From a higher vantage point, the problem being solved by transport here is that we wish to identify $u:B\prn{x}$ with $v:B\prn{y}$ but these do not have the same type. Transport uses the identification $p:\Id{A}{x}{y}$ to find a common type in which $u$ and $v$ can be compared; but note that the choice of a forward transport $p_*^B\colon B\prn{x}\to B\prn{y}$ is somewhat arbitrary, as we might have also defined a backward transport $p^*_B \colon B\prn{y}\to B\prn{x}$ and compared $u,v$ in $B\prn{x}$.

      In order to generalise \cref{prop:shallow-dependent} to a \emph{deep} characterisation of the indexed sum, we must naturally generalise the concept of reflexive graph to one in which vertices form a \emph{dependent} type and edges go between different components linked by an edge in the base. This is precisely the purpose of the \emph{displayed reflexive graphs} of \cite{schipp-von-branitz-buchholtz:2021}.

      In particular, let $\gA$ be a reflexive graph as before and let $\gB$ be a \DefEmph{displayed reflexive graph} over $\gA$ so that we have the following data:

      \begingroup
      \setlength\columnsep{-4cm}
      \begin{multicols}{2}
        \iblock{
          \mrow{\vrt{\gA} : \TYPE}
          \mrow{{\Edge{\gA}} : \vrt{\gA}\to \vrt{\gA}\to \TYPE}
          \mrow{\Rx{\gA} : \Prod{x:\vrt{\gA}} x\Edge{\gA}x}
          \columnbreak
          \mrow{\vrt{\gB} : \vrt{\gA}\to \TYPE}
          \mrow{{\Edge{\gB}[\bullet]} : \Prod[\brc]{x,y:\vrt{\gA}} x\Edge{\gA}y\to \vrt{\gB}\prn{x}\to \vrt{\gB}\prn{y}\to \TYPE}
          \mrow{\DRx{\gB}{\bullet} : \Prod{x:\vrt{\gA}}\Prod{u:\vrt{\gB}\prn{x}} u\Edge{\gB}[\Rx{\gA}{x}]u}
        }
      \end{multicols}
      \endgroup

      Any displayed reflexive graph such as $\gB$ gives rise to a family of ordinary reflexive graphs $\gB\prn{x}$ indexed in vertices $x:\vrt{\gA}$.
      \begin{align*}
        \vrt{\gB\prn{x}}
         & :\equiv
        \vrt{\gB}\prn{x}
        \\
        u \Edge{\gB\prn{x}} v
         & :\equiv
        u \Edge{\gB}[\Rx{\gA}{x}] v
        \\
        \Rx{\gB\prn{x}}{u}
         & :\equiv
        \DRx{\gB}{x}{u}
      \end{align*}

      A displayed reflexive graph $\gB$ will be called \DefEmph{univalent} (or a displayed path object for short) when each of its components is univalent. Considering only the components for univalence may seem at first counterintuitive, but this definition is justified by \cref{prop:deep-dependent} below. For this, we first define the \DefEmph{total reflexive graph} of $\gB$ with vertices in the sum of $\vrt{\gB}$ over $\vrt{\gA}$ as follows:
      \begin{align*}
        \vrt{\gA.\gB}
         & :\equiv
        \Sum{x:\vrt{\gA}}\vrt{\gB}\prn{x}
        \\
        \prn{x,u} \Edge{\gA.\gB} \prn{y,v}
         & :\equiv
        \Sum{p:x\Edge{\gA}y}
        u\Edge{\gB}[p] v
        \\
        \Rx{\gA.\gB}{\prn{x,u}}
         & :\equiv
        \prn{\Rx{\gA}{x},\DRx{\gB}{x}{u}}
      \end{align*}

      Then the deep characterisation for indexed sums is exactly the statement that the total reflexive graph of a displayed path object is a path object.

      \begin{proposition}[Deep characterisation of the indexed sum]\label[proposition]{prop:deep-dependent}
        Let $\gB$ be a displayed path object over a reflexive graph $\gA$. Then the total reflexive graph $\gA.\gB$ is a path object.
      \end{proposition}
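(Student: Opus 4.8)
The plan is to exhibit $\mathsf{idToEdge}_{\gA.\gB}$ as a composite of equivalences, reusing the shallow characterisation of the indexed sum (\cref{prop:shallow-dependent}) to dispose of the identity type of the total vertex type. Fix total vertices $\prn{x,u}$ and $\prn{y,v}$. By \cref{prop:shallow-dependent} the map
\[
  \mathsf{dsplitId}_{\vrt{\gA},\vrt{\gB}} \colon \Id{\vrt{\gA.\gB}}{\prn{x,u}}{\prn{y,v}} \to \Sum{p : \Id{\vrt{\gA}}{x}{y}} \Id{\vrt{\gB}\prn{y}}{p_*^{\vrt{\gB}}u}{v}
\]
is an equivalence, so it suffices to produce an equivalence from its codomain to the total edge type $\prn{x,u}\Edge{\gA.\gB}\prn{y,v} \equiv \Sum{q : x\Edge{\gA}y} u\Edge{\gB}[q]v$, and then to check that the resulting composite agrees with $\mathsf{idToEdge}_{\gA.\gB}$.

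I would construct this second equivalence as a map of indexed sums, covering an equivalence of bases by a fibrewise one. On the base I use that $\gA$ is itself a path object --- the evident companion hypothesis, needed just as both factors must be path objects in \cref{prop:deep-bin-prod} --- so that $\mathsf{idToEdge}_{\gA} \colon \Id{\vrt{\gA}}{x}{y} \to x\Edge{\gA}y$ is an equivalence. Over each $p : \Id{\vrt{\gA}}{x}{y}$ I then need a fibrewise equivalence $\Id{\vrt{\gB}\prn{y}}{p_*^{\vrt{\gB}}u}{v} \simeq u\Edge{\gB}[\mathsf{idToEdge}_{\gA}\,p]v$. Granting this, the induced map on sums is an equivalence by the standard fact that a fibrewise equivalence lying over an equivalence of base types induces an equivalence of total types.

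The fibrewise equivalence is the crux, and it is obtained by based identification induction (\cref{principle:based-identification-induction}) on $p$. In the case $p \equiv \Refl$ the transport $p_*^{\vrt{\gB}}$ is the identity, the base edge $\mathsf{idToEdge}_{\gA}\,\Refl$ computes definitionally to $\Rx{\gA}{x}$, and the required equivalence degenerates to a map $\Id{\vrt{\gB}\prn{x}}{u}{v} \to u\Edge{\gB}[\Rx{\gA}{x}]v \equiv u\Edge{\gB\prn{x}}v$ --- which is precisely $\mathsf{idToEdge}_{\gB\prn{x}}$, an equivalence because $\gB$ is a displayed path object, \ie each component $\gB\prn{x}$ is univalent. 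This is exactly the phenomenon the discussion preceding the statement flags: restricting univalence of $\gB$ to its components is enough, because identification induction always collapses an arbitrary base edge down to a reflexivity edge. I expect recognising this reduction --- that the only fibre one must ever treat lies over $\Rx{\gA}{x}$ and is governed by component univalence --- to be the main conceptual obstacle; everything else is bookkeeping.

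It remains to discharge the compatibility obligation. Unfolding definitions, the composite equivalence sends $\Refl$ to $\prn{\Rx{\gA}{x}, \DRx{\gB}{x}{u}}$: indeed $\mathsf{dsplitId}$ sends $\Refl$ to $\prn{\Refl,\Refl}$, the base map sends $\Refl$ to $\Rx{\gA}{x}$, and the fibre map sends $\Refl$ to $\DRx{\gB}{x}{u}$. Since $\mathsf{idToEdge}_{\gA.\gB}$ likewise sends $\Refl$ to $\Rx{\gA.\gB}{\prn{x,u}} \equiv \prn{\Rx{\gA}{x}, \DRx{\gB}{x}{u}}$, identification induction yields a homotopy between $\mathsf{idToEdge}_{\gA.\gB}$ and the composite. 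As being an equivalence is preserved under homotopy and the composite is an equivalence, $\mathsf{idToEdge}_{\gA.\gB}$ is an equivalence; hence $\gA.\gB$ is a path object. I note that every step above is free of function extensionality, in contrast to the infinitary product.
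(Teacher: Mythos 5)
Your proof is correct, and you rightly supply the hypothesis that $\gA$ itself be univalent, which the statement as written omits but which the paper's own version of this result (\cref{lem:total-po}) makes explicit and which is genuinely necessary (take $\gB$ trivial and $\gA$ codiscrete on a non-proposition). Your route, however, differs from the paper's. The paper proves \cref{lem:total-po} by verifying condition (1) of \cref{def:univalent-reflexive-graph} directly: it shows that the fan $\Fan{\gA.\gB}{\prn{x,u}}$ is a proposition by reassociating it, contracting $\Fan{\gA}{x}$ onto $\prn{x,\Rx{\gA}{x}}$, and then contracting $\Fan{\gB\prn{x}}{u}$ --- two successive fan contractions and nothing else. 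You instead verify condition (5), exhibiting $\mathsf{idToEdge}_{\gA.\gB}$ up to homotopy as the composite of the shallow characterisation $\mathsf{dsplitId}$ with a total map built from $\mathsf{idToEdge}_{\gA}$ on the base and a fibrewise equivalence over it, the latter obtained by based identification induction and collapsing onto $\mathsf{idToEdge}_{\gB\prn{x}}$. Your decomposition has more moving parts --- it needs \cref{prop:shallow-dependent}, the lemma that a fibrewise equivalence lying over an equivalence of bases induces an equivalence of total types, and a final coherence check that the composite agrees with $\mathsf{idToEdge}_{\gA.\gB}$ --- but it buys an explicit witness that the deep characterisation literally refines the shallow one, which is the narrative the surrounding text is telling. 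The paper's fan argument is shorter and is the pattern reused throughout \cref{sec:reflexive-graphs-and-path-objects} (compare \cref{lem:coprod-po,lem:compr-po}), so it is worth internalising as well; both arguments, as you note, are free of function extensionality.
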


    \end{xsect}
  \end{xsect}

  \begin{xsect}{Characterising \emph{transport} with reflexive graph lenses}
    Everything we have described so far is more or less standard. Path objects (and equivalent concepts such as \emph{torsorial families}) have been used to great effect in a variety of formalised libraries of univalent mathematics, including \texttt{agda-unimath}~\citep{agda-unimath} and the 1Lab~\citep{1lab}. The purpose of the present paper is to give a deeper analysis of a large class of displayed path objects that arise in a particularly simple way, in fact obviating the need to separately prove many univalence lemmas. We return to the shallow characterisation of the identity types of dependent sums from \cref{prop:shallow-dependent}:
    \[
      \Id{\Sum{x:A}B\prn{x}}{\prn{x,u}}{\prn{y,v}} \cong
      \Sum{p:\Id{A}{x}y}
      \Id{B\prn{y}}{p_*^Bu}{v}
    \]

    In our deep characterisation (\cref{prop:deep-dependent}), we generalised $\Id{A}{x}{y}$ to the edges $x\Edge{\gA}y$ of a reflexive graph $\gA$ and we generalised $\Id{B\prn{y}}{p_*^Bu}{v}$ to the \emph{displayed edges} $u\Edge{\gB}[p]{v}$ of a displayed reflexive graph $\gB$ over $\gA$. This generalisation abstracts away the use of transport to get the vertices $u$ and $v$ to lie in the same component of $B$, at the cost of needing to specify $\gB$ as a displayed reflexive graph rather than as a family of reflexive graphs indexed in $\gA$.

    Our starting point is to consider a different, less abstract, generalisation of \cref{prop:shallow-dependent} in which we replace $A$ with a reflexive graph $\gA$ as before, but we replace $B$ not with a displayed reflexive graph over $\gA$, but instead with a \emph{family} of reflexive graphs $\gB\prn{x}$ indexed in vertices $x:\vrt{\gA}$ that is equipped with a \emph{transport} or \emph{pushforward} operation to jump from one component to the next. To a first approximation, we have assumed the following data:

    \iblock{
      \setlength\columnsep{-3cm}
      \begin{multicols}{2}
        \mrow{\vrt{\gA} : \TYPE}
        \mrow{{\Edge{\gA}} : \vrt{\gA}\to \vrt{\gA}\to \TYPE}
        \mrow{\Rx{\gA} : \Prod{x:\vrt{\gA}} x\Edge{\gA}x}
        \columnbreak
        \mrow{\vrt{\gB\prn{-}} : \vrt{\gA}\to \TYPE}
        \mrow{{\Edge{\gB\prn{-}}} : \Prod{x:\vrt{\gA}} \vrt{\gB\prn{x}}\to\vrt{\gB\prn{x}}\to \TYPE}
        \mrow{\Rx{\gB\prn{-}} : \Prod{x:\vrt{\gA}}\Prod{u:\vrt{\gB}\prn{x}} u\Edge{\gB}u}
      \end{multicols}
      \mrow{
        \Push{\gB}{\bullet} :
        \Prod[\brc]{x,y:\vrt{\gA}}
        \Prod{p:x\Edge{\gA}y}
        \vrt{\gB\prn{x}}\to \vrt{\gB\prn{y}}
      }
    }

    With this data in hand, we can define an appropriate displayed reflexive graph $\CovDisp{\gA}{\gB}$ over $\gA$ that mirrors the passage from the shallow characterisation (\cref{prop:shallow-dependent}) to the deep characterisation (\cref{prop:deep-dependent}):
    \begin{align*}
      \vrt{\CovDisp{\gA}{\gB}}\prn{x}
       & :\equiv
      \vrt{\gB\prn{x}}
      \\
      u \Edge{\CovDisp{\gA}{\gB}}[p:x\Edge{\gA}y] v
       & :\equiv
      \Push{\gB}{p}{u}\Edge{\gB\prn{y}} v
    \end{align*}

    It remains to define the displayed reflexivity datum $\DRx{\CovDisp{\gA}{\gB}}{x}{u} : \Push{\gB}{\Rx{\gA}{x}}{u} \Edge{\gB\prn{x}} u$, but we do not have anything on hand from which to define this. We are therefore led to assert this reflexivity datum as part of the \emph{data} of $\gB$:

    \iblock{
      \mrow{
        \PushRx{\gB}[\bullet] : \Prod{x:\vrt{\gA}} \Prod{u:\vrt{\gB\prn{x}}} \Push{\gB}{\Rx{\gA}{x}}{u} \Edge{\gB\prn{x}} u
      }
    }

    The above has the appearance of an \emph{oplax unitor} for the pushforward operation. Using this oplax unitor, we may finish defining the displayed reflexive graph $\CovDisp{\gA}{\gB}$:
    \begin{align*}
      \DRx{\CovDisp{\gA}{\gB}}{x}{u}
       & :\equiv
      \PushRx{\gB}[x]{u}
    \end{align*}

    Altogether, we shall refer to a family of reflexive graphs $\gB$ equipped with $\Push{\gB}{\bullet}$ and $\PushRx{\gB}[\bullet]$ as an \DefEmph{oplax covariant lens} of reflexive graphs (see \cref{def:oplax-cov-lens}) over $\gA$ --- \emph{covariant} because $\Push{\gB}{\bullet}$ implements a forward transport, and \emph{oplax} because of the orientation of the unitor $\PushRx{\gB}$. The terminology of \emph{lenses} is borrowed from \citet{johnson-rosebrugh:2013,chollet-et-al:2022:lenses}, who use it to refer to an algebraic generalisation of fibrations in which the chosen lifts are not required to have a universal property but instead satisfy a unit law (strictly, in the case of \opcit).

    Naturally, we can (and will) introduce a dual notion of \emph{lax contravariant lens} for characterising backward transport in a family of reflexive graphs. When $\gB$ is a lax contravariant lens in this sense (see \cref{def:lax-ctrv-lens}), we can likewise associate a displayed reflexive graph $\CtrvDisp{\gA}{\gB}$ in which a displayed edge $u\Edge{\CtrvDisp{\gA}{\gB}}[p:x\Edge{\gA} y] v$ is given by an edge $u \Edge{\gB\prn{x}} \Pull{\gB}{p}{v}$.

    \begin{proposition}[See \cref{lem:cov-disp-po}]
      Let $\gB$ be an oplax covariant (\resp lax contravariant) lens of path objects over a reflexive graph $\gA$. Then the displayed reflexive graph $\CovDisp{\gA}{\gB}$ (\resp $\CtrvDisp{\gA}{\gB}$) is univalent.
    \end{proposition}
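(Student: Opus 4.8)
The plan is to reduce the statement to the definition of univalence for displayed reflexive graphs and then to isolate a small lemma about a single path object. Recall that a displayed reflexive graph is univalent precisely when each of its component reflexive graphs is a path object, so it suffices to fix a vertex $x : \vrt{\gA}$ and show that the component $\prn{\CovDisp{\gA}{\gB}}\prn{x}$ is univalent. Unfolding the component construction together with the definition of $\CovDisp{\gA}{\gB}$, this component is the reflexive graph whose vertices are $\vrt{\gB\prn{x}}$, whose edge type from $u$ to $v$ is $\Push{\gB}{\Rx{\gA}{x}}{u} \Edge{\gB\prn{x}} v$, and whose reflexivity datum at $u$ is the oplax unitor $\PushRx{\gB}[x]{u}$. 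Since the hypothesis says exactly that each $\gB\prn{x}$ is a path object, the whole proposition follows from the following self-contained lemma, applied with $\gG :\equiv \gB\prn{x}$, with $T :\equiv \Push{\gB}{\Rx{\gA}{x}}$, and with $\eta_u :\equiv \PushRx{\gB}[x]{u}$: \emph{if $\gG$ is a univalent reflexive graph, $T : \vrt{\gG} \to \vrt{\gG}$ is any function, and $\eta_u : Tu \Edge{\gG} u$ is an edge for each $u$, then the reflexive graph $\gG'$ with the same vertices as $\gG$, with edge type $Tu \Edge{\gG} v$ from $u$ to $v$, and with reflexivity datum $\eta_u$, is again univalent.}

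To prove the lemma I would exhibit the map $\mathsf{idToEdge}_{\gG'} : \Id{\vrt{\gG}}{u}{v} \to \prn{Tu \Edge{\gG} v}$, which by construction sends $\Refl$ to $\eta_u$, as homotopic to a composite of two equivalences. Since $\gG$ is univalent, $\mathsf{idToEdge}_{\gG}^{Tu,v} : \Id{\vrt{\gG}}{Tu}{v} \to \prn{Tu \Edge{\gG} v}$ is an equivalence; let $\epsilon_u : \Id{\vrt{\gG}}{Tu}{u}$ be a preimage of the unitor $\eta_u$ under it, so that $\mathsf{idToEdge}_{\gG}\prn{\epsilon_u} = \eta_u$. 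I then claim that $\mathsf{idToEdge}_{\gG'}$ is homotopic to $\mathsf{idToEdge}_{\gG} \circ \prn{\epsilon_u \ct -}$, where $\epsilon_u \ct - : \Id{\vrt{\gG}}{u}{v} \to \Id{\vrt{\gG}}{Tu}{v}$ denotes left-concatenation with the fixed identification $\epsilon_u$. This claim is proved by identification induction on $p : \Id{\vrt{\gG}}{u}{v}$ (based at $u$): in the base case $\mathsf{idToEdge}_{\gG'}\prn{\Refl} \equiv \eta_u$, while $\mathsf{idToEdge}_{\gG}\prn{\epsilon_u \ct \Refl} \equiv \mathsf{idToEdge}_{\gG}\prn{\epsilon_u}$, and the two agree by the defining property of $\epsilon_u$. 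As left-concatenation with a fixed identification is an equivalence (its inverse is concatenation with $\epsilon_u^{-1}$) and $\mathsf{idToEdge}_{\gG}$ is an equivalence, the composite — and hence $\mathsf{idToEdge}_{\gG'}$, being homotopic to it — is an equivalence, which is exactly univalence of $\gG'$.

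The lax contravariant case is formally dual, and I would merely record the dualisation rather than repeat the argument. Here the component $\prn{\CtrvDisp{\gA}{\gB}}\prn{x}$ has edge type $u \Edge{\gB\prn{x}} \Pull{\gB}{\Rx{\gA}{x}}{v}$ from $u$ to $v$, so taking $T :\equiv \Pull{\gB}{\Rx{\gA}{x}}$ acting on the \emph{target} and $\eta_v : v \Edge{\gG} Tv$ the lax unitor, the same computation shows that $\mathsf{idToEdge}$ for this component is homotopic to $\mathsf{idToEdge}_{\gG} \circ \prn{- \ct \epsilon_v}$, right-concatenation with a fixed $\epsilon_v : \Id{\vrt{\gG}}{v}{Tv}$, again a composite of equivalences. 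The only step that is not pure bookkeeping — and hence the main obstacle — is the identification-induction computation that collapses the composite to concatenation with a fixed path; once that base-case computation is pinned down, the result reduces to composing the univalence equivalence of the components with the elementary fact that concatenation with a fixed identification is an equivalence.
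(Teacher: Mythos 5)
Your proof is correct, but it takes a genuinely different and somewhat heavier route than the paper's. You work with the characterisation of univalence as \emph{$\mathsf{idToEdge}$ is an equivalence}, which forces you to invert the univalence equivalence of the component to produce $\epsilon_u$, to bring in concatenation of identifications, and to run an identification induction exhibiting $\mathsf{idToEdge}_{\gG'}$ as homotopic to the composite $\mathsf{idToEdge}_{\gG}\circ\prn{\epsilon_u\ct-}$. The paper instead uses the characterisation by fans: the fan of $u$ in the component $\prn{\CovDisp{\gA}{\gB}}\prn{x}$ is \emph{definitionally} equal to $\Fan{\gB\prn{x}}{\Push{\gB}{\Rx{\gA}{x}}{u}}$, since $\Sum{v:\vrt{\gB\prn{x}}}\Push{\gB}{\Rx{\gA}{x}}{u}\Edge{\gB\prn{x}}v$ is literally a fan of $\gB\prn{x}$, and this is a proposition because $\gB\prn{x}$ is univalent --- no path algebra, no inversion of equivalences, no induction. (In both arguments the unitor does no real work beyond inhabiting the fan, which is what upgrades ``proposition'' to ``contractible''.) Your self-contained lemma --- reindexing the source of the edge relation of a path object along an arbitrary endofunction $T$ equipped with $\eta_u:Tu\Edge{\gG}u$ preserves univalence --- is a clean way to package the content, and the argument is sound; the one wrinkle is that $\epsilon_u\ct\Refl\equiv\epsilon_u$ holds definitionally only for one of the two standard definitions of concatenation, though a propositional equality suffices there in any case. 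What your approach buys is an explicit description of the resulting equivalence on edge types; what the paper's buys is a two-line proof. The dual contravariant case is dispatched briefly in both treatments, yours via right-concatenation and the paper's via co-fans.
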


    The pay-off of introducing lenses of reflexive graphs is twofold. First of all, many naturally occurring displayed reflexive graphs have the shape of $\CovDisp{\gA}{\gB}$ or $\CtrvDisp{\gA}{\gB}$ already; more importantly, however, it frequently happens that the components of the given displayed reflexive graph arise as a pre-existing family of reflexive graphs for which we have already proved univalence. Therefore, it is advantageous to obtain a displayed path object automatically from a very simple algebraic structure on a pre-existing family of path objects: a pushforward operator and a lax unitor.

  \end{xsect}

  \begin{xsect}{Characterising \emph{identification induction} with dependent lenses}
    \NewDocumentCommand\MAGMA{}{\mathsf{Magma}_U}
    \NewDocumentCommand\BINOP{}{\mathsf{BinOp}_U}

    Naturally, not all useful displayed path objects arise from (oplax covariant, lax contravariant) lenses. We have found, however, that many of the important counterexamples instead arise from a common generalisation of oplax covariant and lax contravariant lenses, which shares the advantages thereof over working directly with displayed reflexive graphs.
    Consider for example the case of \emph{magmas} in a universe $U$, which are specified by the following data:

    \iblock{
      \mrow{
        \vrt{M} : \TYPE
      }
      \mrow{
        \otimes_M : \vrt{M}\times\vrt{M}\to \vrt{M}
      }
    }

    An equivalence of magmas $M\Edge{\MAGMA} N$ is given by an equivalence of types $f : \vrt{M} \cong \vrt{N}$ that preserves the binary operation in the following sense:
    \[
      \DiagramSquare{
        nw = \vrt{M}\times\vrt{M},
        sw = \vrt{N}\times\vrt{N},
        ne = \vrt{M},
        se = \vrt{N},
        east = f,
        west = f\times f,
        south = \otimes_N,
        north = \otimes_M
      }
    \]

    If we write $\gU$ for the reflexive graph structure on the universe $U$ given by equivalences of types, then we would expect that the reflexive graph $\MAGMA$ should arise as the total reflexive graph of the following displayed reflexive graph over $\gU$:
    \begin{align*}
      \vrt{\BINOP}\prn{A}
       & :\equiv
      A\times A\to A
      \\
      {\otimes_A} \Edge{\BINOP}[f:A\Edge{\gU}B] {\otimes_B}
       & :\equiv
      \Prod{x,y:A}
      \Id{B}{f\prn{x \otimes_A y}}{\prn{fx\otimes_B fy}}
      \\
      \DRx{\BINOP}{A}{\otimes_A}
       & :\equiv
      \Lam{x,y} \Refl
    \end{align*}

    The displayed reflexive graph above does \emph{not} arise from an oplax covariant lens, nor from a lax contravariant lens. The following somewhat artificial displayed reflexive graphs could have been obtained from such lenses, but they would be difficult to use:
    \begin{align*}
      \vrt{\BINOP^+}\prn{A}
       & :\equiv
      A\times A\to A
      \\
      {\otimes_A} \Edge{\BINOP^+}[f:A\Edge{\gU}B] {\otimes_B}
       & :\equiv
      \Prod{x,y:B}
      \Id{B}{f\prn{f\Inv x\otimes_A f\Inv y}}{x\otimes_B y}
      \\
      \DRx{\BINOP^+}{A}{\otimes_A}
       & :\equiv
      \Lam{x,y} \Refl
      \\[10pt]
      \vrt{\BINOP^-}\prn{A}
       & :\equiv
      A\times A\to A
      \\
      {\otimes_A} \Edge{\BINOP^-}[f:A\Edge{\gU}B] {\otimes_B}
       & :\equiv
      \Prod{x,y:A}
      \Id{A}{x\otimes_A y}{f\Inv\prn{fx\otimes_B fy}}
      \\
      \DRx{\BINOP^+}{A}{\otimes_A}
       & :\equiv
      \Lam{x,y} \Refl
    \end{align*}

    In order to capture the structure necessary to obtain the displayed reflexive graph $\BINOP$ from a \emph{family} of ordinary reflexive graphs, we must consider a different kind of lens. Before generalising, we first work with this specific case. Let $\BINOP^\pm\prn{f}$ be a reflexive graph as defined below for each equivalence $f:A\Edge{\gU}B$ of types:
    \[
      \BINOP^\pm\prn{f:A\Edge{\gU}B} :\equiv
      \Prod{\_:A\times A}
      \DiscPO{B}
    \]

    Then we define the following operations on $\BINOP^\pm$:
    \iblock{
      \mrow{
        \LJ{\BINOP^\pm}{\bullet} :
        \Prod[\brc]{A,B:U}
        \Prod{f:A\Edge{\gU}B}
        \BINOP^\pm\prn{\Rx{\gU}{A}}
        \to \BINOP^\pm\prn{f}
      }
      \mrow{
        \LJ{\BINOP^\pm}{f:A\Edge{\gU}B}{\otimes_A}
        :\equiv
        \Lam{\prn{x,y}:A\times A}
        f\prn{x\otimes_A y}
      }
      \row
      \mrow{
        \RJ{\BINOP^\pm}{\bullet} :
        \Prod[\brc]{A,B:U}
        \Prod{f:A\Edge{\gU}B}
        \BINOP^\pm\prn{\Rx{\gU}{B}}
        \to \BINOP^\pm\prn{f}
      }
      \mrow{
        \RJ{\BINOP^\pm}{f:A\Edge{\gU}B}{\otimes_B}
        :\equiv
        \Lam{\prn{x,y}:A\times A}
        fx\otimes_B fy
      }

      \row

      \mrow{
        \MidJRx{\BINOP^\pm}[\bullet]
        :
        \Prod{A:U}
        \Prod{\otimes_A : A\times A\to A}
        \LJ{\BINOP^\pm}{\Rx{\gU}{A}}{\otimes_A}
        \Edge{\BINOP^\pm\prn{\Rx{\gU}{A}}}
        \RJ{\BINOP^\pm}{\Rx{\gU}{A}}{\otimes_A}
      }

      \mrow{
        \MidJRx{\BINOP^\pm}[A]{\otimes_A}
        :\equiv
        \Rx{\Prod{\_:A\times A}\DiscPO{B}}{\otimes_A}
      }
    }

    We can now reconstruct the displayed reflexive graph $\BINOP$ in terms of $\BINOP^\pm$:
    \begin{align*}
      \vrt{\BINOP}\prn{A}
       & \equiv
      \vrt{\BINOP^\pm\prn{\Rx{\gU}{A}}}
      \\
      {\otimes_A} \Edge{\BINOP}[f:A\Edge{\gU}B] {\otimes_B}
       & \equiv
      \LJ{\BINOP^\pm}{f}{\otimes_A}
      \Edge{\BINOP^\pm\prn{f}}
      \RJ{\BINOP^\pm}{f}{\otimes_B}
      \\
      \DRx{\BINOP}{A}{\otimes_A}
       & \equiv
      \MidJRx{\BINOP^\pm}[A]{\otimes_A}
    \end{align*}

    Abstracting from the specific example of binary operations, the critical move above has been (1) to replace a given family of reflexive graphs $x:\vrt{\gA}\vdash \gB\prn{x}$ with a more general family $x,y:\vrt{\gA};p:x\Edge{\gA}y \vdash \gB'\prn{x,y,p}$
    such that $\gB'\prn{x,x,\Rx{\gA}{x}}\equiv \gB\prn{x}$, and then (2) define coercions $\LJ{\gB'}{\bullet} : \gB\prn{x} \to \gB'\prn{x,y,p}$ and $\RJ{\gB'}{\bullet} : \gB\prn{y}\to \gB'\prn{x,y,p}$ from the ``left-hand'' and ``right-hand'' diagonal components to the ``centre''. Naturally these coercions must be further equipped with at least a coherence of the form $\MidJRx{\gB'}[x]u : \LJ{\gB'}{\Rx{\gA}{x}}u \Edge{\gB\prn{x}} \RJ{\gB'}{\Rx{\gA}{x}}u$ in order to generate a reflexivity datum; it will happen that in order for the univalence property carry over from $\gB$ to the associated displayed reflexive graph, we shall also need a coherence of the form $\RJRx{\gB'}[x]{u} : u \Edge{\gB\prn{x}} \RJ{\gB'}{\Rx{\gA}{x}}{u}$. Together, all this data forms what we shall refer to as a \DefEmph{unbiased dependent lens} of reflexive graphs over $\gA$.

    \begin{remark}
      In the same way as oplax covariant (\resp lax contravariant) lenses express the interface of forward (\resp backward) \emph{transport} for a family of reflexive graphs, unbiased dependent lenses express the interface of (forward and backward) \emph{based identification induction}: the coercions $\LJ{}{}$ and $\RJ{}{}$ extend data defined on a given reflexivity datum $\Rx{\gA}{x}$ to data defined on an arbitrary path based at $x:\vrt{\gA}$.
    \end{remark}
  \end{xsect}

  \begin{xsect}{Discussion of related work}

    \begin{xsect}{Path objects in homotopy type theory}
      A biased version of path objects was introduced already in the HoTT Book~\citep[\S~5.8]{hottbook} under the name \emph{identity system}; this concept is developed much further in the displayed direction by Rijke in his introductory textbook on homotopy type theory~\citep[\S~11.2]{rijke:2025}. In their guise as \emph{torsorial families}, biased identity systems are employed pervasively in \texttt{agda-unimath}, an ``online encyclopedia of formalized mathematics ... from a univalent point of view''~\citep{agda-unimath}. An unbiased version of identity systems is used extensively in the 1Lab~\citep{1lab}, another library of univalent mathematics formalised in Cubical Agda~\citep{vezzosi-mortberg-abel:2019}.

      Our work is most directly inspired by that of \citet{schipp-von-branitz-buchholtz:2021,schipp-von-branitz:2020:thesis}, who have emphasised the importance of displayed path objects as an organising principle for univalent mathematics. The various notions of lens that we introduce can be thought of as organisational devices to simplify carrying out the methods suggested by \opcit in practice.
    \end{xsect}

    \begin{xsect}{Reflexive graphs and reflexive graph fibrations}
      Although \emph{univalent} reflexive graphs are the ones that are important for characterising identity types, we have found it very useful to develop as much structure as possible in the language of ordinary reflexive graphs; this is because the reflexive graph structure captures the aspects of an identity type's characterisation that we intend to have good definitional/computational behavior, whereas the univalent part describes a universal property that we expect to hold (opaquely) only up to homotopy. Our starting point in studying the theory of reflexive graphs has been the doctoral dissertation of \citet{rijke:2019}, who introduces many important concepts from the univalent point of view, including the univalence condition (called \emph{discreteness} by \opcit) as well as the notion of (covariant, contravariant) fibration of reflexive graphs. (It should be noted that the concept of fibred (non-reflexive) graph was introduced already by \citet{boldi-vigna:2002}.)

      \citet{schipp-von-branitz-buchholtz:2021,schipp-von-branitz:2020:thesis} introduced \emph{displayed reflexive graphs} as a more type theoretical presentation of homomorphisms of reflexive graphs; we have re-based \citet{rijke:2019}'s theory of reflexive graph fibrations in terms of displayed reflexive graphs in roughly the same way that \citet{ahrens-lumsdaine:2019} re-examine (Street) fibrations of categories in terms of displayed categories.
    \end{xsect}

    \begin{xsect}[sec:sip]{The structure identity principle}
      \citet{coquand-danielsson:2013} have famously observed that Voevodsky's univalence principle ensures that the identity types of various algebraic structures (\eg groups, rings, graphs, \etc) can be characterised so as to coincide precisely with the natural notion of invertible homomorphism: for example, an identification of groups valued in a univalent universe is precisely a group isomorphism. Results of this kind are called \emph{structure identity principles}; a convenient general form of the structure identity principle for concrete categories appears in the HoTT Book~\citep[\S~9.8]{hottbook}; a simpler version was introduced by \citet{escardo:2022:introduction} for his lectures at the Midlands Graduate School in 2019, and included as part of the \texttt{TypeTopology} and \texttt{cubical} Agda libraries~\citep{type-topology,agda:cubical-lib:2024}.

      These structure identity principles are the special case of the more general theory of path objects, restricted to the case of displaying an algebraic structure over the path object given by types and equivalences. Of course, not all important path objects take this form: for example, it is necessary to characterise the identity type of the type of sections of a given map, and the base of this thing is a function space rather than a universe. This is why it is important to develop the theory of path objects in its most general form, and then specialise it to obtain a variety of structure identity principles.
    \end{xsect}

    \begin{xsect}{Delta lenses and applied category theory}
      \citet{diskin-xiong-czarnecki:2011} have introduced \emph{delta lenses} as an algebraic structure modeling bidirectional transformations between systems (modelled as categories). As many authors have pointed out~\citep{chollet-et-al:2022:lenses,johnson-rosebrugh:2013,clarke:2020:tac}, delta lenses can be thought of as an algebraic generalisation of split opfibrations in which lifts are chosen functorially but are not required to be cocartesian.

      Our reflexive graph lenses resemble a version of delta lenses that replaces categories with reflexive graphs, but there are some important differences. When $\gB' :\equiv \CovDisp{\gA}{\gB}$ is the displayed reflexive graph associated to an oplax covariant lens $\gB$, we do indeed assign to each $u:\vrt{\gB'}\prn{x}$ and $p:x\Edge{\gA}y$ a displayed vertex $\Push{\gB}{p}{u} : \vrt{\gB'}\prn{x}$ and a lift $\bar{p}: u\Edge{\gB'}[p]\Push{\gB}{p}{u}$ given by the reflexivity datum on the former in $\gB\prn{y}$. In contrast to delta lenses, we do \emph{not} require that $\Push{\gB}{\Rx{\gA}{x}}u$ is equal to $u$: we require only an \emph{oplax unitor}, \ie an edge $\PushRx{\gB}[x]{u} : \Push{\gB}{\Rx{\gA}{x}}{u}\Edge{\gB\prn{y}}u$.

      We also consider contravariant versions of reflexive graph lenses, which turn out to be equally important in practice. Our alignment of (covariant, contravariant) and (oplax, lax) is not chosen arbitrarily, but is rather forced by the need to transform a lens of whichever variance into a displayed reflexive graph.

      In \cref{sec:definitional-lenses}, we study an important class of reflexive graph lenses in which pushforward along the reflexivity edge is definitionally equivalent to the identity function. These \emph{definitional lenses} correspond most closely to delta lenses, and arise frequently when establishing structure identity principles in the sense of \cref{sec:sip}.
    \end{xsect}
  \end{xsect}

  \begin{xsect}{The structure of this paper}
    In \cref{sec:reflexive-graphs-and-path-objects}, we shall recall the theory of (displayed) reflexive graphs and path objects in detail (at the risk of some repetition). In \cref{sec:lenses}, we expose the theory of (oplax covariant, lax contravariant, and unbiased dependent) lenses and prove the uniqueness of all these structures on families of path objects over a path object. In \cref{sec:classifying-rx-gphs}, we develop a case study that applies the theory of reflexive graph lenses to characterise the identity types of reflexive graphs, displayed reflexive graphs, \etc --- in short, we construct large reflexive graphs classifying small reflexive graphs, \etc. Finally, in \cref{sec:fibrations}, we develop the theory of reflexive graph fibrations (building on \citet{rijke:2019}) and conclude with our main result: an equivalence between reflexive graph fibrations and lenses of path objects.
  \end{xsect}

\end{xsect}

\nocite{awodey:2014}
\nocite{rijke:2025}
\nocite{hottbook}
\nocite{schipp-von-branitz:2020:thesis}
\nocite{schipp-von-branitz-buchholtz:2021}
\nocite{ahrens-north-shulman-tsementzis:2022}
\nocite{aczel:2012:sip}
\nocite{coquand-danielsson:2013}
\nocite{agda-unimath,1lab}
\begin{xsect}[sec:reflexive-graphs-and-path-objects]{Reflexive graphs and path objects}

  We recall the theory of reflexive graphs from \citet{rijke:2019}, incorporating the viewpoint of \emph{displayed} reflexive graphs from \citet{schipp-von-branitz-buchholtz:2021}.

  \begin{definition}[Reflexive graph]\label[definition]{def:reflexive-graph}
    A \DefEmph{reflexive graph} $\gA \equiv \prn{\vrt{\gA},\Edge{\gA},\Rx{\gA}}$ is defined to be a type $\vrt{\gA}:\TYPE$ of \emph{vertices} together with a family of types $x:\vrt{\gA},y:\vrt{\gA}\vdash x\Edge{\gA} y$ of \emph{edges} and a reflexivity datum $x:\vrt{\gA}\vdash \Rx{\gA}{x}:x\Edge{\gA} x$.
  \end{definition}

  At times we may refer to \emph{homomorphisms} of reflexive graphs.

  \begin{definition}
    Let $\gA$ and $\gB$ be reflexive graphs. A \emph{homomorphism} $f\colon \gA\to \gB$ is given by the following data:
    \begin{align*}
      x:\vrt{\gA}&\vdash \vrt{f}(x): \vrt{\gB}\\
      x,y:\vrt{\gA};p:x\Edge{\gA}y&\vdash f^{\approx}_{x,y}(p) \colon \vrt{f}(x)\Edge{\gB}\vrt{f}(y)\\
      x:\vrt{\gA}&\vdash \Id{\vrt{f}x\Edge{\gB}\vrt{f}x}{f^{\approx}_{x,x}(\Rx{\gA}x)}{\Rx{\gB}(\vrt{f}x)}
    \end{align*}
  \end{definition}

  \begin{definition}[Displayed reflexive graph]\label[definition]{def:displayed-reflexive-graph}
    Let $\gA$ be a reflexive graph. A \DefEmph{displayed reflexive graph} $\gB \equiv \prn{\vrt{\gB},\Edge{\gB}[\bullet],\DRx{\gB}{\bullet}}$ over $\gA$ is given by the following data:
    \begin{align*}
      x:\vrt{\gA}
       & \vdash \vrt{\gB}\prn{x}:\TYPE\tag{vertices}
      \\
      p:x\Edge{\gA} y,u:\vrt{\gB}\prn{x},v:\vrt{\gB}\prn{y}
       & \vdash u\Edge{\gB}[p] v:\TYPE\tag{edges}
      \\
      x:\vrt{\gA},u:\vrt{\gB}\prn{x}
       & \vdash \DRx{\gB}{x}{u}:u \Edge{\gB}[\Rx{\gA}{x}] u\tag{reflexivity}
    \end{align*}
  \end{definition}

  \begin{definition}[Components of a displayed reflexive graph]\label[definition]{def:component}
    Let $\gB$ be a displayed reflexive graph over $\gA$. Then the \DefEmph{component} of $\gB$ at an element $x:\vrt{\gA}$ is the reflexive graph $\gB\prn{x}$ defined from $\gB$ as follows:
    \begin{align*}
      \vrt{\gB\prn{x}}
       & :\equiv \vrt{\gB}\prn{x}
      \\
      u\Edge{\gB\prn{x}}v
       & :\equiv u \Edge{\gB}[\Rx{\gA}{x}] v
      \\
      \Rx{\gB\prn{x}}{u}
       & :\equiv \DRx{\gB}{x}{u}
    \end{align*}
  \end{definition}

  \begin{notation}[Diagonal family of reflexive graphs]
    When $\gB$ is a displayed reflexive graph over $\gA$, we will at times write $\Diag{\gB}$ for the ``diagonal'' $\vrt{\gA}$-indexed family of reflexive graphs assigning to each $x:\vrt{\gA}$ the component $\Diag{\gB}\prn{x} :\equiv \gB\prn{x}$.
  \end{notation}

\begin{xsect}{Basic constructions on reflexive graphs}
  The most basic reflexive graph structure is the \emph{discrete} one provided by its identity type.

  \begin{definition}[Discrete reflexive graph]\label[definition]{ex:discrete-rx-gph}
    The \DefEmph{discrete reflexive graph} $\DiscPO{A}$ on a type $A$ is given by its identity type, \ie we define $\DiscPO{A} :\equiv \prn{A, \IdCon{A},\Refl}$.
  \end{definition}

  On the other extreme, we may consider the \emph{codiscrete} reflexive graph structure.

  \begin{definition}[Codiscrete reflexive graph]\label[definition]{ex:codiscrete-rx-gph}
    A type $A$ can be, conversely, equipped with the \DefEmph{codiscrete reflexive graph} structure $\CodiscPO{A}$ that identifies any two elements:
    \begin{align*}
      \vrt{\CodiscPO{A}}
      & :\equiv A
      \\
      x \Edge{\CodiscPO{A}}y
      & :\equiv \mathbf{1}
      \\
      \Rx{\CodiscPO{A}}{x}
      & :\equiv *
    \end{align*}
  \end{definition}

  \begin{definition}[Total reflexive graph]\label[definition]{def:total-gph}
    If $\gA$ is a reflexive graph and $\gB$ is a displayed reflexive graph over $\gA$, then the \DefEmph{total reflexive graph} $\gA.\gB$ can be defined with vertices in $\Sum{x:\vrt{\gA}}\vrt{\gB}\prn{x}$ and edges $\prn{a_0,b_0} \Edge{\gA.\gB} \prn{a_1,b_1}$ in $\Sum{p:a_0\Edge{\gA} a_1} b_0 \Edge{\gB}[p] b_1$, with reflexivity data given by $\Rx{\gA.\gB}{\prn{a,b}} :\equiv \prn{\Rx{A}{a},\Rx{B}{b}}$. Naturally, we have a homomorphism $\pi_{\gB}\colon \gA.\gB \to \gA$ of reflexive graphs:
    \begin{align*}
      \vrt{\pi_{\gB}}\,\prn{x,y}
       & :\equiv x
      \\
      \pi_{\gB}^{\approx}\,\prn{p,q}
       & :\equiv
      p
      \\
      \pi_{\gB}^{\RxCon}\,\prn{x,y}
       & :\equiv
      \Refl
    \end{align*}
  \end{definition}

  \begin{definition}[Binary product of reflexive graphs]\label[definition]{ex:bin-prod-rx-gph}
    If $\gA$ and $\gB$ are reflexive graphs, then we may form the product of reflexive graphs $\gA\times\gB$ with vertices in $\vrt{\gA}\times\vrt{\gB}$, where the type of edges $\prn{a_0,b_0}\Edge{\gA\times\gB} \prn{a_1,b_1}$ is the cartesian product $\prn{a_0\Edge{\gA} a_1} \times \prn{b_0\Edge{\gB} b_1}$ with reflexivity data defined by $\Rx{\gA\times\gB}{\prn{x,y}} :\equiv \prn{\Rx{\gA}{x},\Rx{\gB}{y}}$.
  \end{definition}

  We can generalise \cref{ex:bin-prod-rx-gph} to arbitrary families.

  \begin{definition}[Product of reflexive graphs]\label[definition]{ex:prod-rx-gph}
    Let $A$ be a type and let $\gB\prn{x}$ be a reflexive graph for each $x:A$. The product $\Prod{x:A}\gB\prn{x}$ of the family of $\gB$ reflexive graphs is defined like so:
    \begin{align*}
      \vrt{\Prod{x:A}\gB\prn{x}}
       & :\equiv
      \Prod{x:A}\vrt{\gB\prn{x}}
      \\
      f\Edge{\Prod{x:A}\gB\prn{x}}g
       & :\equiv
      \Prod{x:A}{fx}\Edge{\gB\prn{x}}{gx}
      \\
      \Rx{\Prod{x:A}\gB\prn{x}}{f}
       & :\equiv
      \lambda\Bind{x} \Rx{\gB\prn{x}}\prn{fx}
    \end{align*}
  \end{definition}

  \begin{definition}[Coproduct of reflexive graphs]\label[definition]{ex:coprod-rx-gph}
    Let $A$ be a type and let $\gB\prn{x}$ be a reflexive graph for each $x:A$. The coproduct $\Coprod{x:A}\gB\prn{x}$ of the family $\gB$ of reflexive graphs is defined like so:
    \begin{align*}
      \vrt{\Coprod{x:A}\gB\prn{x}}
       & :\equiv
      \Sum{x:A}\vrt{\gB\prn{x}}
      \\
      \prn{a_0,b_0} \Edge{\Coprod{x:A}\gB\prn{x}} \prn{a_1,b_1}
       & :\equiv
      \Sum{p:\Id{A}{a_0}{a_1}} p_*^{\vrt{\gB\prn{-}}}b_0 \Edge{\gB\prn{a_1}}  b_1
      \\
      \Rx{\Coprod{x:A}\gB\prn{x}}{\prn{a,b}}
       & :\equiv
      \prn{\Refl, \Rx{\gB\prn{a}}{b}}
    \end{align*}
  \end{definition}

  Later on in \cref{example:def-lens-over-disc-rx-gph} we will see how to refactor the construction of the coproduct of reflexive graphs in terms of a \emph{lens} over the discrete reflexive graph $\DiscPO{A}$.

  \begin{definition}[Tensor and cotensor of reflexive graphs]\label[definition]{ex:tensor-cotensor}
    Let $A$ be a type and let $\gB$ be a reflexive graph; then the \DefEmph{tensor} and \DefEmph{cotensor} of $\gB$ by $A$ are defined respectively using coproducts and products as below:
    \begin{align*}
      A\cdot\gB      & :\equiv \Coprod{x:A}\gB\tag{tensor} \\
      A\pitchfork\gB & :\equiv \Prod{x:A}\gB\tag{cotensor}
    \end{align*}
  \end{definition}

  \begin{definition}[Constant displayed reflexive graph]\label[definition]{ex:constant-disp-rx-gph}
    Let $\gA$ and $\gB$ be reflexive graphs. The \DefEmph{constant displayed reflexive graph} over $\gA$ induced by $\gB$ is the displayed reflexive graph specified $\gA^*\gB$ below:
    \begin{align*}
      \vrt{\gA^*\gB}\prn{x}
       & :\equiv
      \vrt{\gB}
      \\
      u \Edge{\gA^*\gB}[p] v
       & :\equiv
      u \Edge{\gB} v
      \\
      \DRx{\gA^*\gB}{x}{u}
       & :\equiv
      \Rx{\gB}{u}
    \end{align*}

    Thus, evidently, we have $\prn{\gA^*\gB}\prn{x}\equiv \gB$ for all $x:\vrt{\gA}$.
  \end{definition}

  \begin{observation}[Binary product as a total reflexive graph]\label[observation]{obs:rx-gph-bin-prod-vs-sum}
    For any two reflexive graphs $\gA$ and $\gB$, we have $\gA.\prn{\gA^*\gB} \equiv \gA\times \gB$.
  \end{observation}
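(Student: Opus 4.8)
The statement asserts a \emph{definitional} equality $\gA.\prn{\gA^*\gB}\equiv\gA\times\gB$, so the plan is to unfold both reflexive graphs by their defining clauses and check that the three components --- vertices, edges, and reflexivity data --- agree on the nose. The only structural fact I will invoke is the standing convention that a dependent sum over a family that does not mention its index is definitionally the corresponding binary product, i.e.\ $\Sum{x:A}C\equiv A\times C$ whenever $C$ is independent of $x$; everything else is pure unfolding.

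For the vertices, \cref{def:total-gph} gives $\vrt{\gA.\prn{\gA^*\gB}}\equiv\Sum{x:\vrt{\gA}}\vrt{\gA^*\gB}\prn{x}$, and by \cref{ex:constant-disp-rx-gph} the displayed vertices $\vrt{\gA^*\gB}\prn{x}\equiv\vrt{\gB}$ are constant in $x$, so this sum collapses to $\vrt{\gA}\times\vrt{\gB}$, which is exactly the vertex type of $\gA\times\gB$ in \cref{ex:bin-prod-rx-gph}. For the edges, the type $\prn{a_0,b_0}\Edge{\gA.\prn{\gA^*\gB}}\prn{a_1,b_1}$ unfolds to $\Sum{p:a_0\Edge{\gA}a_1}b_0\Edge{\gA^*\gB}[p]b_1$; since the constant displayed graph has $b_0\Edge{\gA^*\gB}[p]b_1\equiv b_0\Edge{\gB}b_1$ with no dependence on $p$, the sum is again over a constant family and equals $\prn{a_0\Edge{\gA}a_1}\times\prn{b_0\Edge{\gB}b_1}$, the edge type of $\gA\times\gB$. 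Finally, the reflexivity datum at $\prn{a,b}$ is $\prn{\Rx{\gA}{a},\DRx{\gA^*\gB}{a}{b}}$, and because $\DRx{\gA^*\gB}{a}{b}\equiv\Rx{\gB}{b}$ this is precisely $\prn{\Rx{\gA}{a},\Rx{\gB}{b}}$, the reflexivity datum of $\gA\times\gB$.

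Since all three defining clauses coincide definitionally, the two reflexive graphs are definitionally equal. I expect no genuine obstacle here: the entire content of the observation is the collapse of a constant $\Sigma$-type to a binary product, and the constant displayed graph of \cref{ex:constant-disp-rx-gph} was designed precisely so that its fibrewise data forget the index. The only point requiring care is to make the $\Sigma$/$\times$ coincidence explicit rather than silently treating it as trivial, since it is this coincidence --- and not any nontrivial computation --- that upgrades the evident equivalence to a definitional equality.
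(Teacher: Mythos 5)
Your proof is correct and matches the paper's (implicit) argument: the observation is stated without proof precisely because it follows by unfolding \cref{def:total-gph}, \cref{ex:constant-disp-rx-gph}, and \cref{ex:bin-prod-rx-gph} componentwise, as you do. Your care in flagging that the whole content is the definitional coincidence of a $\Sigma$-type over a constant family with the binary product is exactly the right emphasis.
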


  \begin{definition}[Subgraph comprehension]\label[definition]{def:rx-gph-comprehension}
    Let $\gA$ be a reflexive graph, and let $P\prn{x}$ be a proposition for each vertex $x:\gA$. We define the \DefEmph{comprehension} of $P$ to be the following reflexive graph:
    \begin{align*}
      \vrt{\Compr{x:\gA}{P\prn{x}}}
       & :\equiv
      \Sum{x:\vrt{\gA}}P\prn{x}
      \\
      \prn{x,p} \Edge{\Compr{x:\gA}{P\prn{x}}} \prn{y,q}
       & :\equiv
      x \Edge{\gA} y
      \\
      \Rx{\Compr{x:\gA}{P\prn{x}}}\prn{x,p}
       & :\equiv
      \Rx{\gA}{x}
    \end{align*}
  \end{definition}

  \begin{construction}[Restriction of iterated displayed reflexive graphs]\label[construction]{con:rst-disp-rx-gph}
    Let $\gA$ be a reflexive graph object, and let $\gB$ be a displayed reflexive graph over $\gA$, and let $\gC$ be a displayed reflexive graph over $\gA.\gB$. Then for any $x:\vrt{\gA}$, we may define a displayed reflexive graph $\gC\Sub{\vert \gB\prn{x}}$ over the component $\gB\prn{x}$ with vertices given as follows:

    \iblock{
      \mrow{
        u:\vrt{\gB}\prn{x}
        \vdash
        \vrt{\gC\Sub{\vert\gB\prn{x}}}\prn{u} :\equiv \vrt{\gC}\prn{x,u}
      }
    }

    Displayed edges and reflexivity data are defined in terms of those of $\gC$ like so:

    \iblock{
      \mhang{
        u,v:\vrt{\gB}\prn{x}; p : u\Edge{\gB\prn{x}} v;
        c : \vrt{\gC}\prn{x,u}, d : \vrt{\gC}\prn{x,v}
      }{
        \mrow{
          \vdash
          c \Edge{\gC\Sub{\vert\gB\prn{x}}}[p] d :\equiv
          c \Edge{\gC}[
            \prn{\Rx{\gA}{x},p}
          ] d
        }
      }
      \mhang{
        u:\vrt{\gB}\prn{x}, c : \vrt{\gC}\prn{x,u}
      }{
        \mrow{
          \vdash
          \DRx{\gC\Sub{\vert \gB\prn{x}}}{u}{c}
          :\equiv
          \DRx{\gC}{\prn{\Rx{\gA}{x},u}}{c}
        }
      }
    }
  \end{construction}

  \begin{computation}[Components of the restriction]\label[computation]{cmp:component-of-rst-disp-rx-gph}
    Let $\gA$ be a reflexive graph, and let $\gB$ be a displayed reflexive graph over $\gA$, and let $\gC$ be a displayed reflexive graph over $\gA.\gB$. Given $x:\vrt{\gB}$ and $u:\vrt{\gB}\prn{x}$, the component $\prn{\gC\Sub{\vert \gB\prn{x}}}\prn{u}$ of the restriction of $\gC$ to $\gB\prn{x}$ at $u$ is definitionally equal to the component $\gC\prn{\prn{x,u}}$ of $\gC$ at $\prn{x,u}:\vrt{\gA.\gB}$.
  \end{computation}

\end{xsect}

\begin{xsect}[sec:rx-gph-duality]{Duality involution for reflexive graphs}

  \begin{definition}[Opposite reflexive graph]\label[definition]{def:op-rx-gph}
    We define the \DefEmph{opposite} of a reflexive graph $\gA$ as follows:
    \begin{align*}
      \vrt{\gA\Op}
      &:\equiv
      \vrt{\gA}
      \\
      x \Edge{\gA\Op} y
      &:\equiv
      y\Edge{\gA} x
      \\
      \Rx{\gA\Op}{x}
      &:\equiv
      \Rx{\gA}{x}
    \end{align*}
  \end{definition}

  \begin{observation}\label[observation]{obs:op-rx-gph-involution}
    The opposite reflexive graph operation is definitionally involutive: we have $\prn{\gA\Op}\Op \equiv \gA$ definitionally.
  \end{observation}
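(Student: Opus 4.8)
The plan is simply to unfold \cref{def:op-rx-gph} twice and check that each of the three components of $\prn{\gA\Op}\Op$ reduces definitionally to the corresponding component of $\gA$. First I would treat the vertices: by definition $\vrt{\prn{\gA\Op}\Op} \equiv \vrt{\gA\Op} \equiv \vrt{\gA}$. Next the edges, where the two reversals cancel: for any $x,y:\vrt{\gA}$ we have $\prn{x\Edge{\prn{\gA\Op}\Op}y} \equiv \prn{y\Edge{\gA\Op}x} \equiv \prn{x\Edge{\gA}y}$. Finally the reflexivity datum is carried through unchanged, so that $\Rx{\prn{\gA\Op}\Op}{x}\equiv \Rx{\gA\Op}{x}\equiv\Rx{\gA}{x}$ for each $x:\vrt{\gA}$.

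Since a reflexive graph is an iterated $\Sigma$-type of these three components, and the Conventions stipulate definitional $\eta$-laws for sums, componentwise definitional equality lifts to a definitional equality of the whole triple, yielding $\prn{\gA\Op}\Op\equiv\gA$. There is essentially no obstacle here; the only point worth flagging is this last appeal to $\eta$, since without it one would obtain merely an identification rather than a definitional equality. Thus the claim genuinely rests on the $\eta$-law assumed in the Conventions rather than on anything specific to the opposite construction, which on its own only ever reverses an already-reversible datum.
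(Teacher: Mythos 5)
Your proposal is correct and matches the paper's (implicit) reasoning: the paper states this as an \emph{Observation} with no written proof precisely because it follows by unfolding \cref{def:op-rx-gph} twice and checking each component, exactly as you do. The only refinement worth making is that lifting the pointwise computation on edges to a definitional equality of the edge \emph{family} also uses the definitional $\eta$-law for $\Pi$-types, not just for $\Sigma$-types — but both are stipulated in the Conventions, so nothing is missing.
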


  \begin{definition}[Total opposite of a displayed reflexive graph]\label[definition]{def:tot-op-disp-rx-gph}
    Let $\gA$ be a reflexive graph, and let $\gB$ be a displayed reflexive graph over $\gA$. We define the \DefEmph{total opposite} $\gB\TotOp$ of $\gB$ to be the following displayed reflexive graph over $\gA\Op$,
    \begin{align*}
      \vrt{\gB\TotOp}\prn{x}
      &:\equiv
      \vrt{\gB}\prn{x}
      \\
      u \Edge{\gB\TotOp}[p] v
      &:\equiv
      v\Edge{\gB}[p]u
      \\
      \DRx{\gB\TotOp}{x}{u}
      &:\equiv
      \DRx{\gB}{x}{u}\text{,}
    \end{align*}
    so that we have $(\gA.\gB)\Op \equiv \gA\Op.\gB\TotOp$.
  \end{definition}

  Note that \cref{def:tot-op-disp-rx-gph} does not define the actual ``opposite'' of a displayed reflexive graph $\gB$, which would naturally have the same base as $\gB$; opposites of arbitrary displayed reflexive graphs do not make sense (for the same reason that B\'enabou's definition of opposites applies only to displayed categories that are additionally fibrations).

  \cref{obs:op-rx-gph-involution} extends to the following duality involution on displayed reflexive graphs vs \cref{def:tot-op-disp-rx-gph}.

  \begin{observation}[Duality involution for displayed reflexive graphs]\label[observation]{def:disp-rx-gph-duality}
    The operation sending a displayed reflexive graph to its total opposite is definitionally involutive, \ie we have $\prn{\gB\TotOp}\TotOp \equiv \gB$.
  \end{observation}

\end{xsect}
\begin{xsect}{Path objects and the univalence condition}
  \begin{definition}[Fans of a vertex]\label[definition]{def:fan}
    Let $\gA$ be a reflexive graph. The \DefEmph{fan} of a vertex $x:\vrt{\gA}$ is defined to be the type $\Fan{\gA}{x} :\equiv \Sum{y:A}x\Edge{\gA}y$ of vertices equipped with an edge from $x$; dually, the \DefEmph{co-fan} of a vertex $x:\vrt{\gA}$ is defined to be the type $\CoFan{\gA}{x} :\equiv \Sum{y:A}y\Edge{\gA}x$ of vertices equipped with an edge \emph{toward} $x$.
  \end{definition}

  \begin{lemma}\label[lemma]{lem:inward-vs-outward-fans-propositional}
    Let $\gA$ be a reflexive graph; then every fan of $\gA$ is a proposition if and only if every co-fan of $\gA$ is a proposition.
  \end{lemma}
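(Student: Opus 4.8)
The plan is to reduce the stated biconditional to a single implication about contractibility, and then to prove that implication using the induction principle that contractible fans furnish. First I would observe that every fan $\Fan{\gA}{x} \equiv \Sum{y:\vrt{\gA}}\prn{x \Edge{\gA} y}$ is canonically pointed by $\prn{x, \Rx{\gA}{x}}$, and likewise every co-fan $\CoFan{\gA}{x} \equiv \Sum{y:\vrt{\gA}}\prn{y \Edge{\gA} x}$ is pointed by $\prn{x, \Rx{\gA}{x}}$. Since a pointed proposition is contractible and every contractible type is a proposition, the hypothesis ``every fan is a proposition'' is interchangeable with ``every fan is contractible'', and similarly for co-fans. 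Moreover, unfolding \cref{def:op-rx-gph,def:fan} shows that $\CoFan{\gA}{x} \equiv \Fan{\gA\Op}{x}$ definitionally, and the opposite operation is involutive by \cref{obs:op-rx-gph-involution}. Hence it suffices to establish the single implication: if every fan of $\gA$ is contractible, then every co-fan of $\gA$ is contractible; the converse then follows by instantiating this implication at $\gA\Op$ and rewriting along $\prn{\gA\Op}\Op \equiv \gA$.

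The crucial step is to extract from contractibility of fans a based induction principle for outgoing edges. If $\Fan{\gA}{y}$ is contractible, then $\prn{y, \Rx{\gA}{y}}$ is its centre, so by singleton induction (equivalently, the fundamental theorem of identity types \citep{rijke:2025}) any family $C\prn{z,p}$ indexed by $z:\vrt{\gA}$ and $p : y \Edge{\gA} z$ admits a section as soon as we supply a term of $C\prn{y,\Rx{\gA}{y}}$. In other words, contractible fans equip $\gA$ with exactly the Paulin--Mohring-style elimination of \cref{principle:based-identification-induction}, but for the graph's own edges rather than for identifications.

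To contract the co-fan $\CoFan{\gA}{x}$, I would take its centre to be $\prn{x, \Rx{\gA}{x}}$ and produce, for every $\prn{y,q}$ with $q : y \Edge{\gA} x$, a path connecting it to the centre. Fixing $y$, I apply the induction principle above to the family $C\prn{z,p}$ given by the type of paths in $\CoFan{\gA}{z}$ from $\prn{y,p}$ to $\prn{z, \Rx{\gA}{z}}$; the required basepoint datum $C\prn{y,\Rx{\gA}{y}}$ is inhabited by $\Refl$. Evaluating the resulting section at $z :\equiv x$ and $p :\equiv q$ yields the desired path $\prn{y,q} = \prn{x,\Rx{\gA}{x}}$ in $\CoFan{\gA}{x}$, so the co-fan is contractible.

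The only real obstacle is the passage in the second paragraph: a priori, fans and co-fans are independent conditions, and one projection of the total edge space $\Sum{x}\Fan{\gA}{x}$ being an equivalence does not by itself constrain the other. What bridges them is precisely that contractibility of a based total space is equivalent to a based induction principle --- the content of the fundamental theorem --- after which the reversal of roles between source and target vertices is a routine application. Everything else is bookkeeping with the definitional identities $\CoFan{\gA}{x} \equiv \Fan{\gA\Op}{x}$ and $\prn{\gA\Op}\Op \equiv \gA$ together with the standard fact that pointed propositions are contractible.
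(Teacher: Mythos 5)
Your proof is correct, and its engine is the same as the paper's: the contractibility of each fan $\Fan{\gA}{y}$ (interchangeable with propositionhood because the fan is pointed by $\prn{y,\Rx{\gA}{y}}$) is exploited as a based induction principle for outgoing edges, which collapses an element $\prn{y,q}$ of a co-fan onto reflexivity data. You organise this differently in two ways, both sound. First, you reduce the biconditional to a single implication via the definitional identities $\CoFan{\gA}{x}\equiv\Fan{\gA\Op}{x}$ and $\prn{\gA\Op}\Op\equiv\gA$; this is not circular, since you use only \cref{def:op-rx-gph,obs:op-rx-gph-involution} and not \cref{cor:op-po}, which is itself derived from the present lemma. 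The paper instead writes out the two symmetric directions in full. Second, you contract $\CoFan{\gA}{x}$ onto the designated centre $\prn{x,\Rx{\gA}{x}}$ with a single application of the induction principle per element, whereas the paper shows directly that any two elements $\prn{y_0,p_0}$ and $\prn{y_1,p_1}$ of the co-fan are identified by applying fan-contractibility twice, once at $y_0$ and once at $y_1$. Your version buys roughly half the bookkeeping; the paper's buys a proof that never needs to name a centre or invoke singleton induction explicitly, proceeding purely by generalisation and contraction of context entries. The mathematical content is the same.
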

  \begin{proof}
    First we assume that every fan is contractible:

    \iblock{
      \mhang{
        x:\vrt{\gA}; u_0,u_1:\CoFan{\gA}{x} \vdash \Id{\CoFan{\gA}{x}}{u_0}{u_1}
      }{
        \commentrow{by reassociating}
        \mrow{
          x,y_0,y_1:\vrt{\gA}; p_0: y_0\Edge{\gA}x, p_1:y_1\Edge{\gA}x
          \vdash \Id{\CoFan{\gA}{x}}{\prn{y_0,p_0}}{\prn{y_1,p_1}}
        }
        \commentrow{$\Fan{\gA}{y_0}$ is contractible by assumption}
        \mrow{
          y_0,y_1:\vrt{\gA}; p_1:y_1\Edge{\gA}y_0
          \vdash \Id{\CoFan{\gA}{y_0}}{\prn{y_0,\Rx{\gA}{y_0}}}{\prn{y_1,p_1}}
        }
        \commentrow{$\Fan{\gA}{y_1}$ is contractible by assumption}
        \mrow{
          y_1:\vrt{\gA}
          \vdash \Id{\CoFan{\gA}{y_1}}{\prn{y_1,\Rx{\gA}{y_1}}}{\prn{y_1,\Rx{\gA}{y_1}}}
        }
        \commentrow{by reflexivity}
      }
    }

    Conversely, assume that every co-fan is contractible.

    \iblock{
      \mhang{
        x:\vrt{\gA}; u_0,u_1:\Fan{\gA}{x} \vdash \Id{\Fan{\gA}{x}}{u_0}{u_1}
      }{
        \commentrow{by reassociating}
        \mrow{
          x,y_0,y_1:\vrt{\gA}; p_0:x\Edge{\gA}y_0, p_1:x\Edge{\gA}y_1 \vdash \Id{\Fan{\gA}{x}}{\prn{y_0,p_0}}{\prn{y_1,p_1}}
        }
        \commentrow{$\CoFan{\gA}{y_0}$ is contractible by assumption}
        \mrow{
          y_0,y_1:\vrt{\gA};p_1:y_0\Edge{\gA}y_1 \vdash \Id{\Fan{\gA}{y_0}}{\prn{y_0,\Rx{\gA}{y_0}}}{\prn{y_1,p_1}}
        }
        \commentrow{$\CoFan{\gA}{y_1}$ is contractible by assumption}
        \mrow{
          y_1:\vrt{\gA} \vdash \Id{\Fan{\gA}{y_1}}{\prn{y_1,\Rx{\gA}{y_1}}}{\prn{y_1,\Rx{\gA}{y_1}}}
        }
        \commentrow{by reflexivity}
        \qedhere
      }
    }
  \end{proof}

  \begin{construction}[From identifications to edges]\label[construction]{con:id-to-edge}
    The reflexivity datum of a reflexive graph $\gA$ induces a function from identifications to edges as follows:

    \iblock{
      \mhang{
        x,y:\vrt{\gA};p:\Id{\vrt{\gA}}{x}{y}
        \vdash
        \IdToEdge{\gA}[x,y]{p}
        : x\Edge{\gA}y
      }{
        \commentrow{by identification induction}
        \mrow{
          x:\vrt{\gA} \vdash
          \IdToEdge{\gA}[x,x]{\Refl}
          : x\Edge{\gA}x
        }
        \commentrow{by reflexivity}
        \mrow{
          x:\vrt{\gA}
          \vdash
          \IdToEdge{\gA}[x,x]{\Refl}
          :\equiv
          \Rx{\gA}{x}
        }
      }
    }

    When it causes no ambiguity, we will write $\IdToEdge{\gA}{p}$ for $\IdToEdge{\gA}[x,y]{p}$.
  \end{construction}

  \begin{lemma}[From edges to identifications via propositional fans]\label[lemma]{lem:edge-to-id}
    Suppose that each fan $\Fan{\gA}{x}$ of a reflexive graph $\gA$ is a proposition. Then each $\IdToEdge{\gA}[x,y]{-}$ has a quasi-inverse $\EdgeToId{\gA}[x,y]{-} \colon x\Edge{\gA}y\to \Id{\vrt{\gA}}{x}{y}$ and is therefore an equivalence.
  \end{lemma}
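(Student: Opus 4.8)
The plan is to exploit the fact that a pointed proposition is contractible. Since the fan $\Fan{\gA}{x}$ is assumed to be a proposition and is inhabited by the point $\prn{x,\Rx{\gA}{x}}$, it is contractible with this point as its centre. Consequently, for every vertex $y:\vrt{\gA}$ and edge $p:x\Edge{\gA}y$ there is a canonical identification $c\prn{y,p}:\Id{\Fan{\gA}{x}}{\prn{x,\Rx{\gA}{x}}}{\prn{y,p}}$ in the fan, and this identification is itself unique up to the propositionality of $\Fan{\gA}{x}$.

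First I would apply the shallow characterisation of the indexed sum (\cref{prop:shallow-dependent}) to the fan $\Fan{\gA}{x}\equiv\Sum{y:\vrt{\gA}}x\Edge{\gA}y$, which decomposes each such $c\prn{y,p}$ into a pair consisting of an identification $q:\Id{\vrt{\gA}}{x}{y}$ together with a proof that transporting $\Rx{\gA}{x}$ along $q$ yields $p$. I would then define the candidate inverse $\EdgeToId{\gA}[x,y]{p}$ to be the first component $q$ of this decomposition.

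It remains to supply the two homotopies exhibiting $\EdgeToId{\gA}{-}$ as a quasi-inverse to $\IdToEdge{\gA}{-}$. For the composite $\IdToEdge{\gA}{\EdgeToId{\gA}{p}}=p$, I would first observe --- by identification induction, comparing with the defining clause of \cref{con:id-to-edge} --- that $\IdToEdge{\gA}{q}$ agrees with the transport of $\Rx{\gA}{x}$ along $q$; the desired equation is then exactly the second component of the decomposition above. For the composite $\EdgeToId{\gA}{\IdToEdge{\gA}{q}}=q$, I would proceed by identification induction on $q$, reducing to the case $q\equiv\Refl$, where $\IdToEdge{\gA}{\Refl}\equiv\Rx{\gA}{x}$ and so $\EdgeToId{\gA}{\Rx{\gA}{x}}$ is the first component of the decomposition of a \emph{self}-identification of the centre; since a proposition is a set, this self-identification is itself $\Refl$, whose decomposition has first component $\Refl$.

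The construction of the inverse and the first homotopy are essentially immediate from the $\Sigma$-path decomposition, so the main point requiring care is the bookkeeping for the second homotopy: checking that the self-identification of the centre arising from propositionality really does reduce to $\Refl$ under the decomposition, which rests only on the fact that propositions are sets. Since a quasi-inverse suffices to witness an equivalence, this completes the argument.
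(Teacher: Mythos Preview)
Your proposal is correct and follows essentially the same approach as the paper: both obtain the inverse by projecting out the first component of the contraction path $\prn{x,\Rx{\gA}{x}}=\prn{y,p}$ in the fan, and both handle the retraction round-trip by identification induction followed by the observation that the fan, being a proposition, is a set. The only cosmetic difference is that the paper phrases the projection as a direct identification induction on the fan path (its auxiliary $\EdgeToIdGen{\gA}{-}{-}$), whereas you invoke the $\Sigma$-path decomposition of \cref{prop:shallow-dependent}; these are the same operation.
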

  \begin{proof}
    Let $\Phi_x$ be the proof that a given fan $\Fan{\gA}{x}$ is a proposition. Rather than defining $\EdgeToId{\gA}[x,y]{-} : x\Edge{\gA}y\to \Id{\vrt{\gA}}{x}{y}$ directly, we construct a slightly more general function that anticipates the contractibility of $\Fan{\gA}{x}$.

    \iblock{
      \mhang{
        x,y:\vrt{\gA}; p : x\Edge{\gA}y; \phi : \Id{\Fan{\gA}{x}}{\prn{x,\Rx{\gA}{x}}}{\prn{y,p}}
        \vdash
        \EdgeToIdGen{\gA}[x,y]{p}{\phi}
        :
        \Id{\vrt{\gA}}{x}{y}
      }{
        \commentrow{by identification induction}
        \mrow{
          x:\vrt{\gA}\vdash
          \EdgeToIdGen{\gA}[x,y]{\Rx{\gA}{x}}{\Refl}
          : \Id{\vrt{\gA}}{x}{x}
        }
        \commentrow{by reflexivity}
        \mrow{
          x:\vrt{\gA}\vdash
          \EdgeToIdGen{\gA}[x,y]{\Rx{\gA}{x}}{\Refl}
          :\equiv
          \Refl
        }
      }
    }

    Then we define $\EdgeToId{\gA}[x,y]{-}$ by instantiation.

    \iblock{
      \mrow{
        x,y:\vrt{\gA}; p:x\Edge{\gA} y \vdash \EdgeToId{\gA}[x,y]{p} :\equiv
        \EdgeToIdGen{\gA}[x,y]{p}{
          \Phi_x\,\prn{x,\Rx{\gA}{x}}\,\prn{y,p}
        }
      }
    }

    Next we prove that $\EdgeToId{\gA}[x,y]{-}$ is a section of $\IdToEdge{\gA}[x,y]{-}$.

    \iblock{
      \mhang{
        x,y:\vrt{\gA}; p : x\Edge{\gA} y \vdash
        \Id{x\Edge{\gA}y}{\IdToEdge{\gA}[x,y]{\EdgeToId{\gA}[x,y]{p}}}{p}
      }{
        \commentrow{by definition}
        \mrow{
          \ldots\vdash
          \Id{x\Edge{\gA}y}{
            \IdToEdge{\gA}[x,y]{
              \EdgeToIdGen{\gA}[x,y]{p}{
                \Phi_x\,\prn{x,\Rx{\gA}{x}}\,\prn{y,p}
              }
            }
          }{p}
        }
        \commentrow{by generalising over $\Phi_x\,\prn{x,\Rx{\gA}{x}}\,\prn{y,p}$}
        \mrow{
          x,y:\vrt{\gA}; p : x\Edge{\gA} y, \phi:\Id{\Fan{\gA}{x}}{\prn{x,\Rx{\gA}{x}}}{\prn{y,p}} \vdash
          \Id{x\Edge{\gA}y}{
            \IdToEdge{\gA}[x,y]{
              \EdgeToIdGen{\gA}[x,y]{p}{\phi}
            }
          }{p}
        }
        \commentrow{by identification induction}
        \mrow{
          x:\vrt{\gA} \vdash
          \Id{x\Edge{\gA}x}{
            \IdToEdge{\gA}[x,x]{
              \EdgeToIdGen{\gA}[x,x]{\Rx{\gA}{x}}{\Refl}
            }
          }{\Rx{\gA}{x}}
        }
        \commentrow{by definition of $\EdgeToIdGen{\gA}[x,x]{-}{-}$}
        \mrow{
          x:\vrt{\gA} \vdash
          \Id{x\Edge{\gA}x}{
            \IdToEdge{\gA}[x,x]{
              \Refl
            }
          }{\Rx{\gA}{x}}
        }
        \commentrow{by definition of $\IdToEdge{\gA}[x,x]{-}$}
        \mrow{
          x:\vrt{\gA} \vdash
          \Id{x\Edge{\gA}x}{
            \Rx{\gA}{x}
          }{\Rx{\gA}{x}}
        }
        \commentrow{by reflexivity}
      }
    }

    Finally, we prove that $\EdgeToId{\gA}[x,y]{-}$ is a retraction of $\IdToEdge{\gA}[x,y]{-}$.

    \iblock{
      \mhang{
        x,y:\vrt{\gA}; p : \Id{\vrt{\gA}}{x}{y}
        \vdash \Id{\Id{\vrt{A}}{x}{y}}{\EdgeToId{\gA}[x,y]{\IdToEdge{\gA}[x,y]{p}}}{p}
      }{
        \commentrow{by identification induction}
        \mrow{
          x:\vrt{\gA} \vdash
          \Id{\Id{\vrt{A}}{x}{x}}{\EdgeToId{\gA}[x,x]{\IdToEdge{\gA}[x,x]{\Refl}}}{\Refl}
        }
        \commentrow{by definition}
        \mrow{
          x:\vrt{\gA} \vdash
          \Id{\Id{\vrt{A}}{x}{x}}{
            \EdgeToIdGen{\gA}[x,x]{
              \Rx{\gA}x
            }{
              \Phi_x\,(x,\Rx{\gA}x)\,(x,\Rx{\gA}x)
            }
          }{\Refl}
        }
        \commentrow{because $\Fan{\gA}{x}$ is a set}
        \mrow{
          x:\vrt{\gA} \vdash
          \Id{\Id{\vrt{A}}{x}{x}}{
            \EdgeToIdGen{\gA}[x,x]{
              \Rx{\gA}x
            }{
              \Refl
            }
          }{\Refl}
        }
        \commentrow{by definition}
        \mrow{
          x:\vrt{\gA} \vdash
          \Id{\Id{\vrt{A}}{x}{x}}{\Refl}{\Refl}
        }
        \commentrow{by reflexivity}
        \qedhere
      }
    }
  \end{proof}

  \begin{lemma}\label[lemma]{lem:propositional-fan-vs-id-to-edge-equiv}
    For a reflexive graph $\gA$, each function $\IdToEdge{\gA}[x,y]{-}$ is an equivalence if and only if each fan $\Fan{\gA}{z}$ is a proposition.
  \end{lemma}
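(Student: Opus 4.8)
The plan is to dispatch the two directions separately, observing that one of them is already in hand. The reverse implication --- that propositional fans make each $\IdToEdge{\gA}[x,y]{-}$ an equivalence --- is precisely the content of \cref{lem:edge-to-id}, so I would only need to supply the forward direction: that if every $\IdToEdge{\gA}[x,y]{-}$ is an equivalence, then every fan $\Fan{\gA}{z}$ is a proposition.

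For that direction I would fix $x:\vrt{\gA}$ and exploit the fact that $\Fan{\gA}{x}$ is already pointed by $\prn{x,\Rx{\gA}{x}}$, so that it is a proposition exactly when it is contractible. The key move is to recognise the family $y:\vrt{\gA}\vdash \IdToEdge{\gA}[x,y]{-}\colon \Id{\vrt{\gA}}{x}{y}\to x\Edge{\gA}y$ as a fibrewise map over $\vrt{\gA}$, whose induced map on total spaces
\[
  \Sum{y:\vrt{\gA}}\Id{\vrt{\gA}}{x}{y}
  \longrightarrow
  \Sum{y:\vrt{\gA}}x\Edge{\gA}y
  \equiv
  \Fan{\gA}{x}
\]
sends $\prn{y,p}$ to $\prn{y,\IdToEdge{\gA}[x,y]{p}}$. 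Under the hypothesis this family is a fibrewise equivalence, and I would invoke the standard fact that a fibrewise equivalence induces an equivalence of total spaces to conclude that the displayed map is an equivalence.

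It then remains to identify the domain $\Sum{y:\vrt{\gA}}\Id{\vrt{\gA}}{x}{y}$ as the singleton type based at $x$, which is contractible by the usual contractibility-of-singletons argument. Since contractibility is transported across equivalences, $\Fan{\gA}{x}$ is contractible and hence a proposition; as $x$ was arbitrary, this finishes the forward direction and therefore the biconditional. I do not anticipate a genuine obstacle here, as this direction is the easy one: the only subtlety worth checking is the bookkeeping that the total map of the fibrewise family is \emph{definitionally} the projection-respecting map into $\Fan{\gA}{x}$, so that contractibility of the singleton really does transfer to the fan itself rather than to some reindexed variant.
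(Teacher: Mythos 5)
Your proposal is correct and matches the paper's proof: one direction is delegated to \cref{lem:edge-to-id}, and the converse transfers contractibility of the singleton $\Sum{y:\vrt{\gA}}\Id{\vrt{\gA}}{x}{y}$ to the fan along the total map induced by $\IdToEdge{\gA}[x,y]{-}$. The only cosmetic difference is that the paper observes the fan is a \emph{retract} of the singleton (which already suffices to transfer contractibility), whereas you invoke the full fibrewise-to-total equivalence; the substance is identical.
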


  \begin{proof}
    We have seen one direction already in \cref{lem:edge-to-id}. Conversely,
    suppose that each component of $\IdToEdge{\gA}[x,y]{-}$ is an equivalence; then clearly each $\Fan{\gA}{z}\equiv \Sum{y:\vrt{\gA}} z \Edge{\gA}y$ is a retract of $\Sum{y:\vrt{\gA}}\Id{\vrt{\gA}}{x}{y} \equiv \Singleton{\vrt{\gA}}{x}$, which is contractible by the contractibility of singletons. Therefore, each fan is contractible and hence a proposition.
  \end{proof}

  \begin{definition}[Univalent reflexive graph]\label[definition]{def:univalent-reflexive-graph}
    A reflexive graph $\gA$ is called \DefEmph{univalent} when any of the following equivalent conditions hold:
    \begin{enumerate}
      \item Each fan $\Fan{\gA}{x}$ is a proposition.
      \item Each co-fan $\CoFan{\gA}{x}$ is a proposition.
      \item Each fan $\Fan{\gA}{x}$ is contractible with centre $\prn{x,\Rx{\gA}{x}}$.
      \item Each co-fan $\CoFan{\gA}{x}$ is contractible with centre $\prn{x,\Rx{\gA}{x}}$.
      \item Each of the functions $\mathsf{idToEdge}_{x,y} : \Id{\vrt{\gA}}{x}{y} \to x\Edge{\gA}y$ defined in \cref{con:id-to-edge} is an equivalence.
    \end{enumerate}

    We shall refer to a univalent reflexive graph as a \DefEmph{path object}.
  \end{definition}

  \begin{proof}
    The equivalence of the stated conditions follows from \cref{lem:inward-vs-outward-fans-propositional,lem:propositional-fan-vs-id-to-edge-equiv}.
  \end{proof}

  \begin{definition}[Univalent displayed reflexive graph]\label[definition]{def:univalent-displayed-reflexive-graph}
    Let $\gA$ be a reflexive graph. A displayed reflexive graph $\gB$ over $\gA$ is called \DefEmph{univalent} when for each $x:\vrt{\gA}$, the component (\cref{def:component}) $\gB\prn{x}$ is univalent in the sense of \cref{def:univalent-reflexive-graph}.
    We shall refer to a univalent displayed reflexive graph as a \DefEmph{displayed path object}, regardless of whether the base $\gA$ is univalent.
  \end{definition}

\end{xsect}   %
\begin{xsect}{Path algebra in a path object}
  Edges in an arbitrary reflexive graph cannot be concatenated or inverted; when the reflexive graph is \emph{univalent}, \ie a \emph{path object}, the type of edges becomes (by \cref{lem:propositional-fan-vs-id-to-edge-equiv}) equivalent to the first level of the canonical $\infty$-groupoid structure on the type of vertices, and as such brooks fully coherent concatenation and inversion operations.
  We will need these operations only rarely --- and indeed, the central thesis of the theory of path objects in univalent foundations is that it is often possible to avoid painful path algebra by very careful choices of path object structure.

  \begin{construction}[Path algebra toolkit]\label[construction]{con:path-alg-toolkit}
    Given any path object $\gA$, we may construct the following terms facilitating path algebra in $\gA$. 

    \iblock{
      \mrow{
        p : x\Edge{\gA}y, q : y \Edge{\gA}z
        \vdash
        p \ct_\gA q : x \Edge{\gA} z
      }
      \mrow{
        p : x\Edge{\gA}y
        \vdash
        p\Inv_\gA  : y \Edge{\gA}x
      }

      \row

      \mhang{
        p : x\Edge{\gA}y
      }{
        \mrow{
          \vdash \mathsf{runit}_p : \Id{x\Edge{\gA}y}{p \ct_\gA \Rx{\gA}{y}}{p}
        }
        \mrow{
          \vdash \mathsf{lunit}_p : \Id{x\Edge{\gA}y}{\Rx{\gA}{x} \ct_\gA p}{p}
        }
        \mrow{
          \vdash \mathsf{rsym}_p : \Id{x\Edge{\gA}x}{p \ct_\gA p\Inv_\gA}{\Rx{\gA}{x}}
        }
        \mrow{
          \vdash \mathsf{lsym}_p : \Id{y\Edge{\gA}y}{p\Inv_\gA\ct_\gA p}{\Rx{\gA}{y}}
        }
      }

      \row

      \mhang{
        p : u\Edge{\gA} v, q:v\Edge{\gA}w, r : w\Edge{\gA}x
      }{
        \mrow{
          \vdash
          \mathsf{assoc}_{p,q,r} :
          \Id{u\Edge{\gA}x}{\prn{p\ct_\gA q}\ct_\gA r}{p\ct_\gA \prn{q\ct_\gA r}}
        }
      }
    }
  \end{construction}

  Of course, there is no end to the possible combinators, so we have presented just a few representative examples that we will make use of.

  \begin{lemma}[Pre-concatenation equivalence]\label[lemma]{lem:pre-ct-is-equiv}
    For a path object $\gA$ and an edge $p:x\Edge{\gA}y$, we have a family of equivalences induced by pre-concatenation with $p$ as below,
    \[
      z:\vrt{\gA} \vdash \prn{p\ct_\gA -} \colon y \Edge{\gA}z \to x\Edge{\gA}z
    \]
    with chosen inverse $\prn{p\Inv_\gA \ct_\gA -} : x\Edge{\gA}z \to y\Edge{\gA}z$.
  \end{lemma}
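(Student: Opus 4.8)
The plan is to verify directly that the proposed map $\prn{p\Inv_\gA \ct_\gA -}$ is a quasi-inverse to $\prn{p \ct_\gA -}$; since exhibiting a quasi-inverse suffices to show a function is an equivalence, this establishes the claim. Fixing $z:\vrt{\gA}$, I must construct two families of identifications: a \emph{section} witnessing $p\Inv_\gA \ct_\gA \prn{p \ct_\gA q} = q$ for every $q:y\Edge{\gA}z$, and a \emph{retraction} witnessing $p \ct_\gA \prn{p\Inv_\gA \ct_\gA r} = r$ for every $r:x\Edge{\gA}z$. Both are assembled entirely from the path algebra toolkit of \cref{con:path-alg-toolkit}, which is available because $\gA$ is a path object.

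For the section, I would first reassociate using $\mathsf{assoc}_{p\Inv_\gA,p,q}$ (read from right to left) to obtain an identification $p\Inv_\gA \ct_\gA \prn{p \ct_\gA q} = \prn{p\Inv_\gA \ct_\gA p}\ct_\gA q$; then apply $\Act{\prn{-\ct_\gA q}}$ to the left-inverse law $\mathsf{lsym}_p : p\Inv_\gA \ct_\gA p = \Rx{\gA}{y}$ to rewrite the left factor, yielding $\prn{p\Inv_\gA \ct_\gA p}\ct_\gA q = \Rx{\gA}{y}\ct_\gA q$; and finally cancel the unit with $\mathsf{lunit}_q : \Rx{\gA}{y}\ct_\gA q = q$. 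Concatenating these three identifications by ordinary transitivity in the identity type $x\Edge{\gA}z$ (here $y\Edge{\gA}z$) gives the required path. The retraction is entirely dual: reassociate with $\mathsf{assoc}_{p,p\Inv_\gA,r}$, rewrite $p\ct_\gA p\Inv_\gA$ to $\Rx{\gA}{x}$ along $\Act{\prn{-\ct_\gA r}}$ applied to $\mathsf{rsym}_p$, and cancel with $\mathsf{lunit}_r$.

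There is no serious obstacle here — the argument is routine once the toolkit is in hand — but the one point demanding care is orienting and composing the three identifications correctly, in particular remembering to apply $\Act{}$ of the appropriate post-concatenation map to $\mathsf{lsym}_p$ and $\mathsf{rsym}_p$, whose statements concern the bare concatenations $p\Inv_\gA\ct_\gA p$ and $p\ct_\gA p\Inv_\gA$ rather than those already extended by $q$ or $r$. Since the type of edges need not be a set, these homotopies genuinely carry data; but for the purpose of exhibiting an equivalence we need only their construction, and no further coherence relating the section and retraction is required.
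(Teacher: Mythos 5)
Your proposal is correct and matches the paper's own proof, which simply says the result follows by ``simple path algebra using $\mathsf{assoc}$, $\mathsf{lunit}$, $\mathsf{lsym}$, and $\mathsf{rsym}$'' from the path algebra toolkit --- exactly the combinators you deploy, in the same quasi-inverse argument. The only difference is that you have written out the routine details the paper leaves implicit.
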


  \begin{proof}
    Simple path algebra using $\mathsf{assoc}$, $\mathsf{lunit}$, $\mathsf{lsym}$, and $\mathsf{rsym}$ from \cref{con:path-alg-toolkit}.
  \end{proof}

\end{xsect}   %
\begin{xsect}{Univalent families and reflexive graph images}
  \begin{definition}[Reflexive graph image]\label[definition]{def:rx-gph-img}
    The \DefEmph{reflexive graph image} of a type $A$ under a family of types $x:A\vdash B\prn{x}:\TYPE$ is defined to be the following reflexive graph $A/B$:
    \begin{align*}
      \vrt{A/B}
       & :\equiv A
      \\
      x \Edge{A/B} y
       & :\equiv \mathsf{Equiv}\prn{B\prn{x},B\prn{y}}
      \\
      \Rx{A/B}{x}
       & :\equiv \mathsf{idnEquiv}\Sub{B\prn{x}}
    \end{align*}
  \end{definition}

  \begin{definition}[Univalent family of types]\label[definition]{def:univalent-family}
    Let $A$ be a type, and let $B$ be a family of types indexed in $A$. The pair $\prn{A,B}$ is said to be \DefEmph{univalent} when any of the following equivalent conditions hold:
    \begin{enumerate}
      \item For each $x:A$, the type $\Sum{y:A}\mathsf{Equiv}\prn{B\prn{x},B\prn{y}}$ is a proposition.
      \item For each $x:A$, the type $\Sum{y:A}\mathsf{Equiv}\prn{B\prn{x},B\prn{y}}$ is a contractible.
      \item For each $x,y:A$ the canonical map $\Id{A}{x}{y}\to \mathsf{Equiv}\prn{B\prn{x},B\prn{y}}$ sending $\Refl$ to the identity equivalence is an equivalence.
      \item The reflexive graph image $A/B$ is univalent.
    \end{enumerate}
  \end{definition}
  \begin{proof}
    For (1,2), inhabited types are propositions if and only if they are contractible. The remainder is the ``fundamental theorem of identity types''~\citep{rijke:2025}.
  \end{proof}

  One often refers to a univalent family of types $(U,E)$ as a \emph{universe} to emphasise that it may be closed under various type constructors, \etc.

  \begin{definition}[Propositional reflexive graph image]\label[definition]{def:prop-rx-gph-img}
    Let $B\prn{x}$ be a proposition for each $x:A$. The \DefEmph{propositional reflexive graph image} of $A$ under $B$ is defined to be the following reflexive graph $A/_{-1}B$:
    \begin{align*}
      \vrt{A/_{-1}B}
       & :\equiv A
      \\
      x \Edge{A/_{-1}B} y
       & :\equiv
      \prn{B\prn{x}\to B\prn{y}}\times\prn{B\prn{y}\to B\prn{x}}
      \\
      \Rx{A/_{-1}B}{x}
       & :\equiv
      \prn{\Lam{u}{u},\Lam{u}{u}}
    \end{align*}
  \end{definition}

  \begin{observation}
    Let $B$ be a family of propositions indexed in $A$. The pair $\prn{A,B}$ is univalent if and only if the propositional reflexive graph image $A/_{-1}B$ is univalent.
  \end{observation}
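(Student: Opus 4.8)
The plan is to reduce the biconditional to a comparison of the two reflexive graph images $A/B$ and $A/_{-1}B$, which share the common vertex type $A$ and whose edge families differ only by passing from genuine equivalences to mere logical equivalences. By condition~(4) of \cref{def:univalent-family}, the pair $\prn{A,B}$ is univalent exactly when the reflexive graph image $A/B$ (\cref{def:rx-gph-img}) is univalent; so it suffices to prove that $A/B$ is univalent if and only if $A/_{-1}B$ is.

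The crux is to establish, for each $x,y:A$, an equivalence of edge types
\[
  \mathsf{Equiv}\prn{B\prn{x},B\prn{y}} \;\simeq\; \prn{B\prn{x}\to B\prn{y}}\times\prn{B\prn{y}\to B\prn{x}}\text{.}
\]
Since $B\prn{x}$ and $B\prn{y}$ are propositions, both sides are themselves propositions (the left since being an equivalence is always a proposition and $B\prn{x}\to B\prn{y}$ is a proposition with propositional codomain; the right as a product of two such function types), so it is enough to exhibit a logical equivalence. The forward map sends an equivalence to its underlying function paired with its inverse; the backward map sends a pair $\prn{g,h}$ to the function $g$, which is automatically an equivalence because any map between propositions admitting a map in the opposite direction is invertible.

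This pointwise equivalence of edge families, lying over the shared base $A$, induces for each $x:A$ an equivalence of fans $\Fan{A/B}{x}\simeq\Fan{A/_{-1}B}{x}$, since an equivalence of fibres over a fixed base yields an equivalence of total spaces. As being a proposition is invariant under equivalence, $\Fan{A/B}{x}$ is a proposition if and only if $\Fan{A/_{-1}B}{x}$ is. By criterion~(1) of \cref{def:univalent-reflexive-graph}, this says precisely that $A/B$ is univalent if and only if $A/_{-1}B$ is, which together with the reduction of the first paragraph closes the argument.

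The only real subtlety, and hence the main obstacle, is the edge equivalence above: the passage from logical equivalence to genuine equivalence relies essentially on $B$ being valued in propositions, and (implicitly) on function extensionality, which is what guarantees that $\mathsf{Equiv}\prn{B\prn{x},B\prn{y}}$ and the relevant function types are propositions in the first place. Everything downstream—transport of propositionality across an equivalence and the fan criterion for univalence—is routine and already packaged in \cref{def:univalent-reflexive-graph,def:univalent-family}.
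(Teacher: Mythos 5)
Your proof is correct and supplies exactly the argument the paper leaves implicit (the Observation is stated without any proof): reduce via condition~(4) of \cref{def:univalent-family} to comparing the reflexive graph images $A/B$ and $A/_{-1}B$, note that between propositions the type of equivalences is equivalent to the type of logical equivalences, and transport propositionality of fans across the induced equivalence of total spaces. Your closing caveat is also apt --- function extensionality (which the paper invokes locally as needed) is what makes both edge types propositions so that a logical equivalence between them suffices, and you correctly isolate that as the only non-routine ingredient.
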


  \begin{definition}[Univalent family of path objects]
    A \DefEmph{univalent family of path objects} is defined to be a pair $\prn{U,\gE}$ where $U$ is a type and $A:U\vdash \gE\prn{A}$ is a family of path objects in $U$ such that the reflexive graph image $U/\vrt{\mathcal{E}}$ is univalent.
  \end{definition}

\end{xsect}
\begin{xsect}{Basic closure properties of path objects}
  \begin{lemma}[Opposite path object]\label[lemma]{cor:op-po}
    A reflexive graph $\gA$ is univalent if and only if its opposite $\gA\Op$ is univalent.
  \end{lemma}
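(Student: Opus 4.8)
The plan is to reduce the biconditional to the definitional behaviour of the opposite construction on fans and co-fans. First I would observe that, unwinding \cref{def:op-rx-gph} and \cref{def:fan}, there is a definitional equality
\[
  \Fan{\gA\Op}{x}
  \equiv
  \Sum{y:\vrt{\gA}} x\Edge{\gA\Op}y
  \equiv
  \Sum{y:\vrt{\gA}} y\Edge{\gA}x
  \equiv
  \CoFan{\gA}{x}
\]
and, dually, $\CoFan{\gA\Op}{x} \equiv \Fan{\gA}{x}$. In other words, passing to the opposite reflexive graph simply exchanges the rôles of fans and co-fans, since the vertices are unchanged ($\vrt{\gA\Op}\equiv\vrt{\gA}$) and an outgoing edge in $\gA\Op$ is by definition an incoming edge in $\gA$.

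With this observation in hand, the equivalence is immediate from the definition of univalence. By condition (1) of \cref{def:univalent-reflexive-graph}, the graph $\gA\Op$ is univalent precisely when each fan $\Fan{\gA\Op}{x}$ is a proposition; by the definitional equality above, this is exactly the assertion that each co-fan $\CoFan{\gA}{x}$ is a proposition. But this is condition (2) of \cref{def:univalent-reflexive-graph} applied to $\gA$, which is one of the equivalent conditions characterising univalence of $\gA$. Hence $\gA\Op$ is univalent if and only if $\gA$ is univalent.

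I expect there to be essentially no obstacle here: all of the substantive work has already been carried out in establishing the equivalence of the fan and co-fan conditions in \cref{lem:inward-vs-outward-fans-propositional}, which is precisely what licenses listing both as conditions in \cref{def:univalent-reflexive-graph}. The only thing that remains is the definitional bookkeeping showing that opposition swaps fans with co-fans, after which the biconditional falls out directly. Because the argument is visibly symmetric in $\gA$ and $\gA\Op$, one obtains both directions at once and need not separately invoke the involutivity of \cref{obs:op-rx-gph-involution}.
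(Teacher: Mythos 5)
Your proof is correct and follows essentially the same route as the paper, which simply cites \cref{lem:inward-vs-outward-fans-propositional} as immediately implying the result; your write-up just makes explicit the definitional identification $\Fan{\gA\Op}{x}\equiv\CoFan{\gA}{x}$ that the paper leaves tacit.
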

  \begin{proof}
    This is an immediate consequence of \cref{lem:inward-vs-outward-fans-propositional}.
  \end{proof}

  \begin{lemma}[Total path object]\label[lemma]{lem:total-po}
    If $\gA$ is a path object and $\gB$ is a displayed path object over $\gA$, then $\gA.\gB$ is a path object.
  \end{lemma}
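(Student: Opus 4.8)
The plan is to establish univalence of $\gA.\gB$ by showing that each of its fans is contractible, invoking the equivalence of conditions in \cref{def:univalent-reflexive-graph} (it suffices to verify condition~(1), that each fan is a proposition, and a contractible type is in particular a proposition). Fix a vertex $\prn{x,u}:\vrt{\gA.\gB}$, so that $x:\vrt{\gA}$ and $u:\vrt{\gB}\prn{x}$. Unfolding \cref{def:total-gph,def:fan}, the fan is definitionally the iterated sum
\[
  \Fan{\gA.\gB}{\prn{x,u}}
  \;\equiv\;
  \Sum{y:\vrt{\gA}}\Sum{v:\vrt{\gB}\prn{y}}\Sum{p:x\Edge{\gA}y} u\Edge{\gB}[p] v.
\]

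First I would reassociate this sum, commuting the independent sums over $v:\vrt{\gB}\prn{y}$ and $p:x\Edge{\gA}y$ (a routine application of Fubini for $\Sigma$-types), so as to isolate the fan of $\gA$ at $x$, rewriting the above up to equivalence as
\[
  \Sum{\prn{y,p}:\Fan{\gA}{x}}\;\Sum{v:\vrt{\gB}\prn{y}} u\Edge{\gB}[p] v.
\]
Because $\gA$ is a path object, \cref{def:univalent-reflexive-graph} gives that $\Fan{\gA}{x}$ is contractible with centre $\prn{x,\Rx{\gA}{x}}$. Summing a type family over a contractible base is equivalent to evaluating that family at the centre; applying this contracts away the pair $\prn{y,p}$ and leaves $\Sum{v:\vrt{\gB}\prn{x}} u\Edge{\gB}[\Rx{\gA}{x}] v$. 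By \cref{def:component}, the edge type $u\Edge{\gB}[\Rx{\gA}{x}] v$ is definitionally the component edge $u\Edge{\gB\prn{x}} v$, so this last type is precisely the fan $\Fan{\gB\prn{x}}{u}$ of the component $\gB\prn{x}$ at $u$.

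It then remains to observe that $\Fan{\gB\prn{x}}{u}$ is contractible: since $\gB$ is a displayed path object, \cref{def:univalent-displayed-reflexive-graph} says the component $\gB\prn{x}$ is a univalent reflexive graph, whence all its fans are contractible by \cref{def:univalent-reflexive-graph}. Contractibility transports backwards along the chain of equivalences assembled above, yielding contractibility — and hence propositionality — of $\Fan{\gA.\gB}{\prn{x,u}}$, which is the desired condition~(1) for $\gA.\gB$. I expect no serious obstacle here; the only bookkeeping is tracking centres through the reassociation and base-contraction to confirm that the centre of $\Fan{\gA.\gB}{\prn{x,u}}$ lands on the reflexivity datum $\prn{\Rx{\gA}{x},\DRx{\gB}{x}{u}}$ demanded by condition~(3). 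Since contractibility is invariant under the choice of centre, electing to verify condition~(1) rather than~(3) sidesteps this bookkeeping entirely, so the argument reduces to the purely formal chain of equivalences above.
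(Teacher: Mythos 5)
Your proposal is correct and follows essentially the same route as the paper's proof: reassociate the fan of $\gA.\gB$ to expose the fan of $\gA$ at $x$, contract it using univalence of $\gA$, and reduce to the fan of the component $\gB\prn{x}$, which is contractible because $\gB$ is a displayed path object. The only difference is presentational — you phrase the reduction as a chain of equivalences transporting contractibility, whereas the paper shows directly that any two elements of the fan are identified — but the underlying decomposition and the two uses of contractibility are identical.
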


  \begin{proof}
    Univalence is established as follows.

    \iblock{
      \mhang{
        u:\vrt{\gA.\gB};
        s_0,s_1 : \Fan{\gA.\gB}{u}
        \vdash
        \Id{\Fan{\gA.\gB}{u}}{s_0}{s_1}
      }{
        \commentrow{by reassociating and unfolding hypotheses}
        \mrow{
          x,x_0,x_1:\vrt{\gA};
          y:\vrt{\gB\prn{x}},
          y_0:\vrt{\gB\prn{x_0}},
          y_1:\vrt{\gB\prn{x_1}},
        }
        \mrow{
          p_0:x\Edge{\gA}x_0,
          p_1:x\Edge{\gA}x_1,
          q_0:y\Edge{\gB}[p_0]y_0,
          q_1:y\Edge{\gB}[p_1]y_1
        }
        \iblock{
          \mrow{
            \vdash
            \Id{\Fan{\gA.\gB}{\prn{x,y}}}{
              \prn{\prn{x_0,y_0},\prn{p_0,q_0}}
            }{
              \prn{\prn{x_1,y_1},\prn{p_1,q_1}}
            }
          }
        }

        \commentrow{as $\gA$ is a path object and thus $\Fan{\gA}{x}$ is contractible}

        \mrow{
          x:\vrt{\gA},
          y:\vrt{\gB\prn{x}},
          y_0:\vrt{\gB\prn{x_0}},
          y_1:\vrt{\gB\prn{x_1}},
        }
        \mrow{
          q_0:y\Edge{\gB\prn{x}}y_0,
          q_1:y\Edge{\gB\prn{x}}y_1
        }
        \iblock{
          \mrow{
            \vdash
            \Id{\Fan{\gA.\gB}{\prn{x,y}}}{
              \prn{\prn{x,y_0},\prn{\Rx{\gA}{x},q_0}}
            }{
              \prn{\prn{x,y_1},\prn{\Rx{\gA}{x},q_1}}
            }
          }
        }

        \commentrow{as $\gB\prn{x}$ is a path object and thus $\Fan{\gB\prn{x}}{y}$ is contractible}

        \mrow{
          x:\vrt{\gA},
          y:\vrt{\gB\prn{x}}
          \vdash
          \Id{\Fan{\gA.\gB}{\prn{x,y}}}{
            \prn{\prn{x,y},\prn{\Rx{\gA}{x},\Rx{\gB\prn{x}}{y}}}
          }{
            \prn{\prn{x,y},\prn{\Rx{\gA}{x},\Rx{\gB\prn{x}}{y}}}
          }
        }

        \commentrow{by reflexivity}
        \qedhere
      }
    }
  \end{proof}

  \begin{observation}[Constant displayed path object]\label[observation]{lem:const-disp-po}
    Let $\gA$ be a reflexive graph and let $\gB$ be a path object. Then the constant displayed reflexive graph $\gA^*\gB$ over $\gA$ is univalent.
  \end{observation}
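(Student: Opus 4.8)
The plan is to unfold \cref{def:univalent-displayed-reflexive-graph}, which defines univalence of a displayed reflexive graph componentwise: to show that $\gA^*\gB$ is univalent it suffices to show that for each $x:\vrt{\gA}$ the component $\prn{\gA^*\gB}\prn{x}$ (in the sense of \cref{def:component}) is univalent as an ordinary reflexive graph. So the first and only real step is to compute this component explicitly. By \cref{def:component}, its vertices are $\vrt{\gA^*\gB}\prn{x}$, its edges $u\Edge{\gA^*\gB}[\Rx{\gA}{x}]v$, and its reflexivity datum $\DRx{\gA^*\gB}{x}{u}$; instantiating the defining data of $\gA^*\gB$ from \cref{ex:constant-disp-rx-gph} collapses these to $\vrt{\gB}$, $u\Edge{\gB}v$, and $\Rx{\gB}{u}$ respectively. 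Crucially, the edge-type and reflexivity datum of $\gA^*\gB$ do not mention the base edge $p$, so the substitution of $\Rx{\gA}{x}$ for $p$ has no effect, and we recover exactly the defining data of $\gB$.

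This yields the definitional equality $\prn{\gA^*\gB}\prn{x}\equiv\gB$ already recorded at the end of \cref{ex:constant-disp-rx-gph}. Given this, the conclusion is immediate: each component is \emph{definitionally} the path object $\gB$, which is univalent by hypothesis in the sense of \cref{def:univalent-reflexive-graph}, so every component of $\gA^*\gB$ is univalent and hence $\gA^*\gB$ is a displayed path object. I do not expect any genuine obstacle here; the whole content of the argument is the observation that, because the edges and reflexivity of the constant displayed reflexive graph are independent of the base edge, restricting along $\Rx{\gA}{x}$ reproduces $\gB$ on the nose. The result is therefore a one-line consequence of the componentwise definition of displayed univalence together with the computation of the constant component.
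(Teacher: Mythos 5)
Your proof is correct and is essentially the paper's own argument: the paper likewise observes that $\prn{\gA^*\gB}\prn{x}\equiv\gB$ definitionally for each $x:\vrt{\gA}$, so univalence of each component is immediate from the hypothesis on $\gB$. Your version merely spells out the computation of the component that the paper leaves implicit.
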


  \begin{proof}
    This follows immediately, considering that we have $\prn{\gA^*\gB}\prn{x}\equiv \gB$ for each $x:\vrt{\gA}$ by definition.
  \end{proof}

  \begin{corollary}[Binary product of path objects]\label[corollary]{lem:bin-prod-po}
    If $\gA$ and $\gB$ are path objects, then so is $\gA\times \gB$.
  \end{corollary}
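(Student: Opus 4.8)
The plan is to avoid any direct manipulation of fans and instead assemble the result purely from the closure properties already established, by recognising the binary product as a special case of the total reflexive graph construction. The key observation is \cref{obs:rx-gph-bin-prod-vs-sum}, which tells us that $\gA\times\gB \equiv \gA.\prn{\gA^*\gB}$ definitionally: the product is nothing but the total reflexive graph of the \emph{constant} displayed reflexive graph $\gA^*\gB$ over $\gA$. Since univalence of a reflexive graph is a property of its fans (\cref{def:univalent-reflexive-graph}), it transports along definitional equality, so it suffices to show that $\gA.\prn{\gA^*\gB}$ is a path object.

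First I would observe that $\gA^*\gB$ is a displayed path object over $\gA$. This is immediate from \cref{lem:const-disp-po}: because $\gB$ is assumed univalent and every component of the constant displayed reflexive graph satisfies $\prn{\gA^*\gB}\prn{x}\equiv\gB$, each component is univalent, which is exactly what \cref{def:univalent-displayed-reflexive-graph} demands. Note that this step does not require $\gA$ itself to be univalent.

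Next I would apply the total path object lemma (\cref{lem:total-po}) with the path object $\gA$ as base and the displayed path object $\gA^*\gB$ as the displayed structure, concluding that $\gA.\prn{\gA^*\gB}$ is a path object. Composing with the definitional identification $\gA.\prn{\gA^*\gB}\equiv\gA\times\gB$ from \cref{obs:rx-gph-bin-prod-vs-sum} then finishes the argument.

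I do not expect any genuine obstacle here: all of the real work — reassociating the iterated fan and contracting twice — has already been discharged once and for all in the proof of \cref{lem:total-po}, and the reduction of products to constant displayed graphs is purely definitional. The only thing to be careful about is to invoke \cref{lem:const-disp-po} rather than attempting to prove univalence of $\gA^*\gB$ afresh, and to rely on the fact that the equivalence $\gA\times\gB\equiv\gA.\prn{\gA^*\gB}$ is definitional so that no transport of the univalence data is needed.
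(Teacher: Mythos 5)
Your proposal is correct and is essentially identical to the paper's own proof, which likewise reduces the claim to \cref{obs:rx-gph-bin-prod-vs-sum} and \cref{lem:const-disp-po} and then concludes via the total path object lemma (\cref{lem:total-po}). The additional remarks about definitional transport and about not needing univalence of $\gA$ for \cref{lem:const-disp-po} are accurate.
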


  \begin{proof}
    By \cref{obs:rx-gph-bin-prod-vs-sum,lem:const-disp-po}.
  \end{proof}

  \begin{definition}[Function extensionality]\label[definition]{def:prod-po}
    \DefEmph{Dependent function extensionality} is precisely the property that for each type $A$ and family $\gB\prn{x}$ of path objects indexed in $x:A$, the product $\Prod{x:A}\gB\prn{x}$ is a path object. Likewise, \DefEmph{non-dependent function extensionality} states that for each type $A$ and path object $\gB$, the cotensor $A\pitchfork \gB$ is a path object.
  \end{definition}

  \begin{lemma}[Coproduct of path objects]\label[lemma]{lem:coprod-po}
    For any type $A$ and family $\gB\prn{x}$ of path objects indexed in $x:A$, the coproduct $\Coprod{x:A}\gB\prn{x}$ is a path object.
  \end{lemma}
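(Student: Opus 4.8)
The plan is to verify univalence of $\Coprod{x:A}\gB\prn{x}$ through condition (3) of \cref{def:univalent-reflexive-graph}, \ie by showing that each fan $\Fan{\Coprod{x:A}\gB\prn{x}}{\prn{a,b}}$ is contractible with centre the reflexivity datum $\prn{\prn{a,b},\prn{\Refl,\Rx{\gB\prn{a}}{b}}}$. Unfolding the definition of vertices and edges of the coproduct (\cref{ex:coprod-rx-gph}), an arbitrary element of this fan consists of a vertex $a':A$, a displayed vertex $b':\vrt{\gB\prn{a'}}$, an identification $p:\Id{A}{a}{a'}$, and an edge $q:p_*^{\vrt{\gB\prn{-}}}b\Edge{\gB\prn{a'}}b'$.

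First I would reassociate the fan as the iterated sum $\Sum{\prn{a',p}:\Singleton{A}{a}}\Sum{b':\vrt{\gB\prn{a'}}}\prn{p_*^{\vrt{\gB\prn{-}}}b\Edge{\gB\prn{a'}}b'}$, grouping the base vertex $a'$ with the identification $p$. Since the singleton $\Singleton{A}{a}\equiv\Sum{a':A}\Id{A}{a}{a'}$ is contractible with centre $\prn{a,\Refl}$ --- equivalently, by based identification induction (\cref{principle:based-identification-induction}) --- this total space is equivalent to the fibre over $\prn{a,\Refl}$. In that fibre the transport $\Refl_*^{\vrt{\gB\prn{-}}}b$ computes to $b$, so what remains is exactly the fan $\Fan{\gB\prn{a}}{b}$ of the component $\gB\prn{a}$.

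By hypothesis each $\gB\prn{a}$ is a path object, so by condition (3) of \cref{def:univalent-reflexive-graph} its fan $\Fan{\gB\prn{a}}{b}$ is contractible with centre $\prn{b,\Rx{\gB\prn{a}}{b}}$. Contractibility is preserved by the equivalence just exhibited, so the fan $\Fan{\Coprod{x:A}\gB\prn{x}}{\prn{a,b}}$ is contractible, and tracing the centre back through the based induction yields precisely the stated centre $\prn{\prn{a,b},\prn{\Refl,\Rx{\gB\prn{a}}{b}}}$. As this holds for every vertex $\prn{a,b}$, the coproduct is univalent. (Alternatively one could run the argument in the inductive style of \cref{lem:total-po}, performing based identification induction on $p$ to reduce $q$ to an edge in $\gB\prn{a}$ and then invoking contractibility of $\Fan{\gB\prn{a}}{b}$ directly.)

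The computation is routine; the only point requiring care is the interleaving of the base identification $p$ with the displayed edge $q$, which is what forces the reduction of $p$ to $\Refl$ before the univalence of the component can be applied. I note in particular that, in contrast to the product (\cref{def:prod-po}), this argument uses no function extensionality whatsoever: the contractibility of $\Singleton{A}{a}$ is available unconditionally, and the remaining fibrewise contractibility is supplied directly by the hypothesis that each $\gB\prn{a}$ is a path object.
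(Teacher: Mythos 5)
Your proof is correct and follows essentially the same route as the paper's: both reduce the base identification to $\Refl$ (you via contractibility of the singleton $\Singleton{A}{a}$, the paper via based identification induction performed on each of two fan elements to establish that the fan is a proposition) and then invoke univalence of the component $\gB\prn{a}$ to finish. The only cosmetic difference is that you verify condition (3) of \cref{def:univalent-reflexive-graph} (contractibility with the stated centre) where the paper verifies condition (1), and your closing remark that no function extensionality is needed is accurate and consistent with the paper's summary of univalence lemmas.
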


  \begin{proof}
    We establish the univalence condition as follows.

    \iblock{
      \mhang{
        u:\Sum{x:A}\vrt{\gB\prn{x}};
        s_0,s_1:\Fan{\Coprod{x:A}\gB\prn{x}}{u}
        \vdash
        \Id{\Fan{\Coprod{x:A}\gB\prn{x}}{u}}{s_0}{s_1}
      }{
        \commentrow{by reassociating and unfolding hypotheses}

        \mrow{
          x,x_0,x_1:A;
          y:\vrt{\gB\prn{x}},
          y_0:\vrt{\gB\prn{x_0}},
          y_1:\vrt{\gB\prn{x_1}},
          p_0 : \Id{A}{x}{x_0},
          p_1 : \Id{A}{x}{x_1},
        }
        \mrow{
        q_0 : \prn{p_0}_*^{\vrt{\gB\prn{-}}}y \Edge{\gB\prn{x_0}} y_0,
        q_1 : \prn{p_1}_*{\vrt{\gB\prn{-}}}y\Edge{\gB\prn{x_1}} y_1
        }
        \iblock{
          \mrow{
            \vdash
            \Id{
              \Fan{\Coprod{x:A}\gB\prn{x}}{\prn{x,y}}
            }{
              \prn{\prn{x_0,y_0},\prn{p_0,q_0}}
            }{
              \prn{\prn{x_1,y_1},\prn{p_1,q_1}}
            }
          }
        }

        \commentrow{by based identification induction on $\prn{x_0,p_0}$}

        \mrow{
          x,x_1:A;
          y:\vrt{\gB\prn{x}},
          y_0:\vrt{\gB\prn{x}},
          y_1:\vrt{\gB\prn{x_1}},
          p_1 : \Id{A}{x}{x_1},
        }
        \mrow{
        q_0 : y \Edge{\gB\prn{x}} y_0,
        q_1 : \prn{p_1}_*{\vrt{\gB\prn{-}}}y\Edge{\gB\prn{x_1}} y_1
        }
        \iblock{
          \mrow{
            \vdash
            \Id{
              \Fan{\Coprod{x:A}\gB\prn{x}}{\prn{x,y}}
            }{
              \prn{
                \prn{x,y_0},
                \prn{\Refl,q_0}
              }
            }{
              \prn{
                \prn{x_1,y_1},
                \prn{p_1,q_1}
              }
            }
          }
        }

        \commentrow{by based identification induction on $\prn{x_1,p_1}$}

        \mrow{
          x:A;
          y:\vrt{\gB\prn{x}},
          y_0:\vrt{\gB\prn{x}},
          y_1:\vrt{\gB\prn{x}},
          q_0 : y \Edge{\gB\prn{x}} y_0,
          q_1 : y\Edge{\gB\prn{x}} y_1
        }
        \iblock{
          \mrow{
            \vdash
            \Id{
              \Fan{\Coprod{x:A}\gB\prn{x}}{\prn{x,y}}
            }{
              \prn{
                \prn{x,y_0},
                \prn{\Refl,q_0}
              }
            }{
              \prn{
                \prn{x,y_1},
                \prn{\Refl,q_1}
              }
            }
          }
        }

        \commentrow{as $\gB\prn{x}$ is a path object and thus $\Fan{\gB\prn{x}}{y}$ is contractible}

        \mrow{
          x:A;
          y:\vrt{\gB\prn{x}}
          \vdash
          \Id{
            \ldots
          }{
            \prn{\prn{x,y},\prn{\Refl,\Rx{\gB\prn{x}}{y}}}
          }{
            \prn{\prn{x,y},\prn{\Refl,\Rx{\gB\prn{x}}{y}}}
          }
        }

        \commentrow{by reflexivity}
        \qedhere
      }
    }
  \end{proof}

  \begin{corollary}[Tensor of path objects]\label[corollary]{lem:tensor-po}
    The tensor $A\cdot\gB$ of a path object $\gB$ by a type $A$ is a path object.
  \end{corollary}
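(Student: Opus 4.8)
The plan is to recognise the tensor as a special case of the coproduct and invoke the coproduct lemma directly, since the corollary is essentially a definitional unfolding away from \cref{lem:coprod-po}. By \cref{ex:tensor-cotensor} we have $A\cdot\gB \equiv \Coprod{x:A}\gB$, the coproduct of the \emph{constant} $A$-indexed family that assigns the single reflexive graph $\gB$ to every $x:A$. The hypothesis that $\gB$ is a path object says precisely that this constant family is a family of path objects indexed in $A$, so the conditions of \cref{lem:coprod-po} are met.

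Concretely, I would instantiate \cref{lem:coprod-po} with the family $x:A\vdash\gB\prn{x}:\equiv\gB$, which yields that $\Coprod{x:A}\gB$ is a path object. Since $A\cdot\gB\equiv\Coprod{x:A}\gB$ holds definitionally, the same conclusion transfers immediately to the tensor. No separate univalence computation is required; the entire content is carried by the coproduct lemma, whose proof already handled the based identification induction on the base component.

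There is no genuine obstacle here. The only point worth verifying is that a constant family of path objects is indeed a family of path objects in the sense demanded by \cref{lem:coprod-po}, which is trivial. It is perhaps worth remarking that, unlike the dual situation for the cotensor (\cref{def:prod-po}), this result is unconditional: the coproduct lemma was established by pure identification induction and so requires no appeal to function extensionality, whereas the cotensor of path objects is a path object only under the assumption of non-dependent function extensionality.
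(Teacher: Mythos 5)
Your proof is correct and is exactly the paper's intended argument: the corollary is stated immediately after \cref{lem:coprod-po} without a separate proof precisely because $A\cdot\gB$ is by definition (\cref{ex:tensor-cotensor}) the coproduct $\Coprod{x:A}\gB$ of the constant family, to which the coproduct lemma applies directly. Your closing remark that, unlike the cotensor case, no function extensionality is needed is also accurate.
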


  \begin{corollary}[Discrete path object]\label[corollary]{lem:disc-po}
    The discrete reflexive graph $\DiscPO{A}$ on any type $A$ is univalent.
  \end{corollary}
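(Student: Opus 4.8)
The plan is to verify any one of the equivalent univalence conditions of \cref{def:univalent-reflexive-graph} directly, and condition~(3) --- contractibility of every fan --- is the most convenient, since the edges of $\DiscPO{A}$ are by definition identity types. Unfolding \cref{def:fan,ex:discrete-rx-gph}, the fan of a vertex $x:A$ in $\DiscPO{A}$ reduces definitionally to $\Fan{\DiscPO{A}}{x} \equiv \Sum{y:A}\Id{A}{x}{y} \equiv \Singleton{A}{x}$, which is exactly the singleton type at $x$. This type is contractible with centre $\prn{x,\Refl}$ by the contractibility of singletons, and since $\Rx{\DiscPO{A}}{x}\equiv\Refl$ this centre coincides with the $\prn{x,\Rx{\DiscPO{A}}{x}}$ demanded by condition~(3). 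Hence every fan of $\DiscPO{A}$ is contractible, and $\DiscPO{A}$ is a path object.

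There is essentially no obstacle here: the whole content is the observation that edges of the discrete reflexive graph \emph{are} identifications, so its fans are literally singleton types, for which contractibility is already in hand. The only point requiring a moment's care arises if one instead prefers condition~(5): the map $\IdToEdge{\DiscPO{A}}{-}$ of \cref{con:id-to-edge} has both source and target $\Id{A}{x}{y}$ and sends $\Refl$ to $\Rx{\DiscPO{A}}{x}\equiv\Refl$, yet it is only \emph{propositionally} (not definitionally) the identity, since the eliminator does not reduce on a variable. One must therefore invoke identification induction to exhibit the homotopy to the identity function before concluding that it is an equivalence. The fan route of the previous paragraph sidesteps this subtlety altogether and uses nothing beyond the definitions and contractibility of singletons.
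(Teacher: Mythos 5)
Your proof is correct and matches the paper's: the paper's entire argument is ``This is precisely the contractibility of singletons,'' and your unfolding of the fan $\Fan{\DiscPO{A}}{x}$ to the singleton type at $x$ is exactly that observation spelled out. The aside about condition~(5) is a fair cautionary note but not needed for the result.
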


  \begin{proof}
    This is precisely the contractibility of singletons.
  \end{proof}

  \begin{lemma}[Codiscrete path object]\label[lemma]{lem:codisc-po}
    The codiscrete reflexive graph $\CodiscPO{A}$ on a type $A$ is univalent if and only if $A$ is a proposition.
  \end{lemma}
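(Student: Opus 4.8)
The plan is to reduce the univalence of $\CodiscPO{A}$ to condition (1) of \cref{def:univalent-reflexive-graph} --- that every fan is a proposition --- and then to compute those fans explicitly. First I would unfold the fan of a vertex $x:\vrt{\CodiscPO{A}}\equiv A$. By \cref{def:fan} together with the definition of $\CodiscPO{A}$, every edge out of $x$ is the unit type, so
\[
  \Fan{\CodiscPO{A}}{x}
  \equiv
  \Sum{y:A}\prn{x\Edge{\CodiscPO{A}}y}
  \equiv
  \Sum{y:A}\mathbf{1}.
\]
The first projection $\prn{y,*}\mapsto y$ is an equivalence $\Sum{y:A}\mathbf{1}\cong A$, with inverse $y\mapsto\prn{y,*}$, because $\mathbf{1}$ is contractible. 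Hence every fan of $\CodiscPO{A}$ is equivalent to $A$ itself.

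Next I would use that ``being a proposition'' is invariant under equivalence, so that the fan $\Fan{\CodiscPO{A}}{x}$ is a proposition if and only if $A$ is. Combining this with condition (1) of \cref{def:univalent-reflexive-graph}, the graph $\CodiscPO{A}$ is univalent precisely when, for each $x:A$, the type $A$ is a proposition. It remains to collapse the quantifier over $x$: this is the standard observation that a function $A\to\mathsf{isProp}\prn{A}$ is logically equivalent to $\mathsf{isProp}\prn{A}$. The backward direction is weakening; the forward direction holds because to prove $\mathsf{isProp}\prn{A}\equiv\Prod{x,y:A}\Id{A}{x}{y}$ we are free to assume an element of $A$. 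Chaining these equivalences yields the desired biconditional.

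The only delicate point is exactly this handling of the quantifier. When $A$ is empty there are no vertices at all, so $\CodiscPO{A}$ is univalent vacuously; this is consistent with the empty type being a proposition, and the appeal to $A\to\mathsf{isProp}\prn{A}\simeq\mathsf{isProp}\prn{A}$ absorbs this edge case into the inhabited case without a separate argument. I expect every other step to be a routine unfolding, so this quantifier bookkeeping is the main --- and essentially the only --- thing that needs care.
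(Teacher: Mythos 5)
Your proof is correct, but it routes the argument differently from the paper, most notably in the converse direction. The paper treats the two implications asymmetrically: for the forward direction it contracts the fan by hand (reassociating $\Fan{\CodiscPO{A}}{x}$ and using that $A$ is a proposition to identify the endpoints), and for the converse it applies the already-built machinery of \cref{lem:edge-to-id}, feeding the trivial edge $*:x\Edge{\CodiscPO{A}}y$ into $\EdgeToId{\CodiscPO{A}}[x,y]{-}$ to manufacture an identification $\Id{A}{x}{y}$ directly. You instead handle both directions uniformly through the single computation $\Fan{\CodiscPO{A}}{x}\equiv\Sum{y:A}\mathbf{1}\cong A$, reducing univalence to $A\to\mathsf{isProp}\prn{A}$ and then collapsing the quantifier by the standard inhabitation trick. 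Your version is a bit more self-contained --- it never needs the edge-to-identification construction --- and it makes transparent why the statement is a genuine biconditional; the paper's version buys a small demonstration of how the general toolkit (\cref{lem:edge-to-id}) discharges such converses for free. Your flagged ``delicate point'' is handled correctly: the logical equivalence of $A\to\mathsf{isProp}\prn{A}$ with $\mathsf{isProp}\prn{A}$ needs no function extensionality, and the empty case is indeed absorbed.
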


  \begin{proof}
    Suppose that $\gA$ is a proposition. Then univalence is established like so:

    \iblock{
      \mhang{
        x:A; s_0,s_1 : \Fan{\CodiscPO{A}}{x}\vdash \Id{\Fan{\CodiscPO{A}}{x}}{s_0}{s_1}
      }{
        \commentrow{by reassociating and unfolding hypotheses}
        \mrow{
          x,x_0,x_1:A \vdash \Id{\Fan{\CodiscPO{A}}{x}}{\prn{x_0,*}}{\prn{x_1,*}}
        }
        \commentrow{as $A$ is a proposition}
        \mrow{
          x:A\vdash \Id{\Fan{\CodiscPO{A}}{x}}{\prn{x,*}}{\prn{x,*}}
        }
        \commentrow{by reflexivity}
        \qedhere
      }
    }

    Conversely, assume that $\CodiscPO{A}$ is univalent. For any two $x,y:A$,
    we evidently have $*:x\Edge{\CodiscPO{A}}y$ and thus $\EdgeToId{\CodiscPO{A}}[x,y]{*} : \Id{A}{x}{y}$ by \cref{lem:edge-to-id}.
  \end{proof}

  \begin{lemma}[Path subobject comprehension]\label[lemma]{lem:compr-po}
    If $\gA$ is a path object and $P\prn{x}$ is a proposition for each vertex $x:\vrt{\gA}$, then the comprehension $\Compr{x:\gA}{P\prn{x}}$ is a path object.
  \end{lemma}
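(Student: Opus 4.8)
The plan is to verify the univalence condition through condition~(1) of \cref{def:univalent-reflexive-graph}, showing that each fan $\Fan{\Compr{x:\gA}{P\prn{x}}}{\prn{x,p}}$ is a proposition. The crucial observation is that, by \cref{def:rx-gph-comprehension}, an edge out of $\prn{x,p}$ in the comprehension is simply an edge $x\Edge{\gA}y$ of the underlying graph and carries no information whatsoever about the chosen proof $q:P\prn{y}$ at the target. Unfolding and reassociating, the fan of $\prn{x,p}$ is therefore equivalent to $\Sum{\prn{y,e}:\Fan{\gA}{x}}P\prn{y}$, namely the family $P$ pulled back along the projection $\Fan{\gA}{x}\to\vrt{\gA}$.

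Since $\gA$ is a path object, \cref{def:univalent-reflexive-graph} guarantees that $\Fan{\gA}{x}$ is contractible with centre $\prn{x,\Rx{\gA}{x}}$. A dependent sum over a contractible base is equivalent to the fibre over the centre, so the fan collapses to $P\prn{x}$, which is a proposition by hypothesis. Concretely, I would present this in the comment-row format used for \cref{lem:total-po,lem:coprod-po}: beginning from two fan elements, reassociate and unfold to expose $y_0,y_1:\vrt{\gA}$, proofs $q_0:P\prn{y_0}$ and $q_1:P\prn{y_1}$, and edges $e_0,e_1$ out of $x$; use the contractibility of $\Fan{\gA}{x}$ to reduce both $\prn{y_i,e_i}$ to the centre, so that the goal becomes the identification of $\prn{\prn{x,q_0},\Rx{\gA}{x}}$ with $\prn{\prn{x,q_1},\Rx{\gA}{x}}$; finally use that $P\prn{x}$ is a proposition to identify $q_0$ with $q_1$, and close by reflexivity.

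I do not anticipate a genuine obstacle; the argument is a routine combination of the contractibility of fans in $\gA$ with the propositionality of $P$. The only point requiring care is bookkeeping: exactly as in the proof of \cref{lem:total-po}, one must reduce along the contractibility of $\Fan{\gA}{x}$ for both endpoints simultaneously rather than eliminating each $\prn{y_i,e_i}$ separately, since the reflexivity edge of the comprehension is inherited directly from $\gA$. A more structural alternative would realise $\Compr{x:\gA}{P\prn{x}}$ as the total reflexive graph of a displayed reflexive graph over $\gA$ whose component at $x$ is the codiscrete graph $\CodiscPO{P\prn{x}}$ --- univalent by \cref{lem:codisc-po} precisely because $P\prn{x}$ is a proposition --- and then invoke \cref{lem:total-po}; this is conceptually appealing but requires transporting univalence across the equivalence $\Sum{p:x\Edge{\gA}y}\mathbf{1}\cong x\Edge{\gA}y$ between the total-graph edges and the comprehension edges, so the direct fan computation is the more economical route.
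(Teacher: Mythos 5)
Your proposal is correct and follows essentially the same route as the paper: the paper also establishes condition (1) of \cref{def:univalent-reflexive-graph} by taking two elements of the fan $\Fan{\Compr{x:\gA}{P\prn{x}}}{\prn{x,y}}$, reassociating, contracting $\Fan{\gA}{x}$ onto its centre $\prn{x,\Rx{\gA}{x}}$, invoking that $P\prn{x}$ is a proposition, and closing by reflexivity. Your remark about handling both fan elements simultaneously, and your equivalent reformulation of the fan as $P$ pulled back along $\Fan{\gA}{x}\to\vrt{\gA}$, are both consistent with what the paper does.
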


  \begin{proof}
    Univalence is established as follows:

    \iblock{
      \mhang{
        u:\Sum{x:\vrt{\gA}}P\prn{x};
        s_0,s_1:\Fan{\Compr{x:\gA}{P\prn{x}}}{u}
        \vdash
        \Id{\Fan{\Compr{x:\gA}{P\prn{x}}}{u}}{s_0}{s_1}
      }{
        \commentrow{by reassociating and unfolding hypotheses}

        \mhang{
          x,x_0,x_1:\vrt{\gA};
          y:P\prn{x},y_0:P\prn{x_0},y_1:P\prn{x_1},
          p_0 : x\Edge{\gA}x_0,
          p_1 : x\Edge{\gA}x_1
        }{
          \mrow{
            \vdash
            \Id{
              \Fan{\Compr{x:\gA}{P\prn{x}}}{\prn{x,y}}
            }{
              \prn{\prn{x_0,y_0},\prn{p_0,*}}
            }{
              \prn{\prn{x_1,y_1},\prn{p_1,*}}
            }
          }
        }

        \commentrow{as $\gA$ is a path object and so $\Fan{\gA}{x}$ is contractible}

        \mrow{
          x:\vrt{\gA};
          y,y_0,y_1:P\prn{x}
          \vdash
          \Id{
            \ldots
          }{
            \prn{\prn{x,y_0},\prn{\Rx{\gA}{x},*}}
          }{
            \prn{\prn{x,y_1},\prn{\Rx{\gA}{x},*}}
          }
        }

        \commentrow{as $P\prn{x}$ is a proposition}

        \mrow{
          x:\vrt{\gA},
          y:P\prn{x}
          \vdash
          \Id{
            \ldots
          }{
            \prn{\prn{x,y},\prn{\Rx{\gA}{x},*}}
          }{
            \prn{\prn{x,y},\prn{\Rx{\gA}{x},*}}
          }
        }

        \commentrow{by reflexivity}
        \qedhere
      }
    }
  \end{proof}

\NewDocumentCommand\IsUnivalent{}{\ \text{univalent}}
\NewDocumentCommand\IsProp{}{\ \text{proposition}}
\NewDocumentCommand\IsDispUnivalent{m}{\ \text{univalent over}\ #1}

\begin{figure}
  \begin{mathpar}
    \inferrule[\cref{cor:op-po}]{
      \gA\IsUnivalent
    }{
      \gA\Op\IsUnivalent
    }
    \and
    \inferrule[\cref{lem:total-po}]{
      \gA\IsUnivalent\\
      \gB\IsDispUnivalent{\gA}
    }{
      \gA.\gB\IsUnivalent
    }
    \and
    \inferrule[\cref{lem:bin-prod-po}]{
      \gA\IsUnivalent\\
      \gB\IsUnivalent
    }{
      {\gA\times \gB}\IsUnivalent
    }
    \and
    \inferrule[\cref{def:prod-po}]{
      \text{dep.\ funext.}
      \\
      x:A\vdash \gB\prn{x} \IsUnivalent
    }{
      \Prod{x:A}\gB\prn{x} \IsUnivalent
    }
    \and
    \inferrule[\cref{lem:coprod-po}]{
      x:A\vdash \gB\prn{x}\IsUnivalent
    }{
      \Coprod{x:A}\gB\prn{x}\IsUnivalent
    }
    \and
    \inferrule[\cref{lem:tensor-po}]{
      \gB\IsUnivalent
    }{
      A\cdot\gB\IsUnivalent
    }
    \and
    \inferrule[\cref{def:prod-po}]{
      \text{funext.}\\ 
      \gB\IsUnivalent
    }{
      A\pitchfork\gB\IsUnivalent
    }
    \and
    \inferrule[\cref{lem:disc-po}]{
    }{
      \DiscPO{A}\IsUnivalent
    }
    \and
    \inferrule[\cref{lem:codisc-po}]{
      A\IsProp
    }{
      \CodiscPO{A}\IsUnivalent
    }
    \and
    \inferrule[\cref{lem:compr-po}]{
      \gA\IsUnivalent\\ 
      x:\vrt{\gA}\vdash B\prn{x}\IsProp
    }{
      \Compr{x:\gA}{B\prn{x}}\IsUnivalent
    }
    \and 
    \inferrule[\cref{lem:const-disp-po}]{
      \gB\IsUnivalent
    }{
      \gA^*\gB\IsDispUnivalent{\gA}
    }
  \end{mathpar}
  \caption{Summary of univalence lemmas for (displayed) reflexive graphs.}
\end{figure}

\begin{figure}
  \begin{mathpar}
    \inferrule[\cref{lem:cov-disp-po}]{
      x:\vrt{\gA} \vdash \gB\prn{x}\IsUnivalent
    }{
      \CovDisp{\gA}{\gA} \IsDispUnivalent{\gA}
    }
    \and
    \inferrule[\cref{lem:ctrv-disp-po}]{
      x:\vrt{\gA} \vdash \gB\prn{x}\IsUnivalent
    }{
      \CtrvDisp{\gA}{\gA} \IsDispUnivalent{\gA}
    }
    \and
    \inferrule[\cref{lem:unb-disp-po}]{
      x,y:\vrt{\gA};p:x\Edge{\gA}y \vdash \gB\prn{p}\IsUnivalent
    }{
      \UnbDisp{\gA}{\gA} \IsDispUnivalent{\gA}
    }
  \end{mathpar}
  \caption{Summary of univalence lemmas for reflexive graph lenses.}
\end{figure}
 \end{xsect}
 \end{xsect}
\begin{xsect}[sec:lenses]{Lenses of reflexive graphs and path objects}

  In this section, we are concerned with various kinds of \emph{families} of ordinary reflexive graphs, and how they give rise to certain naturally occurring \emph{displayed} reflexive graphs.

\begin{xsect}{Lenses of reflexive graphs}

  We introduce two dual kinds of lens of reflexive graphs.

  \begin{definition}[Oplax covariant lens]\label[definition]{def:oplax-cov-lens}
    Let $\gA$ be a reflexive graph. An \DefEmph{oplax covariant lens} of reflexive graphs over $\gA$ is defined to be a family of reflexive graphs $\gB\prn{x}$ indexed in $x:\vrt{\gA}$ equipped with the following data:
    \begin{align*}
      x,y:\vrt{\gA}; p : x \Edge{\gA} y, u : \vrt{\gB\prn{x}}
       & \vdash
      \Push{\gB}{p}u : \vrt{\gB\prn{y}}
      \\
      x:\vrt{\gA}, u:\vrt{\gB\prn{x}}
       & \vdash
      \PushRx{\gB}[x]u :
      \Push{\gB}{\Rx{\gA}{x}}{u}
      \Edge{\gB\prn{x}} u
    \end{align*}
  \end{definition}

  \begin{definition}[Lax contravariant lens]\label[definition]{def:lax-ctrv-lens}
    Let $\gA$ be a reflexive graph. A \DefEmph{lax contravariant lens} of reflexive graphs over $\gA$ is defined to be a family of reflexive graphs $\gB\prn{x}$ indexed in $x:\vrt{\gA}$ equipped with the following data:
    \begin{align*}
      x,y:\vrt{\gA};p:x\Edge{\gA}y, u:\vrt{\gB\prn{y}}
       & \vdash
      \Pull{\gB}{p}u : \vrt{\gB\prn{x}}
      \\
      x:\vrt{\gA},u:\vrt{\gB\prn{x}}
       & \vdash
      \PullRx{\gB}[x]u:
      u \Edge{\gB\prn{x}} \Pull{\gB}{\Rx{\gA}{x}}{u}
    \end{align*}
  \end{definition}

  In the context of \cref{def:oplax-cov-lens} or \cref{def:lax-ctrv-lens}, we will refer to $\gB$ as a \DefEmph{lens of path objects} or a \DefEmph{univalent lens} when each $\gB\prn{x}$ is univalent.

  \begin{definition}[Display of oplax covariant lenses]\label[definition]{def:cov-disp}
    Let $\gB$ be an oplax covariant lens over a reflexive graph $\gA$. The (covariant) \DefEmph{display} of $\gB$ over $\gA$ is defined to be the following displayed reflexive graph $\CovDisp{\gA}\gB$ over $\gA$:
    \begin{align*}
      x:\vrt{\gA}
       & \vdash
      \vrt{\CovDisp{\gA}{\gB}}\prn{x}
      :\equiv
      \vrt{\gB\prn{x}}
      \\
      p:x\Edge{\gA}y, u:\vrt{\gB\prn{x}},v:\vrt{\gB\prn{y}}
       & \vdash
      u \Edge{\CovDisp{\gA}{\gB}}[p] v
      :\equiv
      \Push{\gB}{p}u \Edge{\gB\prn{y}} v
      \\
      x:\vrt{\gA},u:\vrt{\gB\prn{x}}
       & \vdash
      \DRx{\CovDisp{\gA}{\gB}}{x}{u} :\equiv
      \PushRx{\gB}[x]{u}
    \end{align*}
  \end{definition}

  \begin{definition}[Display of lax contravariant lenses]\label[definition]{def:ctrv-disp}
    Let $\gB$ be a lax contravariant lens over a reflexive graph $\gA$. The (contravariant) \DefEmph{display} of $\gB$ over $\gA$ is defined to be the following displayed reflexive graph $\CtrvDisp{\gA}{\gB}$ over $\gA$:
    \begin{align*}
      x:\vrt{\gA}
       & \vdash
      \vrt{\CtrvDisp{\gA}{\gB}}\prn{x}
      :\equiv
      \vrt{\gB\prn{x}}
      \\
      p:x\Edge{\gA}y, u:\vrt{\gB\prn{x}},v:\vrt{\gB\prn{y}}
       & \vdash
      u \Edge{\CtrvDisp{\gA}{\gB}}[p] v
      :\equiv
      u\Edge{\gB\prn{x}}\Pull{\gB}{p}{v}
      \\
      x:\vrt{\gA},u:\vrt{\gB\prn{x}}
       & \vdash
      \DRx{\CtrvDisp{\gA}{\gB}}{x}{u} :\equiv
      \PullRx{\gB}[x]{u}
    \end{align*}
  \end{definition}

  \begin{computation}[Components of the display of oplax covariant lenses]\label[computation]{cmp:component-of-oplax-cov-display}
    Let $\gB$ be an oplax covariant lens over a reflexive graph $\gA$. For any $x:\vrt{\gA}$, the component $\prn{\CovDisp{\gA}{\gB}}\prn{x}$ is precisely the following:
    \begin{align*}
      \vrt{\prn{\CovDisp{\gA}{\gB}}\prn{x}}
       & \equiv
      \vrt{\gB\prn{x}}
      \\
      u\Edge{\prn{\CovDisp{\gA}{\gB}}\prn{x}} v
       & \equiv
      \Push{\gB}{\Rx{\gA}{x}}{u} \Edge{\gB\prn{x}} v
      \\
      \Rx{\prn{\CovDisp{\gA}{\gB}}\prn{x}}{u}
       & \equiv
      \PushRx{\gB}[x]{u}
    \end{align*}

    We likewise have the following description of fans:
    \[
      \Fan{\prn{\CovDisp{\gA}{\gB}}\prn{x}}{u}
      \equiv \Fan{\gB\prn{x}}{\Push{\gB}{\Rx{\gA}{x}}{u}}
    \]
  \end{computation}

  \begin{computation}[Components of the display of lax contravariant lenses]\label[computation]{cmp:component-of-lax-ctrv-display}
    Let $\gB$ be a lax contravariant lens over a reflexive graph $\gA$. For any $x:\vrt{\gA}$, the component $\prn{\CtrvDisp{\gA}{\gB}}\prn{x}$ is unfolds as follows:
    \begin{align*}
      \vrt{\prn{\CtrvDisp{\gA}{\gB}}\prn{x}}
       & \equiv
      \vrt{\gB\prn{x}}
      \\
      u\Edge{\prn{\CtrvDisp{\gA}{\gB}}\prn{x}} v
       & \equiv
      u \Edge{\gB\prn{x}} \Pull{\gB}{\Rx{\gA}{x}}{v}
      \\
      \Rx{\prn{\CtrvDisp{\gA}{\gB}}\prn{x}}{u}
       & \equiv
      \PullRx{\gB}[x]{u}
    \end{align*}

    We have the following description of co-fans:
    \[
      \CoFan{\prn{\CtrvDisp{\gA}{\gB}}\prn{x}}{u}
      \equiv
      \CoFan{\gB\prn{x}}{\Pull{\gB}{\Rx{\gA}{x}}{u}}
    \]
  \end{computation}

  \begin{lemma}[Display of oplax covariant lenses of path objects]\label[lemma]{lem:cov-disp-po}
    Let $\gB$ be an oplax covariant lens over a reflexive graph $\gA$. If each $\gB\prn{x}$ is univalent, then the display $\CovDisp{\gA}{\gB}$ is univalent.
  \end{lemma}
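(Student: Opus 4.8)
The plan is to reduce univalence of the display to univalence of its components, and then to exploit the definitional identification of fans already recorded in \cref{cmp:component-of-oplax-cov-display}. By \cref{def:univalent-displayed-reflexive-graph}, to show that $\CovDisp{\gA}{\gB}$ is univalent it suffices to check that each component $\prn{\CovDisp{\gA}{\gB}}\prn{x}$ is a univalent reflexive graph for every $x:\vrt{\gA}$. I would establish this using criterion (1) of \cref{def:univalent-reflexive-graph}, namely that every fan of the component is a proposition.

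The key observation is already isolated for us in \cref{cmp:component-of-oplax-cov-display}: for each $x:\vrt{\gA}$ and $u:\vrt{\gB\prn{x}}$ we have the \emph{definitional} equality
\[
  \Fan{\prn{\CovDisp{\gA}{\gB}}\prn{x}}{u}
  \equiv
  \Fan{\gB\prn{x}}{\Push{\gB}{\Rx{\gA}{x}}{u}}\text{.}
\]
Thus a fan of the display's component at $u$ is literally a fan of the original reflexive graph $\gB\prn{x}$, based at the pushed-forward vertex $\Push{\gB}{\Rx{\gA}{x}}{u}$ rather than at $u$ itself. Consequently, no path algebra is needed to relate the two notions of fan; they are the same type.

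Now I would appeal to the hypothesis that each $\gB\prn{x}$ is univalent. By criterion (1) of \cref{def:univalent-reflexive-graph}, every fan $\Fan{\gB\prn{x}}{w}$ is a proposition; instantiating at the particular vertex $\Push{\gB}{\Rx{\gA}{x}}{u}$ shows that $\Fan{\gB\prn{x}}{\Push{\gB}{\Rx{\gA}{x}}{u}}$ is a proposition, and hence so is the definitionally equal type $\Fan{\prn{\CovDisp{\gA}{\gB}}\prn{x}}{u}$. As this holds for all $x$ and $u$, every fan of every component is propositional, so each component is univalent, and therefore the display $\CovDisp{\gA}{\gB}$ is univalent by \cref{def:univalent-displayed-reflexive-graph}.

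I do not expect any genuine obstacle here: all of the real work has been front-loaded into the construction of the display (\cref{def:cov-disp}) and into the fan computation of \cref{cmp:component-of-oplax-cov-display}. The only point worth emphasising is \emph{why} the verification goes through so cleanly --- the display was defined precisely so that a displayed edge $u\Edge{\CovDisp{\gA}{\gB}}[p]v$ unfolds to an edge $\Push{\gB}{p}{u}\Edge{\gB\prn{y}}v$ of the underlying family, making the fan of a component coincide with a fan of $\gB\prn{x}$ after reindexing by the source $\Push{\gB}{\Rx{\gA}{x}}{-}$ of the oplax unitor. Notably, the unitor $\PushRx{\gB}$ plays no further role in the argument beyond having already supplied the reflexivity datum of the display; univalence follows purely from the propositionality of fans in the components.
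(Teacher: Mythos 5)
Your argument is correct and coincides with the paper's own proof: both reduce univalence of the display to propositionality of the fans of each component and then invoke the definitional identification $\Fan{\prn{\CovDisp{\gA}{\gB}}\prn{x}}{u}\equiv \Fan{\gB\prn{x}}{\Push{\gB}{\Rx{\gA}{x}}{u}}$ from \cref{cmp:component-of-oplax-cov-display} together with univalence of $\gB\prn{x}$. Nothing is missing.
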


  \begin{proof}
    We must check that each component $\prn{\CovDisp{\gA}{\gB}}\prn{x}$ is univalent. It suffices to check that for each $u:\vrt{\gB\prn{x}}$, the fan $\Fan{\prn{\CovDisp{\gA}{\gB}}\prn{x}}{u}$ is a proposition. By \cref{cmp:component-of-oplax-cov-display}, we have $\Fan{\prn{\CovDisp{\gA}{\gB}}\prn{x}}{u}\equiv \Fan{\gB\prn{x}}{\Push{\gB}{\Rx{\gA}{x}}{u}}$, which is a proposition by our assumption that $\gB\prn{x}$ is univalent.
  \end{proof}

  \begin{lemma}[Display of lax contravariant lenses of path objects]\label[lemma]{lem:ctrv-disp-po}
    Let $\gB$ be a lax contravariant lens over a reflexive graph $\gA$. If each $\gB\prn{x}$ is univalent, then the display $\CtrvDisp{\gA}{\gB}$ is univalent.
  \end{lemma}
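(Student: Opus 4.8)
The plan is to mirror the proof of \cref{lem:cov-disp-po} exactly, substituting co-fans for fans throughout; the contravariant display is engineered so that it behaves well on co-fans precisely where the covariant display behaves well on fans, and the freedom to measure univalence by either characterisation in \cref{def:univalent-reflexive-graph} is what makes the dualisation painless. First I would invoke \cref{def:univalent-displayed-reflexive-graph}, which reduces univalence of $\CtrvDisp{\gA}{\gB}$ to univalence of each component $\prn{\CtrvDisp{\gA}{\gB}}\prn{x}$ for $x:\vrt{\gA}$.

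To establish that each such component is univalent, I would appeal to condition (2) of \cref{def:univalent-reflexive-graph}, namely that it suffices to show each co-fan $\CoFan{\prn{\CtrvDisp{\gA}{\gB}}\prn{x}}{u}$ is a proposition. Here \cref{cmp:component-of-lax-ctrv-display} supplies the crucial definitional identity
\[
  \CoFan{\prn{\CtrvDisp{\gA}{\gB}}\prn{x}}{u}
  \equiv
  \CoFan{\gB\prn{x}}{\Pull{\gB}{\Rx{\gA}{x}}{u}}\text{,}
\]
so that the co-fan we must inspect is literally a co-fan of the fibre reflexive graph $\gB\prn{x}$, based at the pulled-back vertex $\Pull{\gB}{\Rx{\gA}{x}}{u}$. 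Since $\gB\prn{x}$ is univalent by hypothesis, condition (2) of \cref{def:univalent-reflexive-graph} applied to $\gB\prn{x}$ tells us that all of its co-fans are propositions, and in particular this one is. Hence the component $\prn{\CtrvDisp{\gA}{\gB}}\prn{x}$ is univalent, which completes the argument.

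There is essentially no obstacle to overcome: the entire content is already packaged into \cref{cmp:component-of-lax-ctrv-display}, and the only genuine decision is to measure univalence via co-fans rather than fans. This choice is forced by the shape of the contravariant display, whose displayed edges $u\Edge{\CtrvDisp{\gA}{\gB}}[p]v$ unfold to $u\Edge{\gB\prn{x}}\Pull{\gB}{p}{v}$, placing the transported vertex on the target side of the edge; it is therefore the \emph{incoming} edges (co-fans) of $\gB\prn{x}$ that control the component, dually to the covariant case where the outgoing edges (fans) played this role. Had I instead tried to reason through fans, the pullback would land on the wrong side of the edge and no definitional simplification of the above form would be available.
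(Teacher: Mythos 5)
Your proposal is correct and is precisely the argument the paper intends: its own proof of \cref{lem:ctrv-disp-po} simply says ``analogous to \cref{lem:cov-disp-po}'', and your dualisation --- reducing to components, measuring univalence by co-fans, and invoking the definitional identity of \cref{cmp:component-of-lax-ctrv-display} --- is exactly the fan-to-co-fan translation that ``analogous'' stands for. No gaps.
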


  \begin{proof}
    Analogous to \cref{lem:cov-disp-po}.
  \end{proof}

\end{xsect}

\begin{xsect}{Unbiased dependent lenses}

  In this section, we describe a slightly strange technical device that generalises both oplax covariant and lax contravariant lenses that we name the \emph{unbiased dependent lens}. The reason for introducing these is to characterise the identity types of structures involving mixed variance (including, for example, the type of reflexive graphs!).

  \begin{definition}[Unbiased dependent lens]
    Let $\gA$ be a reflexive graph. A \DefEmph{unbiased dependent lens} over $\gA$ is defined to be a family of reflexive graphs $\gB\prn{p}$ indexed in edges $p:x\Edge{\gA}y$ equipped with the following data:
    \begin{align*}
      x,y:\vrt{\gA};p:x\Edge{\gA}y,u:\vrt{\gB\prn{\Rx{\gA}{x}}}
       & \vdash
      \LJ{\gB}{p}{u} : \vrt{\gB\prn{p}}
      \\
      x,y:\vrt{\gA};p:x\Edge{\gA}y,u:\vrt{\gB\prn{\Rx{\gA}{y}}}
       & \vdash
      \RJ{\gB}{p}{u} : \vrt{\gB\prn{p}}
      \\
      x:\vrt{\gA},u:\vrt{\gB\prn{\Rx{\gA}{x}}}
       & \vdash
      \MidJRx{\gB}[x]{u} : \LJ{\gB}{\Rx{\gA}{x}}{u} \Edge{\gB\prn{\Rx{\gA}{x}}} \RJ{\gB}{\Rx{\gA}{x}}{u}
      \\
      x:\vrt{\gA},u:\vrt{\gB\prn{\Rx{\gA}{x}}}
       & \vdash
      \RJRx{\gB}[x]{u} : u \Edge{\gB\prn{\Rx{\gA}{x}}} \RJ{\gB}{\Rx{\gA}{x}}{u}
    \end{align*}
  \end{definition}

  \begin{definition}[Display of unbiased dependent lenses]
    Let $\gB$ be a unbiased dependent lens over a reflexive graph $\gA$. The \DefEmph{display} of $\gB$ over $\gA$ is defined to be the following displayed reflexive graph $\UnbDisp{\gA}{\gB}$ over $\gA$:
    \begin{align*}
      x:\vrt{\gA}
       & \vdash
      \vrt{\UnbDisp{\gA}{\gB}}\prn{x}
      :\equiv
      \vrt{\gB\prn{\Rx{\gA}{x}}}
      \\
      p : x\Edge{\gA}y, u:\vrt{\gB\prn{x}}, v : \vrt{\gB\prn{y}}
       & \vdash
      u \Edge{\UnbDisp{\gA}{\gB}}[p] v
      :\equiv
      \LJ{\gB}{p}{u}
      \Edge{\gB\prn{p}}
      \RJ{\gB}{p}{u}
      \\
      x:\vrt{\gA}, u:\vrt{\gB\prn{x}}
       & \vdash
      \DRx{\UnbDisp{\gA}{\gB}}{x}{u} :\equiv
      \MidJRx{\gB}[x]{u}
    \end{align*}
  \end{definition}

  \begin{remark}
    Only $\MidJRx{\gB}[x]{u}$ was needed to define the display of unbiased dependent lenses, whereas  the lax unitor $u \Edge{\gB\prn{\Rx{\gA}{x}}} \RJ{\gB}{\Rx{\gA}{x}}{u}$ played no role. The reason for including the latter is that it is required in order for univalence of each $\gB\prn{p}$ to imply univalence for the display $\UnbDisp{\gA}{\gB}$.
    Naturally, univalence would \emph{also} follow if we instead included an oplax unitor $\LJ{\gB}{p}{u} \Edge{\gB\prn{\Rx{\gA}{x}}}u$. What we must not do is include both $\LJ{\gB}{p}{u} \Edge{\gB\prn{\Rx{\gA}{x}}}u$ and $u \Edge{\gB\prn{\Rx{\gA}{x}}} \RJ{\gB}{\Rx{\gA}{x}}{u}$, as we intend the type of \emph{fiberwise univalent} unbiased dependent lens structures on a family of path objects over a path object to be a proposition. This situation is somewhat similar to that of \emph{half-adjoint equivalences} in homotopy type theory being coherent only by virtue of omitting one of the snake identities which can be obtained from the other. Of course, our current situation in the world of reflexive graphs is a bit different as we cannot obtain the lax unitor from the oplax unitor, and vice versa: as these unitors play no computational role, however, we will let it slide.
  \end{remark}

  \begin{lemma}[Display of unbiased dependent lenses of path objects]\label[lemma]{lem:unb-disp-po}
    Let $\gB$ be a unbiased dependent lens over a reflexive graph $\gA$. If each $\gB\prn{x}$ is univalent, then so is the display $\UnbDisp{\gA}{\gB}$.
  \end{lemma}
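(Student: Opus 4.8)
The plan is to peel off the two layers of the definition of univalence in turn. By \cref{def:univalent-displayed-reflexive-graph}, the display $\UnbDisp{\gA}{\gB}$ is univalent exactly when each component $\prn{\UnbDisp{\gA}{\gB}}\prn{x}$ is univalent, so I fix $x:\vrt{\gA}$ and work with one component. Unfolding \cref{def:component} against the definition of the display, and abbreviating $\gC :\equiv \gB\prn{\Rx{\gA}{x}}$ together with the endofunctions $L,R : \vrt{\gC}\to\vrt{\gC}$ given by $\LJ{\gB}{\Rx{\gA}{x}}{-}$ and $\RJ{\gB}{\Rx{\gA}{x}}{-}$, this component has vertices $\vrt{\gC}$, an edge from $u$ to $v$ given by $L(u) \Edge{\gC} R(v)$, and reflexivity datum $\MidJRx{\gB}[x]{u} : L(u) \Edge{\gC} R(u)$. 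By \cref{def:univalent-reflexive-graph} it then suffices to show that every fan $\Sum{v:\vrt{\gC}}\prn{L(u)\Edge{\gC}R(v)}$ of this component is contractible, where $\gC$ is univalent by hypothesis (it is one of the reflexive graphs $\gB\prn{p}$ assumed univalent).

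Unlike the covariant and contravariant displays (\cref{cmp:component-of-oplax-cov-display,cmp:component-of-lax-ctrv-display}), where the component's fan was literally a fan of $\gC$ based at a shifted vertex, here the right-hand endpoint of the edge is $R(v)$ rather than $v$, so the fan is \emph{not} directly a fan of $\gC$. Repairing this mismatch is the crux of the argument, and it is exactly what the lax unitor $\RJRx{\gB}[x]{-}$ is for. The key steps are as follows. First, since $\gC$ is univalent, for all $a,b:\vrt{\gC}$ we have $\prn{a\Edge{\gC}b}\simeq\Id{\vrt{\gC}}{a}{b}$ (the fifth condition of \cref{def:univalent-reflexive-graph}), so the fan is equivalent to $\Sum{v:\vrt{\gC}}\Id{\vrt{\gC}}{L(u)}{R(v)}$. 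Second, the lax unitor $\RJRx{\gB}[x]{v} : v\Edge{\gC}R(v)$, transported through the same equivalence, yields for each $v$ an identification $\Id{\vrt{\gC}}{v}{R(v)}$; concatenating with its inverse gives a fibrewise equivalence $\Id{\vrt{\gC}}{L(u)}{R(v)}\simeq\Id{\vrt{\gC}}{L(u)}{v}$. Summing over $v$ therefore exhibits the fan as equivalent to the singleton $\Singleton{\vrt{\gC}}{L(u)}$, which is contractible. Contractibility of every fan gives univalence of the component, and hence of the display.

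The main obstacle is precisely the appearance of $R(v)$ in place of $v$ in the component's edge relation; this is the only genuine departure from the simpler covariant and contravariant cases, and it is resolved solely by the lax unitor (which is why, as the preceding remark stresses, this datum is included in an unbiased dependent lens even though it plays no role in defining the display). Everything surrounding it is routine bookkeeping with fibrewise equivalences and the contractibility of singletons. If one prefers to avoid passing to identity types, the same reduction can be carried out entirely at the level of edges: post-concatenating with the inverse of $\RJRx{\gB}[x]{v}$ using the path algebra of \cref{con:path-alg-toolkit} identifies the fan directly with $\Fan{\gC}{L(u)}$, which is contractible because $\gC$ is univalent.
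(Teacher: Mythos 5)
Your proof is correct. Note that the paper states \cref{lem:unb-disp-po} without supplying a proof at all, so there is nothing to compare against line by line; what you have written is a legitimate way of filling that gap, and it is consistent with the paper's surrounding discussion. Your unfolding of the component is right: writing $\gC\equiv\gB\prn{\Rx{\gA}{x}}$, $L\equiv\LJ{\gB}{\Rx{\gA}{x}}{-}$ and $R\equiv\RJ{\gB}{\Rx{\gA}{x}}{-}$, the fan of the component at $u$ is $\Sum{v:\vrt{\gC}}\prn{L(u)\Edge{\gC}R(v)}$, and you correctly observe that --- unlike in \cref{lem:cov-disp-po,lem:ctrv-disp-po}, where the component's fan is \emph{definitionally} a fan of $\gB\prn{x}$ at a shifted vertex --- this is not a fan of $\gC$, so the one-line argument of those proofs does not transfer. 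Your use of the lax unitor $\RJRx{\gB}[x]{-}$ to contract $R(v)$ back onto $v$ (either by passing through identity types via condition~(5) of \cref{def:univalent-reflexive-graph} and the contractibility of $\Singleton{\vrt{\gC}}{L(u)}$, or directly at the level of edges by post-concatenation, which exhibits the fan as equivalent to $\Fan{\gC}{L(u)}$) is exactly the role the paper's remark assigns to that datum, namely that it is what makes univalence of each $\gB\prn{p}$ imply univalence of $\UnbDisp{\gA}{\gB}$. The only cosmetic caveat is that the paper records only the \emph{pre}-concatenation equivalence (\cref{lem:pre-ct-is-equiv}); if you take the edge-level route you should either prove the post-concatenation analogue (the same path algebra from \cref{con:path-alg-toolkit}) or obtain it from \cref{lem:pre-ct-is-equiv} applied to the opposite path object via \cref{cor:op-po}.
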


  The following \cref{con:lens-upgrade} shows that unbiased dependent lenses generalise \emph{both} oplax covariant and lax contravariant lenses up to definitional equality.

  \begin{construction}[From biased to unbiased lenses]\label[construction]{con:lens-upgrade}
    Let $\gA$ be a path object.
    \begin{enumerate}

      \item Any oplax covariant lens $\gB$ over $\gA$ induces a unbiased dependent lens $\CovUpgrade{\gB}$ over $\gA$ such that $\CovDisp{\gA}{\gB}\equiv\UnbDisp{\gA}\prn{\CovUpgrade{\gB}}$ definitionally:
            \begin{align*}
              p:x\Edge{\gA}y
               & \vdash
              \CovUpgrade{\gB}\prn{p} :\equiv \gB\prn{y}
              \\
              p:x\Edge{\gA}y, u:\vrt{\gB\prn{x}}
               & \vdash
              \LJ{\CovUpgrade{\gB}}{p}u
              :\equiv \Push{\gB}{p}{u}
              \\
              p:x\Edge{\gA}y, u:\vrt{\gB\prn{y}}
               & \vdash
              \RJ{\CovUpgrade{\gB}}{p}u
              :\equiv u
              \\
              x:\vrt{\gA},u:\vrt{\gB\prn{x}}
               & \vdash
              \MidJRx{\CovUpgrade{\gB}}[x]{u} :\equiv \PushRx{\gB}[x]{u}
              \\
              x:\vrt{\gA},u:\vrt{\gB\prn{x}}
               & \vdash
              \RJRx{\CovUpgrade{\gB}}[x]{u} :\equiv \Rx{\gB\prn{x}}{u}
            \end{align*}

      \item Any lax contravariant lens $\gB$ over $\gA$ induces a unbiased dependent lens $\CtrvUpgrade{\gB}$ over $\gA$ such that $\CtrvDisp{\gA}{\gB} \equiv \UnbDisp{\gA}\prn{\CtrvUpgrade{\gB}}$ definitionally:
            \begin{align*}
              p:x\Edge{\gA}y
               & \vdash
              \CtrvUpgrade{\gB}\prn{p} :\equiv \gB\prn{x}
              \\
              p:x\Edge{\gA}y, u:\vrt{\gB\prn{x}}
               & \vdash
              \LJ{\CtrvUpgrade{\gB}}{p}u
              :\equiv
              u
              \\
              p:x\Edge{\gA}y,u:\vrt{\gB\prn{y}}
               & \vdash
              \RJ{\CtrvUpgrade{\gB}}{p}{u}
              :\equiv
              \Pull{\gB}{p}{u}
              \\
              x:\vrt{\gA},u:\vrt{\gB\prn{x}}
               & \vdash
              \MidJRx{\CtrvUpgrade{\gB}}[x]{u}
              :\equiv \PullRx{\gB}[x]{u}
              \\
              x:\vrt{\gA},u:\vrt{\gB\prn{x}}
               & \vdash
              \RJRx{\CtrvUpgrade{\gB}}[x]{u}
              :\equiv \PullRx{\gB}[x]{u}
            \end{align*}

    \end{enumerate}
  \end{construction}

  \begin{proof}[Computation]
    We compute the displays as follows.
    \begin{align*}
       &
      \vrt{\UnbDisp{\gA}{\CovUpgrade{\gB}}}\prn{x}
      \equiv
      \vrt{\CovUpgrade{\gB}}\prn{\Rx{\gA}{x}}
      \equiv
      \vrt{\gB\prn{x}}
      \equiv
      \vrt{\CovDisp{\gA}{\gB}}\prn{x}
      \\
       &
      \vrt{\UnbDisp{\gA}{\CtrvUpgrade{\gB}}}\prn{x}
      \equiv
      \vrt{\CtrvUpgrade{\gB}}\prn{\Rx{\gA}{x}}
      \equiv
      \vrt{\gB\prn{x}}
      \equiv
      \vrt{\CtrvDisp{\gA}{\gB}}\prn{x}
      \\
       &
      u \Edge{\UnbDisp{\gA}{\CovUpgrade{\gB}}}[p] v
      \equiv
      \LJ{\CovUpgrade{\gB}}{p}{u} \Edge{\CovUpgrade{\gB}\prn{p}} \RJ{\CovUpgrade{\gB}}{p}{v}
      \equiv
      \Push{\gB}{p}{u} \Edge{\gB\prn{y}} u
      \equiv
      u \Edge{\CovDisp{\gA}{\gB}}[p]v
      \\
       &
      u \Edge{\UnbDisp{\gA}{\CtrvUpgrade{\gB}}}[p] v
      \equiv
      \LJ{\CtrvUpgrade{\gB}}{p}{u} \Edge{\CtrvUpgrade{\gB}\prn{p}} \RJ{\CtrvUpgrade{\gB}}{p}{v}
      \equiv
      u \Edge{\gB\prn{y}} \Pull{\gB}{p}{v}
      \equiv
      u \Edge{\CtrvDisp{\gA}{\gB}}[p]v
      \\
       &
      \DRx{\UnbDisp{\gA}{\CovUpgrade{\gB}}}{x}{u}
      \equiv
      \MidJRx{\CovUpgrade{\gB}}[x]{u}
      \equiv
      \Push{\gB}{p}{u}
      \equiv
      \DRx{\CovDisp{\gA}{\gB}}{x}{u}
      \\
       &
      \DRx{\UnbDisp{\gA}{\CtrvUpgrade{\gB}}}{x}{u}
      \equiv
      \MidJRx{\CtrvUpgrade{\gB}}[x]{u}
      \equiv
      \Pull{\gB}{p}{u}
      \equiv
      \DRx{\CtrvDisp{\gA}{\gB}}{x}{u}
       &   & \qedhere
    \end{align*}
  \end{proof}

\end{xsect}
 
  \begin{xsect}{Duality involution for reflexive graph lenses}

    We have seen in \cref{def:op-rx-gph,obs:op-rx-gph-involution} the duality involution for reflexive graphs that flips the direction of edges. This involution hosts a (definitional) equivalence between the types of oplax covariant and lax contravariant lenses respectively. By virtue of this equivalence, from a result about all oplax covariant lenses one obtains automatically a dual result about all lax contravariant lenses.

    \begin{definition}[Total opposite of a lens]\label[definition]{def:tot-op-lens}
      Let $\gA$ be a reflexive graph, and let $\gB$ be an oplax covariant lens of reflexive graphs over $\gA$. We define the \DefEmph{total opposite} of $\gB$ to be the following lax contravariant lens $\gB\TotOp$ over $\gA\Op$:
      \begin{align*}
        \gB\TotOp\prn{x}
         & :\equiv
        \gB\prn{x}\Op
        \\
        \Pull{\gB\TotOp}{p}{u}
         & :\equiv
        \Push{\gB}{p}{u}
        \\
        \PullRx{\gB\TotOp}[x]{u}
         & :\equiv
        \PushRx{\gB}[x]{u}
      \end{align*}

      Conversely, if $\gB$ is a lax contravariant lens we define its total opposite to be the following lax contravariant lens $\gB\TotOp$ over $\gA\Op$:
      \begin{align*}
        \gB\TotOp\prn{x}
         & :\equiv
        \gB\prn{x}\Op
        \\
        \Push{\gB\TotOp}{p}{u}
         & :\equiv
        \Pull{\gB}{p}{u}
        \\
        \PushRx{\gB\TotOp}[x]{u}
         & :\equiv
        \PullRx{\gB}[x]{u}
      \end{align*}
    \end{definition}

    The purpose of \emph{total} opposites is to exhibit a duality involution identifying oplax covariant and lax contravariant lenses.

    \begin{observation}[Duality involution for lenses]\label[observation]{obs:lens-duality}
      By virtue of \cref{def:tot-op-lens}, we see that an oplax covariant lens over $\gA$ is precisely the same thing as a lax contravariant lens over $\gA\Op$; and, conversely, that a lax contravariant lens over $\gA$ is precisely the same thing as an oplax covariant lens over $\gA\Op$.
    \end{observation}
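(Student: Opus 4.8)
The plan is to verify that the \emph{total opposite} operation of \cref{def:tot-op-lens} is a definitional involution; the claimed correspondence then follows immediately. First I would note that \cref{def:tot-op-lens} already supplies both directions at the level of objects: it sends an oplax covariant lens over $\gA$ to a lax contravariant lens over $\gA\Op$, and a lax contravariant lens over $\gA$ to an oplax covariant lens over $\gA\Op$. The only thing left to confirm is that these two passages are mutually inverse, and since all the transported data is merely relabelled rather than transformed, it suffices to check this up to definitional equality.

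Next I would compute the double total opposite of an oplax covariant lens $\gB$ over $\gA$. Applying \cref{def:tot-op-lens} twice yields an oplax covariant lens $\prn{\gB\TotOp}\TotOp$ over $\prn{\gA\Op}\Op$, whose base is definitionally $\gA$ by \cref{obs:op-rx-gph-involution}. Componentwise we have $\prn{\gB\TotOp}\TotOp\prn{x} \equiv \prn{\gB\prn{x}\Op}\Op \equiv \gB\prn{x}$, again by involutivity of the opposite reflexive graph. The pushforward and its oplax unitor are carried across unchanged, since $\Push{\prn{\gB\TotOp}\TotOp}{p}{u} \equiv \Pull{\gB\TotOp}{p}{u} \equiv \Push{\gB}{p}{u}$ and $\PushRx{\prn{\gB\TotOp}\TotOp}[x]{u} \equiv \PullRx{\gB\TotOp}[x]{u} \equiv \PushRx{\gB}[x]{u}$. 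Hence $\prn{\gB\TotOp}\TotOp \equiv \gB$, and the symmetric computation starting from a lax contravariant lens is identical after exchanging the roles of $\Push{}{}$ and $\Pull{}{}$.

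The only subtlety, which is hardly an obstacle, is the bookkeeping of the definitional equalities on which the typing of the transported data silently relies: the reversal $x\Edge{\gA\Op}y \equiv y\Edge{\gA}x$ and the equation $\Rx{\gA\Op}{x}\equiv\Rx{\gA}{x}$ from \cref{def:op-rx-gph}, together with the fact that an edge $u\Edge{\gB\prn{x}\Op}v$ is by definition an edge $v\Edge{\gB\prn{x}}u$. These are precisely what make the type of $\PullRx{\gB\TotOp}$ coincide with that of $\PushRx{\gB}$, so no coherence beyond unfolding definitions is required, and the two maps are inverse on the nose.
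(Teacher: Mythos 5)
Your proposal is correct and matches the paper's treatment: the paper offers no separate proof of \cref{obs:lens-duality}, regarding it as immediate from unfolding \cref{def:tot-op-lens} together with the definitional involutivity of $\prn{-}\Op$ (\cref{obs:op-rx-gph-involution}), which is exactly the computation you carry out. Your explicit check that $\prn{\gB\TotOp}\TotOp\equiv\gB$ and that the typings of $\PushRx{\gB}$ and $\PullRx{\gB\TotOp}$ coincide definitionally is just the spelled-out version of that observation.
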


    \begin{lemma}[Display of total opposites]\label[lemma]{lem:display-of-tot-op}
      Let $\gB$ be an oplax covariant lens of reflexive graphs over a reflexive graph $\gA$. Then we have a definitional equivalence $\CtrvDisp{\gA\Op}{\prn{\gB\TotOp}} \equiv \prn{\CovDisp{\gA}{\gB}}\TotOp$ of displayed reflexive graphs.
    \end{lemma}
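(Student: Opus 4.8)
The plan is to prove the claimed definitional equivalence by unfolding both displayed reflexive graphs over $\gA\Op$ and checking that they agree on vertices, edges, and reflexivity data. Because every operation involved — the total opposite of a lens (\cref{def:tot-op-lens}), the two displays (\cref{def:cov-disp,def:ctrv-disp}), the total opposite of a displayed reflexive graph (\cref{def:tot-op-disp-rx-gph}), and the opposite of a reflexive graph (\cref{def:op-rx-gph}) — is specified by a definitional equation $:\equiv$, the entire argument is a chain of $\equiv$'s and no transport or coherence is needed.

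First I would expand the right-hand side $\prn{\CovDisp{\gA}{\gB}}\TotOp$. Its vertices are $\vrt{\gB\prn{x}}$, since \cref{def:cov-disp} gives $\vrt{\CovDisp{\gA}{\gB}}\prn{x}\equiv\vrt{\gB\prn{x}}$ and \cref{def:tot-op-disp-rx-gph} leaves vertices untouched. For $p : x\Edge{\gA\Op}y$ — equivalently $p : y\Edge{\gA}x$ — the total opposite flips the fibre: $u\Edge{\prn{\CovDisp{\gA}{\gB}}\TotOp}[p]v \equiv v\Edge{\CovDisp{\gA}{\gB}}[p]u \equiv \Push{\gB}{p}{v}\Edge{\gB\prn{x}}u$. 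The reflexivity datum is carried across unchanged, so $\DRx{\prn{\CovDisp{\gA}{\gB}}\TotOp}{x}{u}\equiv\PushRx{\gB}[x]{u}$.

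Next I would expand the left-hand side $\CtrvDisp{\gA\Op}{\prn{\gB\TotOp}}$. By \cref{def:tot-op-lens} the lax contravariant lens $\gB\TotOp$ over $\gA\Op$ has component $\gB\prn{x}\Op$ at $x$, with $\Pull{\gB\TotOp}{p}{v}\equiv\Push{\gB}{p}{v}$ and $\PullRx{\gB\TotOp}[x]{u}\equiv\PushRx{\gB}[x]{u}$. Substituting into \cref{def:ctrv-disp} gives vertices $\vrt{\gB\prn{x}\Op}\equiv\vrt{\gB\prn{x}}$ and, for $p : x\Edge{\gA\Op}y$, the edge $u\Edge{\CtrvDisp{\gA\Op}{\prn{\gB\TotOp}}}[p]v\equiv u\Edge{\gB\prn{x}\Op}\Pull{\gB\TotOp}{p}{v}$. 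Flipping the edge in the opposite component via \cref{def:op-rx-gph} and unfolding $\Pull{\gB\TotOp}{p}{v}$ turns this into $\Push{\gB}{p}{v}\Edge{\gB\prn{x}}u$, while the reflexivity datum is $\PullRx{\gB\TotOp}[x]{u}\equiv\PushRx{\gB}[x]{u}$.

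Comparing the two expansions, the vertices, edges, and reflexivity data coincide on the nose, which is the desired definitional equivalence. The only delicate point — and the closest thing to an obstacle — is the bookkeeping of edge orientations, since edge-reversal is invoked twice: once in reindexing along $\gA\Op$ (so that a displayed edge over $p$ is read as living over $p : y\Edge{\gA}x$) and once inside the opposite component $\gB\prn{x}\Op$. The content of the lemma is exactly that the fibrewise flip provided by \cref{def:tot-op-disp-rx-gph} on the covariant display agrees with the one built into the contravariant display of the opposite lens, so that the forward transport $\Push{\gB}{p}{v}$ ends up on the same side of the edge in both presentations.
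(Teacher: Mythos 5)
Your proposal is correct and follows essentially the same route as the paper: both proofs simply unfold the vertices, edges, and reflexivity data of $\CtrvDisp{\gA\Op}{\prn{\gB\TotOp}}$ and $\prn{\CovDisp{\gA}{\gB}}\TotOp$ via \cref{def:tot-op-lens,def:cov-disp,def:ctrv-disp,def:tot-op-disp-rx-gph,def:op-rx-gph} and observe that each pair of expansions coincides definitionally, with both sides of the edge computation landing on $\Push{\gB}{p}{v}\Edge{\gB\prn{x}}u$. Your remark about the two edge-reversals cancelling is exactly the content of the paper's two-column computation.
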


    \begin{proof}
      We compute explicitly.

      \iblock{
        \begin{multicols}{2}
          \mhang{
            \vrt{\CtrvDisp{\gA\Op}{\prn{\gB\TotOp}}}\prn{x}
          }{
            \commentrow{by \cref{def:ctrv-disp}}
            \mrow{
              {}\equiv
              \vrt{\gB\TotOp\prn{x}}
            }
            \commentrow{by \cref{def:tot-op-lens}}
            \mrow{
              {}\equiv
              \vrt{\gB\prn{x}\Op}
            }
            \commentrow{by \cref{def:op-rx-gph}}
            \mrow{
              {}\equiv
              \vrt{\gB\prn{x}}
            }
          }
          \columnbreak
          \mhang{
            \vrt{\prn{\CovDisp{\gA}{\gB}}\TotOp}\prn{x}
          }{
            \commentrow{by \cref{def:tot-op-disp-rx-gph}}
            \mrow{
              {}\equiv
              \vrt{\CovDisp{\gA}{\gB}}\prn{x}
            }
            \commentrow{by \cref{def:cov-disp}}
            \mrow{
              {}\equiv
              \vrt{\gB\prn{x}}
            }
          }
        \end{multicols}

        \begin{multicols}{2}
          \mhang{
            u \Edge{\CtrvDisp{\gA\Op}{\prn{\gB\TotOp}}}[p:x\Edge{\gA\Op}y] v
          }{
            \commentrow{by \cref{def:ctrv-disp}}
            \mrow{
              {}\equiv
              u \Edge{\gB\TotOp\prn{y}} \Pull{\gB\TotOp}{p}{v}
            }
            \commentrow{by \cref{def:tot-op-lens}}
            \mrow{
              {}\equiv
              u \Edge{\gB\prn{y}\Op} \Push{\gB}{p}{v}
            }
            \commentrow{by \cref{def:op-rx-gph}}
            \mrow{
              {}\equiv
              \Push{\gB}{p}{v} \Edge{\gB\prn{y}} u
            }
          }

          \columnbreak

          \mhang{
            u \Edge{\prn{\CovDisp{\gA}{\gB}}\TotOp}[p:x\Edge{\gA\Op}y] v
          }{
            \commentrow{by \cref{def:tot-op-disp-rx-gph}}
            \mrow{
              {}\equiv
              v\Edge{\CovDisp{\gA}{\gB}}[p:y\Edge{\gA}x] v
            }
            \commentrow{by \cref{def:cov-disp}}
            \mrow{
              {}\equiv
              \Push{\gB}{p}{v} \Edge{\gB\prn{y}} u
            }
          }
        \end{multicols}

        \begin{multicols}{2}
          \mhang{
            \DRx{\CtrvDisp{\gA\Op}{\prn{\gB\TotOp}}}{x}{u}
          }{
            \commentrow{by \cref{def:ctrv-disp}}
            \mrow{
              {}\equiv
              \PullRx{\gB\TotOp}[x]{u}
            }
            \commentrow{by \cref{def:tot-op-lens}}
            \mrow{
              {}\equiv
              \PushRx{\gB}[x]{u}
            }
          }
          \columnbreak
          \mhang{
            \DRx{\prn{\CovDisp{\gA}{\gB}}\TotOp}{x}{u}
          }{
            \commentrow{by \cref{def:tot-op-disp-rx-gph}}
            \mrow{
              {}\equiv
              \DRx{\CovDisp{\gA}{\gB}}{x}{u}
            }
            \commentrow{by \cref{def:cov-disp}}
            \mrow{
              {}\equiv
              \PushRx{\gB}[x]{u}
            }
            \qedhere
          }
        \end{multicols}
      }
    \end{proof}

    \begin{corollary}
      Let $\gB$ be a lax contravariant lens of reflexive graphs over a reflexive graph $\gA$. Then we have a definitional equivalence $\CovDisp{\gA\Op}{\prn{\gB\TotOp}} \equiv \prn{\CtrvDisp{\gA}{\gB}}\TotOp$ of displayed reflexive graphs.
    \end{corollary}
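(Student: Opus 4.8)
The plan is to deduce this corollary as a formal consequence of \cref{lem:display-of-tot-op}, exploiting the duality involution for lenses (\cref{obs:lens-duality}) rather than recomputing everything from scratch. By \cref{obs:lens-duality}, the total opposite $\gB\TotOp$ of a lax contravariant lens $\gB$ over $\gA$ is an oplax covariant lens over $\gA\Op$. I would therefore instantiate \cref{lem:display-of-tot-op} at the oplax covariant lens $\gB\TotOp$ over the reflexive graph $\gA\Op$, which yields the definitional equivalence
\[
  \CtrvDisp{\prn{\gA\Op}\Op}{\prn{\prn{\gB\TotOp}\TotOp}} \equiv \prn{\CovDisp{\gA\Op}{\prn{\gB\TotOp}}}\TotOp.
\]

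Next I would simplify the left-hand side using the involutivity of the opposite operations. By \cref{obs:op-rx-gph-involution} we have $\prn{\gA\Op}\Op \equiv \gA$, and the total opposite of a lens is likewise definitionally involutive, $\prn{\gB\TotOp}\TotOp \equiv \gB$, since applying the total opposite twice merely applies $\prn{-}\Op$ twice to each component reflexive graph (again \cref{obs:op-rx-gph-involution}) while relabelling the pushforward and pullback operators together with their unitors back to their original names. The left-hand side thus reduces definitionally to $\CtrvDisp{\gA}{\gB}$, giving $\CtrvDisp{\gA}{\gB} \equiv \prn{\CovDisp{\gA\Op}{\prn{\gB\TotOp}}}\TotOp$.

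Finally, I would apply the total opposite operation for displayed reflexive graphs to both sides and invoke its involutivity (\cref{def:disp-rx-gph-duality}), which collapses the right-hand side to $\CovDisp{\gA\Op}{\prn{\gB\TotOp}}$ and delivers precisely the claimed equivalence $\CovDisp{\gA\Op}{\prn{\gB\TotOp}} \equiv \prn{\CtrvDisp{\gA}{\gB}}\TotOp$. There is no genuine obstacle here; the only point demanding care is confirming that all three involutions in play --- $\prn{-}\Op$ on reflexive graphs, the total opposite on lenses, and the total opposite on displayed reflexive graphs --- hold strictly definitionally rather than merely up to equivalence, so that the chain of reductions composes into a bona fide definitional equality. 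Should one prefer to avoid routing through the lens involution, the statement admits an equally short direct proof by the same unfolding of definitions carried out in \cref{lem:display-of-tot-op}, simply swapping the roles of pushforward and pullback throughout.
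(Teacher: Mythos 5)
Your proposal is correct and is essentially the paper's own proof: the paper simply says ``by \cref{lem:display-of-tot-op} via duality (\cref{obs:lens-duality})'', and your argument is precisely the careful unpacking of that duality step, instantiating the lemma at $\gB\TotOp$ over $\gA\Op$ and cancelling the three definitional involutions. The one point you flag --- that $\prn{\gB\TotOp}\TotOp\equiv\gB$ holds definitionally for lenses --- is not stated as a separate observation in the paper, but your justification of it from \cref{def:tot-op-lens} and \cref{obs:op-rx-gph-involution} is sound.
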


    \begin{proof}
      By \cref{lem:display-of-tot-op} via duality (\cref{obs:lens-duality}).
    \end{proof}

    Although many other definitions can be obtained directly by duality, we will continue to specify various concepts and results for both oplax covariant and lax contravariant lenses in order to facilitate easy reference.

  \end{xsect}

\begin{xsect}[sec:definitional-lenses]{Definitional lenses of reflexive graphs}
  An (oplax covariant, lax contravariant) lens equips a family with a pushforward or pullback operator together with a witness of a unit law holding up to a directed edge. That the unit law holds only up to an edge is the meaning of the ``oplax/lax'' terminology. An important class of lenses are the ones in which the unit law holds up to definitional equality, so that the oplax or lax witness or the unit law can be given by the reflexivity datum of the component reflexive graph. We study these \emph{definitional lenses} in the present section, with an eye to developing the theory of polynomials and partial products of lenses in \cref{sec:polynomials}.

  \begin{definition}[Definitional lens]\label[definition]{def:definitional-lens}
    A \DefEmph{definitional covariant lens} is an oplax covariant lens $\gB$ over $\gA$ in which the following equations hold definitionally:
    \begin{align*}
      x:\vrt{\gA}, u:\vrt{\gB\prn{x}}
       & \vdash
      \Push{\gB}{\Rx{\gA}{x}}{u} \equiv u
      \\
      x:\vrt{\gA},u:\vrt{\gB\prn{x}}
       & \vdash
      \PushRx{\gB}[x]{u} \equiv \Rx{\gB\prn{x}}{u}
    \end{align*}

    By the same token, a \DefEmph{definitional contravariant lens} is a lax contravariant lens in which the following equations hold definitionally:
    \begin{align*}
      x:\vrt{\gA}, u:\vrt{\gB\prn{x}}
       & \vdash
      \Pull{\gB}{\Rx{\gA}{x}}{u} \equiv u
      \\
      x:\vrt{\gA},u:\vrt{\gB\prn{x}}
       & \vdash
      \PullRx{\gB}[x]{u} \equiv \Rx{\gB\prn{x}}{u}
    \end{align*}
  \end{definition}

  \begin{remark}
  We note that \cref{def:definitional-lens} must be understood ``metatheoretically'' or ``judgmentally'', because definitional equality cannot be stated as a type in Martin-L\"of type theory. One way to formalise this would be by means of a conservative extension such as two-level type theory~\citep{annenkov-capriotti-kraus:2017}; we remain agnostic and treat \cref{def:definitional-lens} as a definitional extension of the \emph{judgements} of Martin-L\"of type theory rather than as a type.
  \end{remark}

  \begin{example}[Definitional lenses over discrete reflexive graphs]\label{example:def-lens-over-disc-rx-gph}
    Let $\gB(x)$ be a family of reflexive graphs indexed in $x:A$. Then we may turn $\gB$ canonically into an oplax covariant lens or a lax contravariant lens over the discrete reflexive graph $\DiscPO{A}$ as follows using \emph{transport}:
    \begin{align*}
      x,y:A; p:\Id{A}{x}{y}, u:\vrt{\gB(x)} &\vdash
      \Push{\gB}{p}u :\equiv p_*u\\
      x,y:A; p:\Id{A}{x}{y}, u:\vrt{\gB(y)} &\vdash
      \Pull{\gB}{p}u :\equiv p\Inv_*u
    \end{align*}

    The total reflexive graph of the display of $\gB$ \emph{qua} covariant lens is then definitionally equal to the coproduct $\Coprod{x:A}\gB\prn{x}$ from \cref{ex:coprod-rx-gph}.
  \end{example}

  \begin{computation}[Display of definitional lenses]
    The display of definitional lenses is, naturally, more simple than that of ordinary lenses. If $\gB$ is a definitional covariant lens over $\gA$, its display computes as follows:
    \begin{align*}
      \vrt{\CovDisp{\gA}{\gB}}\prn{x}
      &\equiv
      \vrt{\gB\prn{x}}
      \\
      u \Edge{\CovDisp{\gA}{\gB}}[p:x\Edge{\gA}y] v
      &\equiv
      \Push{\gB}{p}{u} \Edge{\gB\prn{y}} v
      \\
      \DRx{\CovDisp{\gA}{\gB}}{x}{u}
      &\equiv
      \Rx{\gB\prn{x}}{u}
    \end{align*}

    Likewise, if $\gB$ is a definitional contravariant lens, its display computes as follows:
    \begin{align*}
      \vrt{\CtrvDisp{\gA}{\gB}}\prn{x}
      &\equiv
      \vrt{\gB\prn{x}}
      \\
      u \Edge{\CtrvDisp{\gA}{\gB}}[p:x\Edge{\gA}y] v
      &\equiv
      u \Edge{\gB\prn{x}} \Pull{\gB}{p}{v}
      \\
      \DRx{\CtrvDisp{\gA}{\gB}}{x}{u}
      &\equiv
      \Rx{\gB\prn{x}}{u}
    \end{align*}
  \end{computation}

  \begin{xsect}{Example: definitional lenses from univalent families}
    \begin{example}[Definitional lenses of from univalent families]\label[example]{ex:defn-cl-uni-fam}
      Any univalent family of path objects has a canonical \emph{definitional} lens structure. In particular, given a univalent family of path objects $\prn{U,\gE}$ we can equip $\gE$ with the structure of a covariant and a contravariant lens over the reflexive graph image $U/\gE$:
      \begin{align*}
        A_0,A_1:U; f : A_0\Edge{U/\gE} A_1; a : \vrt{\gE\prn{A_0}}
         & \vdash
        \Push{\gE}{f}{a} :\equiv f a
        \\
        A_0,A_1:U; f : A_0\Edge{U/\gE} A_1; a : \vrt{\gE\prn{A_1}}
         & \vdash
        \Pull{\gE}{f}{a} :\equiv f\Inv a
        \\
        A:U; a:\vrt{\gE\prn{A}}
         & \vdash
        \PushRx{\gE}[A]{a} :\equiv \Rx{\gE\prn{A}}{a}
        \\
        A:U; a:\vrt{\gE\prn{A}}
         & \vdash
        \PullRx{\gE}[A]{a} :\equiv \Rx{\gE\prn{A}}{a}
      \end{align*}
    \end{example}

    \begin{example}[Definitional lenses from universes]\label[example]{ex:defn-cl-universe}
      Any univalent family of types $\prn{U,E}$, \eg a universe, determines a univalent family of path objects $\prn{U,\DiscPO{E}}$. In this case, $U/\vrt{\DiscPO{E}}\equiv U/E$ is precisely the canonical path object structure on $U$ determined by equivalences between components of $E$, and we may consider the corresponding definitional lens structures on $\DiscPO{E}$ by specialising \cref{ex:defn-cl-uni-fam}.
    \end{example}

    \begin{example}[Definitional lenses from subuniverses]\label[example]{ex:defn-cl-subuniverse}
      Let $\prn{U,\gE}$ be a univalent family, and let $P\prn{A}$ be a proposition for every $A:U$. We can restrict $\gE$ to a family of path objects $\gE_P$ over $\Compr{A:U/\vrt{\gE}}{P\prn{A}}$  setting $\gE_P\prn{A,h} :\equiv \gE\prn{A}$. Then $\gE_P$ can be equipped with the structure of a definitional covariant and contravariant lens over $U/\vrt{\gE}$ in the following way:
      \begin{align*}
        A, B:\vrt{\Compr{A:U/\vrt{\gE}}{P\prn{A}}}; f : A\Edge{U/\vrt{\gE}}B, x : \vrt{\mathcal{E}\prn{\pi_1{A}}}
         & \vdash
        \Push{\gE_P}{f}{x} :\equiv fx
        \\
        A, B:\vrt{\Compr{A:U/\vrt{\gE}}{P\prn{A}}}; f : A\Edge{U/\vrt{\gE}}B, x : \vrt{\mathcal{E}\prn{\pi_1{B}}}
         & \vdash
         \Pull{\gE_P}{f}{x} :\equiv f\Inv{x}
        \\
        A:\vrt{\Compr{A:U/\vrt{\gE}}{P\prn{A}}}, x : \vrt{\mathcal{E}\prn{\pi_1{A}}}
         & \vdash
        \PushRx{\gE_P}[A]{x} :\equiv \Rx{\gE\prn{\pi_1A}}{x}
        \\
        A:\vrt{\Compr{A:U/\vrt{\gE}}{P\prn{A}}}, x : \vrt{\mathcal{E}}\prn{\pi_1{A}}
         & \vdash
        \PullRx{\gE_P}[A]{x} :\equiv \Rx{\gE\prn{\pi_1A}}{x}
      \end{align*}
    \end{example}
  \end{xsect}

  \begin{xsect}[sec:finite-ordinals]{Example: definitional lenses for finite ordinals}
    For each $n:\mathbb{N}$, let $\gF\prn{n}$ be the following path object with vertices in the standard finite set with $n$ elements:
    \[
      \gF\prn{n} :\equiv
      \Compr{i:\DiscPO{\mathbb{N}}}{i < n}
    \]

    The reflexive graph image of $\mathbb{N}$ under $\gF$ is \emph{not} univalent because $\mathbb{N}$ is a set and thus each $\Id{\mathbb{N}}{m}{n}$ is a proposition whereas there is a proper set of equivalences from $\gF\prn{m}$ to $\gF\prn{m}$ for $m \geq 2$. Therefore, we cannot apply \cref{ex:defn-cl-uni-fam} nor \cref{ex:defn-cl-subuniverse} to obtain a good definitional lens structure on $\gF$. Nonetheless, we can impose with some ingenuity a more restricted path object structure on $\mathbb{N}$ over which it is not difficult to exhibit $\gF$ as a definitional lens.

    \begin{example}\label[example]{ex:augmented-simplices}
      We may classify the \DefEmph{augmented simplices} by the following reflexive graph structure $\AugSpx$ with vertices in the natural numbers and edges given by monotone equivalences between finite ordinals:
      \begin{align*}
        \vrt{\AugSpx}
         & :\equiv
        \mathbb{N}
        \\
        m \Edge{\AugSpx} n
         & :\equiv
        \Sum{f : \mathsf{Equiv}\prn{\vrt{\gF\prn{m}},\vrt{\gF\prn{n}}}}
        \mathsf{isMonotone}\,f
        \\
        \Rx{\AugSpx}{n}
         & :\equiv
        \prn{1_{\vrt{\gF\prn{n}}}, \mathsf{idnMono}}
      \end{align*}
    \end{example}

    We shall deduce that \cref{ex:augmented-simplices} yields a path object, using the fact that the category of augmented simplices is \emph{gaunt} in the sense that there is at most one monotone isomorphism between any two finite ordinals.

    \begin{lemma}[A path object classifying augmented simplices]
      The reflexive graph $\AugSpx$ of augmented simplices is univalent.
    \end{lemma}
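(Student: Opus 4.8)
The plan is to verify condition (5) of \cref{def:univalent-reflexive-graph}, namely that each function $\IdToEdge{\AugSpx}[m,n]{-} : \Id{\mathbb{N}}{m}{n} \to \prn{m \Edge{\AugSpx} n}$ is an equivalence. The decisive structural observation is that both its source and its target are propositions: the source because $\mathbb{N}$ has decidable equality and is therefore a set (so that $\Id{\mathbb{N}}{m}{n}$ is a proposition), and the target because of the gaunt property. Since a map between propositions is automatically an equivalence as soon as a map exists in the reverse direction, the entire argument reduces to two ingredients: (a) showing that each edge type is a proposition, and (b) exhibiting a converse $\prn{m \Edge{\AugSpx} n} \to \Id{\mathbb{N}}{m}{n}$.

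Both (a) and (b) flow from a single combinatorial fact, which is the main obstacle: \emph{for finite ordinals there is at most one monotone equivalence $\vrt{\gF\prn{m}} \to \vrt{\gF\prn{n}}$, and one exists only when $m = n$}. I would prove this by induction on $m$ (with $n$ allowed to vary), peeling off least elements: a monotone bijection must carry the least element of $\vrt{\gF\prn{m}}$ to the least element of $\vrt{\gF\prn{n}}$ (its inverse is monotone as well, so the preimage of the minimum is itself minimal), after which restricting to the complements reduces the claim to the case $m-1, n-1$. This simultaneously forces $m = n$ and pins the underlying function down to the unique order isomorphism, which for $m = n$ is the identity. The base case $m = 0$ is immediate since $\vrt{\gF\prn{0}}$ is empty, and an equivalence with an empty type forces $n = 0$.

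Given this core lemma, ingredient (b) is immediate: an edge $m \Edge{\AugSpx} n$ contains in particular an equivalence $\vrt{\gF\prn{m}} \cong \vrt{\gF\prn{n}}$, which forces $m = n$; monotonicity is not even needed for this direction. For ingredient (a), unfold $m \Edge{\AugSpx} n \equiv \Sum{f:\mathsf{Equiv}\prn{\vrt{\gF\prn{m}},\vrt{\gF\prn{n}}}}\mathsf{isMonotone}\,f$. Since $\mathsf{isMonotone}\,f$ is a proposition (a product of propositions asserting order preservation), two edges are identified as soon as their underlying equivalences agree; and an equivalence of types is determined by its underlying function, whose uniqueness is exactly the core lemma. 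I would use function extensionality here, assumed locally in the style of the paper's conventions, to upgrade the pointwise equality of underlying functions supplied by the lemma into an identification of equivalences, and hence of edges.

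With (a) and (b) in hand, $\IdToEdge{\AugSpx}[m,n]{-}$ is a map between propositions admitting a reverse map, hence an equivalence, so $\AugSpx$ is univalent by \cref{def:univalent-reflexive-graph}. (Equivalently, one could verify condition (1) by checking directly that each fan $\Fan{\AugSpx}{m}$ is a proposition, using the cardinality argument to determine the first component and the gaunt property to determine the second; this route requires precisely the same combinatorial core.)
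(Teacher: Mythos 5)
Your proof is correct and rests on exactly the same two facts as the paper's: that there is at most one monotone equivalence between finite ordinals (gauntness) and that equinumerous finite ordinals are equal. The only differences are cosmetic --- the paper checks condition (1) of \cref{def:univalent-reflexive-graph} (each fan $\Fan{\AugSpx}{m}$ is a proposition) where you check condition (5) via the two-propositions-with-maps-both-ways argument, and you supply the inductive proof of gauntness that the paper leaves as an assertion.
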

    \begin{proof}
      Fixing $m:\mathbb{N}$, we must check that the fan $\Fan{\AugSpx}{m}$ is a proposition. Unfolding definitions, we have
      $
        \Fan{\AugSpx}{m}
        \equiv
        \Sum{n:\mathbb{N}}
        \Sum{f:\mathsf{Equiv}\prn{\vrt{\gF\prn{m}},\vrt{\gF\prn{n}}}}
        \mathsf{isMonotone}\,f
      $. As there can be at most one monotone equivalence between $\vrt{\gF\prn{m}}$ and some $\vrt{\gF\prn{n}}$, it remains only to observe that $\vrt{\gF\prn{n}}$ being equinumerous with $\vrt{\gF\prn{m}}$ implies $m=n$.
    \end{proof}

    \begin{example}[Definitional lens structure of finite ordinals]
      We may exhibit definitional covariant and contravariant lens structures  on $\gF$ over $\AugSpx$ like so:
      \begin{align*}
        m,n:\mathbb{N}; \prn{f,f^\leq} : m \Edge{\AugSpx} n, i : \vrt{\mathcal{F}\prn{m}}
         & \vdash
         \Push{\gF}{\prn{f,f^\leq}}{i}
        :\equiv
        f i
        \\
        m,n:\mathbb{N}; \prn{f,f^\leq} : m \Edge{\AugSpx} n, i : \vrt{\mathcal{F}\prn{n}}
         & \vdash
         \Pull{\gF}{\prn{f,f^\leq}}{i}
        :\equiv
        f\Inv i
        \\
        m:\mathbb{N}; i:\vrt{\mathcal{F}\prn{m}}
         & \vdash
        \PushRx{\gF}[m]{i} :\equiv \Rx{\gF\prn{m}}{i}
        \\
        m:\mathbb{N}; i:\vrt{\mathcal{F}\prn{m}}
         & \vdash
        \PullRx{\gF}[m]{i} :\equiv \Rx{\gF\prn{m}}{i}
      \end{align*}
    \end{example}
  \end{xsect}

  \begin{xsect}{Definitional replacement of lenses}
    We have seen that in some examples of definitional lenses, the reflexive graph structure on the base required some ingenuity. In this section, we describe a general method to reparameterise any lens as a \emph{definitional} lens over a different base.

    \begin{construction}[Flattening of lenses]
      Let $\gB$ be an oplax covariant or lax contravariant lens of reflexive graphs over a reflexive graph $\gA$.  We may equip $\vrt{\gA}$ with a new reflexive graph structure $\CovFlatten{\gA}{\gB}$ or $\CtrvFlatten{\gA}{\gB}$ respectively called the \DefEmph{flattening} of $\gB$ onto $\gA$.
      \begin{align*}
        \vrt{\CovFlatten{\gA}{\gB}}
         & :\equiv
        \vrt{\gA}
        \\
        x \Edge{\CovFlatten{\gA}{\gB}} y
         & :\equiv
        \Sum{p:x\Edge{\gA}y}
        \Sum{p' : \vrt{\gB\prn{x}}\to\vrt{\gB\prn{y}}}
        \Prod{u:\vrt{\gB\prn{x}}}
        \Push{\gB}{p}{u} \Edge{\gB\prn{y}} p'u
        \\
        \Rx{\CovFlatten{\gA}{\gB}}{x}
         & :\equiv
        \prn{
          \Rx{\gA}{x},
          1_{\vrt{\gB{x}}},
          \PushRx{\gB}[x]
        }
        \\[12pt]
        \vrt{\CtrvFlatten{\gA}{\gB}}
         & :\equiv
        \vrt{\gA}
        \\
        x \Edge{\CtrvFlatten{\gA}{\gB}} y
         & :\equiv
        \Sum{p:x\Edge{\gA}y}\Sum{p':\vrt{\gB\prn{y}}\to\vrt{\gB\prn{x}}}
        \Prod{u:\vrt{\gB\prn{y}}}
        p'u \Edge{\gB\prn{x}} \Pull{\gB}{x}u
        \\
        \Rx{\CtrvFlatten{\gA}{\gB}}{x}
         & :\equiv
        \prn{
          \Rx{\gA}{x},
          1_{\vrt{\gB{x}}},
          \PullRx{\gB}[x]
        }
      \end{align*}
    \end{construction}

    \begin{lemma}[Flattening of lenses of path objects onto path objects]\label[lemma]{lem:flattening-univalent}
      Let $\gB$ be an oplax covariant or lax contravariant lens of path objects over a path object $\gA$. Assuming function extensionality, the flattening of $\gB$ onto $\gA$ is univalent.
    \end{lemma}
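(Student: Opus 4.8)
The plan is to verify univalence through \cref{def:univalent-reflexive-graph} by showing that every fan of the flattening is \emph{contractible} (hence a proposition). I carry out the oplax covariant case for $\CovFlatten{\gA}{\gB}$ in detail and obtain the lax contravariant case by duality.

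First I would unfold the fan of a vertex $x:\vrt{\gA}$ and reassociate the iterated sum so that the $\gA$-edge component is grouped into a fan of $\gA$:
\[
  \Fan{\CovFlatten{\gA}{\gB}}{x}
  \equiv
  \Sum{\prn{y,p}:\Fan{\gA}{x}}
  \Sum{p':\vrt{\gB\prn{x}}\to\vrt{\gB\prn{y}}}
  \Prod{u:\vrt{\gB\prn{x}}}
  \Push{\gB}{p}{u} \Edge{\gB\prn{y}} p'u .
\]
Since $\gA$ is a path object, $\Fan{\gA}{x}$ is contractible with centre $\prn{x,\Rx{\gA}{x}}$, so a sum indexed by it is equivalent to its fibre over the centre; specialising $y:\equiv x$ and $p:\equiv \Rx{\gA}{x}$ this gives
\[
  \Fan{\CovFlatten{\gA}{\gB}}{x}
  \ \cong\
  \Sum{p':\vrt{\gB\prn{x}}\to\vrt{\gB\prn{x}}}
  \Prod{u:\vrt{\gB\prn{x}}}
  \Push{\gB}{\Rx{\gA}{x}}{u} \Edge{\gB\prn{x}} p'u .
\]

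The crucial step is to recognise the right-hand side as a type of sections and apply the distributivity (``type-theoretic choice'') equivalence with the constant codomain family $\vrt{\gB\prn{x}}$, yielding
\[
  \Sum{p':\vrt{\gB\prn{x}}\to\vrt{\gB\prn{x}}}
  \Prod{u:\vrt{\gB\prn{x}}}
  \Push{\gB}{\Rx{\gA}{x}}{u} \Edge{\gB\prn{x}} p'u
  \ \cong\
  \Prod{u:\vrt{\gB\prn{x}}}
  \Fan{\gB\prn{x}}{\Push{\gB}{\Rx{\gA}{x}}{u}} ,
\]
where I have used that $\Sum{v:\vrt{\gB\prn{x}}} \Push{\gB}{\Rx{\gA}{x}}{u}\Edge{\gB\prn{x}}v$ is by definition the fan of $\Push{\gB}{\Rx{\gA}{x}}{u}$ in $\gB\prn{x}$. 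Given the assumed $\eta$-laws for $\Sum$ and $\Prod$, this distributivity equivalence even has definitional round-trips, so function extensionality is not needed here. Because each $\gB\prn{x}$ is a path object, every fan $\Fan{\gB\prn{x}}{\Push{\gB}{\Rx{\gA}{x}}{u}}$ is contractible; and a product of contractible types is contractible, which is exactly where function extensionality enters. Hence $\Fan{\CovFlatten{\gA}{\gB}}{x}$ is contractible, and $\CovFlatten{\gA}{\gB}$ is univalent.

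For the lax contravariant case, I would appeal to the duality involution of \cref{obs:lens-duality}: a lax contravariant lens $\gB$ over $\gA$ is the same datum as the oplax covariant lens $\gB\TotOp$ over the path object $\gA\Op$ (using \cref{cor:op-po} to see that $\gA\Op$ and each $\gB\prn{x}\Op$ are again path objects), and a direct computation analogous to \cref{lem:display-of-tot-op} shows $\CovFlatten{\gA\Op}{\gB\TotOp} \equiv \prn{\CtrvFlatten{\gA}{\gB}}\Op$. The covariant case just proved makes $\CovFlatten{\gA\Op}{\gB\TotOp}$ univalent, whence $\CtrvFlatten{\gA}{\gB}$ is univalent by \cref{cor:op-po}; alternatively one repeats the argument above verbatim with co-fans in place of fans. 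I expect the only real work to be bookkeeping: carefully justifying the contraction of the $\Fan{\gA}{x}$-indexed sum down to its fibre over the centre (so that $\Push{\gB}{p}{u}$ is correctly specialised to the reflexivity edge) and confirming that the distributivity equivalence applies with the constant codomain. No path algebra is required, since the argument only transports contractibility of (co-)fans through equivalences and never touches the oplax/lax unitors themselves.
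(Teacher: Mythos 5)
Your proof is correct, but it runs the contraction in the opposite order from the paper's. The paper first recognises the inner part $\Sum{p'}\Prod{u}\Push{\gB}{p}{u}\Edge{\gB\prn{y}}p'u$ as the fan $\Fan{\vrt{\gB\prn{x}}\pitchfork\gB\prn{y}}{\Push{\gB}{p}}$ in a cotensor, contracts it using the fact (which is exactly function extensionality in the sense of \cref{def:prod-po}) that the cotensor of a path object is a path object, and thereby reduces $\Fan{\CovFlatten{\gA}{\gB}}{x}$ to $\Fan{\gA}{x}$, which is a proposition because $\gA$ is univalent. You instead contract the outer $\Fan{\gA}{x}$ first --- legitimately, and without any funext --- and then handle the residual fibre $\Sum{p':\vrt{\gB\prn{x}}\to\vrt{\gB\prn{x}}}\Prod{u}\Push{\gB}{\Rx{\gA}{x}}{u}\Edge{\gB\prn{x}}p'u$ by the $\Sigma$--$\Pi$ distributivity equivalence followed by closure of contractibility under products; note that this residual fibre is definitionally $\Fan{\vrt{\gB\prn{x}}\pitchfork\gB\prn{x}}{\Push{\gB}{\Rx{\gA}{x}}}$, so your last two steps amount to re-proving the special case of cotensor univalence that the paper cites wholesale, and you could shorten your argument by invoking \cref{def:prod-po} there. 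What your route buys is a sharper accounting of where function extensionality is actually used (only in ``a product of contractibles is contractible'') and the slightly stronger conclusion that each fan is contractible rather than merely a proposition; what the paper's route buys is brevity, since both reductions are definitional computations and no explicit choice equivalence is needed. Your treatment of the contravariant case via the duality $\CovFlatten{\gA\Op}{\gB\TotOp}\equiv\prn{\CtrvFlatten{\gA}{\gB}}\Op$ together with \cref{cor:op-po} is also correct and is in fact more explicit than the paper's one-line appeal to similarity.
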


    \begin{proof}
      We consider only the covariant case, as the contravariant one is similar. Fixing $x:\vrt{\gA}$, we must check that the fan $\Fan{\CovFlatten{\gA}{\gB}}{x}$ is a proposition. First we compute the fan up to definitional equality:
      \begin{align*}
        \Fan{\CovFlatten{\gA}{\gB}}{x}
         & \equiv
        \Sum{y:\gA} x\Edge{\CovFlatten{\gA}{\gB}} y
        \\
         & \equiv
        \Sum{y:\gA}
        \Sum{p:x\Edge{\gA}y}\Sum{p' : \vrt{\gB\prn{x}}\to\vrt{\gB\prn{y}}}
        \Prod{u:\vrt{\gB\prn{x}}}
        \Push{\gB}{p}{u} \Edge{\gB\prn{y}} p'u
        \\
         & \equiv
        \Sum{y:\gA}
        \Sum{p:x\Edge{\gA}y}\Sum{p' : \vrt{\gB\prn{x}}\to\vrt{\gB\prn{y}}}
        \Push{\gB}{p} \Edge{\vrt{\gB\prn{x}}\pitchfork\gB\prn{y}} p'
        \\
         & \equiv
        \Sum{y:\gA}
        \Sum{p:x\Edge{\gA}y}
        \Fan{\vrt{\gB\prn{x}}\pitchfork\gB\prn{y}}{\Push{\gB}{p}}
      \end{align*}

      As each $\gB\prn{y}$ is univalent, we deduce from function extensionality that the cotensor $\vrt{\gB\prn{x}}\pitchfork\gB\prn{y}$ is univalent. Therefore, each fan $\Fan{\vrt{\gB\prn{x}}\pitchfork\gB\prn{y}}{\Push{\gB}{p}}$ is contractible and so we have the following equivalences:
      \begin{align*}
        \Fan{\CovFlatten{\gA}{\gB}}{x}
         & \cong
        \Sum{y:\vrt{\gA}}
        \Sum{p:x\Edge{\gA}y}
        \Fan{\vrt{\gB\prn{x}}\pitchfork\gB\prn{y}}{\Push{\gB}{p}}
        \\
         & \cong
        \Sum{y:\vrt{\gA}} x\Edge{\gA}y
        \\
         & \equiv
        \Fan{\gA}{x}
      \end{align*}

      As $\gA$ is assumed to be univalent, $\Fan{\gA}{x}$ is a proposition and thus so is $\Fan{\CovFlatten{\gA}{\gB}}{x}$.
    \end{proof}

    \begin{construction}[The definitional replacement of a lens]
      Let $\gB$ be an oplax covariant lens of reflexive graphs over a reflexive graph $\gA$. The \DefEmph{covariant definitional replacement} of $\gB$ is defined to be the following \emph{definitional} covariant lens $\CovRepl{\gA}{\gB}$ over the flattening $\CovFlatten{\gA}{\gB}$ with $\gB$ as its underlying family of reflexive graphs.
      \begin{align*}
        \prn{\CovRepl{\gA}{\gB}}\prn{x}
        &:\equiv \gB\prn{x}
        \\
        \Push{\CovRepl{\gA}{\gB}}{\prn{p,p'}}{u}
        &:\equiv
        p'u
      \end{align*}
    \end{construction}

    On the other hand, if $\gB$ is a lax contravariant lens of reflexive graphs over a reflexive graph $\gA$, we define the \DefEmph{contravariant definitional replacement} of $\gB$ to be the following definitional contravariant lens $\CtrvRepl{\gA}{\gB}$ over the flattening $\CtrvFlatten{\gA}{\gB}$:
    \begin{align*}
      \prn{\CtrvRepl{\gA}{\gB}}\prn{x}
      &:\equiv \gB\prn{x}
      \\
      \Pull{\CtrvRepl{\gA}{\gB}}{\prn{p,p'}}{u}
      &:\equiv
      p'u
    \end{align*}

  \end{xsect}

\begin{xsect}[sec:polynomials]{Polynomials and partial products of reflexive graphs}
  Any family of types $x:A\vdash B\prn{x} : \TYPE$ induces a \DefEmph{polynomial} type operator $\Poly{A}{B}{C} :\equiv \Sum{x:A}\Prod{u:B\prn{x}}C$; the universal property of $\Poly{A}{B}{C}$ is that of a \emph{partial product}~\citep{dyckhoff-tholen:1987}. In this section, we generalise the construction of polynomials to lenses of reflexive graphs; the definitions of the polynomial reflexive graphs will then mirror the description of \emph{covariant} and \emph{contravariant} partial products in a 2-category~\citep{johnstone:1993,johnstone:2002}. We then show how to specialise the construction of polynomials to obtain suitable path object structures on the \emph{partial map classifier} of a given dominance as in synthetic domain theory~\citep{rosolini:1986}.

   In this section, we assume function extensionality.

  \NewDocumentCommand\CovPP{mmm}{\mathsf{pp}^+\Sub{#1}\prn{#3,#2}}
  \NewDocumentCommand\CtrvPP{mmm}{\mathsf{pp}^-\Sub{#1}\prn{#3,#2}}

  \begin{construction}\label[construction]{con:pp-right-cleaving}
    Let $\gB$ be a definitional covariant lens of reflexive graphs over a reflexive graph $\gA$, and let $\gC$ be an arbitrary reflexive graph. Then we may define a \emph{definitional contravariant} lens of reflexive graphs $\CovPP{\gA}{\gB}{\gC}$ over $\gA$ by setting each $\prn{\CovPP{\gA}{\gB}{\gC}}\prn{x} :\equiv \vrt{\gB\prn{x}}\pitchfork \gC$. The definitional contravariant lens structure of $\CovPP{\gA}{\gB}{\gC}$ over $\gA$ is defined from the definitional covariant lens structure of $\gB$ by precomposition like so:

    \iblock{
      \mhang{
        x,y:\vrt{\gA}; p:x\Edge{\gA}y; c:\vrt{\gB\prn{y}}\to\vrt{\gC}
      }{
        \mrow{
          \vdash
          \Pull{\CovPP{\gA}{\gB}{\gC}}{p}{c}
          :\equiv
          c \circ \Push{\gB}{p}
        }
      }
    }
  \end{construction}

  \begin{construction}\label[construction]{con:pp-left-cleaving}
    Let $\gB$ be a definitional contravariant lens of reflexive graphs over a reflexive graph $\gA$, and let $\gC$ be an arbitrary reflexive graph. Then we may define a \emph{definitional covariant} family of path objects $\CtrvPP{\gA}{\gB}{\gC}$ over $\gA$ by setting each $\prn{\CtrvPP{\gA}{\gB}\gC}\prn{x} :\equiv \vrt{\gB\prn{x}}\pitchfork \gC$. The definitional covariant lens structure of $\CtrvPP{\gA}{\gB}\gC$ over $\gA$ is defined from the definitional contravariant lens struture of $\gB$ by precomposition:

    \iblock{
      \mhang{
        x,y:\vrt{\gA}; p:x\Edge{\gA}y; c:\vrt{\gB\prn{x}}\to\vrt{\gC}
      }{
        \mrow{
          \vdash
          \Push{\CtrvPP{\gA}{\gB}{\gC}}{p}{c}
          :\equiv
          c \circ \Pull{\gB}{p}
        }
      }
    }
  \end{construction}

  \begin{definition}[Partial products of definitional lenses]
    Let $\gB$ be a definitional covariant lens of reflexive graphs over a reflexive graph $\gA$. Then the \DefEmph{covariant partial product} of a reflexive graph $\gC$ with $\gB$ over $\gA$ is defined to be the following total reflexive graph:
    \[ 
      \CovPoly{\gA}{\gB}{\gC} :\equiv 
      \gA.\CtrvDisp{\gA}\prn{\CovPP{\gA}{\gB}{\gC}}
    \] 

    When $\gB$ is a definitional contravariant lens, we define the \DefEmph{contravariant partial product} of $\gC$ with $\gB$ over $\gA$ as follows:
    \[ 
      \CtrvPoly{\gA}{\gB}{\gC} :\equiv 
      \gA.\CovDisp{\gA}\prn{\CtrvPP{\gA}{\gB}{\gC}}
    \] 
  \end{definition}

  \begin{computation}[Covariant partial product]
    The covariant partial product of a reflexive graph $\gC$ with a definitional covariant lens of reflexive graphs $\gB$ over a reflexive graphs $\gA$ computes as follows:
    \begin{align*}
      \vrt{\CovPoly{\gA}{\gB}{\gC}}
       & \equiv
      \Poly{\vrt{\gA}}{\vrt{\gB}}{\vrt{\gC}}
      \\
      \prn{x,c} \Edge{\CovPoly{\gA}{\gB}{\gC}} \prn{y,d}
       & \equiv
      \Sum{p:x\Edge{\gA}y}
      \Prod{u:\gB\prn{x}}
      cu \Edge{\gC} d\prn{\Push{\gB}{p}{u}}
      \\
      \Rx{\CovPoly{\gA}{\gB}{\gC}}\prn{x,c}
       & \equiv
      \prn{\Rx{\gA}{x},\Lam{u}\Rx{\gC}\prn{cu}}
    \end{align*}
  \end{computation}

  \begin{computation}[Contravariant partial product]
    The contravariant partial product of a reflexive graph $\gC$ with a definitional contravariant lens of reflexive graphs $\gB$ over a reflexive graph $\gA$ unravels up to definitional equivlance as follows:
    \begin{align*}
      \vrt{\CtrvPoly{\gA}{\gB}{\gC}}
       & \equiv
      \Poly{\vrt{\gA}}{\vrt{\gB}}{\vrt{\gC}}
      \\
      \prn{x,c} \Edge{\CtrvPoly{\gA}{\gB}{\gC}} \prn{y,d}
       & \equiv
      \Sum{p:x\Edge{\gA}y}
      \Prod{u:\gB\prn{y}}
      c\prn{\Pull{\gB}{p}{u}} \Edge{\gC} du
      \\
      \Rx{\CtrvPoly{\gA}{\gB}{\gC}}\prn{x,c}
       & \equiv
      \prn{\Rx{\gA}{x},\Lam{u}\Rx{\gC}\prn{cu}}
    \end{align*}
  \end{computation}

  \begin{example}[Partial map classifiers]
    A \DefEmph{dominance} in the sense of \citet{rosolini:1986} is a univalent universe $\prn{\Sigma, T}$ closed under the terminal type and internal sums such that for each $\phi:\Sigma$, the component $T\prn{\phi}$ is a proposition. In this case, we may equip each $T\prn{\phi}$ with the codiscrete path object structure $\CodiscPO\prn{T\prn{\phi}}$ so that $\prn{\Sigma,\CodiscPO{T}}$ is a univalent family of path objects and $\CodiscPO{T}$ has definitional covariant and contravariant lens structure over the propositional reflexive graph image $\Sigma/_{-1}T$.

    The partial map classifier of a given type $A$ in the dominance $\prn{\Sigma,T}$ is the polynomial type $\Poly{\Sigma}{T}{A} \equiv \Sum{\phi:\Sigma}\Prod{p:T\prn{\phi}} A$. A suitable lifting of the partial map classifiers to path objects $\gA$ can be obtained by means of covariant or contravariant partial products:
    \[
      \gA_\bot^\covsym :\equiv \CovPoly{\Sigma/_{-1}T}{\CodiscPO{T}}{\gA}
      \qquad
      \gA_\bot^\ctrvsym :\equiv \CtrvPoly{\Sigma/_{-1}T}{\CodiscPO{T}}{\gA}
    \]

    We compute the above reflexive graph structures as follows:
    \begin{align*}
       & \vrt{\gA_\bot^\covsym}
      \equiv
      \vrt{\gA_\bot^\ctrvsym}
      \equiv
      \Poly{\Sigma}{T}{\vrt{\gA}}
      \\
       & \prn{\phi,a} \Edge{\gA_\bot^\covsym} \prn{\psi,b}
      \equiv
      \Sum{\prn{p,q}:\prn{\phi\to \psi}\times\prn{\psi\to\phi}}
      \Prod{u:T\prn{\phi}}
      au \Edge{A} b\prn{pu}
      \\
       & \prn{\phi,a} \Edge{\gA_\bot^\ctrvsym} \prn{\psi,b}
      \equiv
      \Sum{\prn{p,q}:\prn{\phi\to \psi}\times\prn{\psi\to\phi}}
      \Prod{u:T\prn{\psi}}
      a\prn{qu} \Edge{A} bu
      \\
       & \Rx{\gA_\bot^\covsym}\prn{\phi,a}
      \equiv
      \Rx{\gA_\bot^\ctrvsym}\prn{\phi,a}
      \equiv
      \prn{\prn{\Lam{u}{u},\Lam{u}{u}},\Lam{u}\Rx{\gA}\prn{au}}
    \end{align*}
  \end{example}

  \begin{example}[Lists]
    A simple example of a polynomial type operator is that of \emph{lists}: letting $F\prn{n}$ be the standard finite type $\Sum{i:\mathbb{N}}i<n$, we may define $\mathsf{List}\prn{A} :\equiv \Poly{\mathbb{N}}{F}{A}$. This representation of lists lifts to path objects by means of partial products via our account in \cref{sec:finite-ordinals} of the path object $\AugSpx$ classifying finite ordinals and the definitional lens $\gF$ of elements of a given finite ordinal. In particular, for any path object $\gA$ we consider the following covariant and contravariant partial products:
    \[
      \mathsf{List}^\covsym\prn{\gA} :\equiv \CovPoly{\AugSpx}{\gF}{\gA}
      \qquad
      \mathsf{List}^\ctrvsym\prn{\gA} :\equiv \CtrvPoly{\AugSpx}{\gF}{\gA}
    \]

    These reflexive graphs compute as follows:
    \begin{align*}
       & \vrt{\mathsf{List}^\covsym\prn{\gA}}
      \equiv
      \vrt{\mathsf{List}^\ctrvsym\prn{\gA}}
      \equiv
      \mathsf{List}\prn{A} \equiv \Sum{n:\mathbb{N}}\Prod{i:F\prn{n}}A
      \\
       & \prn{m,a} \Edge{\mathsf{List}^\covsym\prn{\gA}} \prn{n,b}
      \equiv
      \Sum{\prn{f,f^\leq} : m \Edge{\AugSpx} n}
      \Prod{i:F\prn{m}}
      a i \Edge{\gA} b\prn{f i}
      \\
       & \prn{m,a} \Edge{\mathsf{List}^\ctrvsym\prn{\gA}} \prn{n,b}
      \equiv
      \Sum{\prn{f,f^\leq} : m \Edge{\AugSpx} n}
      \Prod{i:F\prn{n}}
      a \prn{f i} \Edge{\gA} b i
      \\
       & \Rx{\mathsf{List}^\covsym\prn{\gA}}\prn{m,a}
      \equiv
      \Rx{\mathsf{List}^\ctrvsym\prn{\gA}}\prn{m,a}
      \equiv
      \prn{\prn{\mathsf{idnEquiv}_{F\prn{m}},\mathsf{idnMono}},\Lam{i}\Rx{\gA}\prn{ai}}
    \end{align*}
  \end{example}

  \citet{buchholtz:2023:pairs} has shown how to use the classifying space of the symmetric group on two elements to define several variations on unordered pairs in univalent foundations. In what follows, we assume the existence of propositional truncations $\vvrt{-}$.

  \begin{example}[Homotopy unordered pairs~\citep{buchholtz:2023:pairs}]
    Let $\prn{U,E}$ be a univalent universe containing the two-element type $\mathbf{2}$; following \citet{buchholtz:2023:pairs} we may define the ``classifying type'' $B\Sigma_2$ to
    the type of $U$-small types merely equivalent to $\mathbf{2}$ as follows, \ie we define $B\Sigma_2 :\equiv \Sum{X:U}\vvrt{\mathsf{Equiv}\prn{E\prn{X},\mathbf{2}}}$. We define a decoding family $K:B\Sigma_2 \vdash E_2\prn{K}:\TYPE$ by setting $E_2\prn{X,h} :\equiv E\prn{X}$. Then \opcit defines the type of \DefEmph{homotopy unordered pairs} in a type $A$ to be the following polynomial type:
    \[
      \mathsf{hUP}\prn{A} :\equiv \Poly{B\Sigma_2}{E_2}{A}
      \equiv
      \Sum{K:B\Sigma_2}\Prod{i:E_2\prn{K}}A
    \]

    A suitable path object structure on $B\Sigma_2$ is obtained by path object comprehension:
    \[
      \gB\Sigma_2 :\equiv \Compr{X:U/E}{\vvrt{\mathsf{Equiv}\prn{E\prn{X},\mathbf{2}}}}
    \]

    Then $E_2$ lifts discretely to a definitional covariant and contravariant lens $\gE_2$ of path objects over $\gB\Sigma_2$ by \cref{ex:defn-cl-subuniverse}, setting $\gE_2\prn{K} :\equiv \DiscPO\prn{E_2\prn{K}}$. A suitable path object structure on homotopy unordered pairs is then obtained by either covariant or contravariant partial products:
    \[
      \mathsf{hUP}^\covsym\prn{\gA} :\equiv \CovPoly{\gB\Sigma_2}{\gE_2}{\gA}
      \qquad
      \mathsf{hUP}^\ctrvsym\prn{\gA} :\equiv \CtrvPoly{\gB\Sigma_2}{\gE_2}{\gA}
    \]

    These reflexive graph structures compute as follows:
    \begin{align*}
       & \vrt{\mathsf{hUP}^\covsym\prn{\gA}}
      \equiv
      \vrt{\mathsf{hUP}^\ctrvsym\prn{\gA}}
      \equiv
      \mathsf{hUP}\prn{\vrt{\gA}}
      \\
       &
      \prn{\prn{L,h_L},a} \Edge{\mathsf{hUP}^\covsym\prn{\gA}}\prn{\prn{K,h_K},b}
      \equiv
      \Sum{f : \mathsf{Equiv}\prn{E\prn{L},E\prn{K}}}
      \Prod{i:E\prn{L}}
      ai \Edge{\gA} b\prn{fi}
      \\
       &
      \prn{\prn{L,h_L},a} \Edge{\mathsf{hUP}^\ctrvsym\prn{\gA}}\prn{\prn{K,h_K},b}
      \equiv
      \Sum{f : \mathsf{Equiv}\prn{E\prn{L},E\prn{K}}}
      \Prod{i:E\prn{K}}
      a\prn{fi} \Edge{\gA} bi
      \\
       &
      \Rx{\mathsf{hUP}^\covsym\prn{\gA}}\prn{\prn{K,h},a}
      \equiv
      \Rx{\mathsf{hUP}^\ctrvsym\prn{\gA}}\prn{\prn{K,h},a}
      \equiv
      \prn{\mathsf{idnEquiv}_{E\prn{K}}, \Lam{i}\Rx{\gA}\prn{ai}}
    \end{align*}
  \end{example}
\end{xsect}
 
\end{xsect}
\begin{xsect}[sec:coherence]{Coherence in the theory of path object lenses}
  \begin{lemma}\label[lemma]{lem:rx-gph-univalence-is-prop}
    Assuming dependent function extensionality, it is a proposition that a given reflexive graph is univalent.
  \end{lemma}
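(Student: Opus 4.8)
The plan is to fix, among the logically equivalent formulations collected in \cref{def:univalent-reflexive-graph}, the condition that is manifestly propositional, namely condition~(1): that each fan is a proposition. Concretely, I would take univalence of a reflexive graph $\gA$ to be the type $\Prod{x:\vrt{\gA}}\mathsf{isProp}\prn{\Fan{\gA}{x}}$, and show that this type is a proposition.

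The verification proceeds in two standard steps, both relying on dependent function extensionality. First, I recall that $\mathsf{isProp}\prn{Y}$ is a proposition for every type $Y$: it suffices to prove this under the assumption that $\mathsf{isProp}\prn{Y}$ is inhabited, but under that assumption $Y$ is a set, so $\mathsf{isProp}\prn{Y}\equiv \Prod{y_0,y_1:Y}\Id{Y}{y_0}{y_1}$ is a dependent product of propositions and is therefore itself a proposition by function extensionality. Second, I apply the closure of propositions under dependent products once more to conclude that $\Prod{x:\vrt{\gA}}\mathsf{isProp}\prn{\Fan{\gA}{x}}$, being a dependent product of a family of propositions, is a proposition. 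This gives the claim.

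There is no genuinely hard step here; the only point requiring care is that \cref{def:univalent-reflexive-graph} presents univalence as a list of logically equivalent conditions rather than as a single type, so that ``univalence is a proposition'' must be read relative to a choice of definitional type. This choice is immaterial, since each of the five conditions is separately a proposition: conditions~(1) and~(2) have the form ``$\mathsf{isProp}$ of a fan'', conditions~(3) and~(4) assert contractibility of a fan, and condition~(5) asserts that a family of maps consists of equivalences --- and $\mathsf{isContr}$ and $\mathsf{isEquiv}$ are likewise propositions under function extensionality. As the equivalence of these conditions is already established in \cref{def:univalent-reflexive-graph} (via \cref{lem:inward-vs-outward-fans-propositional,lem:propositional-fan-vs-id-to-edge-equiv}), any of them may serve as the definition, and each is a proposition.
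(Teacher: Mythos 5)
Your proof is correct and takes essentially the same route as the paper: fix condition~(1) of \cref{def:univalent-reflexive-graph} as the definition, observe that $\Prod{x:\vrt{\gA}}\mathsf{isProp}\,\Fan{\gA}{x}$ is a dependent product of propositions (since being a proposition is itself a proposition), and conclude by the closure of propositions under dependent products given dependent function extensionality. Your additional remark about the choice among the five equivalent conditions being immaterial is a reasonable clarification that the paper leaves implicit.
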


  \begin{proof}
    Let $\gA$ be a reflexive graph; a witness that $\gA$ is univalent is an element of the following product:
    \[
      \Prod{x:\vrt{\gA}}
      \mathsf{isProp}\,\Fan{\gA}{x}
    \]

    As being a proposition is a proposition, this is a product of propositions. Dependent function extensionality implies the closure of propositions under products.
  \end{proof}

  \begin{lemma}\label[lemma]{lem:disp-rx-gph-univalence-is-prop}
    Let $\gA$ be a reflexive graph, and let $\gB$ be a displayed reflexive graph over $\gA$. Assuming dependent function extensionality, it is a proposition that $\gB$ is univalent.
  \end{lemma}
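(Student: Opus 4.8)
The plan is to unfold the definition of univalence for a displayed reflexive graph and reduce the claim to the non-displayed version already established in \cref{lem:rx-gph-univalence-is-prop}. By \cref{def:univalent-displayed-reflexive-graph}, a witness that $\gB$ is univalent is precisely a proof that each component $\gB\prn{x}$ (in the sense of \cref{def:component}) is univalent; that is, the type asserting univalence of $\gB$ is definitionally the product
\[
  \Prod{x:\vrt{\gA}} \prn{\gB\prn{x}\IsUnivalent}.
\]

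The key step is to observe that each factor of this product is a proposition. Fixing $x:\vrt{\gA}$, the component $\gB\prn{x}$ is an ordinary reflexive graph, so by \cref{lem:rx-gph-univalence-is-prop} --- which itself only requires dependent function extensionality --- the type $\gB\prn{x}\IsUnivalent$ is a proposition. We therefore have a dependent product of propositions indexed by $x:\vrt{\gA}$, which is again a proposition by the closure of propositions under dependent products afforded by dependent function extensionality. This establishes that it is a proposition that $\gB$ is univalent.

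There is essentially no obstacle here beyond correctly matching the displayed notion of univalence with its component-wise formulation: once this is done, the argument is a one-line iteration of the proof of \cref{lem:rx-gph-univalence-is-prop} with an extra quantifier over vertices of the base. It is worth noting that we do \emph{not} require $\gA$ itself to be univalent, in keeping with the convention fixed in \cref{def:univalent-displayed-reflexive-graph}.
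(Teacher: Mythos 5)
Your proof is correct and takes essentially the same route as the paper: unfold univalence of $\gB$ as the dependent product over $x:\vrt{\gA}$ of the univalence of each component $\gB\prn{x}$, apply \cref{lem:rx-gph-univalence-is-prop} to each factor, and conclude by closure of propositions under dependent products via dependent function extensionality. No differences worth noting.
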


  \begin{proof}
    Univalence of $\gB$ amounts to the following product:
    \[
      \Prod{x:\vrt{\gA}}
      \mathsf{isUnivalent}\,\prn{\gB\prn{x}}
    \]

    By dependent function extensionality, it suffices to check that each $\mathsf{isUnivalent}\,\prn{\gB\prn{x}}$ is a proposition; but this is \cref{lem:rx-gph-univalence-is-prop}.
  \end{proof}

  \begin{lemma}\label[lemma]{lem:covariant-lens-structure-is-prop}
    Let $\gA$ be a path object and let $\gB$ be a family of path objects over $\gA$. Assuming dependent function extensionality, the type of oplax covariant lens structures on $\gB$ over $\gA$ is a proposition.
  \end{lemma}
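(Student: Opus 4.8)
The plan is to prove something slightly stronger than the statement: that the type of oplax covariant lens structures on $\gB$ over $\gA$ is in fact \emph{contractible}, which in particular makes it a proposition. The idea is to strip off the data of \cref{def:oplax-cov-lens} one layer at a time, using the univalence of $\gA$ to pin down the pushforward and the univalence of each $\gB\prn{x}$ to pin down the unitor. First I would unfold the type of lens structures as the dependent sum
\[
  \Sum{\mathsf{push}:S}
  \Prod{x:\vrt{\gA}}\Prod{u:\vrt{\gB\prn{x}}}
  \prn{\mathsf{push}\,\prn{x,x,\Rx{\gA}{x}}\,u \Edge{\gB\prn{x}} u}\text{,}
\]
where $S \coloneqq \Prod{x,y:\vrt{\gA}}\Prod{p:x\Edge{\gA}y}\prn{\vrt{\gB\prn{x}}\to\vrt{\gB\prn{y}}}$ is the type of pushforward operators. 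Since the unitor at a vertex $x$ constrains the pushforward only at the reflexivity edge $\Rx{\gA}{x}$ (hence only at $y\equiv x$), repeated use of the distributivity equivalence $\Sum{f:\Prod{a}S'\prn{a}}\Prod{a}Q\prn{a,f\,a}\cong\Prod{a}\Sum{b:S'\prn{a}}Q\prn{a,b}$ reorganises the sum into a product over $x:\vrt{\gA}$ of the per-vertex type
\[
  T_x \coloneqq
  \Sum{f:\Prod{\prn{y,p}:\Fan{\gA}{x}}\prn{\vrt{\gB\prn{x}}\to\vrt{\gB\prn{y}}}}
  \Prod{u:\vrt{\gB\prn{x}}}\prn{f\,\prn{x,\Rx{\gA}{x}}\,u \Edge{\gB\prn{x}} u}\text{.}
\]

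The crux is to contract each $T_x$. Because $\gA$ is a path object, the fan $\Fan{\gA}{x}$ is contractible with centre $\prn{x,\Rx{\gA}{x}}$ (\cref{def:univalent-reflexive-graph}), so evaluation at the centre is an equivalence $\Prod{\prn{y,p}:\Fan{\gA}{x}}\prn{\vrt{\gB\prn{x}}\to\vrt{\gB\prn{y}}} \cong \prn{\vrt{\gB\prn{x}}\to\vrt{\gB\prn{x}}}$ carrying $f$ to $f\,\prn{x,\Rx{\gA}{x}}$ --- exactly the value the unitor sees. Hence $T_x \cong \Sum{g:\vrt{\gB\prn{x}}\to\vrt{\gB\prn{x}}}\Prod{u:\vrt{\gB\prn{x}}}\prn{g\,u\Edge{\gB\prn{x}}u}$, and distributing $\Sigma$ over $\Pi$ once more turns this into $\Prod{u:\vrt{\gB\prn{x}}}\Sum{w:\vrt{\gB\prn{x}}}\prn{w\Edge{\gB\prn{x}}u}$, whose summand is precisely the co-fan $\CoFan{\gB\prn{x}}{u}$. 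As $\gB\prn{x}$ is a path object, each co-fan is contractible, so by function extensionality $T_x$ is a product of contractible types and is therefore contractible. Applying the same closure once more, the full lens-structure type $\Prod{x:\vrt{\gA}}T_x \cong \Prod{x:\vrt{\gA}}\Prod{u:\vrt{\gB\prn{x}}}\CoFan{\gB\prn{x}}{u}$ is contractible, and in particular a proposition.

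The main obstacle I anticipate is bookkeeping rather than mathematics: threading the reassociations so that contracting the fan stays compatible with the unitor constraint. The observation that makes this painless is that the unitor only ever refers to the pushforward at the reflexivity edge, i.e.\ at the very centre of the fan; once this is isolated, every remaining step is one of three standard moves --- evaluation at the centre of a contractible type, distributivity of $\Sigma$ over $\Pi$, and closure of contractible types under $\Pi$ (the latter two using the assumed function extensionality).
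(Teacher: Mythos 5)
Your proof is correct and follows essentially the same route as the paper's: unfold the lens-structure type, distribute the outer product over $x:\vrt{\gA}$, contract the fan of $\gA$ at its centre $\prn{x,\Rx{\gA}{x}}$ so that the pushforward collapses to an endofunction of $\vrt{\gB\prn{x}}$, and then finish using univalence of $\gB\prn{x}$ together with function extensionality. The only cosmetic differences are that you distribute $\Sigma$ over $\Pi$ once more to land on a product of co-fans $\CoFan{\gB\prn{x}}{u}$ where the paper instead recognises the residual type as the co-fan of the identity function in the cotensor $\vrt{\gB\prn{x}}\pitchfork\gB\prn{x}$, and that you record the (correct) stronger conclusion that the type is contractible rather than merely a proposition.
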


  \begin{proof}
    We first write out in detail the type of oplax covariant lens structrures on $\gB$:

    \iblock{
      \mrow{
        \Sum{
          \Phi:
          \DelimProtect{
            \Prod{x,y:\vrt{\gA}}
            \Prod{p:x\Edge{\gA}y}
            \vrt{\gB\prn{x}}\to \vrt{\gB\prn{y}}
          }
        }
        \Prod{x:\vrt{\gA}}
        \Prod{u:\vrt{\gB\prn{x}}}
        \Phi\,x\,x\,\prn{\Rx{\gA}{x}}\,u \Edge{\gB\prn{x}} u
      }
    }

    By the distributivity of products over sums, the above is equivalent to the following:

    \iblock{
      \mrow{
        \Prod{x:\vrt{\gA}}
        \Sum{
          \Phi :
          \DelimProtect{
            \Prod{y:\vrt{\gA}}
            \Prod{p:x\Edge{\gA}y}
            \vrt{\gB\prn{x}}\to \vrt{\gB\prn{y}}
          }
        }
        \Prod{u:\vrt{\gB\prn{x}}}
        \Phi\,x\,\prn{\Rx{\gA}{x}}\,u \Edge{\gB\prn{x}} u
      }
    }

    By dependent function extensionality, it suffices to assume that for each $x:\vrt{\gA}$ the following type is a proposition:

    \iblock{
      \mrow{
        \Sum{
          \Phi :
          \DelimProtect{
            \Prod{y:\vrt{\gA}}
            \Prod{p:x\Edge{\gA}y}
            \vrt{\gB\prn{x}}\to \vrt{\gB\prn{y}}
          }
        }
        \Prod{u:\vrt{\gB\prn{x}}}
        \Phi\,x\,\prn{\Rx{\gA}{x}}\,u \Edge{\gB\prn{x}} u
      }
    }

    Re-associating, the above is equivalent to:
    \iblock{
      \mrow{
        \Sum{
          \Phi :
          \DelimProtect{
            \Prod{\prn{y,p}:\CoFan{\gA}{x}}
            \vrt{\gB\prn{x}}\to \vrt{\gB\prn{y}}
          }
        }
        \Prod{u:\vrt{\gB\prn{x}}}
        \Phi\,\prn{x,\Rx{\gA}{x}}\,u \Edge{\gB\prn{x}} u
      }
    }

      By our assumption that $\gA$ is univalent, we know that $\CoFan{\gA}{x}$ is contractible with centre $\prn{x,\Rx{\gA}{x}}$. Therefore, the above is equivalent to the following simplified type:
      \[
        \Sum{
          \Phi :
          \DelimProtect{
            \vrt{\gB\prn{x}}\to \vrt{\gB\prn{x}}
          }
        }
        \Prod{u:\vrt{\gB\prn{x}}}
        \Phi\,u \Edge{\gB\prn{x}} u
      \]

      The above is definitionally equal to the co-fan of the identity function in $\vrt{\gB\prn{x}} \pitchfork \gB\prn{x}$ as we can see below:
      \begin{align*}
         & \Sum{
          \Phi :
          \DelimProtect{
            \vrt{\gB\prn{x}}\to \vrt{\gB\prn{x}}
          }
        }
        \Prod{u:\vrt{\gB\prn{x}}}
        \Phi\,u \Edge{\gB\prn{x}} u
        \\
         & \quad\equiv
        \Sum{
          \Phi :
          \DelimProtect{
            \vrt{\gB\prn{x}}\to \vrt{\gB\prn{x}}
          }
        }
        \Phi \Edge{\vrt{\gB\prn{x}} \pitchfork \gB\prn{x}} \prn{\Lam{u}{u}}
        \\
         & \quad\equiv
        \CoFan{\gB\prn{x}}{
          \Lam{u}{u}
        }
      \end{align*}

      This co-fan is indeed a proposition, as we have assumed that each $\gB\prn{x}$ is univalent.
  \end{proof}

  \begin{lemma}\label[lemma]{lem:contravariant-lens-structure-is-prop}
    Let $\gA$ be a path object and let $\gB$ be a family of path objects over $\gA$. Assuming dependent function extensionality, the type of lax contravariant lens structures on $\gB$ over $\gA$ is a proposition.
  \end{lemma}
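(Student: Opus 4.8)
The plan is to avoid reproving the calculation from scratch and instead deduce the statement from its covariant counterpart, \cref{lem:covariant-lens-structure-is-prop}, by way of the duality involution developed in \cref{def:tot-op-lens} and \cref{obs:lens-duality}. By \cref{obs:lens-duality}, the total-opposite operation exhibits a (definitional) bijection between lax contravariant lens structures on the family $x\mapsto\gB\prn{x}$ over $\gA$ and oplax covariant lens structures on the pointwise-opposite family $x\mapsto\gB\prn{x}\Op$ over $\gA\Op$: concretely, the pullback datum $\Pull{\gB}{p}{u}$ corresponds to the pushforward datum $\Push{\gB\TotOp}{p}{u}$ and the lax unitor $\PullRx{\gB}[x]{u}$ to the oplax unitor $\PushRx{\gB\TotOp}[x]{u}$, with no further data to match. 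Since this correspondence is an equivalence of types, it suffices to show that the target type of oplax covariant lens structures is a proposition.

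First I would check that the hypotheses of \cref{lem:covariant-lens-structure-is-prop} are met by the opposite data. By \cref{cor:op-po}, the opposite $\gA\Op$ of the path object $\gA$ is again a path object, and likewise each $\gB\prn{x}\Op$ is a path object because each $\gB\prn{x}$ is. Hence \cref{lem:covariant-lens-structure-is-prop}, applied to the family $x\mapsto\gB\prn{x}\Op$ over $\gA\Op$ (still assuming dependent function extensionality), shows that the type of oplax covariant lens structures on this family is a proposition. Transporting along the duality equivalence, the original type of lax contravariant lens structures on $\gB$ over $\gA$ is a proposition as well.

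Alternatively, one could redo the computation directly, mirroring the proof of \cref{lem:covariant-lens-structure-is-prop} step for step. After spelling out the structure type from \cref{def:lax-ctrv-lens}, distributing the outer product over the sum, and reducing (by dependent function extensionality) to the claim that each fiber over a fixed $x:\vrt{\gA}$ is a proposition, one re-associates the quantification over $y$ and $p:x\Edge{\gA}y$ into a single quantification over the fan $\Fan{\gA}{x}$. Univalence of $\gA$ makes this fan contractible with centre $\prn{x,\Rx{\gA}{x}}$, collapsing the pullback datum to a single endofunction $\Psi:\vrt{\gB\prn{x}}\to\vrt{\gB\prn{x}}$ together with a family $\Prod{u:\vrt{\gB\prn{x}}}u\Edge{\gB\prn{x}}\Psi\,u$; this is exactly the fan $\Fan{\vrt{\gB\prn{x}}\pitchfork\gB\prn{x}}{\Lam{u}{u}}$ of the identity function in the cotensor $\vrt{\gB\prn{x}}\pitchfork\gB\prn{x}$, which is a proposition by univalence of $\gB\prn{x}$.

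I do not anticipate a genuine obstacle here, since the result is a dualization of an already-established lemma. The one point demanding care is orientation: where the covariant proof terminates at the \emph{co-fan} of the identity (because its oplax unitor points \emph{into} $u$), the contravariant proof terminates at the \emph{fan} of the identity (because the lax unitor $\PullRx{\gB}$ points \emph{out of} $u$). Confirming that this orientation is tracked correctly by the total-opposite correspondence, and that opposites of reflexive graphs preserve univalence via \cref{cor:op-po}, is the only thing to verify.
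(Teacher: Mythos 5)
Your proposal is correct. The paper's own proof is literally ``Analogous to \cref{lem:covariant-lens-structure-is-prop}'', i.e.\ the mirrored computation you sketch as your second route, and you have dualised it accurately: the quantification over $y$ and $p:x\Edge{\gA}y$ reassociates into a quantification over the fan $\Fan{\gA}{x}$, which contracts onto $\prn{x,\Rx{\gA}{x}}$ by univalence of $\gA$, and the residual type is the \emph{fan} (not the co-fan) $\Fan{\vrt{\gB\prn{x}}\pitchfork\gB\prn{x}}{\Lam{u}{u}}$ of the identity in the cotensor, which is a proposition because the cotensor inherits univalence from $\gB\prn{x}$ under function extensionality. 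Your primary route via the duality involution is a genuinely different and formally cleaner decomposition that the paper does not spell out for this lemma but explicitly licenses in its discussion surrounding \cref{obs:lens-duality}: the total-opposite correspondence of \cref{def:tot-op-lens} identifies lax contravariant lens structures on $\gB$ over $\gA$ with oplax covariant lens structures on $x\mapsto\gB\prn{x}\Op$ over $\gA\Op$ (up to a harmless reindexing of the arguments of the pullback datum), \cref{cor:op-po} supplies the univalence hypotheses for the opposite data, and propositionality transfers along the resulting equivalence. The duality route buys freedom from a duplicated calculation; the direct route buys independence from the duality machinery and an explicit description of the contracted structure type. The one delicate point is the fan/co-fan orientation induced by the direction of the lax unitor, which you flag and get right.
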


  \begin{proof}
    Analogous to \cref{lem:covariant-lens-structure-is-prop}.
  \end{proof}
  
  \begin{lemma}
    Let $\gA$ be a path object, and let $\gB\prn{p}$ be a path object for each edge $p : x \Edge{\gA}y$ in $\gA$. Assuming dependent function extensionality, the type of unbiased dependent lens structures on $\gB$ over $\gA$ is a proposition.
  \end{lemma}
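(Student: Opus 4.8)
The plan is to follow the template of \cref{lem:covariant-lens-structure-is-prop}, but I must first dissolve the variance asymmetry peculiar to the unbiased case: the left extension $\LJ{\gB}{\bullet}$ takes an input over the \emph{source} vertex, whereas the right extension $\RJ{\gB}{\bullet}$ takes an input over the \emph{target} vertex, so no single product over $\vrt{\gA}$ can be factored out immediately. The key leverage is that, as $\gA$ is univalent, every fan $\Fan{\gA}{x}$ and every co-fan $\CoFan{\gA}{x}$ is contractible with centre the reflexivity edge (\cref{def:univalent-reflexive-graph}, conditions (3) and (4)). Throughout, write $\gB_x$ for the path object $\gB\prn{\Rx{\gA}{x}}$, which is univalent by hypothesis.

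First I would collapse both extensions onto their diagonal values. Fixing a source $x:\vrt{\gA}$ and re-associating, the datum $\LJ{\gB}{\bullet}$ restricts to a function in $\Prod{\prn{y,p}:\Fan{\gA}{x}}\prn{\vrt{\gB_x}\to\vrt{\gB\prn{p}}}$; since $\Fan{\gA}{x}$ is contractible, dependent function extensionality makes this equivalent to its value at the centre, namely an endofunction $L_x : \vrt{\gB_x}\to\vrt{\gB_x}$. Dually, fixing a target $y$ and using contractibility of $\CoFan{\gA}{y}$, the datum $\RJ{\gB}{\bullet}$ is equivalent to an endofunction $R_y$ on $\vrt{\gB_y}$. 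Crucially, the two coherences $\MidJRx{\gB}[\bullet]$ and $\RJRx{\gB}[\bullet]$ mention $\LJ{\gB}{\bullet}$ and $\RJ{\gB}{\bullet}$ \emph{only} at the reflexivity edges, i.e.\ only through $L_x$ and $R_x$; hence these two equivalences carry through the dependent sum (the later $\Sigma$-summands factor through the diagonal values), and the whole structure type becomes equivalent to
\[
  \Sum{L}\Sum{R}
  \Bigl(\Prod{x:\vrt{\gA}}\Prod{u} L_x\,u \Edge{\gB_x} R_x\,u\Bigr)
  \times
  \Bigl(\Prod{x:\vrt{\gA}}\Prod{u} u \Edge{\gB_x} R_x\,u\Bigr),
\]
where $L,R$ range over $\Prod{x:\vrt{\gA}}\prn{\vrt{\gB_x}\to\vrt{\gB_x}}$.

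Now that everything is indexed uniformly by a single $x:\vrt{\gA}$, I would apply distributivity of products over sums to pull $\Prod{x:\vrt{\gA}}$ to the front, and by dependent function extensionality reduce to showing that for each fixed $x$ the fibre is a proposition. Writing $L,R$ for the now-unbound endofunctions $L_x,R_x$, that fibre is
\[
  \Sum{L}\Sum{R}
  \Bigl(\Prod{u:\vrt{\gB_x}} L\,u \Edge{\gB_x} R\,u\Bigr)
  \times
  \Bigl(\Prod{u:\vrt{\gB_x}} u \Edge{\gB_x} R\,u\Bigr).
\]
Here I would first contract the pair consisting of $R$ together with the lax unitor: commuting the sum past the product (the type-theoretic axiom of choice) yields $\Sum{R}\Prod{u}\prn{u \Edge{\gB_x} R\,u}\cong \Prod{u}\Fan{\gB_x}{u}$, which is contractible since $\gB_x$ is univalent, with centre $R=\mathrm{id}$. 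Contracting this summand forces the middle coherence to become $\Prod{u} L\,u \Edge{\gB_x} u$, so the fibre reduces to $\Sum{L}\Prod{u}\prn{L\,u \Edge{\gB_x} u}\cong \Prod{u}\CoFan{\gB_x}{u}$, again contractible by univalence of $\gB_x$. Thus each fibre is contractible, the product over $x$ is contractible, and in particular the type of unbiased dependent lens structures is a proposition.

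The main obstacle is exactly the bookkeeping of the second paragraph: unlike the biased cases, the left and right data sit over opposite ends of an edge, so one cannot directly imitate \cref{lem:covariant-lens-structure-is-prop}. The essential move is to spend univalence of $\gA$ up front to collapse both extensions onto their reflexivity values, after which the mixed variance evaporates and the remaining argument is a symmetric pair of fan/co-fan contractions in the fibre path objects $\gB_x$, dual to \cref{lem:covariant-lens-structure-is-prop,lem:contravariant-lens-structure-is-prop}. It is worth noting that this computation is precisely what breaks if one were to retain \emph{both} an oplax and a lax unitor (cf.\ the preceding remark): the second $\Fan{\gB_x}{u}$-contraction would then have to compete with a $\CoFan{\gB_x}{u}$-contraction over the same $R$, destroying propositionality.
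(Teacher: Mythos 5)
Your proof is correct and follows essentially the same route as the paper's: use univalence of $\gA$ to contract the relevant fans and collapse both extensions onto endofunctions of $\vrt{\gB\prn{\Rx{\gA}{x}}}$, then contract the right extension together with the lax unitor, and finally the left extension together with the middle coherence. The only cosmetic differences are that the paper factors out a single product over the source $x$ and re-associates \emph{both} extensions as products over $\Fan{\gA}{x}$ --- so the variance asymmetry you circumvent via $\CoFan{\gA}{y}$ is not actually an obstacle to imitating \cref{lem:covariant-lens-structure-is-prop} --- and that your pointwise choice-plus-funext contractions are packaged in the paper as the fan and co-fan of $\Lam{u}{u}$ in the cotensor $\vrt{\gB\prn{\Rx{\gA}{x}}}\pitchfork\gB\prn{\Rx{\gA}{x}}$.
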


  \begin{proof}
    The type of unbiased dependent lens structures on $\gB$ over $\gA$ can be written as follows:

    \iblock{
      \mrow{
        \Sum{
          \Phi:
          \DelimProtect{
            \Prod{x,y:\vrt{\gA}}\Prod{p:x\Edge{\gA}y} \vrt{\gB\prn{x,x,\Rx{\gA}{x}}}\to \vrt{\gB}\prn{x,y,p}
          }
        }
      }
      \mrow{
        \Sum{
          \Psi:
          \DelimProtect{
            \Prod{x,y:\vrt{\gA}}\Prod{p:x\Edge{\gA}y} \vrt{\gB\prn{y,y,\Rx{\gA}{y}}}\to \vrt{\gB}\prn{x,y,p}
          }
        }
      }
      \mrow{
        \prn{
          \Prod{x:\vrt{\gA}}
          \Prod{u:\vrt{\gB\prn{x,x,\Rx{\gA}{x}}}}
          \Phi\,x\,x\,\prn{\Rx{\gA}{x}}\,u \Edge{\gB\prn{x,x,\Rx{\gA}{x}}} \Psi\,x\,x\,\prn{\Rx{\gA}{x}}\,u
        }
      }
      \mrow{
        {}\times
        \prn{
          \Prod{x:\vrt{\gA}}
          \Prod{u:\vrt{\gB\prn{x,x,\Rx{\gA}{x}}}}
          u \Edge{\gB\prn{x,x,\Rx{\gA}{x}}} \Psi\,x\,x\,\prn{\Rx{\gA}{x}}\,u
        }
      }
    }

    Considering the distributivity of products over sums, it suffices by dependent function extensionality to prove that for each $x:\vrt{\gA}$, the following type is a proposition:

    \iblock{
      \mrow{
        \Sum{
          \Phi:
          \DelimProtect{
            \Prod{y:\vrt{\gA}}\Prod{p:x\Edge{\gA}y} \vrt{\gB\prn{x,x,\Rx{\gA}{x}}}\to \vrt{\gB}\prn{x,y,p}
          }
        }
      }
      \mrow{
        \Sum{
          \Psi:
          \DelimProtect{
            \Prod{y:\vrt{\gA}}\Prod{p:x\Edge{\gA}y} \vrt{\gB\prn{y,y,\Rx{\gA}{y}}}\to \vrt{\gB}\prn{x,y,p}
          }
        }
      }
      \mrow{
        \prn{
          \Prod{u:\vrt{\gB\prn{x,x,\Rx{\gA}{x}}}}
          \Phi\,x\,x\,\prn{\Rx{\gA}{x}}\,u \Edge{\gB\prn{x,x,\Rx{\gA}{x}}} \Psi\,x\,x\,\prn{\Rx{\gA}{x}}\,u
        }
      }
      \mrow{
        {}\times
        \prn{
          \Prod{u:\vrt{\gB\prn{x,x,\Rx{\gA}{x}}}}
          u \Edge{\gB\prn{x,x,\Rx{\gA}{x}}} \Psi\,x\,x\,\prn{\Rx{\gA}{x}}\,u
        }
      }
    }

    We re-associate to expose various fans in $\gA$:

    \iblock{
      \mrow{
        \Sum{
          \Phi:
          \DelimProtect{
            \Prod{\prn{y,p}:\Fan{\gA}{x}}
            \vrt{\gB\prn{x,x,\Rx{\gA}{x}}}\to \vrt{\gB}\prn{x,y,p}
          }
        }
      }
      \mrow{
        \Sum{
          \Psi:
          \DelimProtect{
            \Prod{\prn{y,p}:\Fan{\gA}{x}}
            \vrt{\gB\prn{y,y,\Rx{\gA}{y}}}\to \vrt{\gB}\prn{x,y,p}
          }
        }
      }
      \mrow{
        \prn{
          \Prod{u:\vrt{\gB\prn{x,x,\Rx{\gA}{x}}}}
          \Phi\,x\,x\,\prn{\Rx{\gA}{x}}\,u \Edge{\gB\prn{x,x,\Rx{\gA}{x}}} \Psi\,x\,x\,\prn{\Rx{\gA}{x}}\,u
        }
      }
      \mrow{
        {}\times
        \prn{
          \Prod{u:\vrt{\gB\prn{x,x,\Rx{\gA}{x}}}}
          u \Edge{\gB\prn{x,x,\Rx{\gA}{x}}} \Psi\,x\,x\,\prn{\Rx{\gA}{x}}\,u
        }
      }
    }

    As $\gA$ is univalent, the fan $\Fan{\gA}{x}$ contracts onto $\prn{x,\Rx{\gA}{x}}$ as reflected below:

    \iblock{
      \mrow{
        \Sum{
          \Phi:
          \DelimProtect{
            \vrt{\gB\prn{x,x,\Rx{\gA}{x}}}\to \vrt{\gB\prn{x,x,\Rx{\gA}{x}}}
          }
        }
      }
      \mrow{
        \Sum{
          \Psi:
          \DelimProtect{
            \vrt{\gB\prn{x,x,\Rx{\gA}{x}}}\to \vrt{\gB\prn{x,x,\Rx{\gA}{x}}}
          }
        }
      }
      \mrow{
        \prn{
          \Prod{u:\vrt{\gB\prn{x,x,\Rx{\gA}{x}}}}
          \Phi\,u \Edge{\gB\prn{x,x,\Rx{\gA}{x}}} \Psi\,u
        }
      }
      \mrow{
        {}\times
        \prn{
          \Prod{u:\vrt{\gB\prn{x,x,\Rx{\gA}{x}}}}
          u \Edge{\gB\prn{x,x,\Rx{\gA}{x}}} \Psi\,u
        }
      }
    }

    The above is definitionally equivalent to the following:

    \iblock{
      \mrow{
        \Sum{
          \Phi:
          \DelimProtect{
            \vrt{\gB\prn{x,x,\Rx{\gA}{x}}}\to \vrt{\gB\prn{x,x,\Rx{\gA}{x}}}
          }
        }
      }
      \mrow{
        \Sum{
          \Psi:
          \DelimProtect{
            \vrt{\gB\prn{x,x,\Rx{\gA}{x}}}\to \vrt{\gB\prn{x,x,\Rx{\gA}{x}}}
          }
        }
      }
      \mrow{
        {}\times
        \prn{
          \Phi \Edge{\vrt{\gB\prn{x,x,\Rx{\gA}{x}}}\pitchfork \gB\prn{x,x,\Rx{\gA}{x}}} \Psi
        }
      }
      \mrow{
        {}\times 
        \prn{
          \prn{\Lam{u}{u}}\Edge{\vrt{\gB\prn{x,x,\Rx{\gA}{x}}}\pitchfork \gB\prn{x,x,\Rx{\gA}{x}}} \Psi
        }
      }

    }

    We can re-associate the above as follows:

    \iblock{
      \mrow{
        \Sum{
          \Phi:
          \DelimProtect{
            \vrt{\gB\prn{x,x,\Rx{\gA}{x}}}\to \vrt{\gB\prn{x,x,\Rx{\gA}{x}}}
          }
        }
      }
      \mrow{
        \Sum{
          \prn{\Psi,\hat\Psi}:
          \Fan{
            \vrt{\gB\prn{x,x,\Rx{\gA}{x}}}\pitchfork {\gB}\prn{x,x,\Rx{\gA}{x}}
          }{\Lam{u}u}
        }
      }
      \mrow{
        \Phi \Edge{\vrt{\gB\prn{x,x,\Rx{\gA}{x}}}\pitchfork \gB\prn{x,x,\Rx{\gA}{x}}} \Psi
      }
    }

    As $\gB\prn{x,x,\Rx{\gA}{a}}$ is assumed to be univalent, so is its cotensor by $\vrt{\gB\prn{x,x,\Rx{\gA}{x}}}$; thus, the fan of $\Lam{u}{u}$ in the cotensor contracts to $\prn{\Lam{u}{u}, \Lam{u}{\Rx{\gB\prn{x,x,\Rx{\gA}}}{u}}}$:

    \iblock{
      \mrow{
        \Sum{
          \Phi:
          \DelimProtect{
            \vrt{\gB\prn{x,x,\Rx{\gA}{x}}}\to \vrt{\gB\prn{x,x,\Rx{\gA}{x}}}
          }
        }
      }
      \mrow{
        \Phi \Edge{\vrt{\gB\prn{x,x,\Rx{\gA}{x}}}\pitchfork \gB\prn{x,x,\Rx{\gA}{x}}} \Lam{u}{u}
      }
    }

    The above is just the co-fan $\CoFan{\vrt{\gB\prn{x,x,\Rx{\gA}{x}}}\pitchfork {\gB}\prn{x,x,\Rx{\gA}{x}}n}{\Lam{u}u}$, which is a proposition because $\gB\prn{x,x,\Rx{\gA}{x}}$ is assumed univalent.
  \end{proof}
\end{xsect}
 \end{xsect}
\begin{xsect}[sec:classifying-rx-gphs]{Case study: path objects classifying reflexive graphs}
  In this section, we assume dependent function extensionality.
  Let $\prn{U,E}$ be a univalent family in the sense of \cref{def:univalent-family} so that the reflexive graph image $U/E$ is univalent. Given $A:U$, we will write $A:\TYPE$ in place of $E\prn{A}$ when it causes no confusion.

  \begin{xsect}{Classifying reflexive graphs}
    \begin{construction}[A univalent lens for graph structures]
      A $U$-small graph structure on $A:U$ is given by a vertex of the path structure $\GphOn{A} :\equiv A\pitchfork A\pitchfork U/E$. We shall equip $\GphOn$ as a \emph{lax contravariant lens of path objects} over $U/E$ as follows:

      \iblock{
        \mhang{
          A_0,A_1:U;
          f : \mathsf{Equiv}\prn{A_0,A_1},
          E : A_1\to A_1\to U
          \vdash
        }{
          \mrow{
            \Push{\GphOn}{f}{E} : A_0\to A_0\to U
          }
          \mrow{
            \Push{\GphOn}{f}{E}\,x\,y :\equiv E \prn{fx}\,\prn{fy}
          }
        }
        \row
        \mhang{
          A:U, E:A\to A\to U\vdash
        }{
          \mrow{
            \PushRx{\GphOn}[A]{E} : \Push{\GphOn}{\mathsf{idnEquiv}_A}{E} \Edge{\GphOn{A}} E
          }
          \mrow{
            \PushRx{\GphOn}[A]{E} :\equiv \Rx{\GphOn{A}}{E}
          }
        }
      }
    \end{construction}

    \begin{computation}[A path object classifying graphs]
      Thus we obtain a path object classifying $U$-small graphs by taking the total path object of the display of the lax contravariant lens $\GphOn$ over $U/E$, defining $\Gph :\equiv \prn{U/E}.\CtrvDisp{U/E}{\GphOn}$.
      Unraveling definitions, we see that its underlying reflexive graph structure is precisely the natural one that would arise from invertible graph homomorphisms:
      \begin{align*}
        \vrt{\Gph}
         & \equiv
        \Sum{A:U}{A\to A\to U}
        \\
        \gG_0 \Edge{\Gph} \gG_1
         & \equiv
        \Sum{
          f:\mathsf{Equiv}\prn{\vrt{\gG_0},\vrt{\gG_1}}
        }
        \Prod{x,y:\vrt{\gG_0}}
        \mathsf{Equiv}\prn{x\Edge{\gG_0}y,fx\Edge{\gG_1}fy}
        \\
        \Rx{\Gph}{\gG}
         & \equiv
        \prn{\mathsf{idnEquiv}_{\vrt{\gG}},\Lam{x,y}\mathsf{idnEquiv}_{x\Edge{\gG}y}}
      \end{align*}

      Given $f : \gG_0\Edge{\Gph}\gG_1$, we will write $\vrt{f}:\mathsf{Equiv}\prn{\vrt{\gG_0},\vrt{\gG_1}}$ for the first component and $f^\approx : \Prod{x,y:\vrt{\gG_0}}\mathsf{Equiv}\prn{x\Edge{\gG_0}y,\vrt{f}x\Edge{\gG_1}\vrt{f}y}$  for the second component.
    \end{computation}

    In order to classify reflexive graphs, we naturally wish to define a further lens over $\Gph$ whose components classify reflexivity data on a given graph. Unfortunately, due to the mixed variance of reflexivity data in the underlying graph, it seems we can find neither a suitable lax contravariant nor oplax covariant lens for reflexivity data. It happens, however, that the desirable displayed path object over $\Gph$ arises via \cref{lem:unb-disp-po} from a particularly simple \emph{unbiased dependent lens} that we now define straightaway.

    \begin{construction}[A univalent unbiased dependent lens for reflexivity data]
      A reflexivity datum over an \emph{equivalence} of $U$-small graphs $f : \gA \Edge{\Gph} \gB$ is given by an assignment of self-edges in the image of $f$; these are the vertices of the following family of path objects:

      \iblock{
        \mhang{
          \gA,\gB : \vrt{\Gph}; f : \gA\Edge{\Gph} \gB \vdash
        }{
          \mrow{
            \RxOn{f} :\equiv \Prod{x:\vrt{\gA}} \DiscPO\prn{\vrt{f}x\Edge{\gB}\vrt{f}x}
          }
        }
      }

      When instantiated to the reflexivity datum in $\Gph$ on a specific graph $\gA$, we recover the usual notion of reflexivity data up to definitional equality:
      \[
        \RxOn{\Rx{\Gph}{\gA}} \equiv \Prod{x:\vrt{\gA}} \DiscPO\prn{x\Edge{\gA}x}
      \]

      We now equip $\RxOn$ with the structure of a unbiased dependent lens over $\Gph$:

      \iblock{
        \mhang{
          \gA,\gB:\vrt{\Gph};
          f : \gA \Edge{\Gph} \gB;
          \Rx{\gA} : \vrt{\RxOn{\Rx{\Gph}{\gA}}}
          \vdash
        }{
          \mrow{
            \LJ{\RxOn}{f}{\Rx{\gA}} : \vrt{\RxOn{f}}
          }
          \mrow{
            \LJ{\RxOn}{f}{\Rx{\gA}} :\equiv \Lam{x:\vrt{\gA}} f^\approx\,x\,x\,\prn{\Rx{\gA}{x}}
          }
        }

        \row

        \mhang{
          \gA,\gB:\vrt{\Gph};
          f : \gA \Edge{\Gph} \gB;
          \Rx{\gB} : \vrt{\RxOn{\Rx{\Gph}{\gB}}}
          \vdash
        }{
          \mrow{
            \RJ{\RxOn}{f}{\Rx{\gB}} : \vrt{\RxOn{f}}
          }
          \mrow{
            \RJ{\RxOn}{f}{\Rx{\gB}} :\equiv \Lam{x:\vrt{\gA}} \Rx{\gB}\prn{\vrt{f}x}
          }
        }
      }

      The laws of the dependent lens hold definitionally, so we are done!
    \end{construction}

    \begin{computation}[A path object classifying reflexive graphs]
      A path object classifying $U$-small reflexive graphs is obtained from the display of the unbiased dependent lens $\RxOn$ over $\Gph$ by setting $\RxGph :\equiv \Gph.\UnbDisp{\Gph}{\RxOn}$. It is not difficult to see that the underlying reflexive graph of $\RxGph$ is precisely the desired one:
      \begin{align*}
        \vrt{\RxGph}
         & \equiv
        \Sum{\gG:\vrt{\Gph}}
        \Prod{x:\vrt{\gG}} x \Edge{\gG} x
        \\
        \gA_0 \Edge{\RxGph} \gA_1
         & \equiv
        \Sum{f : \gA_0\Edge{\Gph}\gA_1}
        \Prod{x:\vrt{\gA_0}}
        \Id{\vrt{f}x\Edge{\gA_1}\vrt{f}x}{
          f^\approx\,x\,x\,\prn{\Rx{\gA_0}{x}}
        }{
          \Rx{\gA_1}\prn{\vrt{f}x}
        }
        \\
        \Rx{\RxGph}{\gA}
         & \equiv
        \prn{\Rx{\Gph}{\gA}, \Lam{x:\vrt{\gA}}\Refl}
      \end{align*}

      Given $f : \gA_0 \Edge{\RxGph}\gA_1$,  we will write $f : \gA_0\Edge{\Gph}\gA_1$ for the first component and $f^\mathsf{rx} : \Prod{x:\vrt{\gA_0}}
        \Id{\vrt{f}x\Edge{\gA_1}\vrt{f}x}{
          f^\approx\,x\,x\,\prn{\Rx{\gA_0}{x}}
        }{
          \Rx{\gA_1}\prn{\vrt{f}x}
        }$ for the second component.
    \end{computation}
  \end{xsect}

  \begin{xsect}[sec:drxgph-sip]{Classifying displayed reflexive graphs over a fixed base}
    In this section, let $\gA$ be a fixed reflexive graph.

    \begin{construction}[A lens for displayed graph structures over a fixed family of displayed vertices]
      We shall define a lax contravariant lens over $\vrt{\gA}\pitchfork U/E$ classifying displayed graph structures on a fixed family of types representing displayed vertices.

      \iblock{
        \mrow{
          B:\vrt{\gA}\to U\vdash
          \DGphOn{\gA}{B} :\equiv
          \Prod{x,y:\vrt{\gA}}
          \prn{x\Edge{\gA}y}
          \pitchfork
          Bx\pitchfork By \pitchfork U/E
        }
      }

      We equip the family of path objects above with the structure of a definitional contravariant lens over $\vrt{\gA}\pitchfork U/E$ as follows:

      \iblock{
        \mhang{
          B_0,B_1:\vrt{\gA}\to U; f : B_0\Edge{\vrt{\gA}\pitchfork U/E} B_1; E : \vrt{\DGphOn{\gA}{B_1}} \vdash
        }{
          \mrow{
            \Pull{\DGphOn{\gA}}{f}{E} : \vrt{\DGphOn{\gA}{B_0}}
          }
          \mrow{
            \Pull{\DGphOn{\gA}}{f}{E} :\equiv \Lam{x,y,p,u,v} E\,x\,y\,p\,\prn{f xu}\,\prn{fyv}
          }
        }
      }
    \end{construction}

    \begin{computation}[A path object for displayed graph structures over $\gA$]
      From the display of $\DGphOn{\gA}$ over $\vrt{\gA}\pitchfork U/E$, we obtain a suitable path object $\DGph{\gA} :\equiv \prn{\vrt{\gA}\pitchfork U/E}.\CtrvDisp{\vrt{\gA}\pitchfork U/E}{\DGphOn{\gA}}$ classifying displayed graph structures over $\gA$.

      \iblock{
      \mrow{
        \vrt{\DGph{\gA}}
        \equiv
        \Sum{B:\vrt{\gA}\to U}
        \Prod{x,y:\vrt{\gA}}
        \prn{x\Edge{\gA}y} \pitchfork Bx\pitchfork By \pitchfork U/E
      }
      \mhang{
        \gB_0 \Edge{\DGph{\gA}} \gB_1 \equiv
      }{
        \mrow{
          \Sum{
            \DelimProtect{
              f : \Prod{x:A}\mathsf{Equiv}\prn{\vrt{\gB_0}x,\vrt{\gB_1}x}
            }
          }
        }
        \mrow{
          \Prod{x,y:\vrt{\gA}}
          \Prod{p:x\Edge{\gA}y}
          \Prod{u:\vrt{\gB_0}x}
          \Prod{v:\vrt{\gB_0}y}
        }
        \mrow{
          \mathsf{Equiv}\prn{
            u \Edge{\gB_0}[p] v,
            fxu \Edge{\gB_1}[p] fxv
          }
        }
      }
      \mrow{
      \Rx{\DGph{\gA}}{\gB}
      \equiv
      \prn{
      \Lam{x}\mathsf{idnEquiv}_{\vrt{\gB}x},
      \Lam{x,y,p,u,v}\mathsf{idnEquiv}_{u\Edge{\gB}[p]v}
      }
      }
      }
    \end{computation}

    \begin{construction}[A univalent unbiased dependent lens for displayed reflexivity data]
      Next, we exhibit a unbiased dependent lens classifying displayed reflexivity data on a given displayed graph over $\gA$. In particular, we specify a family of path objects $\DRxOn{\gA}$ over the edges of $\DGph{\gA}$ in the following way:

      \iblock{
        \mhang{
          \gB_0,\gB_1:\vrt{\DGph{\gA}};
          f:\gB_0\Edge{\DGph{\gA}}\gB_1
          \vdash
        }{
          \mrow{
            \DRxOn{\gA}{f}
            :\equiv
            \Prod{x:\vrt{\gA}}
            \Prod{y:\vrt{\gB_0}x}
            \DiscPO\prn{\vrt{f}xy\Edge{\gB_1}[\Rx{\gA}{x}]\vrt{f}xy}
          }
        }
      }

      Along the diagonal we recover the desired notion of displayed reflexivity data:
      \[
        \DRxOn{\gA}{\Rx{\DGph{\gA}}{\gB}} \equiv
        \Prod{x:\vrt{\gA}}
        \Prod{y:\vrt{\gB}x}
        \DiscPO\prn{y \Edge{\gB}[\Rx{\gA}{x}] y}
      \]

      The dependent lens structure is given in such a way that the laws hold definitionally:

      \iblock{
        \mhang{
          \gB_0,\gB_1:\vrt{\DGph{\gA}}; f : \gB_0\Edge{\DGph{\gA}}\gB_1;
          \Rx{\gB_0} : \vrt{\DRxOn{\gA}{\Rx{\DGph{\gA}}{\gB_0}}}
          \vdash
        }{
          \mrow{
            \LJ{\DRxOn{\gA}}{f}{\Rx{\gB_0}} : \vrt{\DRxOn{\gA}{f}}
          }
          \mrow{
            \LJ{\DRxOn{\gA}}{f}\Rx{\gB_0} :\equiv
            \Lam{x,y} f^\approx \,x\,x\,\prn{\DRx{\gA}{x}}\,y\,y\,\prn{\Rx{\gB_0}{x}{y}}
          }
        }
        \row
        \mhang{
          \gB_0,\gB_1:\vrt{\DGph{\gA}}; f : \gB_0\Edge{\DGph{\gA}}\gB_1;
          \Rx{\gB_1} : \vrt{\DRxOn{\gA}{\Rx{\DGph{\gA}}{\gB_1}}}
          \vdash
        }{
          \mrow{
            \RJ{\DRxOn{\gA}}{f}{\Rx{\gB_1}} : \vrt{\DRxOn{\gA}{f}}
          }
          \mrow{
            \RJ{\DRxOn{\gA}}{f}\Rx{\gB_1} :\equiv
            \Lam{x,y}\DRx{\gB_1}{x}\prn{\vrt{f}xy}
          }
        }
      }
    \end{construction}

    \begin{computation}
      From the display of $\DRxOn{\gA}$ over $\DGph{\gA}$, we obtain a path object classifying displayed reflexive graphs over $\gA$, defining $\DRxGphOver{\gA}$ as follows:
      \[
        \DRxGphOver{\gA} :\equiv
        \DGph{\gA}.\UnbDisp{\DGph{\gA}}{\DRxOn{\gA}}
      \]

      Unraveling definitions, we see that the underlying reflexive graph $\DRxGphOver{\gA}$ is precisely the desirable one classifying displayed reflexive graphs over $\gA$:

      \iblock{
        \mrow{
          \vrt{\DRxGphOver{\gA}} \equiv
          \Sum{\gB:\DGph{\gA}}
          \Prod{x:\vrt{\gA}}\Prod{u:Bx} u \Edge{\gB}[\Rx{\gA}{x}] u
        }
        \mhang{
          \gB_0 \Edge{\DRxGphOver{\gA}} \gB_1 \equiv
        }{
          \mrow{
            \Sum{f:\gB_0\Edge{\DGph{\gA}}\gB_1}
            \Prod{x:\vrt{\gA}}
            \Prod{u:\vrt{\gB_0}}
          }
          \mrow{
            \Id{
              \vrt{f}xu \Edge{\gB_1}[\Rx{\gA}{x}] \vrt{f}xu
            }{
              f^\approx\,x\,x\,\prn{\DRx{\gA}{x}}\,y\,y\,\prn{\DRx{\gB_0}{x}{u}}
            }{
              \DRx{\gB_1}{x}\prn{\vrt{f}xu}
            }
          }
        }
        \mrow{
          \Rx{\DRxGphOver{\gA}}{\gB} \equiv
          \prn{
            \prn{
              \Lam{x}\mathsf{idnEquiv}\Sub{\vrt{\gB}x},
              \Lam{x,y,p,u,v}\mathsf{idnEquiv}\Sub{u\Edge{\gB}[p]v}
            },
            \Lam{x,u}\Refl
          }
        }
      }
    \end{computation}

  \end{xsect}

  \begin{xsect}{Classifying displayed reflexive graphs with variable base}

    \begin{construction}[A lax contravariant lens for $U$-small displayed reflexive graphs over a given reflexive graph]
      We can extend $\DRxGphOver$ with the structure of a definitional contravariant lens over $\RxGph$. We first describe how to restrict a displayed reflexive graph $\gB$ along a reflexive graph equivalence $f : \gA_0\Edge{\RxGph}\gA_1$; on vertices and edges, one simply uses the restriction of the corresponding components of $\gB$ along $f$:

      \iblock{
        \mhang{
          \gA_0,\gA_1:\vrt{\RxGph};
          f : \gA_0 \Edge{\RxGph} \gA_1,
          \gB : \vrt{\DRxGphOver{\gA_1}}
          \vdash
        }{
          \mrow{
            \Pull{\DRxGphOver}{f}{\gB} : \vrt{\DRxGphOver{\gA_0}}
          }
          \mrow{
            \vrt{\Pull{\DRxGphOver}{f}{\gB}}x :\equiv
            \vrt{\gB}\prn{fx}
          }
          \mrow{
            u \Edge{\Pull{\DRxGphOver}{f}{\gB}}[p : x\Edge{\gA}y] v
            :\equiv
            {u} \Edge{\gB}[f^\approx xyp] {v}
          }
        }
      }

      To define the restricted reflexivity datum, we must transport along the witness that $f$ preserves reflexivity data.

      \iblock[1]{
        \mrow{
          \DRx{\Pull{\DRxGphOver}{f}{\gB}}{x}{u} :
          u \Edge{\gB}[f^\approx\,x\,x\,\prn{\Rx{\gA}{x}}] u
        }
        \mrow{
          \DRx{\Pull{\DRxGphOver}{f}{\gB}}{x}{u} :\equiv
          \prn{
            f^{\mathsf{rx}}x
          }_*\Sup{u \Edge{\gB}[\bullet] u}\prn{
            \DelimMin{1}
            \DRx{\gB}{fx}{u}
          }
        }
      }

      The lens is definitional because $\gB \equiv \Pull{\DRxGphOver}{\Rx{\RxGph}{\gA}}\gB$ definitionally.
    \end{construction}

    \begin{computation}[A path object classifying $U$-small displayed reflexive graphs over a variable base]
      By computing the display of $\DRxGphOver$ over $\RxGph$, we obtain a path object classifying reflexive graphs $\gA$ equipped with a displayed reflexive graph $\gB$ over $\gA$.
      \[
        \DRxGph :\equiv \RxGph.\CtrvDisp{\RxGph}{\DRxGphOver}
      \]

      We compute as follows.

      \iblock{
        \mrow{
          \vrt{\DRxGph}
          \equiv
          \Sum{\gA:\vrt{\RxGph}}{
            \vrt{\DRxGphOver{\gA}}
          }
        }
        \mhang{
          \prn{\gA_0,\gB_0} \Edge{\DRxGph}
          \prn{\gA_1,\gB_1} \equiv
        }{
          \mrow{
            \Sum{f : \gA_0 \Edge{\RxGph} \gA_1}
          }
          \mrow{
            \Sum{
              \vrt{g} :
              \DelimProtect{
                \Prod{x:\vrt{\gA_0}}
                \mathsf{Equiv}\prn{
                  \vrt{\gB_0}x,
                  \vrt{\gB_1}\prn{fx}
                }
              }
            }
          }
          \mrow{
            \Sum{
              g^\approx :
              \DelimProtect{
                \Prod{x,y:\vrt{\gA_0}}
                \Prod{p:x\Edge{\gA_0}y}
                \Prod{u:\vrt{\gB_0}x}
                \Prod{v:\vrt{\gB_1}y}
                \mathsf{Equiv}\prn{
                  u \Edge{\gB_0}[p]v,
                  u \Edge{\gB_1}[f^\approx xyp]v
                }
              }
            }
          }
          \mrow{
            \Prod{x:\vrt{\gA_0}}
            \Prod{u:\vrt{\gB_0}}
            \Id{
              u\Edge{\gB_0}[\Rx{\gA}{x}] u
            }{
              \DRx{\gB_0}{x}{u}
            }{
              \prn{
                f^{\mathsf{rx}}x
              }_*\Sup{u \Edge{\gB}[\bullet] u}\prn{
                \DelimMin{1}
                \DRx{\gB_1}{fx}{u}
              }
            }
          }
        }
        \mrow{
          \Rx{\DRxGph}\prn{\gA,\gB} \equiv
          \prn{
            \Rx{\RxGph}{\gA},
            \Rx{\DRxGphOver{\gA}}{\gB}
          }
        }
      }
    \end{computation}
  \end{xsect}

  \begin{xsect}{Classifying families and lenses of reflexive graphs}

    \begin{construction}[A definitional covariant lens for vertices]
      We can define a definitional covariant lens $\Vtx$ over the path object $\RxGph$ whose component over a $U$-small reflexive graph $\gA$ is $\gA$ itself. Setting $\Vtx\prn{\gA} :\equiv \gA$, we define the definitional lens structure as follows:

      \iblock{
        \mhang{
          \gA_0,\gA_1:\vrt{\RxGph};
          f:\gA_0\Edge{\RxGph}\gA_1;
          x : \vrt{\gA}
        }{
          \mrow{
            \vdash
            \Push{\Vtx}{f}{x} :\equiv \vrt{f} x
          }
        }
      }
    \end{construction}

    \NewDocumentCommand\RxGphFam{}{\mathsf{RxGphFam}_U}

    \begin{computation}[A path object classifying $U$-small families of $U$-small reflexive graphs]
      Using our theory of partial products of lenses (\cref{sec:polynomials}), we can define a path object that classifies pairs $\prn{\gA, \gB}$ where $\gA$ is a $U$-small reflexive graph and $\gB$ is a family of $U$-small reflexive graphs indexed in the vertices of $\gA$.
      \[
        \RxGphFam :\equiv
        \CovPoly{\RxGph}{\Vtx}{\RxGph}
      \]

      We have the following computation:

      \iblock{
        \mrow{
          \vrt{\RxGphFam} \equiv
          \Sum{\gA:\vrt{\RxGph}}
          \Prod{x:\vrt{\gA}}
          \vrt{\RxGph}
        }
        \mhang{
          \prn{\gA_0,\gB_0} \Edge{\RxGphFam} \prn{\gA_1,\gB_1} \equiv
        }{
          \mrow{
            \Sum{f : \gA_0\Edge{\RxGph}\gA_1}\Prod{x:\vrt{\gA_0}}
            \gB_0x \Edge{\RxGph} \gB_1\prn{\vrt{f}x}
          }
        }
        \mrow{
          \Rx{\RxGphFam}\prn{\gA,\gB} \equiv
          \prn{\Rx{\RxGph}{\gA},\Lam{x}\Rx{\RxGph}{\gB\prn{x}}}
        }
      }
    \end{computation}

    It would be desirable to find a good path object classifying lens structures on a given family of reflexive graphs. Unfortunately, there are some limitations on our ability to do this: although the pushforward/pullback data are easy to classify, it is unclear how to define a notion of path between oplax/lax unitors \emph{over} a homotopy between pushforward/pullback data. Therefore, the best we can do is impose the \emph{discrete} reflexive graph structure on the first component of a lens, combining it with a family of path object structures classifying unitors by coproduct (\cref{ex:coprod-rx-gph,lem:coprod-po}); on the other hand, we can still exhibit a useful displayed reflexive graph structure classifying lenses with variable base by means of a unbiased dependent lens as we shall see in \cref{con:lens-of-lenses} below.

    \begin{construction}[A definitional unbiased dependent lens classifying oplax covariant lens structure]\label[construction]{con:lens-of-lenses}
      Let $\gA$ be a fixed reflexive graph. We define a componentwise discrete definitional unbiased dependent lens $\CovLensStr{\gA}$ over $\vrt{\gA}\pitchfork\RxGph$ as follows:

      \iblock{
        \mhang{
          f:\gB_0\Edge{\vrt{\gA}\pitchfork\RxGph}\gB_1 \vdash
          \CovLensStr{\gA}{f} :\equiv
        }{
          \mrow{
            \Coprod{
              \DelimProtect{
                \Phi : \Prod{x,y:\vrt{\gA}}\Prod{p:x\Edge{\gA}y}\vrt{\gB_0x} \to \vrt{\gB_1y}
              }
            }
          }
          \mrow{
            \Prod{x:\vrt{\gA}}\Prod{u:\vrt{\gB_0x}}
            \Phi\,x\,x\,\prn{\Rx{\gA}{x}}\,u \Edge{\gB_1 x} \vrt{f x}u
          }
        }
      }

      Naturally, $\CovLensStr{\gA}{\Rx{\RxGph}{\gB}}$ is precisely the type of oplax covariant lens structures on family of $U$-small reflexive graphs $\gB$ over $\gA$.

      \iblock{
        \mhang{
          f:\gB_0\Edge{\vrt{\gA}\pitchfork\RxGph}\gB_1; \prn{\Phi,\hat\Phi} : \CovLensStr{\gA}{\Rx{\RxGph}{\gB_0}}\vdash
        }{
          \mrow{
            \pi_1\prn{\LJ{\CovLensStr{\gA}}{f}\prn{\Phi,\hat\Phi}} :\equiv \Lam{x,y,p,u}
            \vrt{f y}\prn{\Phi xypu}
          }
          \mrow{
            \pi_2\prn{\LJ{\CovLensStr{\gA}}{f}\prn{\Phi,\hat\Phi}} :\equiv
            \Lam{x,u}
            \prn{fx}^\approx \,\prn{\Phi\,x\,x\,\prn{\Rx{\gA}{x}}\,u}\,u\,\prn{\hat\Phi xu}
          }
        }
        \mhang{
          f:\gB_0\Edge{\vrt{\gA}\pitchfork\RxGph}\gB_1; \prn{\Phi,\hat\Phi} : \CovLensStr{\gA}{\Rx{\RxGph}{\gB_1}}\vdash
        }{
          \mrow{
            \pi_1\prn{\RJ{\CovLensStr{\gA}}{f}\prn{\Phi,\hat\Phi}} :\equiv
            \Lam{x,y,p,u:\vrt{\gB_0x}}
            \Phi\,x\,y\,p\,\prn{\vrt{fx}u}
          }
          \mrow{
            \pi_2\prn{\RJ{\CovLensStr{\gA}}{f}\prn{\Phi,\hat\Phi}} :\equiv
            \Lam{x,u}
            \hat\Phi\,x\,\prn{\vrt{fx}u}
          }
        }
      }

      The unit laws of this unbiased dependent lens hold definitionally.
    \end{construction}

    \begin{computation}[A path object classifying oplax covariant lenses with fixed base]\label[computation]{cmp:lens-po}
      Finally, we may define path object classifying oplax covariant lenses over a fixed reflexive graph $\gA$.
      \[
        \CovLensOver{\gA} :\equiv
        \prn{\vrt{\gA}\pitchfork \RxGph}.%
        \UnbDisp{\vrt{\gA}\pitchfork \RxGph}{
          \CovLensStr{\gA}
        }
      \]
    \end{computation}
  \end{xsect}

\end{xsect}
\begin{xsect}[sec:fibrations]{Fibred reflexive graphs}

  A lens is kind of like a fibration in which the lifts have no universal property, as noted by \citet{chollet-et-al:2022:lenses}. In this section, we make the analogy between lenses and fibrations of reflexive graphs precise by characterising fibrations \emph{precisely} as lenses of path objects in \cref{cor:characterisation-of-fibs}.
  We begin by translating the notion of \emph{fibred} reflexive graphs from \citet{rijke:2019} into the language of \emph{displayed} reflexive graphs.

  \begin{definition}
    A \DefEmph{covariant fibration} over a reflexive graph $\gA$ is defined to be a displayed reflexive graph $\gB$ over $\gA$ such that for every edge $p:x\Edge{\gA}y$ and displayed vertex $u:\vrt{\gB}\prn{x}$, the sum $\Sum{v:\vrt{\gB}\prn{y}}u\Edge{\gB}[p]v$ is contractible.

    We will write $p_*^\gB u:\vrt{\gB}\prn{y}$ and $p_\dagger^\gB u : u \Edge{\gB}[p] p_*^\gB u$ for the first and second components component of the centre of contraction respectively.
  \end{definition}

  \begin{definition}
    A \DefEmph{contravariant fibration} over a reflexive graph $\gA$ is defined to be a displayed reflexive graph $\gB$ over $\gA$ such that for every edge $p:x\Edge{\gA}y$ and displayed vertex $v:\vrt{\gB}\prn{y}$, the sum $\Sum{v':\vrt{\gB}\prn{x}}v'\Edge{\gB}[p]v$ is contractible.

    We shall write $p^*_\gB v:\vrt{\gB}\prn{x}$ and $p^\dagger_\gB v : p^*_\gB v \Edge{\gB}[p] v$ for the first and second components of the centre of contraction respectively.
  \end{definition}

  \begin{xsect}[sec:fibration-duality]{Duality involution for fibred reflexive graphs}
    In this section, we show that covariant and contravariant fibrations are interchangeable via the duality involution for displayed reflexive graphs (\cref{sec:rx-gph-duality}).

    \begin{lemma}[Total opposite of a fibration]\label[lemma]{lem:fibration-duality}
      Let $\gB$ be a covariant (\resp contravariant) fibration of reflexive graphs over a reflexive graph $\gA$. Then the total opposite displayed reflexive graph $\gB\TotOp$ is a contravariant (\resp covariant) fibration over $\gA\Op$.
    \end{lemma}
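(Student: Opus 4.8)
The plan is to unfold the definition of contravariant fibration for $\gB\TotOp$ over $\gA\Op$ and observe that it reduces \emph{definitionally} to the hypothesis that $\gB$ is a covariant fibration over $\gA$; essentially no computation is required beyond chasing the definitions of the two duality involutions. Concretely, suppose $\gB$ is a covariant fibration over $\gA$. To show that $\gB\TotOp$ is a contravariant fibration over $\gA\Op$, I must exhibit, for each edge $p : x\Edge{\gA\Op}y$ and displayed vertex $v:\vrt{\gB\TotOp}\prn{y}$, that the sum $\Sum{v':\vrt{\gB\TotOp}\prn{x}} v'\Edge{\gB\TotOp}[p]v$ is contractible.

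First I would rewrite this data using \cref{def:op-rx-gph} and \cref{def:tot-op-disp-rx-gph}. By the definition of the opposite reflexive graph, an edge $p : x\Edge{\gA\Op}y$ is precisely an edge $p:y\Edge{\gA}x$, and by the definition of the total opposite, a displayed vertex $v:\vrt{\gB\TotOp}\prn{y}$ is precisely a displayed vertex $v:\vrt{\gB}\prn{y}$ while the displayed edge type $v'\Edge{\gB\TotOp}[p]v$ is definitionally $v\Edge{\gB}[p]v'$. Hence the sum in question is definitionally equal to $\Sum{v':\vrt{\gB}\prn{x}} v\Edge{\gB}[p]v'$. Now I would observe that this is exactly an instance of the covariant lifting condition for $\gB$: with the edge $p:y\Edge{\gA}x$ and the displayed vertex $v:\vrt{\gB}\prn{y}$, the covariant fibration hypothesis asserts precisely the contractibility of $\Sum{v':\vrt{\gB}\prn{x}} v\Edge{\gB}[p]v'$. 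The two conditions agree on the nose, so the desired contractibility holds by assumption. The converse implication (that the total opposite of a contravariant fibration is a covariant fibration) is entirely symmetric; alternatively, it follows from the first direction together with the involutivity of total opposites (\cref{def:disp-rx-gph-duality}) and of opposite reflexive graphs (\cref{obs:op-rx-gph-involution}).

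The only obstacle here is bookkeeping: one must check that the swap of the roles of $x$ and $y$ induced by passing to $\gA\Op$ matches the reversal of displayed edges induced by $\gB\TotOp$, so that the orientation of the chosen lift is genuinely preserved and the fibration condition is neither doubly reversed nor left un-reversed. Since both reversals are definitional, the entire argument is a matter of confirming that these two flips compose correctly, after which the fibration property transports across the duality involution with no coherence conditions to verify.
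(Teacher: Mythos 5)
Your proof is correct: unfolding \cref{def:op-rx-gph,def:tot-op-disp-rx-gph} shows that the contravariant lifting condition for $\gB\TotOp$ over $\gA\Op$ at $p$ and $v$ is definitionally the covariant lifting condition for $\gB$ at the same data read in $\gA$, and your handling of the two orientation reversals is exactly right. The paper states this lemma without proof, treating it as immediate from the definitions, which is precisely the argument you have written out.
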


    \cref{lem:fibration-duality} allows us to translate results about all covariant fibrations into results about all contravariant fibrations.
  \end{xsect}

  \begin{xsect}{Fibred reflexive graphs and univalence}

    Fibred reflexive graphs are always univalent, as we see in \cref{lem:fibrations-are-univalent}. Therefore, it is natural to refer to a fibration of reflexive graphs as a fibration of path objects.

    \begin{lemma}[Fibrations of reflexive graphs are univalent]\label[lemma]{lem:fibrations-are-univalent}
      Any (covariant, contravariant) fibration of reflexive graphs over a reflexive graph $\gA$ is univalent, \ie a displayed path object.
    \end{lemma}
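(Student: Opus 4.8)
The plan is to reduce univalence of the displayed reflexive graph $\gB$ directly to the defining contractibility condition of the fibration, instantiated at reflexivity edges. By \cref{def:univalent-displayed-reflexive-graph}, it suffices to show that each component $\gB\prn{x}$ is a univalent reflexive graph, and by \cref{def:univalent-reflexive-graph} this in turn follows once we exhibit each fan (in the covariant case) or each co-fan (in the contravariant case) as a proposition.

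First I would treat the covariant case. Fix $x:\vrt{\gA}$ and $u:\vrt{\gB}\prn{x}$, and set $p :\equiv \Rx{\gA}{x}$. Unfolding the definition of the component (\cref{def:component}), the fan of $u$ in $\gB\prn{x}$ is
\[
  \Fan{\gB\prn{x}}{u}
  \equiv
  \Sum{v:\vrt{\gB}\prn{x}} u \Edge{\gB\prn{x}} v
  \equiv
  \Sum{v:\vrt{\gB}\prn{x}} u \Edge{\gB}[p] v
\]
definitionally. But this is precisely the sum $\Sum{v:\vrt{\gB}\prn{y}} u\Edge{\gB}[p]v$ appearing in the definition of covariant fibration, instantiated at $y :\equiv x$ and $p :\equiv \Rx{\gA}{x}$. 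Hence it is contractible, with centre $\prn{p_*^\gB u, p_\dagger^\gB u}$, and in particular a proposition. By condition (1) of \cref{def:univalent-reflexive-graph}, $\gB\prn{x}$ is univalent, and since $x$ was arbitrary, $\gB$ is a displayed path object.

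For the contravariant case, I would argue symmetrically using co-fans in place of fans: the co-fan $\CoFan{\gB\prn{x}}{u} \equiv \Sum{v:\vrt{\gB}\prn{x}} v \Edge{\gB}[\Rx{\gA}{x}] u$ is exactly the contractible sum guaranteed by the contravariant fibration condition at $p :\equiv \Rx{\gA}{x}$, so condition (2) of \cref{def:univalent-reflexive-graph} applies. Alternatively, the contravariant case follows from the covariant one by duality: by \cref{lem:fibration-duality} the total opposite of a contravariant fibration over $\gA$ is a covariant fibration over $\gA\Op$, and since $\gB\TotOp\prn{x} \equiv \gB\prn{x}\Op$ componentwise, \cref{cor:op-po} shows that $\gB$ is a displayed path object iff $\gB\TotOp$ is.

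I do not expect any serious obstacle here: the entire content is the observation that instantiating the fibration's lifting/contractibility condition at a reflexivity edge yields precisely the (co-)fan of the corresponding component. The only point requiring a little care is verifying that the sum appearing in the fibration condition is \emph{definitionally} the (co-)fan of the component, which is immediate from \cref{def:component} once the edge is specialised to $\Rx{\gA}{x}$.
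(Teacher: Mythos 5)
Your proof is correct and is essentially identical to the paper's: both unfold the fan (resp.\ co-fan) of a component definitionally to the contractible sum appearing in the fibration condition instantiated at the reflexivity edge $\Rx{\gA}{x}$. The duality alternative you mention for the contravariant case is also consistent with the paper, which invokes \cref{lem:fibration-duality} before treating both cases directly.
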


    \begin{proof}
      Suppose without loss of generality (by \cref{lem:fibration-duality}) that $\gB$ is a covariant fibration. Fixing $x:\vrt{\gA}$ and $u:\vrt{\gB}\prn{x}$, it suffices to show that the fan $\Fan{\gB\prn{x}}{u}$ is a proposition. We have the following definitional computation of the fan:

      \iblock{
        \mhang{
          \Fan{\gB\prn{x}}{u}
        }{
          \commentrow{by \cref{def:fan}}
          \mrow{
            {}\equiv
            \Sum{w:\vrt{\gB}\prn{x}}
            u \Edge{\gB\prn{x}} w
          }
          \commentrow{by \cref{def:component}}
          \mrow{
            {}\equiv
            \Sum{w:\vrt{\gB}\prn{x}}
            u \Edge{\gB}[\Rx{\gA}{x}] w
          }
        }
      }

      The latter is contractible by our assumption that $\gB$ is a covariant fibration. Suppose, on the other hand, that $\gB$ is a contravariant fibration. To show that $\gB$ is univalent, it suffices to show that each of the co-fans $\CoFan{\gB\prn{x}}{u}$ is a proposition, which we compute below:

      \iblock{
        \mhang{
          \CoFan{\gB\prn{x}}{u}
        }{
          \commentrow{by \cref{def:fan}}
          \mrow{
            {}\equiv
            \Sum{w:\vrt{\gB}\prn{x}}
            w\Edge{\gB\prn{x}} u
          }
          \commentrow{by \cref{def:component}}
          \mrow{
            {}\equiv
            \Sum{w:\vrt{\gB}\prn{x}}
            w\Edge{\gB}[\Rx{\gA}{x}] u
          }
        }
      }

      The latter is contractible because $\gB$ is a contravariant assumption.
    \end{proof}

    A converse to \cref{lem:fibrations-are-univalent} cannot hold, as we see in \cref{ex:not-every-univalent-displayed-reflexive-graph-is-a-fibration}.

    \begin{example}\label[example]{ex:not-every-univalent-displayed-reflexive-graph-is-a-fibration}
      \emph{Not every univalent displayed reflexive graph is a fibration.} In particular, consider the following displayed reflexive graph over the codiscrete reflexive graph $\CodiscPO{\mathbf{2}}$:
      \iblock{
        \mrow{
          i:\mathbf{2} \vdash \vrt{\gG}\prn{i} :\equiv \Id{\mathbf{2}}{i}{1}
        }
        \mrow{
          i,j:\mathbf{2};p:\Id{\mathbf{2}}{i}{1}, q : \Id{\mathbf{2}}{i}{1} \vdash
          p\Edge{\gG}[*]q :\equiv \mathbf{1}
        }
        \mrow{
          i:\mathbf{2},p:\Id{\mathbf{2}}{i}{1} \vdash \DRx{\gG}{i}{p} :\equiv *
        }
      }

      We can see that for each $i:\mathbf{2}$, the component $\gG\prn{i}$ is the codiscrete reflexive graph $\CodiscPO\prn{\Id{\mathbf{2}}{i}{1}}$, which is univalent by \cref{lem:codisc-po} as $\mathbf{2}$ is a set. Therefore $\gG$ is univalent, and we will see that it can be neither a covariant nor a contravariant fibration.

      Suppose that $\gG$ were a covariant fibration. We have an edge ${*} : 1 \Edge{\CodiscPO{\mathbf{2}}} 0$ and thus a pushforward map $\prn{*}_* \colon \Id{\mathbf{2}}{1}{1}\to \Id{\mathbf{2}}{0}{1}$. On the other hand, if $\gG$ were a contravariant fibration, we would use pullback along ${*}:0 \Edge{\CodiscPO{\mathbf{2}}}1$ to obtain a contradiction.
    \end{example}

    \cref{ex:not-every-univalent-displayed-reflexive-graph-is-a-fibration} also shows that not every univalent displayed reflexive graph can be obtained from a lens structure on its components.
  \end{xsect}

\begin{xsect}[sec:fibrations-from-lenses]{Fibred reflexive graphs from lenses of path objects}

  We will now show that when a lens is valued in path objects, its display is a fibration.

  \begin{definition}[Universal pushforwards/pullbacks]
    An oplax covariant lens $\gB$ over a reflexive graph $\gA$ is said to have \DefEmph{universal pullforwards} when for every edge $p:x\Edge{\gA}y$ and vertex $u:\vrt{\gB\prn{x}}$, the fan $\Fan{\gB\prn{y}}{\Push{\gB}{p}{u}}$ is a proposition.

    Conversely, a lax contravariant lens $\gB$ over a reflexive graph $\gA$ is said to have \DefEmph{universal pullbacks} when for every edge $p:x\Edge{\gA}y$ and vertex $v:\vrt{\gB\prn{y}}$, the co-fan $\CoFan{\gB\prn{x}}{\Pull{\gB}{p}{u}}$ is a proposition.
  \end{definition}

  \begin{lemma}[Covariant fibrations from universal pushforwards]\label[lemma]{lem:cov-lens-to-fibration}
    Let $\gB$ be an oplax covariant lens over a reflexive graph $\gA$ that has universal pushforwards. Then the covariant display $\CovDisp{\gA}{\gB}$ can be made into a covariant fibration of reflexive graphs in which $p_* u \equiv \Push{\gB}{p}u$ and $p_\dagger u \equiv \Rx{\gB\prn{y}}\prn{\Push{\gB}{p}{u}}$.
  \end{lemma}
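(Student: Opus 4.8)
The plan is to verify the defining condition of a covariant fibration directly: for each edge $p : x \Edge{\gA} y$ and displayed vertex $u : \vrt{\gB\prn{x}}$, I must show that the sum
\[
  \Sum{v:\vrt{\gB}\prn{y}} u \Edge{\CovDisp{\gA}{\gB}}[p] v
\]
is contractible. First I would unfold the displayed edges of the covariant display via \cref{def:cov-disp}, so that $u \Edge{\CovDisp{\gA}{\gB}}[p] v \equiv \Push{\gB}{p}{u} \Edge{\gB\prn{y}} v$. Feeding this through \cref{def:fan}, the sum above is then \emph{definitionally} equal to the fan of $\Push{\gB}{p}{u}$ in the component $\gB\prn{y}$:
\[
  \Sum{v:\vrt{\gB}\prn{y}} u \Edge{\CovDisp{\gA}{\gB}}[p] v
  \equiv
  \Sum{v:\vrt{\gB}\prn{y}} \Push{\gB}{p}{u} \Edge{\gB\prn{y}} v
  \equiv
  \Fan{\gB\prn{y}}{\Push{\gB}{p}{u}}.
\]

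By the hypothesis of universal pushforwards, this fan is a proposition, so to conclude contractibility it suffices to exhibit one inhabitant. The reflexivity datum of the component $\gB\prn{y}$ furnishes the edge $\Rx{\gB\prn{y}}\prn{\Push{\gB}{p}{u}} : \Push{\gB}{p}{u} \Edge{\gB\prn{y}} \Push{\gB}{p}{u}$, so that the pair $\prn{\Push{\gB}{p}{u}, \Rx{\gB\prn{y}}\prn{\Push{\gB}{p}{u}}}$ is an element of the fan. An inhabited proposition is contractible with that very element as its centre, which establishes that $\CovDisp{\gA}{\gB}$ is a covariant fibration. Reading off the two components of this centre of contraction then yields $p_* u \equiv \Push{\gB}{p}{u}$ and $p_\dagger u \equiv \Rx{\gB\prn{y}}\prn{\Push{\gB}{p}{u}}$, exactly as stated.

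There is no substantive obstacle to overcome: the entire content is the definitional identification of the fibration's contractibility condition with a fan whose propositionality is precisely what ``universal pushforwards'' asserts, together with the observation that the reflexivity datum of $\gB\prn{y}$ supplies the canonical lift $p_\dagger u$. The one point to be careful about is that contractibility must hold for a \emph{general} edge $p$ rather than only the reflexivity edge, so I would avoid invoking \cref{cmp:component-of-oplax-cov-display} (which computes fans only along $\Rx{\gA}{x}$) and instead unfold \cref{def:cov-disp} directly as above.
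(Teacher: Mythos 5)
Your proof is correct and follows essentially the same route as the paper's: unfold the displayed edges of $\CovDisp{\gA}{\gB}$ definitionally to identify the lifting sum with the fan $\Fan{\gB\prn{y}}{\Push{\gB}{p}{u}}$, then use the universal-pushforwards hypothesis to conclude that this proposition, inhabited by $\prn{\Push{\gB}{p}{u},\Rx{\gB\prn{y}}\prn{\Push{\gB}{p}{u}}}$, is contractible with that centre. Your explicit exhibition of the inhabitant, and your caution about not invoking \cref{cmp:component-of-oplax-cov-display} (which only treats the reflexivity edge), merely spell out what the paper leaves implicit in the phrase ``we may choose a centre of contraction for the latter freely''.
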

  \begin{proof}
    We note that for any edge $p:x\Edge{\gA}y$ and vertex $u:\vrt{\gB\prn{x}}$, we have the following sequence of definitional equivalences:

    \iblock{
      \mhang{
        \Sum{u':\vrt{\CovDisp{\gA}{\gB}}} u\Edge{\CovDisp{\gA}{\gB}}[p] u'
      }{
        \commentrow{by \cref{def:cov-disp}}
        \mrow{
          {}\equiv
          \Sum{u':\vrt{\gB\prn{y}}}
          \Push{\gB}{p}{u} \Edge{\gB\prn{y}} u'
        }
        \commentrow{by \cref{def:fan}}
        \mrow{
          {}\equiv
          \Fan{\gB\prn{y}}{\Push{\gB}{p}{u}}
        }
      }
    }

    Thus, we may choose a centre of contraction for the latter freely so long as it is a proposition (which we have by assumption).
  \end{proof}

  \begin{lemma}[Contravariant fibrations from universal pullbacks]\label[lemma]{lem:ctrv-lens-to-fibration}
    Let $\gB$ be a lax contravariant lens over a reflexive graph $\gA$ that has universal pullbacks. Then the contravariant display $\CtrvDisp{\gA}{\gB}$ can be made into a contravariant fibration of reflexive graphs in which $p^*u \equiv \Pull{\gB}{p}{u}$ and $p^\dagger u \equiv \Rx{\gB\prn{x}}\prn{\Pull{\gB}{p}{u}}$.
  \end{lemma}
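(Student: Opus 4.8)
The plan is to mirror the proof of \cref{lem:cov-lens-to-fibration} under the fan/co-fan duality. To exhibit $\CtrvDisp{\gA}{\gB}$ as a contravariant fibration, I must show that for every edge $p : x \Edge{\gA} y$ and every displayed vertex $v : \vrt{\gB\prn{y}}$, the sum $\Sum{v':\vrt{\gB\prn{x}}} v' \Edge{\CtrvDisp{\gA}{\gB}}[p] v$ is contractible.

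First I would unfold the displayed edge using \cref{def:ctrv-disp}, which gives $v' \Edge{\CtrvDisp{\gA}{\gB}}[p] v \equiv v' \Edge{\gB\prn{x}} \Pull{\gB}{p}{v}$ definitionally. Hence the sum in question is definitionally equal to the co-fan (\cref{def:fan}) of the vertex $\Pull{\gB}{p}{v}$ inside the component $\gB\prn{x}$:
\[
  \Sum{v':\vrt{\gB\prn{x}}} v' \Edge{\CtrvDisp{\gA}{\gB}}[p] v
  \equiv
  \Sum{v':\vrt{\gB\prn{x}}} v' \Edge{\gB\prn{x}} \Pull{\gB}{p}{v}
  \equiv
  \CoFan{\gB\prn{x}}{\Pull{\gB}{p}{v}}.
\]

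By the assumption that $\gB$ has universal pullbacks, this co-fan is a proposition. It is moreover inhabited by the pair $\prn{\Pull{\gB}{p}{v}, \Rx{\gB\prn{x}}\prn{\Pull{\gB}{p}{v}}}$, since the reflexivity datum $\Rx{\gB\prn{x}}\prn{\Pull{\gB}{p}{v}}$ has exactly the required type $\Pull{\gB}{p}{v} \Edge{\gB\prn{x}} \Pull{\gB}{p}{v}$. An inhabited proposition is contractible, so I may freely take this pair as the centre of contraction. Reading off its two components then yields precisely $p^* v \equiv \Pull{\gB}{p}{v}$ and $p^\dagger v \equiv \Rx{\gB\prn{x}}\prn{\Pull{\gB}{p}{v}}$, as claimed.

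Since the argument is a direct dualization of \cref{lem:cov-lens-to-fibration}, I expect no substantial obstacle. The only point requiring care is checking that the chosen centre inhabits the \emph{co}-fan rather than the fan, which is forced by the contravariant orientation of the display in \cref{def:ctrv-disp}; everything else is definitional bookkeeping together with the single homotopical fact that an inhabited proposition is contractible.
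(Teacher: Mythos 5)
Your proposal is correct and is exactly the argument the paper intends: the paper's own proof is just ``Analogous to \cref{lem:cov-lens-to-fibration}'', and you have carried out that dualization faithfully, identifying the sum with the co-fan $\CoFan{\gB\prn{x}}{\Pull{\gB}{p}{v}}$ and using that an inhabited proposition is contractible. No issues.
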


  \begin{proof}
    Analogous to \cref{lem:cov-lens-to-fibration}.
  \end{proof}

  \begin{corollary}\label[corollary]{cor:fibrations-are-univalent}
    An (oplax covariant, lax contravariant) lens $\gB$ has universal (pushforwards, pullbacks) if and only if each component $\gB\prn{x}$ is univalent.
  \end{corollary}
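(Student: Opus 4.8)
The plan is to prove both implications, noting that one direction is immediate and that the real content sits in the other. For the implication from componentwise univalence to universal pushforwards, I would simply unfold definitions: universal pushforwards asks that each fan $\Fan{\gB\prn{y}}{\Push{\gB}{p}{u}}$ be a proposition, and this is the special case $w :\equiv \Push{\gB}{p}{u}$ of the hypothesis (via \cref{def:univalent-reflexive-graph}) that every fan $\Fan{\gB\prn{y}}{w}$ of the univalent component $\gB\prn{y}$ is a proposition. Nothing further is needed here.

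For the converse — universal pushforwards implies that each $\gB\prn{x}$ is univalent — I would fix $x:\vrt{\gA}$ and an arbitrary vertex $w:\vrt{\gB\prn{x}}$ and show that $\Fan{\gB\prn{x}}{w}$ is a proposition. The obstacle is precisely that universal pushforwards only directly constrains fans based at vertices of the shape $\Push{\gB}{p}{u}$, whereas univalence of $\gB\prn{x}$ requires controlling the fan at \emph{every} vertex. The bridge is the oplax unitor: instantiating universal pushforwards at $p :\equiv \Rx{\gA}{x}$ and $u :\equiv w$ shows that $\Fan{\gB\prn{x}}{\Push{\gB}{\Rx{\gA}{x}}{w}}$ is a proposition, while the unitor $\PushRx{\gB}[x]{w}$ supplies an edge $\Push{\gB}{\Rx{\gA}{x}}{w} \Edge{\gB\prn{x}} w$ connecting the pushed-forward vertex to $w$ itself.

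The key step is then to exploit that an inhabited proposition is contractible. Writing $w' :\equiv \Push{\gB}{\Rx{\gA}{x}}{w}$, the fan $\Fan{\gB\prn{x}}{w'}$ is a proposition that is inhabited by its reflexivity element $\prn{w',\Rx{\gB\prn{x}}{w'}}$, hence contractible with that centre. But $\prn{w,\PushRx{\gB}[x]{w}}$ is a further element of the same fan, so it must agree with the centre; projecting the resulting identification onto first components yields an identification $w =_{\vrt{\gB\prn{x}}} w'$. Transporting the proof that $\Fan{\gB\prn{x}}{w'}$ is a proposition along this identification (transport being an equivalence) shows that $\Fan{\gB\prn{x}}{w}$ is a proposition as well, and since $w$ was arbitrary this establishes univalence of $\gB\prn{x}$ by \cref{def:univalent-reflexive-graph}. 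I note in passing that the forward direction could instead be routed through \cref{lem:cov-lens-to-fibration} and \cref{lem:fibrations-are-univalent}, but one would still need exactly this unitor argument to pass from univalence of the \emph{display's} components (computed in \cref{cmp:component-of-oplax-cov-display}) to univalence of the $\gB\prn{x}$ themselves, so the direct route is cleaner.

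Finally, the lax contravariant case runs by the same argument with co-fans $\CoFan{\gB\prn{x}}{-}$ replacing fans and the lax unitor $\PullRx{\gB}$ replacing $\PushRx{\gB}$; alternatively it follows mechanically from the covariant case through the duality involution for lenses (\cref{obs:lens-duality}), since total opposition exchanges oplax covariant with lax contravariant lenses while swapping fans and co-fans and preserving univalence of components (\cref{cor:op-po}).
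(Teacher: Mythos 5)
Your proposal is correct and follows essentially the same route as the paper: the forward direction is the same trivial specialisation, and for the converse the paper likewise uses the oplax unitor to relate $\Fan{\gB\prn{x}}{u}$ to the propositional fan $\Fan{\gB\prn{x}}{\Push{\gB}{\Rx{\gA}{x}}{u}}$ (phrased there as a retract, which you realise more explicitly by contracting the inhabited fan onto its reflexivity centre and transporting along the resulting identification of basepoints). The contravariant case is likewise dispatched by duality in both treatments.
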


  \begin{proof}
    Let $\gB$ be an oplax covariant lens over a reflexive graph $\gA$.

    \begin{enumerate}
      \item Suppose that $\gB$ has universal pushforwards. We want to show that each fan $\Fan{\gB\prn{x}}{u}$ is a proposition; using our assumption of universal pushforwards, we can show that $\Fan{\gB\prn{x}}{u}$ is a retract of $\Fan{\gB\prn{x}}{\Push{\gB}{\Rx{\gA}{x}}{u}}$, which is a proposition because $\gB$ has universal pushforwards.
      \item Suppose conversely that each component $\gB\prn{x}$ is univalent; then \emph{all} fans in the components of $\gB$ are propositions, including $\Fan{\gB\prn{y}}{\Push{\gB}{p}{u}}$ for some $p:x\Edge{\gA}y$ and $u:\vrt{\gB\prn{y}}$.
    \end{enumerate}

    The case for when $\gB$ is a lax contravariant lens is established by duality via \cref{lem:fibrations-are-univalent,lem:ctrv-lens-to-fibration}.
  \end{proof}
  
\end{xsect}

\NewDocumentCommand\Str{mo}{\mathsf{str}\Sub{#1}\IfValueT{#2}{\Sup{#2}}}
\NewDocumentCommand\StrGen{mo}{\mathsf{strGen}\Sub{#1}\IfValueT{#2}{\Sup{#2}}}
\NewDocumentCommand\Unstr{mo}{\mathsf{unstr}\Sub{#1}\IfValueT{#2}{\Sup{#2}}}
\NewDocumentCommand\UnstrGen{mo}{\mathsf{unstrGen}\Sub{#1}\IfValueT{#2}{\Sup{#2}}}

\begin{xsect}{The underlying lens of a covariant fibration of reflexive graphs}

  The pushforward operation $p_*^\gB$ of a covariant fibration gives rise to an oplax covariant lens structure on the components of $\gB$, which we construct and study in this section.

  \begin{enumerate}
    \item We first exhibit a ``straightening/unstraightening'' equivalence between displayed edges $u\Edge{\gB}[p:x\Edge{\gA}y]v$ and vertical edges $\Push{\gB}{p}{u}\Edge{\gB\prn{y}} v$ in \cref{con:straightening,con:unstraightening,lem:cov-str-equiv}.
    \item We use the straightening equivalence to exhibit the underlying lens of a displayed reflexive graph in \cref{con:underlying-cov-lens}.
    \item Finally, \cref{thm:underlying-lens-of-display-roundtrip} exhibits an equivalence of displayed reflexive graphs between any displayed reflexive graph $\gB$ and the display of its underlying lens.
  \end{enumerate}

  \begin{xsect}{Straightening of displayed edges}

    \begin{construction}[Straightening of edges]\label[construction]{con:straightening}
      Let $\gB$ be a covariant fibration of reflexive graphs over a reflexive graph $\gA$. We can define a ``straightening'' function that converts displayed edges in $\gB$ to vertical edges using the pushforward operation of the fibration.

      First, we define a more general function incorporating a witness of contraction.

      \iblock{
        \mrow{
          x,y:\vrt{\gA};
          p : x\Edge{\gA}y;
          u : \vrt{\gB}\prn{x};
          v : \vrt{\gB}\prn{y};
          q : u \Edge{\gB}[p] v;
        }
        \mrow{
          H : \Id{
            \Sum{w:\vrt{\gB}\prn{y}}
            u \Edge{\gB}[p] w
          }{
            \prn{p_*^\gB u,p_\dagger^\gB u}
          }{\prn{v,q}}
        }
        \iblock{
          \mrow{
            \vdash
            \StrGen{\gB}[p]\,q\,H : p_*^\gB u \Edge{\gB\prn{y}} v
          }
        }
        \commentrow{by indentification induction on $H$}
        \mrow{
          x,y:\vrt{\gA};
          p : x\Edge{\gA}y
          \vdash
          \StrGen{\gB}[p]\,\prn{p_\dagger^\gB u}\,\Refl : p_*^\gB u \Edge{\gB\prn{y}} p_*^\gB u
        }
        \commentrow{by reflexivity data}
        \mrow{
          x,y:\vrt{\gA};
          p : x\Edge{\gA}y
          \vdash
          \StrGen{\gB}[p]\,\prn{p_\dagger^\gB u}\,\Refl :\equiv
          \DRx{\gB}{y}\prn{p_*^\gB u}
        }
      }

      Then we define the straightening function by instantiation with one of the witnesses of contraction of $\Sum{w:\vrt{\gB}\prn{y}}u\Edge{\gB}[p]w$, which we leave nameless.

      \iblock{
        \mhang{
          x,y:\vrt{\gA};
          p : x\Edge{\gA}y;
          u : \vrt{\gB}\prn{x};
          v : \vrt{\gB}\prn{y};
          q : u \Edge{\gB}[p] v
          \vdash
        }{
          \mrow{
            \Str{\gB}[p]\,q : p_*^\gB u \Edge{\gB\prn{y}} v
          }
          \mrow{
            \Str{\gB}[p]\,q :\equiv
            \StrGen{\gB}[p]\,q\,\prn{\ldots}
          }
        }
      }
    \end{construction}

    \begin{construction}[Unstraightening of edges]\label[construction]{con:unstraightening}
      Let $\gB$ be a covariant fibration of reflexive graphs over a reflexive graph $\gA$. We can ``unstraighten'' vertical edges of the form $p_*^\gB u \Edge{\gB\prn{y}} v$ to displayed edges $u\Edge{\gB}[p] v$ as follows:

      \iblock{
        \mrow{
          x,y:\vrt{\gA}; p:x\Edge{\gA}y, u : \vrt{\gB}\prn{x}, v:\vrt{\gB}\prn{y}, q : p_*^\gB u \Edge{\gB\prn{y}} v,
        }
        \mrow{
          H : \Id{
            \Sum{w:\vrt{\gB}\prn{y}}
            p_*^\gB u \Edge{\gB\prn{y}}w
          }{
            \prn{p_*^\gB u, \DRx{\gB}{y}\prn{p_*^\gB u}}
          }{
            \prn{v,q}
          }
        }
        \iblock{
          \mrow{
            \vdash
            \UnstrGen{\gB}[p]\, q\, H :
            u \Edge{\gB}[p] v
          }
        }
        \commentrow{by indentification induction on $H$}
        \mhang{
          x,y:\vrt{\gA};
          p:x\Edge{\gA}y,
          u : \vrt{\gB}\prn{x}
        }{
          \mrow{
            \vdash
            \UnstrGen{\gB}[p]\,\prn{\DRx{\gB}{y}\prn{p_*^\gB u}}\,\Refl
            : u \Edge{\gB}[p] p_*^\gB u
          }
          \mrow{
            \vdash
            \UnstrGen{\gB}[p]\,\prn{\DRx{\gB}{y}\prn{p_*^\gB u}}\,\Refl
            :\equiv
            p_\dagger^\gB u
          }
        }
      }

      Then we define the unstraightening function by instantiation with an anonymous witness that can be obtained from the fact that $\Sum{w:\vrt{\gB}\prn{y}}p_*^\gB u \Edge{\gB\prn{y}}w$ is a proposition.

      \iblock{
        \mhang{
          x,y:\vrt{\gA}; p : x\Edge{\gA} y, u :\vrt{\gB}\prn{x}, v : \vrt{\gB}\prn{y};q : p_*^\gB u \Edge{\gB\prn{y}} v \vdash
        }{
          \mrow{\Unstr{\gB}[p]\, q : u\Edge{\gB}[p] v}
          \mrow{\Unstr{\gB}[p]\, q :\equiv \UnstrGen{\gB}[p]\,q\,\prn{\ldots}}
        }
      }
    \end{construction}

    \begin{lemma}[Straightening is an equivalence]\label[lemma]{lem:cov-str-equiv}
      Let $\gB$ be a covariant fibration of reflexive graphs over a reflexive graph $\gA$. Each straightening function $\Str{\gB}[p] : u \Edge{\gB}[p]v \to p_*^\gB u\Edge{\gB\prn{y}} v$ is an equivalence, with the inverse tracked by $\Unstr{\gB}[p] : p_*^\gB u\Edge{\gB\prn{y}} v \to u \Edge{\gB}[p]v$.
    \end{lemma}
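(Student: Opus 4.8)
The plan is to recognise the source and target families of $\Str{\gB}[p]$ as two families over $\vrt{\gB}\prn{y}$ whose total spaces are both contractible, and then to read off from the defining identification inductions that $\Str{\gB}[p]$ and $\Unstr{\gB}[p]$ act as mutually inverse operations on the centres of contraction.

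First I would record the two relevant contractibility facts. The total space $\Sum{w:\vrt{\gB}\prn{y}} u \Edge{\gB}[p] w$ is contractible with centre $\prn{p_*^\gB u, p_\dagger^\gB u}$, directly by the defining property of a covariant fibration. Dually, the total space $\Sum{w:\vrt{\gB}\prn{y}} p_*^\gB u \Edge{\gB\prn{y}} w$ is, by \cref{def:fan}, exactly the fan $\Fan{\gB\prn{y}}{p_*^\gB u}$, and it is contractible with centre $\prn{p_*^\gB u, \DRx{\gB}{y}\prn{p_*^\gB u}}$: indeed $\gB$ is univalent by \cref{lem:fibrations-are-univalent}, so its component $\gB\prn{y}$ is univalent and hence all of its fans are contractible by \cref{def:univalent-reflexive-graph}. (Here I use that $\Rx{\gB\prn{y}}{p_*^\gB u} \equiv \DRx{\gB}{y}\prn{p_*^\gB u}$.)

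Next I would verify the two round-trips, reasoning through the generalised functions $\StrGen{\gB}$ and $\UnstrGen{\gB}$ of \cref{con:straightening,con:unstraightening} rather than their instantiations, so as to expose the contraction witness as an explicit argument. Each is defined by identification induction carrying its witness $\Refl$ to the centre, where $\StrGen{\gB}[p]\,\prn{p_\dagger^\gB u}\,\Refl \equiv \DRx{\gB}{y}\prn{p_*^\gB u}$ and $\UnstrGen{\gB}[p]\,\prn{\DRx{\gB}{y}\prn{p_*^\gB u}}\,\Refl \equiv p_\dagger^\gB u$ hold definitionally. Fixing $q : u \Edge{\gB}[p] v$, the first contractibility fact provides a path from the centre to $\prn{v,q}$; by based identification induction on this path it suffices to check $\Unstr{\gB}[p]\prn{\Str{\gB}[p]\,q} = q$ at the centre, where the two displayed clauses compose to the identity on the nose. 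The reverse round-trip is symmetric, using the contractibility of $\Fan{\gB\prn{y}}{p_*^\gB u}$ to induct on the path to $\prn{v,q'}$ for $q' : p_*^\gB u \Edge{\gB\prn{y}} v$.

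Having produced a quasi-inverse, I conclude that each $\Str{\gB}[p]$ is an equivalence with inverse tracked by $\Unstr{\gB}[p]$. The main obstacle is the bookkeeping around the anonymous contraction witnesses used to pass from $\StrGen{\gB},\UnstrGen{\gB}$ to $\Str{\gB},\Unstr{\gB}$: one must know that the witness feeding each round-trip may be taken to be the canonical path to the centre, so that the inductive base case is genuinely the definitional clause. This is exactly underwritten by the two contractibility facts, since a contractible total space is in particular a set, rendering the choice of witness immaterial up to the coherence required.
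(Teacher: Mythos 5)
Your proposal is correct and follows essentially the same route as the paper's proof: both arguments generalise the straightening and unstraightening maps over their contraction witnesses, reduce one witness by identification induction, and dispose of the remaining anonymous witness by observing that the contractible total spaces are sets, so that each round-trip holds by the definitional clauses at the centre. The only cosmetic difference is that you justify the contractibility of $\Sum{w}\,p_*^{\gB}u\Edge{\gB\prn{y}}w$ via \cref{lem:fibrations-are-univalent} rather than by instantiating the fibration condition at the reflexivity edge, which amounts to the same thing.
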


    \begin{proof}
      We first check that straightening is a retraction of unstraightening. It suffices to prove the following slightly generalised lemma.

      \iblock{
        \mrow{
          x,y:\vrt{\gA};
          p:x\Edge{\gA}y,
          u:\vrt{\gB}\prn{x},
          v:\vrt{\gB}\prn{y},
          q : p_*^\gB u \Edge{\gB\prn{y}}v,
        }
        \mrow{
          H :
          \Id{
            \Sum{w:\vrt{\gB}\prn{y}}
            p_*^\gB u \Edge{\gB\prn{y}} w
          }{
            \prn{p_*^\gB u, \DRx{\gB}{y}\prn{p_*^\gB u}}
          }{\prn{v,q}},
        }
        \mrow{
          I :
          \Id{
            \Sum{w:\vrt{\gB}\prn{y}}
            u \Edge{\gB}[p] w
          }{
            \prn{p_*^\gB u,p_\dagger^\gB u}
          }{
            \prn{v, \UnstrGen{\gB}[p]\,q\,H}
          }
        }
        \iblock{
          \mrow{
            \vdash
            \Id{p_*^\gB u \Edge{\gB\prn{y}} v}{
              \StrGen{\gB}[p]\,\prn{\UnstrGen{\gB}[p]\,q\,H}\,I
            }{q}
          }
        }
        \commentrow{by identification induction on $H$}
        \mrow{
          x,y:\vrt{\gA};
          p:x\Edge{\gA}y,
          u:\vrt{\gB}\prn{x},
          v:\vrt{\gB}\prn{y},
        }
        \mrow{
          I :
          \Id{
            \Sum{w:\vrt{\gB}\prn{y}}
            u \Edge{\gB}[p] w
          }{
            \prn{p_*^\gB u,p_\dagger^\gB u}
          }{
            \prn{p_*^\gB u, \UnstrGen{\gB}[p]\,\prn{\DRx{\gB}{y}\prn{p_*^\gB u}}\,\Refl}
          }
        }
        \iblock{
          \mrow{
            \vdash
            \Id{p_*^\gB u \Edge{\gB\prn{y}} v}{
              \StrGen{\gB}[p]\,\prn{\UnstrGen{\gB}[p]\,\prn{\DRx{\gB}{y}\prn{p_*^\gB u}}\,\Refl}\,I
            }{\DRx{\gB}{y}\prn{p_*^\gB u}}
          }
        }
        \commentrow{by unfolding the definition of $\UnstrGen{\gB}[p]$}
        \mrow{
          x,y:\vrt{\gA};
          p:x\Edge{\gA}y,
          u:\vrt{\gB}\prn{x},
          v:\vrt{\gB}\prn{y}
        }
        \mrow{
          I :
          \Id{
            \Sum{w:\vrt{\gB}\prn{y}}
            u \Edge{\gB}[p] w
          }{
            \prn{p_*^\gB u,p_\dagger^\gB u}
          }{
            \prn{p_*^\gB u, p_\dagger^\gB u}
          }
        }
        \iblock{
          \mrow{
            \vdash
            \Id{p_*^\gB u \Edge{\gB\prn{y}} v}{
              \StrGen{\gB}[p]\,\prn{p_\dagger^\gB u}\,I
            }{\DRx{\gB}{y}\prn{p_*^\gB u}}
          }
        }
        \commentrow{%
          because $\gB$ is a covariant fibration, $\Sum{w:\vrt{\gB}\prn{y}}u\Edge{\gB}[p]w$ is a set
        }
        \mrow{
          x,y:\vrt{\gA};
          p:x\Edge{\gA}y,
          u:\vrt{\gB}\prn{x},
          v:\vrt{\gB}\prn{y}
        }
        \iblock{
          \mrow{
            \vdash
            \Id{p_*^\gB u \Edge{\gB\prn{y}} v}{
              \StrGen{\gB}[p]\,\prn{p_\dagger^\gB u}\,\Refl
            }{\DRx{\gB}{y}\prn{p_*^\gB u}}
          }
          \commentrow{by unfolding the definition of $\StrGen{\gB}[p]$}
          \mrow{
            \vdash
            \Id{p_*^\gB u \Edge{\gB\prn{y}} v}{
              \DRx{\gB}{y}\prn{p_*^\gB u}
            }{\DRx{\gB}{y}\prn{p_*^\gB u}}
          }
          \commentrow{by reflexivity}
        }
      }

      Conversely, we check that straightening is a section of unstraightening by means of the following generalised lemma.

      \iblock{
        \mrow{
          x,y:\vrt{\gA}; p:x\Edge{\gA}y, u:\vrt{\gB}\prn{x},v:\vrt{\gB}\prn{y},q:u\Edge{\gB}[p]v,
        }
        \mrow{
          H :
          \Id{
            \Sum{w:\vrt{\gB}\prn{y}}
            u\Edge{\gB}[p]w
          }{
            \prn{p_*^\gB u, p_\dagger^\gB u}
          }{
            \prn{v,q}
          },
        }
        \mrow{
          I :
          \Id{
            \Sum{w:\vrt{\gB}\prn{y}}p_*^\gB u \Edge{\gB}[\Rx{\gA}{y}]w
          }{
            \prn{p_*^\gB u, \DRx{\gB}{y}\prn{p_*^\gB u}}
          }{
            \prn{v,\StrGen{\gB}[p]\,q\,H}
          }
        }
        \iblock{
          \mrow{
            \vdash
            \Id{u\Edge{\gB}[p]v}{
              \UnstrGen{\gB}[p]\,\prn{\StrGen{\gB}[p]\,q\,H}\,I
            }{q}
          }
        }
        \commentrow{by identification induction on $H$}
        \mrow{
          x,y:\vrt{\gA}; p:x\Edge{\gA}y, u:\vrt{\gB}\prn{x}
        }
        \mrow{
          I :
          \Id{
            \Sum{w:\vrt{\gB}\prn{y}}p_*^\gB u \Edge{\gB}[\Rx{\gA}{y}]w
          }{
            \prn{p_*^\gB u, \DRx{\gB}{y}\prn{p_*^\gB u}}
          }{
            \prn{p_*^\gB u,\StrGen{\gB}[p]\,\prn{p_\dagger^\gB u}\,\Refl}
          }
        }
        \iblock{
          \mrow{
            \vdash
            \Id{u\Edge{\gB}[p]v}{
              \UnstrGen{\gB}[p]\,\prn{\StrGen{\gB}[p]\,\prn{p_\dagger^\gB u}\,\Refl}\,I
            }{p_\dagger^\gB u}
          }
        }
        \commentrow{%
          by unfolding the definition of $\StrGen{\gB}[p]$
        }
        \mrow{
          x,y:\vrt{\gA}; p:x\Edge{\gA}y, u:\vrt{\gB}\prn{x}
        }
        \mrow{
          I :
          \Id{
            \Sum{w:\vrt{\gB}\prn{y}}p_*^\gB u \Edge{\gB}[\Rx{\gA}{y}]w
          }{
            \prn{p_*^\gB u, \DRx{\gB}{y}\prn{p_*^\gB u}}
          }{
            \prn{p_*^\gB u,\DRx{\gB}{y}\prn{p_*^\gB u}}
          }
        }
        \iblock{
          \mrow{
            \vdash
            \Id{u\Edge{\gB}[p]v}{
              \UnstrGen{\gB}[p]\,\prn{\DRx{\gB}{y}\prn{p_*^\gB u}}\,I
            }{p_\dagger^\gB u}
          }
        }
        \commentrow{%
          because $\gB$ is a covariant fibration, $\Sum{w:\vrt{\gB}\prn{y}}p_*^\gB u \Edge{\gB}[\Rx{\gA}{y}]w$ is a set
        }
        \mrow{
          x,y:\vrt{\gA}; p:x\Edge{\gA}y, u:\vrt{\gB}\prn{x}
        }
        \iblock{
          \mrow{
            \vdash
            \Id{u\Edge{\gB}[p]v}{
              \UnstrGen{\gB}[p]\,\prn{\DRx{\gB}{y}\prn{p_*^\gB u}}\,\Refl
            }{p_\dagger^\gB u}
          }
          \commentrow{by unfolding the definition of $\UnstrGen{\gB}[p]$}
          \mrow{
            \vdash
            \Id{u\Edge{\gB}[p]v}{
              p_\dagger^\gB u
            }{p_\dagger^\gB u}
          }
          \commentrow{by reflexivity}
          \qedhere
        }
      }
    \end{proof}

  \end{xsect}

  \begin{xsect}{The underlying lens of a covariant fibration}
    We are now equipped to construct the underlying lens of a given covariant fibration.

    \begin{construction}[The underlying lens of a covariant fibration]\label[construction]{con:underlying-cov-lens}
      A covariant fibration of reflexive graphs $\gB$ over a reflexive graph $\gA$ induces a (canonical, as we shall see) oplax covariant lens structure on the diagonal family $\Diag{\gB}$ via straightening.

      \iblock{
        \begin{multicols}{2}
          \mhang{
            x,y:\vrt{\gA};
            p:x\Edge{\gA}y,
            u:\vrt{\gB}\prn{x}
            \vdash
          }{
            \mrow{
              \Push{\Diag{\gB}}{p}{u} : \vrt{\gB}\prn{y}
            }
            \mrow{
              \Push{\Diag{\gB}}{p}{u} :\equiv p_*^\gB u
            }
          }

          \columnbreak

          \mhang{
            x:\vrt{\gA},u:\vrt{\gB}\prn{x} \vdash
          }{
            \mrow{
              \PushRx{\Diag{\gB}}[x]u : \prn{\Rx{\gA}{x}}_*^\gB u \Edge{\gB\prn{x}} u
            }
            \mrow{
              \PushRx{\Diag{\gB}}[x]u :\equiv
              \Str{\gB}[\Rx{\gA}{x}]\prn{\DRx{\gB}{x}{u}}
            }
          }
        \end{multicols}
      }
    \end{construction}

    \cref{lem:display-of-underlying-lens-roundtrip} below shows that if we take the display of the underlying lens of a covariant fibration, we obtain the original displayed reflexive graph.

    \begin{therm}[Roundtrip for fibrations of reflexive graphs]\label[theorem]{lem:display-of-underlying-lens-roundtrip}
      Let $\prn{U,E}$ be a univalent universe, and let $\gB$ be a $U$-small covariant fibration of reflexive graphs over a reflexive graph $\gA$.
      Then the display $\CovDisp{\gA}{\Diag{\gB}}$ of the underlying lens of $\gB$ may be identified with $\gB$ by means of an equivalence of displayed reflexive graphs $\gB\Edge{\DRxGphOver{\gA}} \CovDisp{\gA}{\Diag{\gB}}$.
    \end{therm}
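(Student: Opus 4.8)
The plan is to construct the required edge of $\DRxGphOver{\gA}$ componentwise, observing that its three pieces of data are supplied respectively by the identity equivalence, the straightening equivalence of \cref{lem:cov-str-equiv}, and a coherence that is forced to hold definitionally. Unfolding the computation of $\DRxGphOver{\gA}$, an edge $\gB\Edge{\DRxGphOver{\gA}}\CovDisp{\gA}{\Diag{\gB}}$ consists of (i) a family of equivalences on displayed vertices, (ii) a family of equivalences on displayed edges (this is the $\DGph{\gA}$-component $f^\approx$), and (iii) a coherence asserting that $f^\approx$ carries the displayed reflexivity datum of $\gB$ to that of $\CovDisp{\gA}{\Diag{\gB}}$. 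The key observation is that the underlying lens $\Diag{\gB}$ was defined in \cref{con:underlying-cov-lens} precisely by straightening, so every comparison below collapses to reflexivity.

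First I would take the vertex-component to be the identity equivalence at each $x:\vrt{\gA}$ and $u:\vrt{\gB}\prn{x}$; this typechecks definitionally because $\vrt{\CovDisp{\gA}{\Diag{\gB}}}\prn{x}\equiv\vrt{\Diag{\gB}\prn{x}}\equiv\vrt{\gB}\prn{x}$ by \cref{def:cov-disp,con:underlying-cov-lens}. Second, after substituting the identity vertex-equivalence, the edge-component must have type $\mathsf{Equiv}\prn{u\Edge{\gB}[p]v,\;u\Edge{\CovDisp{\gA}{\Diag{\gB}}}[p]v}$; by \cref{def:cov-disp} and the definition of the underlying lens, the codomain is definitionally $p_*^\gB u\Edge{\gB\prn{y}}v$, so this equivalence is exactly $\Str{\gB}[p]$, which is an equivalence by \cref{lem:cov-str-equiv}. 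Third, the reflexivity coherence requires that applying the edge-equivalence to $\DRx{\gB}{x}{u}$ yields $\DRx{\CovDisp{\gA}{\Diag{\gB}}}{x}{u}$; its left-hand side is $\Str{\gB}[\Rx{\gA}{x}]\prn{\DRx{\gB}{x}{u}}$, and by \cref{con:underlying-cov-lens,def:cov-disp} its right-hand side is $\PushRx{\Diag{\gB}}[x]{u}\equiv\Str{\gB}[\Rx{\gA}{x}]\prn{\DRx{\gB}{x}{u}}$. The two sides are therefore definitionally equal and the coherence is witnessed by $\Refl$.

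The main obstacle is essentially bookkeeping rather than mathematics: one must check that the data demanded by an edge of $\DRxGphOver{\gA}$ lines up, up to definitional equality, with the straightening map and with the oplax unitor of the underlying lens. Because the underlying lens of \cref{con:underlying-cov-lens} was defined so that its unitor \emph{is} the straightening of the displayed reflexivity datum, all of the coherences reduce to reflexivity, and the only genuinely substantive ingredient is that straightening is an equivalence, i.e.\ \cref{lem:cov-str-equiv}. I would expect the subtlest point to be tracking the indices of the $\DGph{\gA}$-component $f^\approx$ and confirming that the single reflexivity coherence of $\DRxGphOver{\gA}$ is exactly the statement made definitional by the construction of $\Diag{\gB}$.
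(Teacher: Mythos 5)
Your proposal is correct and follows essentially the same route as the paper's proof: the identity equivalence on displayed vertices, the straightening equivalence $\Str{\gB}[p]$ of \cref{lem:cov-str-equiv} on displayed edges, and $\Refl$ for the reflexivity coherence, which holds definitionally because the oplax unitor of $\Diag{\gB}$ in \cref{con:underlying-cov-lens} is by construction $\Str{\gB}[\Rx{\gA}{x}]\prn{\DRx{\gB}{x}{u}}$. The only substantive input in both arguments is \cref{lem:cov-str-equiv}, exactly as you identify.
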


    \begin{proof}
      We shall use the characterisation of identifications of displayed reflexive graphs that we established in \cref{sec:drxgph-sip}. In particular, we shall construct an equivalence of displayed reflexive graphs $\phi : \gB \Edge{\DRxGphOver{\gA}} \CovDisp{\gA}{\Diag{\gB}}$. We can use the identity equivalence on vertices, and the straightening--unstraightening equivalence on edges:

      \iblock{
        \mrow{
          \vrt{\phi} : \Prod{x:\vrt{\gA}} \mathsf{Equiv}\prn{\vrt{\gB}\prn{x},\vrt{\Diag{\gB}\prn{x}}}
        }
        \mrow{
          \vrt{\phi} x :\equiv \mathsf{idnEquiv}\Sub{\vrt{\gB}\prn{x}}
        }
        \row
        \mrow{
          \phi^\approx :
          \Prod{x,y:\vrt{\gA}}\Prod{p:x\Edge{\gA}y}
          \Prod{u:\vrt{\gB}\prn{x}}
          \Prod{v:\vrt{\gB}\prn{y}}
          \mathsf{Equiv}\prn{
            u\Edge{\gB}[p]v,
            u \Edge{\CovDisp{\gA}{\Diag{\gB}}}[p] v
          }
        }
        \commentrow{by introduction}
        \mhang{
          x,y:\vrt{\gA}; p:x\Edge{\gA}y,u:\vrt{\gB}\prn{x},v:\vrt{\gB}\prn{y}
        }{
          \mrow{
            \vdash \phi^\approx x y p u v :
            \mathsf{Equiv}\prn{
              u\Edge{\gB}[p]v,
              u \Edge{\CovDisp{\gA}{\Diag{\gB}}}[p] v
            }
          }
          \commentrow{by computation}
          \mrow{
            \vdash \phi^\approx x y p u v :
            \mathsf{Equiv}\prn{
              u\Edge{\gB}[p]v,
              p_*^\gB u\Edge{\gB\prn{y}} v
            }
          }
          \commentrow{by straightening}
          \mrow{
            \vdash \phi^\approx x y p u v :\equiv
            \Str{\gB}[p]
          }
        }
      }

      The displayed reflexivity data are preserved up to definitional equality, as we have:

      \iblock{
        \mrow{
          \phi^{\mathsf{rx}} : \Prod{x:\vrt{\gA}} \Prod{u:\vrt{\gB}\prn{x}}
          \Id{
            u \Edge{\CovDisp{\gA}{\Diag{\gB}}}[\Rx{\gA}{x}] u
          }{
            \phi^\approx\,x\,x\,\prn{\DRx{\gA}{x}}\,u\,u\,\prn{\Rx{\gB}{x}{u}}
          }{
            \DRx{\CovDisp{\gA}{\Diag{\gB}}}{x}{u}
          }
        }
        \commentrow{by introduction}
        \mhang{
          x:\vrt{\gA},u:\vrt{\gB}\prn{x}
        }{
          \mrow{
            \vdash
            \phi^{\mathsf{rx}}xu :
            \Id{
              u \Edge{\CovDisp{\gA}{\Diag{\gB}}}[\Rx{\gA}{x}] u
            }{
              \phi^\approx\,x\,x\,\prn{\DRx{\gA}{x}}\,u\,u\,\prn{\Rx{\gB}{x}{u}}
            }{
              \DRx{\CovDisp{\gA}{\Diag{\gB}}}{x}{u}
            }
          }
          \commentrow{by computation}
          \mrow{
            \vdash
            \phi^{\mathsf{rx}}xu :
            \Id{
              \prn{\Rx{\gA}{x}}_*^\gB u \Edge{\gB\prn{x}} u
            }{
              \Str{\gB}[p]\prn{\DRx{\gB}{x}{u}}
            }{
              \Str{\gB}[p]\prn{\DRx{\gB}{x}{u}}
            }
          }
          \commentrow{by reflexivity}
          \mrow{
            \vdash
            \phi^{\mathsf{rx}}xu :\equiv \Refl
          }
          \qedhere
        }
      }
    \end{proof}
  \end{xsect}
\end{xsect}

\begin{xsect}{Characterisation of fibred reflexive graphs}

  Let $\gB$ be an oplax covariant lens of $U$-small path objects over a reflexive graph $\gA$. By \cref{lem:cov-lens-to-fibration}, the display $\CovDisp{\gA}{\gB}$ is a covariant fibration. By \cref{con:underlying-cov-lens}, we obtain an oplax covariant lens $\Diag{\prn{\CovDisp{\gA}{\gB}}}$ that we may compare with $\gB$. We first recall from \cref{cmp:component-of-oplax-cov-display} the underlying reflexive graph of each component $\Diag{\prn{\CovDisp{\gA}{\gB}}}\prn{x}$:
  \begin{align*}
    \vrt{\Diag{\prn{\CovDisp{\gA}{\gB}}}\prn{x}}
     & \equiv
    \vrt{\gB\prn{x}}
    \\
    u\Edge{\Diag{\prn{\CovDisp{\gA}{\gB}}}\prn{x}} v
     & \equiv
    \Push{\gB}{\Rx{\gA}{x}}{u} \Edge{\gB\prn{x}} v
    \\
    \Rx{\Diag{\prn{\CovDisp{\gA}{\gB}}}\prn{x}}{u}
     & \equiv
    \PushRx{\gB}[x]{u}
  \end{align*}

  \begin{computation}\label[computation]{cmp:lens-B'}
    We compute the lens structure of $\Diag{\prn{\CovDisp{\gA}{\gB}}}$ as follows, recalling \cref{con:underlying-cov-lens,lem:cov-lens-to-fibration}.
    \begin{align*}
      \Push{\Diag{\prn{\CovDisp{\gA}{\gB}}}}{p}{u}
       & \equiv
      p_*\Sup{\CovDisp{\gA}{\gB}}u
      \equiv
      \Push{\gB}{p}{u}
      \\
      \PushRx{\Diag{\prn{\CovDisp{\gA}{\gB}}}}[x]{u}
       & \equiv
      \Str{\CovDisp{\gA}{\gB}}[\Rx{\gA}{x}]{\prn{\DRx{\CovDisp{\gA}{\gB}}{x}{u}}}
      \equiv
      \Str{\CovDisp{\gA}{\gB}}[\Rx{\gA}{x}]{\prn{\PushRx{\gB}[x]{u}}}
    \end{align*}
  \end{computation}

  \begin{construction}
    We have an equivalence $\eta_\gB : \gB\Edge{\vrt{\gA}\pitchfork\RxGph}\Diag{\prn{\CovDisp{\gA}{\gB}}}$ of underlying families of reflexive graphs over $\gA$, which we define below making use of \cref{cor:fibrations-are-univalent,con:path-alg-toolkit,lem:pre-ct-is-equiv}.
    \begin{align*}
      \eta_{\gB}
       & :
      \gB \Edge{\vrt{\gA}\pitchfork\RxGph} \Diag{\prn{\CovDisp{\gA}{\gB}}}
      \\
      \vrt{\eta_\gB x}
       & :\equiv
      \Rx{U/E}{\vrt{\gB\prn{x}}}
      \\
      \prn{\eta_\gB x}^\approx
       & :\equiv
      \Lam{u,v}
      \prn{\PushRx{\gB}[x]{u}\ct\Sub{\gB\prn{x}}-}
      \\
      \prn{\eta_\gB x}^{\mathsf{rx}}
       & :\equiv
      \Lam{u}
      \mathsf{runit}\Sub{\PushRx{\gB}[x]{u}}
    \end{align*}
  \end{construction}

  \begin{computation}\label[computation]{cmp:extrusion-of-B-and-B'}
    Write $\LJ{\CovLensStr{\gA}}{\eta_\gB}\gB : \CovLensStr{\gA}{\eta_\gB}$ for the extension of the lens structure from $\gB$ onto the equivalence $\eta_\gB : \gB\Edge{\vrt{\gA}\pitchfork\RxGph}\Diag{\prn{\CovDisp{\gA}{\gB}}}$ of families of reflexive graphs. Likewise, write $\RJ{\CovLensStr{\gA}}{\eta_\gB}\Diag{\prn{\CovDisp{\gA}{\gB}}} : \CovLensStr{\gB}{\eta_\gB}$ for the corresponding extension of the lens structure of $\Diag{\prn{\CovDisp{\gA}{\gB}}}$ onto $\eta_\gB$. We have the definitional computations of the lens structures:
    \begin{align*}
      \Push{\LJ{\CovLensStr{\gA}}{\eta_\gB}\gB}{p}{u}
       & \equiv
      \Push{\gB}{p}{u}
      \\
      \PushRx{\LJ{\CovLensStr{\gA}}{\eta_\gB}\gB}[x]{u}
       & \equiv
      \PushRx{\gB}[x]{
        \prn{\Push{\gB}{\Rx{\gA}{x}}{u}}
      } \ct\Sub{\gB\prn{x}} \PushRx{\gB}[x]{u}
      \\[8pt]
      \Push{\RJ{\CovLensStr{\gA}}{\eta_\gB}\Diag{\prn{\CovDisp{\gA}{\gB}}}}{p}{u}
       & \equiv
      \Push{\gB}{p}{u}
      \\
      \PushRx{\RJ{\CovLensStr{\gA}}{\eta_\gB}\Diag{\prn{\CovDisp{\gA}{\gB}}}}[x]{u}
       & \equiv
      \Str{\CovDisp{\gA}{\gB}}[\Rx{\gA}{x}]{\prn{\PushRx{\gB}[x]{u}}}
    \end{align*}
  \end{computation}

  \begin{proof}
    We outline the explicit computations below, unfolding \cref{con:lens-of-lenses,con:underlying-cov-lens,cmp:lens-B'}.

    \iblock{
      \setlength\columnsep{-2cm}
      \begin{multicols}{2}
        \mhang{
          \smash{\Push{\LJ{\CovLensStr{\gA}}{\eta_\gB}\gB}{p}{u}}\vphantom{\sum_\sum}
        }{
          \commentrow{by \cref{con:lens-of-lenses}}
          \mrow{
            {}\equiv
            \vrt{\eta_\gB y}\prn{\Push{\gB}{p}{u}}
          }
          \commentrow{by \cref{con:underlying-cov-lens}}
          \mrow{
            {}\equiv
            \Push{\gB}{p}{u}
          }
        }
        \columnbreak
        \mhang{
          \smash{\PushRx{\LJ{\CovLensStr{\gA}}{\eta_\gB}\gB}[x]{u}}
        }{
          \commentrow{by \cref{con:lens-of-lenses}}
          \mrow{
            {}\equiv
            \vrt{\eta_\gB x}^\approx\,\prn{\Push{\gB}{\Rx{\gA}{x}}{u}}\,u\,\prn{\PushRx{\gB}[x]{u}}
          }
          \commentrow{by \cref{con:underlying-cov-lens}}
          \mrow{
            {}\equiv
            \PushRx{\gB}[x]{
              \prn{\Push{\gB}{\Rx{\gA}{x}}{u}}
            } \ct\Sub{\gB\prn{x}} \PushRx{\gB}[x]{u}
          }
        }
      \end{multicols}
      \begin{multicols}{2}
        \mhang{
          \smash{\Push{\RJ{\CovLensStr{\gA}}{\eta_\gB}\Diag{\prn{\CovDisp{\gA}{\gB}}}}{p}{u}}
          \vphantom{\sum_\sum}
        }{
          \commentrow{by \cref{con:lens-of-lenses}}
          \mrow{
            {}\equiv
            \Push{\Diag{\prn{\CovDisp{\gA}{\gB}}}}{p}{\prn{\vrt{\eta_\gB x}u}}
          }
          \commentrow{by \cref{con:underlying-cov-lens}}
          \mrow{
            {}\equiv
            \Push{\Diag{\prn{\CovDisp{\gA}{\gB}}}}{p}{u}
          }
          \commentrow{by \cref{cmp:lens-B'}}
          \mrow{
            {}\equiv
            \Push{\gB}{p}{u}
          }
        }
        \columnbreak
        \mhang{
          \smash{\PushRx{\RJ{\CovLensStr{\gA}}{\eta_\gB}\Diag{\prn{\CovDisp{\gA}{\gB}}}}[x]{u}}
        }{
          \commentrow{by \cref{con:lens-of-lenses}}
          \mrow{
            {}\equiv
            \PushRx{\Diag{\prn{\CovDisp{\gA}{\gB}}}}[x]{\prn{\vrt{\eta_\gB x}u}}
          }
          \commentrow{by \cref{con:underlying-cov-lens}}
          \mrow{
            {}\equiv
            \PushRx{\Diag{\prn{\CovDisp{\gA}{\gB}}}}[x]{u}
          }
          \commentrow{by \cref{cmp:lens-B'}}
          \mrow{
            {}\equiv
            \Str{\CovDisp{\gA}{\gB}}[\Rx{\gA}{x}]{\prn{\PushRx{\gB}[x]{u}}}
          }
          \qedhere
        }
      \end{multicols}
    }
  \end{proof}

  \begin{therm}[Roundtrip for oplax covariant lenses of path objects]\label[theorem]{thm:underlying-lens-of-display-roundtrip}
    The equivalence $\eta_\gB : \gB \Edge{\vrt{\gA}\pitchfork\RxGph} \Diag{\prn{\CovDisp{\gA}{\gB}}}$ of families of reflexive graphs extends to an equivalence $\gB \Edge{\CovLensOver{\gA}} \Diag{\prn{\CovDisp{\gA}{\gB}}}$ of oplax covariant lenses over $\gA$.
  \end{therm}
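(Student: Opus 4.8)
The plan is to construct the required edge of $\CovLensOver{\gA}$ directly and to observe that it is automatically an equivalence of lenses. Recall that $\CovLensOver{\gA} \equiv \prn{\vrt{\gA}\pitchfork\RxGph}.\UnbDisp{\vrt{\gA}\pitchfork\RxGph}{\CovLensStr{\gA}}$, so an edge from $\gB$ to $\Diag{\prn{\CovDisp{\gA}{\gB}}}$ consists of an edge $\eta$ of the base $\vrt{\gA}\pitchfork\RxGph$ — that is, a family of reflexive graph equivalences — together with a displayed edge of $\UnbDisp{\vrt{\gA}\pitchfork\RxGph}{\CovLensStr{\gA}}$ lying over $\eta$. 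I would take $\eta :\equiv \eta_\gB$, the equivalence of underlying families already constructed; since its vertex component is the identity equivalence and its edge component is pre-concatenation with $\PushRx{\gB}$ (an equivalence by \cref{lem:pre-ct-is-equiv}), this $\eta_\gB$ is already a genuine fibrewise equivalence, and hence the resulting edge of $\CovLensOver{\gA}$ is an equivalence of lenses with no further condition to verify.

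By the definition of the display of an unbiased dependent lens, the displayed edge over $\eta_\gB$ is precisely an edge
\[
  \LJ{\CovLensStr{\gA}}{\eta_\gB}\gB \Edge{\CovLensStr{\gA}{\eta_\gB}} \RJ{\CovLensStr{\gA}}{\eta_\gB}\Diag{\prn{\CovDisp{\gA}{\gB}}}
\]
between the left extension of the lens structure of $\gB$ and the right extension of the induced lens structure of $\Diag{\prn{\CovDisp{\gA}{\gB}}}$, both computed in \cref{cmp:extrusion-of-B-and-B'}. Now $\CovLensStr{\gA}{\eta_\gB}$ is, by \cref{con:lens-of-lenses}, a coproduct (\cref{ex:coprod-rx-gph}) over the pushforward data of \emph{discrete} path objects on the corresponding types of unitors. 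By \cref{cmp:extrusion-of-B-and-B'} the two pushforward components agree definitionally (both are $\Push{\gB}{p}{u}$), so I can take the reflexivity identification $\Refl$ on the coproduct's first component; transport along $\Refl$ is the identity, so it remains only to identify the two families of unitors inside the discrete fibre.

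By function extensionality this identification may be given pointwise, so fixing $x:\vrt{\gA}$ and $u:\vrt{\gB\prn{x}}$ the goal reduces (again via \cref{cmp:extrusion-of-B-and-B'}) to an identification
\[
  \PushRx{\gB}[x]{\prn{\Push{\gB}{\Rx{\gA}{x}}{u}}} \ct\Sub{\gB\prn{x}} \PushRx{\gB}[x]{u}
  =
  \Str{\CovDisp{\gA}{\gB}}[\Rx{\gA}{x}]{\prn{\PushRx{\gB}[x]{u}}}
\]
in the edge type $\Push{\gB}{\Rx{\gA}{x}}{\prn{\Push{\gB}{\Rx{\gA}{x}}{u}}}\Edge{\gB\prn{x}}u$. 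The essential ingredient — and the main obstacle — is to evaluate the straightening on the right. I would prove a general formula for straightening in the display of an oplax covariant lens: for any $p:x\Edge{\gA}y$ and any displayed edge $q : \Push{\gB}{p}{u}\Edge{\gB\prn{y}}v$,
\[
  \Str{\CovDisp{\gA}{\gB}}[p]{q} = \PushRx{\gB}[y]{\prn{\Push{\gB}{p}{u}}} \ct\Sub{\gB\prn{y}} q\text{.}
\]

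This formula is proved by identification induction on the contraction witness used to define $\Str{\CovDisp{\gA}{\gB}}[p]{-}$ in \cref{con:straightening}, recalling that for this fibration $p_*^\gB u \equiv \Push{\gB}{p}u$ and $p_\dagger^\gB u \equiv \Rx{\gB\prn{y}}\prn{\Push{\gB}{p}{u}}$ by \cref{lem:cov-lens-to-fibration}. In the base case the left-hand side computes to $\DRx{\CovDisp{\gA}{\gB}}{y}{\prn{\Push{\gB}{p}{u}}} \equiv \PushRx{\gB}[y]{\prn{\Push{\gB}{p}{u}}}$, while the right-hand side reduces to the same term by the right-unit law $\mathsf{runit}$ of \cref{con:path-alg-toolkit} applied to the reflexivity datum $\Rx{\gB\prn{y}}\prn{\Push{\gB}{p}{u}}$. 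Specialising the formula to $p :\equiv \Rx{\gA}{x}$ and $q :\equiv \PushRx{\gB}[x]{u}$ yields \emph{exactly} the left-hand side of the required identification, so the two unitors coincide; this completes the displayed edge and hence the equivalence of lenses. I expect the only delicate points to be bookkeeping the types through the coproduct edge structure and confirming that the motive for the induction on the contraction witness is well-formed for variable $q$.
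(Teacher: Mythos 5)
Your proposal is correct and follows essentially the same route as the paper's proof: reduce to a displayed edge of $\CovLensStr{\gA}$ over $\eta_\gB$, observe via \cref{cmp:extrusion-of-B-and-B'} that the pushforward components agree definitionally, and then identify the two oplax unitors by generalising the straightening term over the contraction witness (the paper's $\StrGen{\gB}[p]$), doing identification induction on that witness, and closing the base case with $\mathsf{runit}$ from \cref{con:path-alg-toolkit}. The standalone straightening formula you isolate is precisely the paper's "more general goal", so the two arguments coincide.
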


  \begin{proof}
    We must construct a displayed equivalence $\gB\Edge{\UnbDisp{\vrt{\gA}\pitchfork \RxGph}{\CovLensStr{\gA}}}[\eta_\gB]
      \Diag{\prn{\CovDisp{\gA}{\gB}}}$ of lens structures over $\eta_\gB$. This consists of an (ordinary) equivalence of lenses as below:
    \[
      \LJ{\CovLensStr{\gA}}{\eta_\gB}\gB
      \Edge{\CovLensStr{\gA}{\eta_\gB}}
      \RJ{\CovLensStr{\gA}}{\eta_\gB}\Diag{\prn{\CovDisp{\gA}{\gB}}}
    \]

    Unfolding the definition of $\CovLensOver{\gA}{\eta_\gB}$ from \cref{con:lens-of-lenses} and using \cref{cmp:extrusion-of-B-and-B'}, we see that the pushforward datum for each lens depicted above is definitionally equal to that of $\gB$. Therefore, it suffices to construct an equivalence between oplax unitors as below:

    \iblock{
      \mhang{
        x:\vrt{\gA}, u:\vrt{\gB\prn{x}}
      }{
        \mrow{
          \vdash
          \Id{
            \ldots
          }{
            \PushRx{\LJ{\CovLensStr{\gA}}{\eta_\gB}\gB}[x]{u}
          }{
            \PushRx{\RJ{\CovLensStr{\gA}}{\eta_\gB}\Diag{\prn{\CovDisp{\gA}{\gB}}}}[x]{u}
          }
        }
        \commentrow{by \cref{cmp:extrusion-of-B-and-B'}}
        \mrow{
          \vdash
          \Id{\ldots}{
            \PushRx{\gB}[x]{
              \prn{\Push{\gB}{\Rx{\gA}{x}}{u}}
            } \ct\Sub{\gB\prn{x}} \PushRx{\gB}[x]{u}
          }{
            \Str{\CovDisp{\gA}{\gB}}[\Rx{\gA}{x}]{\prn{\PushRx{\gB}[x]{u}}}
          }
        }
        \commentrow{by \cref{con:straightening}}
        \mrow{
          \vdash
          \Id{\ldots}{
            \PushRx{\gB}[x]{
              \prn{\Push{\gB}{\Rx{\gA}{x}}{u}}
            } \ct\Sub{\gB\prn{x}} \PushRx{\gB}[x]{u}
          }{
            \StrGen{\CovDisp{\gA}{\gB}}[\Rx{\gA}{x}]{\prn{\PushRx{\gB}[x]{u}}}\,\prn{\ldots}
          }
        }
      }
    }

    Up to definitional equality, the goal above is a substitution instance of the following more general goal which we establish below.

    \iblock{
      \mrow{
        x,y:\vrt{\gA}; p:x\Edge{\gA}y,
        u:\vrt{\gB\prn{x}},
        v:\vrt{\gB\prn{y}},
        q:\Push{\gB}{p}{u} \Edge{\gB\prn{y}} v,
      }
      \mrow{
        H :
        \Id{
          \Sum{\vrt{\gB\prn{y}}}
        }{
          \prn{
            p_*\Sup{\CovDisp{\gA}{\gB}}u,
            p_\dagger\Sup{\CovDisp{\gA}{\gB}}u
          }
        }{
          \prn{v,q}
        }
      }
      \iblock{
        \mrow{
          \vdash
          \Id{
            \Push{\gB}{\Rx{\gA}{x}}{
              \prn{
                \Push{\gB}{p}{u}
              }
            }
            \Edge{\gB\prn{y}}
            v
          }{
            \PushRx{\gB}[y]{
              \prn{\Push{\gB}{p}{u}}
            }\ct_\gB q
          }{
            \StrGen{\CovDisp{\gA}{\gB}}[p]{q}\,H
          }
        }
      }
      \row
      \commentrow{by \cref{lem:cov-lens-to-fibration,cmp:lens-B'}}
      \row

      \mrow{
        x,y:\vrt{\gA}; p:x\Edge{\gA}y,
        u:\vrt{\gB\prn{x}},
        v:\vrt{\gB\prn{y}},
        q:\Push{\gB}{p}{u} \Edge{\gB\prn{y}} v,
      }
      \mrow{
        H :
        \Id{
          \Sum{\vrt{\gB\prn{y}}}
        }{
          \prn{
            \Push{\gB}{p}{u},
            \Rx{\gB\prn{y}}\prn{\Push{\gB}{p}{u}}
          }
        }{
          \prn{v,q}
        }
      }
      \iblock{
        \mrow{
          \vdash
          \Id{
            \Push{\gB}{\Rx{\gA}{x}}{
              \prn{
                \Push{\gB}{p}{u}
              }
            }
            \Edge{\gB\prn{y}}
            v
          }{
            \PushRx{\gB}[y]{
              \prn{\Push{\gB}{p}{u}}
            }\ct_\gB q
          }{
            \StrGen{\CovDisp{\gA}{\gB}}[p]{q}\,H
          }
        }
      }

      \row
      \commentrow{by identification induction on $H$}
      \row

      \mhang{
        x,y:\vrt{\gA}; p:x\Edge{\gA}y,
        u:\vrt{\gB\prn{x}}
      }{
        \mrow{
          \vdash
          \Id{
            \Push{\gB}{\Rx{\gA}{x}}{
              \prn{
                \Push{\gB}{p}{u}
              }
            }
            \Edge{\gB\prn{y}}
            \Push{\gB}{p}{u}
          }{
            \PushRx{\gB}[y]{
              \prn{\Push{\gB}{p}{u}}
            }\ct_\gB \Rx{\gB\prn{y}}\prn{\Push{\gB}{p}{u}}
          }{
            \StrGen{\CovDisp{\gA}{\gB}}[p]{\prn{\Rx{\gB\prn{y}}\prn{\Push{\gB}{p}{u}}}}\,\Refl
          }
        }
      }

      \row
      \commentrow{by \cref{lem:cov-lens-to-fibration}}
      \row

      \mhang{
        x,y:\vrt{\gA}; p:x\Edge{\gA}y,
        u:\vrt{\gB\prn{x}}
      }{
        \mrow{
          \vdash
          \Id{
            \Push{\gB}{\Rx{\gA}{x}}{
              \prn{
                \Push{\gB}{p}{u}
              }
            }
            \Edge{\gB\prn{y}}
            \Push{\gB}{p}{u}
          }{
            \PushRx{\gB}[y]{
              \prn{\Push{\gB}{p}{u}}
            }\ct_\gB \Rx{\gB\prn{y}}\prn{\Push{\gB}{p}{u}}
          }{
            \StrGen{\CovDisp{\gA}{\gB}}[p]{\prn{\DelimMin{1}p_\dagger\Sup{\CovDisp{\gA}{\gB}}{u}}}\,\Refl
          }
        }
      }

      \row
      \commentrow{by \cref{con:straightening,lem:cov-lens-to-fibration}}
      \row

      \mhang{
        x,y:\vrt{\gA}; p:x\Edge{\gA}y,
        u:\vrt{\gB\prn{x}}
      }{
        \mrow{
          \vdash
          \Id{
            \Push{\gB}{\Rx{\gA}{x}}{
              \prn{
                \Push{\gB}{p}{u}
              }
            }
            \Edge{\gB\prn{y}}
            \Push{\gB}{p}{u}
          }{
            \PushRx{\gB}[y]{
              \prn{\Push{\gB}{p}{u}}
            }\ct_\gB \Rx{\gB\prn{y}}\prn{\Push{\gB}{p}{u}}
          }{
            \PushRx{\gB}[y]\prn{\Push{\gB}{p}{u}}
          }
        }
      }

      \row
      \commentrow{by $\mathsf{runit}\Sub{\PushRx{\gB}[y]\prn{\Push{\gB}{p}{u}}}$ from \cref{con:path-alg-toolkit}}
      \qedhere
    }
  \end{proof}

  \begin{corollary}[Characterisation of fibred reflexive graphs]\label[corollary]{cor:characterisation-of-fibs}
    Let $\gA$ be a reflexive graph, and let $(U,E)$ be a univalent universe. Assuming function extensionality, the following two types are equivalent:
    \begin{enumerate}
      \item The type of oplax covariant (\resp lax contravariant) lenses of $U$-small path objects over $\gA$.
      \item The type of covariant (\resp contravariant) fibrations of $U$-small reflexive graphs over $\gA$.
    \end{enumerate}
  \end{corollary}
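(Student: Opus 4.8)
The plan is to assemble the desired equivalence out of the two roundtrip results just proved, \cref{lem:display-of-underlying-lens-roundtrip,thm:underlying-lens-of-display-roundtrip}, which already carry all of the genuine content; what remains is purely organisational. I work in the covariant case throughout and recover the contravariant case by duality at the end.

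First I fix the two maps between the types in question. Sending an oplax covariant lens $\gB$ of $U$-small path objects over $\gA$ to its display $\CovDisp{\gA}{\gB}$ yields a covariant fibration by \cref{lem:cov-lens-to-fibration,cor:fibrations-are-univalent} (the universal-pushforward hypothesis is exactly fiberwise univalence); call this map $G$. In the other direction, every covariant fibration $\gB$ is fiberwise univalent by \cref{lem:fibrations-are-univalent}, so by \cref{con:underlying-cov-lens} its diagonal family $\Diag{\gB}$ inherits a canonical oplax covariant lens structure, whose components $\Diag{\gB}\prn{x} \equiv \gB\prn{x}$ are univalent; call this map $F$.

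The key structural observation is that both types sit as subtypes of univalent classifying path objects, cut out by \emph{propositional} predicates. Being a covariant fibration is a product over edges $p$ and displayed vertices $u$ of the contractibility of $\Sum{v:\vrt{\gB}\prn{y}} u \Edge{\gB}[p] v$, hence a proposition by function extensionality; a lens being valued in path objects is a product of univalence predicates, propositional by \cref{lem:rx-gph-univalence-is-prop}. Moreover the ambient classifying objects $\DRxGphOver{\gA}$ and $\CovLensOver{\gA}$ are themselves path objects, being assembled via \cref{lem:total-po,lem:unb-disp-po,lem:coprod-po,def:prod-po} from the established closure properties of univalence. Consequently an edge in either classifying object transports, through $\mathsf{idToEdge}$ being an equivalence, to an identification, and since the cutting-out predicates are propositions such an identification lifts uniquely into the respective subtype. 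With this in place the two roundtrips furnish the homotopies: \cref{thm:underlying-lens-of-display-roundtrip} supplies an edge $\gB \Edge{\CovLensOver{\gA}} \Diag{\prn{\CovDisp{\gA}{\gB}}}$, i.e.\ an identification of $\gB$ with $F(G(\gB))$, witnessing $F \circ G \sim \mathrm{id}$; dually \cref{lem:display-of-underlying-lens-roundtrip} supplies an edge $\gB \Edge{\DRxGphOver{\gA}} \CovDisp{\gA}{\Diag{\gB}}$, witnessing $G \circ F \sim \mathrm{id}$. Thus $F$ and $G$ form a quasi-inverse pair, hence an equivalence. The contravariant statement then follows formally by transporting fibrations and lenses across the opposite via \cref{lem:fibration-duality,obs:lens-duality,lem:display-of-tot-op}.

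The bulk of the difficulty has already been discharged inside the two roundtrip theorems, so I expect the only real obstacle to be bookkeeping: confirming that the identifications those theorems produce genuinely land in the intended subtypes of fibrations and of lenses of path objects, and that $\CovLensOver{\gA}$ and $\DRxGphOver{\gA}$ are univalent so that their edges may be read as identifications. No half-adjoint coherence is required, since exhibiting a quasi-inverse already yields an equivalence.
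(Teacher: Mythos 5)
Your proposal is correct and follows essentially the same route as the paper's own proof: the same two maps ($\CovDisp{\gA}{-}$ via \cref{lem:cov-lens-to-fibration} and $\Diag{-}$ via \cref{con:underlying-cov-lens}), the same appeal to the roundtrip results \cref{lem:display-of-underlying-lens-roundtrip,thm:underlying-lens-of-display-roundtrip} read as identifications through the univalence of $\CovLensOver{\gA}$ and $\DRxGphOver{\gA}$, and the same duality argument for the contravariant case. Your explicit remark that the defining predicates (being a fibration, being valued in path objects) are propositions, so that the identifications restrict to the intended subtypes, is a point the paper leaves implicit but is the same argument.
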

  \begin{proof}
    An oplax covariant lens of path objects $\gB$ over $\gA$ is sent to covariant fibration $\CovDisp{\gA}{\gB}$ as described in \cref{lem:cov-lens-to-fibration}. Conversely, a covariant fibration $\gB$ over $\gA$ is sent to its underlying lens of path objects $\Diag{\gB}$ as specified in \cref{con:underlying-cov-lens}. That these transformations are mutually inverse follows from \cref{lem:display-of-underlying-lens-roundtrip,thm:underlying-lens-of-display-roundtrip}, as both $\CovLensOver{\gA}$ and $\DRxGphOver{\gA}$ are univalent in the presence of function extensionality.

    The equivalence between lax contravariant lenses and contravariant fibrations is established from the above by duality (\cref{sec:rx-gph-duality,sec:fibration-duality}).
  \end{proof}

\end{xsect}
 
\end{xsect}

\bibliographystyle{plainnat}
\bibliography{refs-bibtex}

\end{document}